\newtheorem{theorem}{Theorem}[section]
\newtheorem*{theorem*}{Theorem}
\newtheorem{proposition}[theorem]{Proposition}
\newtheorem*{proposition*}{Proposition}
\newtheorem{lemma}[theorem]{Lemma}
\newtheorem*{lemma*}{Lemma}
\newtheorem{corollary}[theorem]{Corollary}
\newtheorem*{conjecture*}{Conjecture}
\newtheorem{fact}[theorem]{Fact}
\newtheorem*{fact*}{Fact}
\newtheorem*{hypothesis*}{Hypothesis}
\theoremstyle{definition}
\newtheorem{definition}[theorem]{Definition}
\newtheorem*{definition*}{Definition}
\newtheorem{algorithm}[theorem]{Algorithm}
\newtheorem{model}[theorem]{Model}
\theoremstyle{remark}
\newtheorem*{claim*}{Claim}
\newtheorem{remark}[theorem]{Remark}
\newtheorem*{remark*}{Remark}
\newtheorem*{observation*}{Observation}
\let\mathbb\varmathbb
\crefname{lemma}{Lemma}{Lemmas}
\crefname{fact}{Fact}{Facts}
\crefname{theorem}{Theorem}{Theorems}
\crefname{corollary}{Corollary}{Corollaries}
\crefname{claim}{Claim}{Claims}
\crefname{example}{Example}{Examples}
\crefname{algorithm}{Algorithm}{Algorithms}
\crefname{problem}{Problem}{Problems}
\crefname{definition}{Definition}{Definitions}
\newcommand{\Authornote}[2]{}
\newcommand{\FIXME}[1]{}
\newcommand{\Authornotecolored}[3]{}
\newcommand{\Authorcomment}[2]{}
\newcommand{\Authorfnote}[2]{}
\definecolor{forestgreen(traditional)}{rgb}{0.0, 0.27, 0.13}
\newcommand{\paren}[1]{(#1)}
\newcommand{\Paren}[1]{\left(#1\right)}
\newcommand{\set}[1]{\{#1\}}
\newcommand{\Set}[1]{\left\{#1\right\}}
\newcommand{\norm}[1]{\lVert#1\rVert}
\newcommand{\Norm}[1]{\left\lVert#1\right\rVert}
\newcommand{\iprod}[1]{\langle#1\rangle}
\newcommand{\Iprod}[1]{\left\langle#1\right\rangle}
\newcommand{\Esymb}{\mathbb{E}}
\newcommand{\Psymb}{\mathbb{P}}
\DeclareMathOperator*{\E}{\Esymb}
\DeclareMathOperator*{\ProbOp}{\Psymb}
\renewcommand{\Pr}{\ProbOp}
\newcommand{\from}{\colon}
\newcommand{\mper}{\,.}
\newcommand{\mcom}{\,,}
\newcommand\bdot\bullet
\DeclareMathOperator{\Tr}{Tr}
\DeclareMathOperator{\poly}{poly}
\newcommand{\N}{\mathbb N}
\newcommand{\R}{\mathbb R}
\newcommand{\cA}{\mathcal A}
\newcommand{\cB}{\mathcal B}
\newcommand{\cD}{\mathcal D}
\newcommand{\cI}{\mathcal I}
\newcommand{\cL}{\mathcal L}
\newcommand{\cN}{\mathcal N}
\newcommand{\cO}{\mathcal O}
\newcommand{\cS}{\mathcal S}
\def\II{\mathbb{I}}
\def\sub{\texttt{Sub}}
\newcommand{\tmu}{\tilde{\mu}}
\renewcommand{\leq}{\leqslant}
\renewcommand{\geq}{\geqslant}
\renewcommand{\ge}{\geqslant}
\let\epsilon=\varepsilon
\numberwithin{equation}{section}
\newcommand\MYcurrentlabel{xxx}
\newcommand{\MYstore}[2]{%
  \global\expandafter \def \csname MYMEMORY #1 \endcsname{#2}%
}
\newcommand{\MYload}[1]{%
  \csname MYMEMORY #1 \endcsname%
}
\newcommand{\MYnewlabel}[1]{%
  \renewcommand\MYcurrentlabel{#1}%
  \MYoldlabel{#1}%
}
\newcommand{\MYdummylabel}[1]{}
\newcommand{\torestate}[1]{%
  \let\MYoldlabel\label%
  \let\label\MYnewlabel%
  #1%
  \MYstore{\MYcurrentlabel}{#1}%
  \let\label\MYoldlabel%
}
\newcommand{\restatetheorem}[1]{%
  \let\MYoldlabel\label
  \let\label\MYdummylabel
  \begin{theorem*}[Restatement of \cref{#1}]
    \MYload{#1}
  \end{theorem*}
  \let\label\MYoldlabel
}
\newcommand{\restatelemma}[1]{%
  \let\MYoldlabel\label
  \let\label\MYdummylabel
  \begin{lemma*}[Restatement of \cref{#1}]
    \MYload{#1}
  \end{lemma*}
  \let\label\MYoldlabel
}
\newcommand{\restateprop}[1]{%
  \let\MYoldlabel\label
  \let\label\MYdummylabel
  \begin{proposition*}[Restatement of \cref{#1}]
    \MYload{#1}
  \end{proposition*}
  \let\label\MYoldlabel
}
\newcommand{\restatefact}[1]{%
  \let\MYoldlabel\label
  \let\label\MYdummylabel
  \begin{fact*}[Restatement of \prettyref{#1}]
    \MYload{#1}
  \end{fact*}
  \let\label\MYoldlabel
}
\newcommand{\restate}[1]{%
  \let\MYoldlabel\label
  \let\label\MYdummylabel
  \MYload{#1}
  \let\label\MYoldlabel
}
\newcommand*{\Id}{\mathrm{Id}}
\newcommand*{\tr}{\mathrm{tr}}
\newcommand*{\zo}{\set{0,1}}
\def\pE{\tilde{\mathbb{E}}}
\def\expec#1#2{{\bf \mathbb{E}}_{#1}[ #2 ]}
\def\expecf#1#2{ \mathop{ \bf \mathbb{E}}_{#1}\left[ #2 \right]}
\def\dim#1{\mathrm{dim} (#1)}
\def\norm#1{\left\| #1 \right\|}
\def\expec#1#2{{\bf \mathbb{E}}_{#1}[ #2 ]}
\title{
  List-Decodable Subspace Recovery: Dimension Independent Error in Polynomial Time
}
\author{
    Ainesh Bakshi\thanks{Carnegie Mellon University} \and Pravesh K. Kothari \thanksmark{1}}
\begin{document}

\pagestyle{empty}


\maketitle
\thispagestyle{empty} 


\begin{abstract}
In list-decodable subspace recovery, the input is a collection of $n$ points $\alpha n$ (for some $\alpha \ll 1/2$) of which are drawn i.i.d. from a distribution $\cD$ with a isotropic rank $r$ covariance $\Pi_*$ (the \emph{inliers}) and the rest are arbitrary, potential adversarial outliers. The goal is to recover a $O(1/\alpha)$ size list of candidate covariances that contains a $\hat{\Pi}$ close to $\Pi_*$. Two recent independent works~\cite{raghavendra2020list,bakshi2020listdecodable} gave the first efficient algorithm for this problem. These results, however, obtain an error that grows with the dimension (linearly in~\cite{raghavendra2020list} and logarithmically in \cite{bakshi2020listdecodable} at the cost of quasi-polynomial running time) and rely on \emph{certifiable anti-concentration} - a relatively strict condition satisfied essentially only by the Gaussian distribution.


In this work, we improve on these results on all three fronts: we obtain \emph{dimension-independent} error in fixed-polynomial running time under less restrictive distributional assumptions. Specifically, we give a $\poly(1/\alpha) d^{O(1)}$ time algorithm that outputs a list containing a $\hat{\Pi}$ satisfying $\Norm{\hat{\Pi} -\Pi_*}_F \leq O(1/\alpha)$. Our result only needs \emph{certifiable hypercontractivity} and bounded variance of degree 2 polynomials -- a condition satisfied by a much broader family of distributions in contrast to certifiable anticoncentration. As a result, in addition to Gaussians, our algorithm applies to uniform distribution on the hypercube and $q$-ary cubes and arbitrary product distributions with subgaussian marginals. Prior work~\cite{raghavendra2020list} had identified such distributions as potential hard examples as such distributions do not exhibit strong enough anti-concentration. When $\cD$ satisfies certifiable anti-concentration, we obtain a stronger error guarantee of $\Norm{\hat{\Pi}-\Pi_*}_F \leq \eta$ for any arbitrary $\eta > 0$ in $d^{O(\poly(1/\alpha) + \log (1/\eta))}$ time. Our proof introduces two new ingredients to SoS based robust estimation algorithms: the use of \textit{certifiable hypercontractivity} to identify low dimensional structure in data and a new exponential error reduction mechanism within SoS. 

\end{abstract}

\clearpage


  \microtypesetup{protrusion=false}
  \tableofcontents{}
  \microtypesetup{protrusion=true}

\clearpage

\pagestyle{plain}
\setcounter{page}{1}

\section{Introduction}
Imagine that you are given points $x_1, x_2, \ldots, x_n \in \R^d$ such that for a fixed constant $\alpha\ll 1/2$, an $\alpha$ fraction are an i.i.d. sample from some isotropic distribution $\cD$ restricted to an unknown subspace $V_*$. The remaining $(1-\alpha)n$ samples - an overwhelming majority - are arbitrary and potentially adversarially chosen \emph{outliers}. Can we learn anything meaningful about the unknown subspace $V_*$ from such a harsh corruption in the input sample?

Uniquely recovering $V_*$ (or any reasonable approximation to it) is clearly impossible - the outliers can contain an $\alpha n$-size i.i.d. sample from the distribution $\cD$ restricted to a different subspace $V'$. A recent line of work on \emph{list-decodable learning} proposed a way forward via a happy marriage between list-decoding from coding theory and algorithmic robust statistics. In this reformulation, the goal is to extract a list of candidates containing an approximate solution, such that the list is of size $O(1/\alpha)$, a fixed constant independent of the dimension $d$. In the context of \textit{subspace recovery}, the list must contain a candidate subspace that approximates the true subspace $V_*$. 

Two recent, concurrent works~\cite{raghavendra2020list,bakshi2020listdecodable}\footnote{One of them is a previous version (\url{https://arxiv.org/abs/2002.05139v1}) of this paper.} discovered the first efficient algorithm for this problem based on the sum-of-squares method. Both results need to assume that the distribution $\cD$ is \emph{certifiably anti-concentrated} - a property known to be satisfied only by Gaussian distributions, and subgaussian-tailed rotationally symmetric distributions so far. The error bounds in both these works\footnote{The work of \cite{raghavendra2020list} gives a $d^{O(1)}$ time algorithm guaranteeing a candidate subspace $V$ with the associated projection matrix satisfying $\Norm{\Pi_V-\Pi_{V_*}}_F\leq O(\sqrt{\dim{V}}/\alpha^{2.5})$. In the same setting, the algorithm from a previous version of this work~\cite{bakshi2020listdecodable} uses $d^{\log d/\alpha^4}$ time to guarantee a candidate subspace $V$ so that $\Norm{\Pi_V-\Pi_{V_*}}_F\leq O(\sqrt{\log \dim(V)}/\alpha)$.} grow with the dimension of the subspace $V$ which can be as large as $\Omega(d)$. 

In a recent work, Karmarkar, Klivans and Kothari~\cite{DBLP:journals/corr/abs-1905-05679} showed that the anti-concentration assumption  made in the above works is necessary for list-decodable linear regression (a special case of list-decodable subspace recovery) by proving an information-theoretic impossibility result for the problem that works even when the distribution of the inlier covariates is uniform on the Boolean hypercube. Thus, one might suspect that anti-concentration of $\cD$ is necessary even for list-decodable subspace recovery. Since no distribution on the hypercube can be anti-concentrated, Raghavendra and Yau~\cite{raghavendra2020list} suggest\footnote{See discussion preceding Lemma 4.1 in~\cite{raghavendra2020list}.}  that such distributions might be hard examples for obtaining dimension-independent error guarantees for list-decodable subspace recovery\footnote{We note that their lower-bound does not apply to the uniform distribution on the hypercube.}.


In this work, we give dramatically improved algorithms for list-decodable subspace recovery that, in particular, show that the above intuition is \emph{not} well-founded. Our algorithm works for a broad class of distributions including the uniform distribution on the discrete hypercube - that is \emph{not} anti-concentrated - while achieving a smaller, \emph{dimension-independent} error in fixed (i.e. exponent independent of $\alpha$) polynomial running time. While this might appear counter-intuitive, we discuss (see paragraph following Fact 1.5 on Page 4) why this result does not violate the information-theoretic lower-bounds of~\cite{DBLP:journals/corr/abs-1905-05679}. When $\cD$ satisfies \emph{certifiable anti-concentration} in addition, we obtain an even better guarantee: a polynomial time algorithm with an \emph{arbitrarily small} error $\eta$ while paying only a $\log(1/\eta)$ factor in the exponent of the run time. At a conceptual level, our main contribution is identifying what appears like the ``right'' distributional assumption that allows obtaining dimension-independent error in list decodable subspace recovery. Our second result is obtained via a new exponential-error-reduction technique within the sum-of-squares framework that is likely to find more applications in the future. 

\paragraph{Context for List-Decodable Learning.} List-decodable learning was first proposed in the context of clustering by Balcan, Blum and Vempala~\cite{DBLP:conf/stoc/BalcanBV08}. In a recent work, Charikar, Steinhardt and Valiant \cite{DBLP:conf/stoc/CharikarSV17} rejuvenated it as a natural model for algorithmic robust statistics. Most recent works in algorithmic robust statistics have focused on the related but less harsh model of where input data is corrupted by an $\epsilon < 1/2$ fraction outliers. This line of work boasts of some remarkable successes including robust algorithms for computing mean, covariance and higher moments of distributions ~\cite{DBLP:conf/focs/LaiRV16,DBLP:conf/focs/DiakonikolasKK016,DBLP:conf/stoc/CharikarSV17,DBLP:journals/corr/abs-1711-11581,DBLP:journals/corr/SteinhardtCV17,DBLP:conf/soda/0002D019,DBLP:conf/icml/DiakonikolasKK017,DBLP:conf/soda/DiakonikolasKK018,DBLP:conf/colt/0002D0W19, DBLP:journals/corr/abs-1711-11581}, clustering mixture models \cite{DBLP:conf/focs/DiakonikolasKS17,KothariSteinhardt17,HopkinsLi17,bakshi2020outlier,diakonikolas2020robustly}  and performing linear regression in the presence of a small $\epsilon$ fraction of adversarial outliers ~\cite{DBLP:conf/focs/DiakonikolasKS17,DBLP:conf/colt/KlivansKM18,DBLP:conf/icml/DiakonikolasKK019,DBLP:journals/corr/abs-1802-06485,DBLP:journals/corr/abs-1905-05679,bakshi2020robust,zhu2020robust,cherapanamjeri2020optimal}. 

While the success hasn't been of the same scale, there has been quite a bit of progress on list-decodable learning that surmount the challenges that arise in dealing with overwhelmingly corrupted data. Recent sequence of works have arrived at a blueprint using the \emph{sum-of-squares method} for list-decodable estimation with applications to list-decodable mean estimation~\cite{diakonikolas2018list,KothariSteinhardt17} and linear regression~\cite{DBLP:journals/corr/abs-1905-05679,raghavendra2020listreg}.

\paragraph{Why List-Decodable learning?} List-decodable learning is a strict generalization of related and well-studied \emph{clustering} problems (for e.g., list-decodable mean estimation generalizes clustering spherical mixture models, list-decodable regression generalizes mixed linear regression). In our case, list-decodable subspace recovery generalizes the well-studied problem of subspace clustering where given a mixture of $k$ distributions with covariances non-zero in different subspaces, the goal is to recover the underlying $k$ subspaces ~\cite{agrawal1998automatic,cheng1999entropy,goil1999mafia,procopiuc2002monte,aggarwal2000finding}. Algorithms in this model thus naturally yield robust algorithms for the related clustering formulations. In contrast to known results, such algorithms allow ``partial recovery'' (e.g. for example recovering $k-1$ or fewer clusters) even in the presence of outliers that garble up one or more clusters completely.

Another important implication of list-decodable estimation is algorithms for the \emph{small outlier} model that work whenever the fraction of inliers $\alpha > 1/2$ - the information-theoretic minimum for unique recovery. As a specific corollary, we obtain an algorithm for (uniquely) estimating the subspace spanned by the inlier distribution $D$ whenever $\alpha > 1/2$. We note that if $\alpha$ is sufficiently close to $1$, such a result follows from outlier-robust covariance estimation algorithms~\cite{DBLP:journals/corr/DiakonikolasKKL16,DBLP:conf/focs/LaiRV16,DBLP:conf/colt/0002D0W19}. While prior works do not specify precise constants, all known works appear to require $\alpha$ at least $> 0.75$.

\subsection{Our Results}
We are ready to formally state our results. Our results apply to input samples generated according to the following model:

\begin{model}[Robust Subspace Recovery with Large Outliers]
For $0< \alpha<1$ and $r < d$, 
let $\Pi_* \in \mathbb{R}^{d \times d}$ be a projector to a subspace of dimension $r \leq d$ and let $\cD$ be a distribution on $\R^d$ with mean $\mu_*$ and covariance $\Pi_*$. Let $\sub_{\cD}(\alpha, \Pi_*)$ denote the following probabilistic process to generate $n$ samples, $x_1, x_2 \ldots x_n$ with $\alpha n$ inliers $\cI$ and $(1-\alpha)n$ outliers $\cO$:
\begin{enumerate}
    \item Construct $\cI$ by choosing $\alpha n$ i.i.d. samples from $\cD$.
    \item Construct $\cO$ by choosing the remaining $(1-\alpha)n$ points arbitrarily and potentially adversarially w.r.t. the inliers.
\end{enumerate}
\end{model}

\begin{remark} We will mainly focus on the case when $\mu_* = 0$. The case of non-zero $\mu_*$ can be easily reduced to the case of $\mu_* = 0$ by modifying samples by randomly pairing them up and subtracting off samples in each pair (this changes the fraction of inliers from $\alpha$ to $\alpha^2$).
\end{remark}

Our main result is a \emph{fixed} (i.e. exponent of the polynomial does not depend on $\alpha$) polynomial time algorithm with \emph{dimension-independent} error in \emph{Frobenius norm} - the strongest notion of closeness that implies other guarantees such as the principal angle and spectral distance between subspaces - for list-decodable subspace recovery that succeeds whenever $\cD$ has certifiably hypercontractive degree-$2$ polynomials:

\begin{definition}[Certifiably Hypercontractivity]
A distribution $\cD$ on $\R^d$ is said to have $(C,2h)$-certifiably hypercontractive polynomials if there is a degree-$2h$ sum-of-squares proof (see Subsection \ref{subsec:sos_proofs}) in the $d \times d$ matrix-valued indeterminate $Q$ of the following inequality:
\[
\E_{x\sim \cD} \Paren{x^{\top} Q x -\E_{x\sim \cD}x^{\top} Q x}^{2h} \leq  (Ch)^{2h} \Paren{\E_{x\sim \cD} \Paren{x^{\top} Q x -\E_{x\sim \cD}x^{\top} Q x}^{2}}^h\mper
\]
\end{definition}

Many natural distributions are certifiably hypercontractive including linear transforms of uniform distribution on the Boolean hypercube and unit sphere, Gaussian distributions, and product distributions with subgaussian marginals. In particular, the set of certifiably hypercontractive distributions is strictly larger than the currently known list of certifiably anti-concentrated distributions (that essentially only holds for rotationally symmetric distributions with sufficiently light tails).

We are now ready to state our main result. 

\begin{theorem} \label{thm:main}
Let $\Pi_*$ be a projection matrix for a subspace of dimension $r$. 
Let $\cD$ be a distribution with mean $0$, covariance $\Pi_*$, and certifiably $(C, 8)$-hypercontractive  polynomials. 

Then, there exists an algorithm takes as input $n = n_0 \geq \Omega\left((d\log(d)/\alpha)^{16}\right)$ samples from $\sub_{\cD}(\alpha, \Pi_*)$ and in $O(n^{18})$ time, outputs a list $\cL$ of $O(1/\alpha)$  projection matrices such that with probability at least $0.99$ over the draw of the sample and the randomness of the algorithm, there is a $\hat{\Pi} \in \cL$ satisfying $\|\hat{\Pi} - \Pi_*\|_F \leq O(1/\alpha)$. 
\end{theorem}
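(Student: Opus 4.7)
The plan is to follow the sum-of-squares (SoS) meta-algorithm for list-decodable estimation. I would introduce Boolean selection indeterminates $w_1, \ldots, w_n$ with $w_i^2 = w_i$ and $\sum_i w_i = \alpha n$; these select a candidate ``inlier'' subset whose empirical covariance $\hat\Pi(w) := \frac{1}{\alpha n}\sum_i w_i x_i x_i^\top$ serves as the candidate estimate. The key axiom of the polynomial system $\mathcal A$ is the empirical certifiable $(C,8)$-hypercontractivity of the $w$-subset: for every symmetric matrix-valued indeterminate $Q$,
\[
\frac{1}{\alpha n}\sum_i w_i \bigl(x_i^\top Q x_i - \tr(Q \hat\Pi(w))\bigr)^8 \leq (8C)^8 \left(\frac{1}{\alpha n}\sum_i w_i \bigl(x_i^\top Q x_i - \tr(Q \hat\Pi(w))\bigr)^2\right)^4.
\]
Feasibility of $\mathcal A$ at the true inlier indicator $w^*_i := \mathbf 1[i \in \mathcal I]$ follows by lifting the population $(C,8)$-hypercontractivity of $\cD$ to the empirical measure on $\mathcal I$ via polynomial concentration of the empirical $16$-th moment tensor of $\cD$, which accounts for the stated sample complexity $n \gtrsim (d\log d/\alpha)^{16}$.

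\textbf{Identifiability via certifiable hypercontractivity.} At the heart of the argument is an SoS identifiability lemma: any pseudo-distribution satisfying $\mathcal A$ admits candidate covariances within Frobenius distance $O(1/\alpha)$ of $\Pi_*$. To prove this inside SoS, I would use a two-sided application of certifiable hypercontractivity to the quadratic $p(x) := x^\top(\hat\Pi(w) - \Pi_*) x$. On one hand, $\|\hat\Pi(w) - \Pi_*\|_F^2$ equals the population variance of $p(x)$ under $\cD$, so population hypercontractivity of $\cD$ yields a Paley--Zygmund-style SoS inequality: an $\Omega(1/C^2)$-fraction of inliers satisfy $(p(x) - \E_\cD p)^2 \geq \Omega(\|\hat\Pi(w) - \Pi_*\|_F^2)$, providing a lower bound on the empirical $L^2$ norm of $p$ on the inliers. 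On the other hand, the empirical hypercontractivity axiom on the $w$-subset, combined with the identity $\tr(Q \hat\Pi(w)) = \frac{1}{\alpha n}\sum_i w_i x_i^\top Q x_i$, fixes the empirical mean of $p$ on the $w$-subset and bounds its empirical variance from above. Restricting both bounds to the overlap $\{i : w_i w^*_i = 1\}$ (whose pseudo-size is $\Omega(\alpha^2 n)$ by a pseudo-Cauchy--Schwarz argument using $\pE_\mu[\sum_i w_i] = \alpha n$) and comparing produces the target bound. The main obstacle is executing this entire comparison as a low-degree SoS proof using only certifiable hypercontractivity: earlier works~\cite{raghavendra2020list, bakshi2020listdecodable} relied on \emph{certifiable anti-concentration} precisely to rule out ``sharp'' distributions whose inliers cluster too tightly to distinguish competing subspaces, and the Paley--Zygmund argument above trades anti-concentration for hypercontractivity at the cost of accepting a fixed $O(1/\alpha)$ error.

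\textbf{Rounding to a size-$O(1/\alpha)$ list.} For the final list, I would solve the degree-$O(1)$ SoS SDP to obtain a pseudo-distribution $\mu$ satisfying $\mathcal A$ in time $n^{O(1)}$ (matching the claimed $O(n^{18})$), form per-sample candidates $\hat\Pi_i := \pE_\mu[w_i \hat\Pi(w)] / \pE_\mu[w_i]$ for each index $i$ with $\pE_\mu[w_i] \geq \alpha/2$ (Markov's inequality on $\sum_i \pE_\mu[w_i] = \alpha n$ guarantees at least $\alpha n / 2$ such indices), project each candidate to the nearest rank-$r$ projection, and greedily deduplicate, keeping only representatives at pairwise Frobenius distance $\Omega(1/\alpha)$. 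Applying the identifiability lemma to the pseudo-distribution conditioned on $w_i = 1$ implies $\|\hat\Pi_i - \Pi_*\|_F \leq O(1/\alpha)$ for every true inlier $i \in \mathcal I$, and further implies that any two surviving candidates at Frobenius distance larger than $\Omega(1/\alpha)$ must arise from essentially disjoint $w$-supports, of which there are at most $O(1/\alpha)$. The resulting list therefore has size $O(1/\alpha)$ and contains an estimate within $O(1/\alpha)$ of $\Pi_*$, completing the proof of \cref{thm:main}.
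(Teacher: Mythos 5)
Your high-level insight matches the paper's main idea — certifiable hypercontractivity of degree-2 polynomials, rather than anti-concentration, suffices to get $O(1/\alpha)$ error — but the proof strategy diverges in several places and has genuine gaps.

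\textbf{Missing the projection-matrix indeterminate.} The paper introduces a separate matrix-valued indeterminate $\Pi$ and includes the constraints $\Pi^2=\Pi$ and $w_i\Pi x_i = w_i x_i$ in $\cA_{w,\Pi}$. These are load-bearing: in the paper's key lemma the quantity $Q = \frac{1}{\alpha n}\sum_{i\in\cI}w_i x_i x_i^\top - w(\cI)\Pi$ is plugged into the hypercontractivity/bounded-variance axioms, and the constraint $w_i\Pi x_i = w_i x_i$ is exactly what lets one replace $\|\Pi Q\Pi\|_F$ by $\|Q\|_F$, enabling SoS cancellation. Your formulation replaces $\Pi$ with the explicit polynomial $\hat\Pi(w) := \frac{1}{\alpha n}\sum_i w_i x_i x_i^\top$, which is neither a projector nor forced to annihilate the complement of any subspace; the structural link between the $w$-subset and a subspace is then gone, and the paper's cancellation step has no analogue. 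Your subsequent ``project to the nearest rank-$r$ projection'' does not rescue this inside the SoS proof.

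\textbf{Paley--Zygmund vs.\ Hölder's + cancellation.} You propose a two-sided Paley--Zygmund-style lower bound on the empirical $L^2$ of $p(x) = x^\top(\hat\Pi(w)-\Pi_*)x$ over the inliers. You yourself flag the uncertainty about SoS-izing this, and indeed the paper avoids it entirely. The paper's Lemmas~\ref{lem:frobenius-closeness-subsample-w} and~\ref{lem:frobenius-closeness-subsample-inliers} use only one-sided bounds: SoS Hölder's to decouple the $w$-weights, hypercontractivity and bounded variance as upper bounds, and then SoS cancellation (Fact~\ref{fact:cancellation-within-SoS-indeterminate}) to conclude. There is no lower-bound step, and a Paley--Zygmund argument would need a genuine lower bound within SoS, which is not provided by hypercontractivity alone. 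Also note that $\cD$ is supported on $\text{range}(\Pi_*)$, so the population variance of $p$ under $\cD$ sees only $\|\Pi_*(\hat\Pi(w)-\Pi_*)\Pi_*\|_F$, which can be much smaller than $\|\hat\Pi(w)-\Pi_*\|_F$; your claimed equality with the full Frobenius distance is false.

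\textbf{Missing the high-entropy constraint.} Your algorithm solves the SDP for any feasible pseudo-distribution. The paper explicitly minimizes $\|\pE[w]\|_2$, which is what yields $\pE[w(\cI)]\geq\alpha$ (Fact~\ref{fact:high-entropy-pseudo-distributions}). Without it, the pseudo-distribution can live entirely on indicators of outlier clusters: $\pE[w_i]$ can be $0$ for every $i\in\cI$, your per-sample conditional estimates $\hat\Pi_i$ are then undefined for the inliers, and the identifiability lemma has no leverage. Your Markov argument over $\sum_i\pE[w_i]=\alpha n$ does not fix this, since the mass could all be on outlier indices.

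\textbf{Rounding.} The paper uses ``rounding by votes'': sample $O(1/\alpha)$ indices $i$ with probability $\pE[w_i]/(\alpha n)$ and output $\hat\Pi_i = \pE[w_i\Pi]/\pE[w_i]$; the analysis is a direct Markov/Jensen argument on the SoS-certified inequality. Your greedy deduplication step, and the claim that far-apart candidates arise from essentially disjoint $w$-supports, is plausible but unproved — pseudo-distributions do not have ``supports'' in the literal sense, and the disjointness claim needs a separate argument.
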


As an immediate corollary, this gives an algorithm for list-decodable subspace recovery when $\cD$ is Gaussian, uniform on the unit sphere, uniform on the discrete hypercube/$q$-ary cube, product distribution with subgaussian marginals and their affine transforms.
\paragraph{Discussion and Comparison with Prior Works} 
Theorem \ref{thm:main}  improves on the previous version of this paper~\cite{bakshi2020listdecodable} and ~\cite{raghavendra2020list} in running time, error guarantees and the generality of the distribution $\cD$. In particular, it strictly improves on the work of Raghavendra and Yau who gave an error guarantee of $O(r/\alpha^5)$ in polynomial time by relying on certifiable anti-concentration.\footnote{The results in~\cite{raghavendra2020list} handle a small amount of additive noise. The algorithm in this paper can be extended to handle a similar amount of noise but we do not focus on that aspect in this paper.} It also improves on the guarantee in a previous version of this work for Gaussians that relied on certifiable anti-concentration and an exponential error reduction method to give an error of $O(\log (r)/\alpha)$ in $d^{O(\log r/\alpha^4)}$ time. Unlike the present work, both these algorithms provably cannot extend to the uniform distribution on the hypercube.  

A priori, our result might appear surprising and  almost too-good-to-be-true. Indeed, prior works identified anti-concentration as a information-theoretic necessary condition on $\cD$ for list-decodable regression (a special case of list-decodable subspace recovery) to be feasible. Specifically, Karmalkar, Klivans and Kothari~\cite{DBLP:journals/corr/abs-1905-05679} show:

\begin{fact}[Theorem 6.1, Page 19 in~\cite{DBLP:journals/corr/abs-1905-05679}] \label{fact:lower-bound-kkk}
For any constant $\alpha>0$, there exists a distribution $\cD$ (uniform distribution on $\zo^n$) that is $(\alpha+\epsilon)$-anti-concentrated for every $\epsilon>0$ but there is no algorithm for $\alpha/2$-approximate list-decodable subspace recovery with rank $r=d-1$ that outputs a list of size $<d$.
\end{fact}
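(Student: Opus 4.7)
The plan is a standard information-theoretic non-identifiability argument. I would exhibit $d$ candidate rank-$(d-1)$ projectors $\Pi_1,\ldots,\Pi_d$ that are pairwise far in Frobenius norm, together with a single distribution $\cD^{\star}$ on samples such that, for every $i \in [d]$ simultaneously, the sample is statistically realizable as $\alpha n$ i.i.d.\ inliers drawn from an isotropic distribution supported on $\mathrm{range}(\Pi_i)$ plus $(1-\alpha) n$ adversarially chosen outliers. Since the distribution of the sample does not depend on the ``true'' index $i$, the output list $\cL$ of any (possibly randomized) algorithm is independent of $i$; combined with the pairwise separation of the $\Pi_i$'s and the triangle inequality in Frobenius norm, this forces $|\cL| \geq d$.

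\textbf{Candidate projectors and indistinguishable sample.} The natural choice for the hypercube is the $d$ coordinate hyperplanes $V_i = \{x \in \R^d : x_i = 0\}$ with projectors $\Pi_i = I_d - e_i e_i^{\top}$; these are rank $d-1$ and satisfy $\|\Pi_i - \Pi_j\|_F = \sqrt{2}$ for $i \neq j$, so for $\alpha < \sqrt{2}$ no single $\hat{\Pi}$ can be $\alpha/2$-close to two distinct $\Pi_i$'s. A natural choice of inlier distribution is $\cD_i$ uniform on $\{x \in \{-1,0,+1\}^d : x_i = 0,\ x_j \in \{\pm 1\}\text{ for }j \neq i\}$, which is isotropic on $V_i$ with covariance $\Pi_i$. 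The candidate indistinguishable distribution $\cD^{\star}$ is the uniform mixture $\frac{1}{d}\sum_{i=1}^d \cD_i$, equivalently the uniform distribution on vectors in $\{-1,0,+1\}^d$ with exactly one zero coordinate. When $\alpha = 1/d$, a sample of size $n$ from $\cD^{\star}$ contains (with high probability) roughly $\alpha n$ points in each $V_i$, so for any ``true'' hypothesis $i^{\star}$ the sample can be partitioned as $\alpha n$ inliers from $\cD_{i^{\star}}$ and $(1-\alpha) n$ outliers, matching the generative model; the output list of any algorithm must then cover all $d$ equally likely hypotheses.

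\textbf{Anti-concentration and the main obstacle.} The $(\alpha+\epsilon)$-anti-concentration claim for the uniform distribution on $\{0,1\}^n$ reduces to a routine Littlewood--Offord / Berry--Esseen estimate on signed Bernoulli sums and should not be the bottleneck. The principal technical hurdle is upgrading the above construction from the balanced case $\alpha = 1/d$ to an arbitrary constant $\alpha > 0$: a naive $d$-way decomposition with $\alpha n$ inliers per hypothesis would require $\alpha n \cdot d > n$ points. The resolution would exploit the combinatorial overlap among the coordinate hyperplanes -- a vector with several zero coordinates simultaneously lies in $V_i$ for multiple $i$ -- so that a single batch of samples can serve as inliers under many hypotheses at once, and the outliers under any single hypothesis only need to absorb the residual points lying outside $V_i$. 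Designing such a shared planting distribution while maintaining that the conditional distribution of the inliers under each hypothesis exactly matches an isotropic distribution on the corresponding $V_i$ is where I expect the bulk of the technical work, and this is what ultimately forces the list size to be $\geq d$ rather than merely $\Omega(1/\alpha)$.
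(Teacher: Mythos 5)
The pigeonhole skeleton (many pairwise-far projectors, a sample realizable under each hypothesis, triangle inequality forcing $|\cL|\geq d$) and the computation $\|\Pi_i-\Pi_j\|_F=\sqrt{2}$ are correct. But there is a genuine gap in the middle, and it is the one you yourself flag: your mixture $\cD^\star=\tfrac1d\sum_i\cD_i$ (uniform on ternary vectors with \emph{exactly one} zero coordinate) gives $\Pr[x_i=0]=1/d$, so the construction only realizes inlier fraction $\alpha=1/d$, which vanishes with the dimension rather than being a constant. You also do not use the distribution the statement actually names -- uniform on $\zo^n$ -- and that choice is not incidental: it is precisely what closes the gap.

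The missing idea is to use a \emph{product} distribution whose coordinates are independently zero with constant probability, rather than a mixture that forces exactly one zero. With $x$ uniform on $\zo^d$, each coordinate satisfies $\Pr[x_i=0]=1/2$ independently; conditioned on $x_i=0$, the remaining coordinates are still uniform on $\zo^{d-1}$, which (after centering and rescaling, which the model allows) is an isotropic distribution supported on $V_i=\ker(e_i)$ with covariance proportional to $\Pi_i$. Hence a single i.i.d.\ sample $x_1,\dots,x_n$ from uniform on $\zo^d$ is \emph{simultaneously} a valid draw from $\sub_{\cD_i}(\alpha,\Pi_i)$ for every $i\in[d]$ with $\alpha$ essentially $1/2$: take the $\approx n/2$ points with $(x_j)_i=0$ as inliers and declare the rest adversarial (an adversary is allowed to be benign). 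Independence is exactly what makes the "combinatorial overlap" you worry about automatic -- a point with several zero coordinates is an inlier for all the corresponding hypotheses at once, and conditioning on one coordinate being zero does not perturb the others. Your mixture $\cD^\star$ deliberately destroys this independence (it forces exactly one zero per vector, so the events $\{x_i=0\}$ are mutually exclusive), which is precisely why it collapses $\alpha$ to $1/d$. Once you swap in the product distribution, the step you labelled "the bulk of the technical work" disappears, and the rest of your argument goes through unchanged.
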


On the other hand, note that discrete product distributions such as uniform distribution on the hypercube/$q$-ary cube satisfy certifiable hypercontractivity (see ~\cite{DBLP:conf/soda/KauersOTZ14}) so our Theorem~\ref{thm:main} applies. This is not a contradiction because of the error guarantees - observe that the Frobenius error bound of $O(1/\alpha)$ provided by Theorem~\ref{thm:main} translates to a $\ell_2$-norm bound of $O(1/\alpha)$ for linear regression. This is not meaningful for unit vectors, whenever $\alpha \leq 1/2$, since even a random unit vector achieves an error of at most $\sqrt{2}$ in this setting. On the other hand, for subspace recovery, this is a non-trivial guarantee whenever the dimension and the co-dimension of the unknown subspace are $\gg 1/\alpha$. 

\paragraph{High-Accuracy Subspace Recovery.} 
Our first result naturally raises the question of algorithms obtaining arbitrarily tiny error (instead of $O(1/\alpha)$). For sufficiently small errors ($\ll 1$), $\cD$ must necessarily be anti-concentrated, given the lower-bound from Fact~\ref{fact:lower-bound-kkk} above. Our next result confirms that \textit{certifiable anti-concentration} of $\cD$ is sufficient to obtain an arbitrarily small error while still maintaining a list-size of an absolute constant (but of size $1/\alpha^{O(\log (1/\alpha))}$) independent of the dimension. 

To state our result, we first recall certifiable anti-concentration. 

\begin{definition}[Certifiable Anti-Concentration]
A zero-mean distribution $D$ with covariance $\Sigma$ is $2t$-certifiably $(\delta,C\delta)$-anti-concentrated if there exists a univariate polynomial $p$ of degree $t$ such that there is a degree $2t$ sum-of-squares proof in variable $v$ of the following inequalities:
\begin{enumerate}
\item $\norm{v}^{2t-2}_2\iprod{x,v}^2 + \delta^2 p^2\left(\iprod{x,v}\right) \geq \frac{\delta^2 \norm{\Sigma^{1/2} v}^{2t}_2}{2}$.
\item $\E_{ x \sim D} p^2\paren{\iprod{x,v}} \leq C\delta \norm{\Sigma^{1/2} v}_2^{2t}$.
\end{enumerate}
A subset $\cS \subseteq  \R^d$ is $2t$-certifiably $(\delta,C\delta)$-anti-concentrated if the uniform distribution on $\cS$ is $2t$-certifiably $(\delta,C\delta)$-anti-concentrated.
\label{def:certifiable-anti-concentration-homogenous}
\end{definition}

Gaussian distributions and spherically symmetric distributions with subgaussian tails are $O(1/\delta^2)$-certifiably $(2,\delta)$-anti-concentrated for every $\delta > 0$ (see Section~\ref{sec:certifiable_anti_conc}). 

\begin{theorem}\label{thm:main-2}
Let $\Pi_*$ be a projector to a subspace of dimension $r$. 
Let $\cD$ be a $k$-certifiably $(C,\alpha/2C)$-anti-concentrated distribution with certifiably $C$-hypercontractive degree 2 polynomials. 

Then, there exists an algorithm that takes as input $n = n_0 \geq (d\log(d)/\alpha)^{O(k)}$ samples from $\sub_{\cD}(\alpha, \Pi_*)$ and in $n^{O(k + \log(1/\eta))}$ time, outputs a list $\cL$ of $O(1/\alpha^{\log k + \log (1/\eta)})$ projection matrices such that with probability at least $0.99$ over the draw of the sample and the randomness of the algorithm, there is a $\hat{\Pi} \in \cL$ satisfying $\|\hat{\Pi} - \Pi_*\|_F \leq \eta$. 
\end{theorem}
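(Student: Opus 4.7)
My plan is to boost the fixed-error list-decoding algorithm underlying \Cref{thm:main} to arbitrarily small Frobenius error by interleaving it with an iterative SoS-based refinement that exploits certifiable anti-concentration. The algorithm runs in two phases: a \emph{base phase} that produces an initial list of size $1/\alpha^{O(\log k)}$ with Frobenius error $O(1)$, followed by a \emph{refinement phase} of $T = O(\log(1/\eta))$ rounds, each of which replaces every candidate in the current list by $O(1/\alpha)$ new candidates while contracting the error by a constant factor. After $T$ rounds the list size is $1/\alpha^{O(\log k + \log(1/\eta))}$ and the error is $\eta$, matching the theorem. The sample-complexity and running-time estimates follow by tracking the SoS degree used in each round.

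\textbf{SoS relaxation.}
The SoS program uses inlier-selector variables $w_1,\dots,w_n\in\{0,1\}$ with $\sum_i w_i \geq \alpha n$, together with a $d\times d$ matrix variable $Q$ constrained by $Q^2=Q$, $Q=Q^\top$, $\Tr(Q)=r$. Three families of axioms are imposed on $Q$ and the $w$-reweighted samples: (i) a second-moment-matching axiom forcing the reweighted empirical covariance to agree with $Q$; (ii) the degree-$2h$ certifiable hypercontractivity inequality of \Cref{thm:main} applied to the $w$-selected samples, which controls tails of quadratic forms; and (iii) the $k$-certifiable $(C,\alpha/(2C))$-anti-concentration axiom of \Cref{def:certifiable-anti-concentration-homogenous}, providing a degree-$k$ SoS proof that linear forms $\langle x,v\rangle$ in any direction $v$ are not uniformly small in magnitude under the inlier distribution. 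The base phase then reuses the rounding strategy proving \Cref{thm:main} (now enhanced by the anti-concentration axiom) to produce a coarse list whose best element has error $O(1)$ rather than $O(1/\alpha)$, at the cost of the $n^{O(k)}$ factor.

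\textbf{Refinement step.}
Given a candidate $\hat\Pi_t$ with error $\eta_t$, the refinement step solves a fresh SoS program augmented with the axiom $\|Q-\hat\Pi_t\|_F^2 \leq \eta_t^2$, and rounds the resulting pseudo-distribution to $O(1/\alpha)$ candidates using a covariance-sampling step analogous to prior SoS list-decoding work. The heart of the analysis is a \emph{contraction lemma}: feasibility of the augmented program implies in SoS that $\|Q-\Pi_*\|_F^2 \leq \tfrac{1}{2}\eta_t^2$. To establish it, write $E := Q - \Pi_*$, note that $E$ has rank at most $2r$, and apply the anti-concentration axiom to directions $v$ in the column space of $E$: this lower-bounds $\E_{w\text{-selected}} \langle x,v\rangle^2$ in terms of $\|Q^{1/2}v\|^{2}$, while certifiable hypercontractivity and the constraint $\|Q-\hat\Pi_t\|_F\leq\eta_t$ furnish a matching upper bound in terms of $\eta_t^2\|v\|^2$. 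Summing over an orthonormal basis of the column space of $E$ then yields the required halving. Iterating the step $T=O(\log(1/\eta))$ times drives the error to $\eta$.

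\textbf{Main obstacle.}
The principal technical difficulty lies in realizing the contraction lemma as a low-degree SoS inequality and controlling how the SoS degree and list size compound across rounds. In particular, one must substitute the matrix-valued indeterminate $E$ into the univariate anti-concentration polynomial $p$ in a way that preserves a valid SoS certificate, and must combine it with the degree-$2h$ hypercontractivity axiom without blowing up the degree beyond $O(k)$ per round. A secondary challenge is to show that the covariance-rounding step, when applied to the pseudo-distribution conditioned on proximity to $\hat\Pi_t$, indeed produces at most $O(1/\alpha)$ candidates that together cover all refinements of $\hat\Pi_t$ at the tightened scale $\eta_t/2$; this relies on a careful variance-based argument using the tightened anti-concentration at the new scale. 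Once both ingredients are in place, composing the base phase with $O(\log(1/\eta))$ refinement rounds yields the claimed bounds on list size and running time.
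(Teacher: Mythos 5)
Your proposal takes a genuinely different route from the paper's, and the route as written has gaps that would prevent the iteration from contracting. Let me contrast the two and then explain where the plan breaks down.

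\paragraph{What the paper does.} The paper does \emph{not} iterate. It runs a single SoS relaxation of degree $O(k+t)$ with $t=\Theta(\log k+\log(1/\eta))$, proves one key inequality (Lemma~\ref{lem:main-sosized-fact}),
\[
\cA_{w,\Pi}\sststile{2k+t}{\Pi,w}\Bigl\{w(\cI)^{t+1}\Norm{\Pi-\Pi_*}_F^k\le (C\delta)^t (Ck)^k(4C^2k)^k\Bigr\},
\]
and then does a one-shot ``multi-set'' rounding (Algorithm~\ref{alg:subspace-recovery-tiny-error}) that samples subsets $S$ of size $t$ proportionally to $\binom{\cI}{S}\pE[w_S]$, producing the full list at once. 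Two ingredients make this work. First, the \emph{powering trick} (Lemma~\ref{lem:powering}): from $\{(a-b)Z\le 0\}$ one derives $\{(a^t-b^t)Z\le 0\}$ in SoS. Applied to the anti-concentration inequality $(w(\cI)-C\delta)\Norm{\Pi_*(I-\Pi)v}_2^k\le 0$, this makes the slack factor $(C\delta)^t$. Second, the \emph{bootstrap from \Cref{thm:main}}: the residual $\Norm{\Pi-\Pi_*}_F^k$ that is left on the right-hand side after powering is bounded not by the trivial $(2d)^{k/2}$ (which caused the dimension-dependent loss in the earlier version) but by the dimension-independent $O(1/\alpha)^k$ coming from Lemma~\ref{lem:close-projectors}. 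After normalizing by $\pE[w(\cI)^t]\ge\alpha^t$ (Fact~\ref{fact:high-entropy-pseudo-distributions-2}), the error is $(C\delta/\alpha)^t\cdot\poly(C,k)^k=2^{-t}\poly(C,k)^k$ since $\delta=\alpha/2C$; choosing $t=\Theta(\log k+\log(1/\eta))$ finishes.

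\paragraph{Why your iteration, as written, does not contract.} Your contraction lemma aims to show (in SoS, given the axiom $\Norm{Q-\hat\Pi_t}_F\le\eta_t$) something like a halving of the error. But a single application of the anti-concentration axiom buys a multiplicative factor of exactly $C\delta=\alpha/2$ on the right-hand side, and this is consumed when you normalize by $\pE[w(\cI)]\ge\alpha$. Concretely, the best you can get from one round is (schematically) $w(\cI)\Norm{Q-\Pi_*}_F^k\le C\delta\,\Norm{Q-\Pi_*}_F^k$; bounding the right side via $\Norm{Q-\Pi_*}_F\le\Norm{Q-\hat\Pi_t}_F+\Norm{\hat\Pi_t-\Pi_*}_F\le 2\eta_t$ and dividing the pseudo-expectation by $\alpha$ gives $\le(1/2)\cdot(2\eta_t)^k$, whose $k$-th root is $2^{1-1/k}\eta_t\ge\eta_t$. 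The factor-$2$ blowup from proximity to the previous iterate exactly cancels the factor-$1/2$ from anti-concentration. Without the powering trick (or an equivalent amplification), the error does not decrease. The paper's entire point in Section~\ref{sec:high_precision_LDS} is that powering gives you $(C\delta/\alpha)^t=2^{-t}$ for free, which is what the iterative framing obscures.

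\paragraph{Additional gaps.} (i) As a matter of principle, SoS cannot certify $\Norm{Q-\Pi_*}_F^2\le\eta_t^2/2$ from the program axioms alone, because $\Pi_*$ is not a program variable: the feasible set contains $Q$'s far from $\Pi_*$ whenever $w$ is supported entirely on outliers. The paper's conclusion is necessarily of the form $w(\cI)^{t+1}\Norm{\Pi-\Pi_*}_F^k\le\cdots$, with the overlap weight $w(\cI)$ doing the work; the multi-set rounding is then needed to convert a bound involving $w(\cI)^t$ into a list of usable candidates. Your plan, which rounds each program to ``$O(1/\alpha)$ candidates'' per iteration, would need a lemma with $w(\cI)$ to a higher power (and correspondingly, a rounding over multi-sets of that size in each round) rather than a clean $\eta_t^2/2$. (ii) The geometric step in your contraction lemma is off: the direction set on which $\E_{w}\langle x,v\rangle^2$ is small is $\mathrm{col}(I-Q)$ (because $w_iQx_i=w_ix_i$), not $\mathrm{col}(E)$, and the relevant substitution is $v\mapsto\Pi_*(I-Q)v$ as in \eqref{eq:pre-0-expression}. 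Moreover ``summing over an orthonormal basis of $\mathrm{col}(E)$'' would introduce a factor of $\mathrm{rank}(E)\le 2r$, reintroducing the dimension dependence the whole construction is designed to avoid; the paper instead plugs in a Gaussian direction $v=g$ and uses SoS H\"older to relate $\E\Norm{\Pi_*(I-\Pi)g}_2^k$ to $\Tr((I-\Pi)\Pi_*(I-\Pi))^{k/2}=\bigl(\tfrac12\Norm{\Pi-\Pi_*}_F^2\bigr)^{k/2}$ without any rank factor (see \eqref{eq:bound_useful}). (iii) Your claim that the base phase achieves $O(1)$ error with anti-concentration added is not supported; \Cref{thm:main}'s hypercontractivity-only bound is $O(1/\alpha)$, and there is no argument given that anti-concentration alone lowers it to $O(1)$.

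In short, the high-level idea — bootstrapping from the fixed-error result using anti-concentration — is the same as the paper's, but the iterative framing without the powering trick does not contract, and the contraction lemma as stated has both a quantifier-side issue (the $\Pi_*$-dependent conclusion) and a geometric issue (wrong direction set, rank-dependent summation). The two ingredients you would need to repair it are exactly Lemma~\ref{lem:powering} and the multi-set rounding, at which point the proof collapses to the paper's one-shot argument.
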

The proof of Theorem~\ref{thm:main-2} is based on new argument using certifiable anti-concentration that bootstraps our first result with an exponential error reduction mechanism within the sum-of-squares proof system. This improves on the result in a previous version of this work that gave a $d^{O(\log d/\alpha^4)}$ algorithm with a dimension-dependent list size of $d^{O(\log (1/\alpha)}$ based on a somewhat complicated pruning procedure.  

Using $O(1/\delta^2)$-certifiable $(\delta,C\delta)$-anti-concentration of Gaussians and spherically symmetric distribution with subgaussian tails, we obtain:
\begin{corollary}\label{cor:gaussians}
Let $\Pi_*$ be a projector a subspace of dimension $r$. 
Let $\cD$ be a mean $0$ Gaussian or a spherically symmetric distribution with subgaussian tails with covariance $\Pi_*$. 

Then, there exists an algorithm that takes as input $n = n_0 \geq (d\log(d)/\alpha^2)^{O(1/\alpha^2)}$ samples from $\sub_{\cD}(\alpha, \Pi_*)$ and in $n^{ \log(1/\alpha \eta)/\alpha^4}$ time, outputs a list $\cL$ of $O(1/\alpha^{\log 1/\alpha + \log (1/\eta)})$ projection matrices such that with probability at least $0.99$ over the draw of the samples and the randomness of the algorithm, there is a $\hat{\Pi} \in \cL$ satisfying $\|\hat{\Pi} - \Pi_*\|_F \leq \eta$. 
\end{corollary}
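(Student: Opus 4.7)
}
The plan is to derive this as a direct specialization of \cref{thm:main-2} by plugging in the certifiable anti-concentration and certifiable hypercontractivity parameters that hold for mean-zero Gaussians and, more generally, for spherically symmetric distributions with subgaussian tails. Concretely, I would first invoke the known fact (recorded in Section~\ref{sec:certifiable_anti_conc}, cited immediately before the statement of \cref{thm:main-2}) that such distributions with covariance $\Pi_*$ are $2t$-certifiably $(\delta, C\delta)$-anti-concentrated for every $\delta > 0$ with $2t = O(1/\delta^2)$, for an absolute constant $C$. In parallel, I would verify certifiable $C$-hypercontractivity of degree-2 polynomials: for Gaussians this is the standard Hanson--Wright-type sum-of-squares inequality, and for spherically symmetric subgaussian distributions it follows by the same argument since all relevant moments are controlled by the subgaussian tail bound in a low-degree SoS-friendly way.

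The second step is to choose the anti-concentration parameter so that \cref{thm:main-2} applies with $\delta = \alpha/(2C)$. This sets the SoS degree to $k = O(1/\delta^2) = O(1/\alpha^2)$. Plugging this $k$ and the given target error $\eta$ into the sample complexity $(d\log(d)/\alpha)^{O(k)}$ in \cref{thm:main-2} gives the claimed bound $n \geq (d\log(d)/\alpha^2)^{O(1/\alpha^2)}$, and plugging it into the running time $n^{O(k + \log(1/\eta))}$ yields a time bound that simplifies, after absorbing the sample complexity exponent into $n$, to $n^{O(\log(1/\alpha\eta)/\alpha^4)}$. Similarly, the list-size bound $O(1/\alpha^{\log k + \log(1/\eta)})$ from \cref{thm:main-2} becomes $O(1/\alpha^{\log(1/\alpha) + \log(1/\eta)})$, as stated. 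The correctness guarantee (existence of $\hat{\Pi} \in \cL$ with $\|\hat{\Pi} - \Pi_*\|_F \leq \eta$ with probability $\geq 0.99$) is inherited verbatim from \cref{thm:main-2}.

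There is essentially no new mathematical obstacle in this corollary, since the work has been packaged into the hypotheses of \cref{thm:main-2}. The only step requiring care is the quantitative bookkeeping, namely checking that (i) the absolute constant $C$ in the certifiable $(\delta, C\delta)$-anti-concentration of Gaussians and spherically symmetric subgaussian distributions really is independent of $d, r, \alpha, \eta$, so that the choice $\delta = \alpha/(2C)$ is legitimate, and (ii) the arithmetic of composing $k = O(1/\alpha^2)$ with the sample complexity and running time exponents in \cref{thm:main-2} produces the form stated in the corollary. Both of these are routine, so the proof reduces to assembling these ingredients and invoking \cref{thm:main-2}.
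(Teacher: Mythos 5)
Your proposal is correct and follows the same route the paper intends: the paper gives no explicit proof, stating \cref{cor:gaussians} as an immediate consequence of \cref{thm:main-2} once one substitutes the $O(1/\delta^2)$-certifiable $(\delta, C\delta)$-anti-concentration of Gaussians and spherically symmetric subgaussian distributions (\cref{thm:cert_anti_conc_gaussians}, \cref{thm:cert_anti_conc_uniform}) with $\delta = \alpha/(2C)$, giving $k = O(1/\alpha^2)$, together with their certifiable hypercontractivity. One small remark on your bookkeeping: plugging $k = O(1/\alpha^2)$ into the $n^{O(k+\log(1/\eta))}$ runtime from \cref{thm:main-2} gives $n^{O(1/\alpha^2 + \log(1/\eta))}$, which is \emph{smaller} than the stated $n^{\log(1/\alpha\eta)/\alpha^4}$; the corollary's exponent is simply a looser (hence still valid) upper bound, and your phrase about ``absorbing the sample complexity exponent into $n$'' is not the mechanism --- the corollary is just not tight. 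This matches the paper's own slightly inconsistent exponent bookkeeping (compare the $d^{O(\log^2(1/\eta) + 1/\alpha^8)}$ in \cref{sec:overview}), so your derivation is fine.
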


\paragraph{Further uses of exponential error reduction.} Our exponential error reduction method is likely to be of wider use. As an example, we observe the following immediate consequence to list-decodable linear regression by  obtaining an improved running time (with a large constant list-size) as a function of the target accuracy. 

\begin{corollary}[List-Decodable Regression] 
Let $\cD$ be $k$-certifiably $(\alpha/2C)$-anti concentrated distribution with mean $0$ and covariance $I$.
Then, there exists an algorithm that takes as input $n = n_0 \geq (d\log(d)/\alpha)^{\widetilde{O}(k)}$ labeled samples where an $\alpha n$ samples $(x,y)$ are i.i.d. with $x \sim \cD$ and $y = \iprod{\ell_*,x}$ for some unknown, unit vector $\ell_*$ and outputs a list $\cL$ of $O(1/\alpha^{O(\log (k) + \log (1/\eta))})$ regressors such that with probability at least $0.99$ over the draw of the samples and the randomness of the algorithm, there is a regressor $\hat{\ell} \in \cL$ satisfying $\|\hat{\ell} - \ell_*\|^2_2 \leq \eta$. The algorithm has time complexity at most $n^{O(k+  \log(1/ \eta ))}$.
\end{corollary}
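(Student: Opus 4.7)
The plan is to combine the existing SoS-based list-decodable regression framework of Karmalkar--Klivans--Kothari with the exponential error reduction machinery underpinning Theorem~\ref{thm:main-2}. First I would set up the standard SoS program with indeterminates $w_1, \ldots, w_n$ (selector variables) and $\ell \in \R^d$, with constraints $w_i^2 = w_i$, $\sum_i w_i = \alpha n$, and $w_i \cdot (y_i - \iprod{\ell, x_i}) = 0$ for every $i$. Under $k$-certifiable $(\alpha/2C)$-anti-concentration and certifiable hypercontractivity of $\cD$, the standard identifiability analysis shows that any pseudo-distribution of degree $O(k)$ satisfying these constraints is, in a pseudo-moment sense, concentrated on a neighborhood of $\ell_*$ whose radius depends only on the anti-concentration parameter. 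Rounding via pseudo-expectations produces a coarse list containing an estimate $\hat{\ell}_0$ with $\|\hat{\ell}_0 - \ell_*\|^2 \leq O(1)$ (say $\leq 1/10$) of constant size in $n^{O(k)}$ time.

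Next I would bootstrap this using the exponential error reduction developed for Theorem~\ref{thm:main-2}. Given an estimate $\hat{\ell}_t$ with guaranteed error $R_t^2$, I augment the SoS program with the polynomial inequality $\|\ell - \hat{\ell}_t\|^2 \leq R_t^2$ and rerun the algorithm with a sharpened anti-concentration polynomial tuned to the smaller ball of radius $R_t$ around $\hat{\ell}_t$. The new pseudo-distribution is obtained by conditioning the previous one on being near $\hat{\ell}_t$; because the inliers already lie in that ball, feasibility is preserved with weight $\alpha$ on inlier selectors. Solving the restricted program produces a refined estimate $\hat{\ell}_{t+1}$ with $\|\hat{\ell}_{t+1}-\ell_*\|^2 \leq R_t^2/4$, at the cost of a multiplicative $O(1/\alpha)$ blowup in list size per refinement step (arising from the constant number of candidate centroids that must be enumerated to cover the current pseudo-distribution). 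After $T = O(\log(1/\eta))$ iterations, the final list has size $(1/\alpha)^{O(\log k + \log(1/\eta))}$, and some $\hat{\ell}_T$ satisfies $\|\hat{\ell}_T - \ell_*\|^2 \leq \eta$. Each iteration runs in $n^{O(k)}$ time, giving a total running time of $n^{O(k + \log(1/\eta))}$.

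The main obstacle, as in the proof of Theorem~\ref{thm:main-2}, is ensuring that the refinement step is actually expressible inside the SoS proof system: namely, that conditioning the pseudo-distribution on the localization $\|\ell - \hat{\ell}_t\|^2 \leq R_t^2$ yields a new pseudo-distribution whose anti-concentration and hypercontractivity certifications upgrade to the sharper scale $R_t/2$, while still placing pseudo-weight $\alpha$ on the inlier selectors. Handling this requires the same conditioning / reweighting lemmas used in Theorem~\ref{thm:main-2}, specialized from rank-$r$ projectors to the rank-1 regression vector $\ell$. Once these lemmas are in place, the recursion is self-sustaining and the geometric decrease of the error radius gives the claimed $\log(1/\eta)$ exponent dependence without enlarging the list by more than a constant per step.
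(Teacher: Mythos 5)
Your proposal is an iterative refinement scheme: solve an SoS program at constant accuracy, localize to a shrinking ball around the current estimate, and repeatedly rerun with a ``sharpened'' anti-concentration polynomial. This is genuinely different from what the paper does, and unfortunately it has a gap precisely at the refinement step.

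The paper's mechanism is a \emph{one-shot} argument inside a single SoS relaxation. One proves, for regression just as for subspace recovery, the degree-$2k$ inequality $(w(\cI) - C\delta)\|\ell-\ell_*\|_2^k \leq 0$ directly from the anti-concentration constraints, and then applies the powering identity of Lemma~\ref{lem:powering}, $\{a\geq 0, b\geq 0, (a-b)Z\leq 0\}\sststile{t}{}\{(a^t-b^t)Z\leq 0\}$, with $a = w(\cI)$, $b=C\delta$, $Z=\|\ell-\ell_*\|_2^k$, yielding $(w(\cI)^t - (C\delta)^t)\|\ell-\ell_*\|_2^k \leq 0$. Taking pseudo-expectations against a high-entropy pseudo-distribution gives $\pE[w(\cI)^t]\geq\alpha^t$ (the degree-$t$ analogue of Fact~\ref{fact:high-entropy-pseudo-distributions}), and choosing $\delta=\alpha/2C$ turns the slack into $(C\delta/\alpha)^t = 2^{-t}$, so $t=O(\log(1/\eta)+\log k)$ gives error $\eta$. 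The rough a priori upper bound that is needed to close the argument (the role played by Lemma~\ref{lem:close-projectors} for projectors) is, for regression, simply the trivial bound $\|\ell-\ell_*\|_2\leq 2$ between unit vectors — which is exactly why the corollary drops the hypercontractivity hypothesis. The list blowup to $1/\alpha^{O(t)}$ comes from rounding by sampling size-$t$ multisets, and the running time $n^{O(k+\log(1/\eta))}$ comes from a \emph{single} SoS solve at degree $O(k+t)$, not from iterating.

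Your scheme has several concrete problems. First, ``rerun with a sharpened anti-concentration polynomial tuned to the smaller ball of radius $R_t$'' is not an operation available to you: the anti-concentration scale $\delta$ of $\cD$ is a property of the distribution together with the degree $k$ of the certifying polynomial, and tightening $\delta$ requires raising the degree of the certificate, not localizing the regressor variable. Indeed this is exactly the $\poly(1/\eta)$ degree dependence in~\cite{DBLP:journals/corr/abs-1905-05679} that the powering trick is designed to avoid; localizing $\ell$ does not remove it. Second, the running time does not follow: $O(\log(1/\eta))$ rounds each costing $n^{O(k)}$ would give $n^{O(k)}\cdot\log(1/\eta)$, which is not $n^{O(k+\log(1/\eta))}$, so your claimed bound does not come from the scheme as described. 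Third, you invoke certifiable hypercontractivity for the coarse step, but the corollary does not assume it — and as the paper notes, the hypercontractivity-based bound of $O(1/\alpha)$ is vacuous for unit regressors. Finally, the claimed ``conditioning preserves feasibility with weight $\alpha$ on inliers'' only holds for the lucky list entries near $\ell_*$; for the rest the localized program may be infeasible, which must be handled (e.g.\ by silently discarding), but this is never addressed. The upshot is that the iterative-localization route is not supported; the intended proof simply applies the powering lemma once inside the original relaxation.
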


Prior works~\cite{DBLP:journals/corr/abs-1905-05679,raghavendra2020listreg} needed $n^{O(k^2/\eta^2)}$ time but computed a smaller list of size $O(1/\alpha)$).

\subsection{Previous Versions and Concurrent Work}
A previous version of this work~\cite{bakshi2020listdecodable} appeared concurrently with ~\cite{raghavendra2020list} and gave a $d^{O(\log r)/\alpha^4}$ time algorithm to output a $O(1/\alpha)$ size list that contains a candidate projection matrix that is $O(\frac{\log r}{\alpha})$-Frobenius close to the rank $r$ projection matrix of the true subspace. The algorithm worked whenever the inlier distribution $\cD$ satisfies \emph{certifiable anti-concentration}. This version of the work combines the ideas in~\cite{bakshi2020listdecodable} with multiple new insights to obtain the improved results. 

\subsection{Related Work}
Our work builds on a sequence of works on robust statistics~\cite{KothariSteinhardt17,DBLP:journals/corr/abs-1711-11581,HopkinsLi17,DBLP:conf/colt/KlivansKM18,DBLP:journals/corr/abs-1905-05679,raghavendra2020listreg} using the sum-of-squares method (see \cite{TCS-086} for an exposition). This is an offshoot of a line of work on using sum-of-squares method for algorithm design~\cite{DBLP:journals/corr/BarakM15,MR3388192-Barak15,DBLP:journals/corr/MaSS16,DBLP:conf/colt/HopkinsSS15,DBLP:conf/colt/PotechinS17,DBLP:conf/stoc/BarakKS17}. 

The improvements over the prior version of this paper are directly related to the recent work on outlier-robust clustering of non-spherical mixtures~\cite{bakshi2020outlierrobust}  that abstracted out certifiable hypercontractivity of degree 2 polynomials (along with certifiable anti-concentration) as sufficient conditions (\cite{DBLP:journals/corr/abs-2005-06417} independently obtained related results) for clustering non-spherical mixture models. In particular, our proof of the first result is inspired by the proof of Lemma~4.20 in~\cite{bakshi2020outlierrobust}. Our main insight here is to obtain similar conclusions to those in Lemma~4.20 in~\cite{bakshi2020outlierrobust} without certifiable anti-concentration by exploiting that the inlier covariances are projection matrices. 

\paragraph{Subspace Clustering.} A closely related (and formally easier\footnote{One can think of input to subspace clustering as the special case in list-decodable subspace recovery where the input sample is a mixture of $k = 1/\alpha$ distributions each with a covariance restricted to some subspace.}) problem to list-decodable subspace recovery is subspace clustering~\cite{elhamifar2013sparse,parsons2004subspace,soltanolkotabi2014robust}. Known algorithms with provable guarantees for this problem either require runing time exponential in the ambient dimension, such as RANSAC~\cite{fischler1981random}, algebraic subspace clustering~\cite{vidal2005generalized} and spectral curvature clustering~\cite{liu2012robust},  or require the co-dimension to be a constant fraction of the ambient dimension~\cite{candes2013phaselift, lu2012robust, tsakiris2017hyperplane, tsakiris2018dual}.

\paragraph{Robust Subspace Recovery.}
Our setting also superficially resembles \emph{robust subspace recovery} (see \cite{lerman2018overview} for a survey), where the goal is to recover a set of inliers that span a single low-dimensional space. In this setting, $\alpha$ is assumed to be close to $1$. Prior works on this problem identify some tractable special cases (see \cite{vaswani2018static}) while no provable guarantees are known for the general setting. 
Further, Hardt and Moitra \cite{DBLP:conf/colt/HardtM13} (see also the recent work of Bhaskara, Chen, Perreault and Vijayraghavan~\cite{DBLP:journals/corr/abs-1811-12361})  provide a polynomial time randomized algorithm, where both the inliers and outliers are required to be in general position and their algorithm works as long as the inliers constitute an $\alpha = r/d$ fraction, where $r$ is the rank of the subspace and $d$ is the ambient dimension. This is contrast to our work where the outliers are completely arbitrary and potentially adversarial with respect to the inliers.

\newcommand{\dtv}{d_{\mathrm{TV}}}
\section{Technical Overview}
\label{sec:overview}
In this section, we give a bird's eye view of our algorithm and analysis. 
Our algorithm and rounding follow the framework built for list-decodable learning in ~\cite{DBLP:journals/corr/abs-1905-05679,raghavendra2020list}. Let us formalize the setting before highlighting the key new ideas in our work. 

In the list-decodable subspace recovery problem, our input is a collection of samples $x_1, x_2, \ldots, x_n \in \R^d$, an $\alpha n$  of which are drawn i.i.d. from a distribution $\cD$ with mean $0$ and unknown projective covariance $\Pi_*$ of rank $k$. The main idea of the algorithm is to encode finding the "inliers" in the input sample via a polynomial program. To do this, we introduce variables $w_1, w_2,\ldots, w_n$ that are supposed to indicate the samples that correspond to the inliers. Thus, we force $w_i^2 = w_i$ (i.e. $w_i \in \{0,1\}$) and $\sum_{i\leq n} w_i = \alpha n$ as constraints on $w$. We also introduce a variable $\Pi$ that stands for the covariance of the inliers and add constraints that force it to be a projection matrix.  To this end, it suffices to constraint $\Pi^2 = \Pi$ and $\Pi \succeq 0$. Further, we require that each of the samples indicated by $w$ are in the subspace described by $\Pi$: $w_i (\Pi x_i -x_i) = 0$ for every $i$. 

Recall, an adversary can create multiple rank-$k$ subspaces that satisfy all the aforementioned constraints, and a priori, a solution to the above polynomial program need not tell us anything about the \textit{true} inliers. Therefore, we must force $w$ to share some property that $\cD$ satisfies so that we can guarantee a solution to the program contains some information about the inliers. What property should this be? In the context of list-decodable regression ~\cite{DBLP:journals/corr/abs-1905-05679,raghavendra2020listreg}, it turns out that it was both necessary and sufficient (up to the additional qualifier of "certifiability") for $w$ (and $\cD$) to be anti-concentrated. Anti-concentration is also \emph{sufficient} to get some guarantees for subspace recovery as shown in~\cite{bakshi2020listdecodable,raghavendra2020list}. Is it necessary? And if not, is there a property satisfied by a larger class of distributions that might be sufficient? 

\subsection{Dimension Independent Error via Certifiable Hypercontractivity} 
Our key idea is to observe that anti-concentration is in fact \emph{not} necessary for the list-decodable subspace recovery. Instead, all we need is (SoS certifiable) hypercontractivity: a property that controls the growth of moments of degree-2 polynomials. Specifically, $\cD$ has $C$-hypercontractive degree-$2$ polynomials if for every $d \times d$ matrix $Q$, 
\[
\E_{x \sim \cD} \Paren{x^{\top} Qx -\tr(Q)}^{2h} \leq (Ch)^h \Paren{\E_{x \sim \cD} \Paren{x^{\top} Qx}^{2}}^h \mper
\] 

This is a polynomial inequality in the $d \times d$ matrix-valued indeterminate $Q$. \emph{Certifiable} hypercontractivity means that in addition, the above inequality has a degree-$2h$ SoS proof in the variable $Q$. We discuss SoS proofs in more detail in Subsection~\ref{subsec:sos_proofs}.  

Certifiable hypercontractivity is satisfied by a large class of distributions such as Gaussians and product distributions with subgaussian marginals including those that are not anti-concentrated, such as uniform distribution on the Boolean hypercube and $q$-ary cubes. We note that this is a generalization of \emph{certifiable subgaussianity} for linear polynomials used in several prior works~\cite{DBLP:journals/corr/abs-1711-11581,KothariSteinhardt17,HopkinsLi17,DBLP:conf/colt/KlivansKM18}. 

This leads us to the following system of polynomial constraints that we use in our first result:
\begin{equation}
\label{eqn:constraint_intro}
  \cA_{w,\Pi}\colon
  \left \{
    \begin{aligned}
      &&
      \textstyle\sum_{i\in[n]} w_i
      &= \alpha n\\
      &\forall i\in [n].
      & w_i \Pi x_i
      & = w_i x_i \\
      &\forall i\in [n].
      & w_i^2
      & = w_i \\
      &
      & \Pi^2
      & = \Pi\\
      &\forall Q 
      &\frac{1}{\alpha n} \sum_{i \leq n} w_i \Paren{x_i^{\top} Q x_i}^{2t} 
      &\leq (Ct)^t \Paren{\frac{1}{\alpha n} \sum_{i \leq n} w_i \Paren{x_i^{\top} Q x_i}^2}^t\\
      &\forall Q 
      &\frac{1}{\alpha n} \sum_{i \leq n} w_i \Paren{x_i^{\top} Q x_i-\tr(Q\Pi)}^{2} 
      &\leq C \Norm{\Pi Q\Pi}_F^2\mper
    \end{aligned}
  \right \}
\end{equation}
The last two constraints encode certifiable hypercontractivity and bounded variance of degree $2$ polynomials on the samples indicated by $w$ and might give one a pause - they form a system of infinitely many constraints, one for each $Q$. However, we employ the (by now) standard idea of succinctly representing such infinite system of constraints efficiently (i.e. at most a polynomial number of constraint with exponent linear in the degree of the SoS proof) by exploiting that the constraints admit SoS proofs for the inlier distribution. This technique was first discovered in~\cite{DBLP:journals/corr/abs-1711-11581} and \cite{HopkinsLi17} and has been subsequently used in several works - see Chapter 4.3 in \cite{TCS-086} for a detailed exposition. 

\paragraph{Certifiable Hypercontractivity and bounded variance suffices.} 
Our key contribution is a proof that a solution to the SoS relaxation of $\cA_{w,\Pi}$ \eqref{eqn:constraint_intro} , with a "high-entropy" constraint that we discuss later, is enough to list-decode the unknown $\Pi_*$. The key to this lemma is a SoS version of the following statement that relates total variation distance to parameter distance for certifiably hypercontractive distribution and may be of independent appeal:

\begin{fact}[Informal, follows from Lemma~\ref{lem:close-projectors}] \label{fact:main-tv-to-parameter}
Let $\cD,\cD'$ be distributions on $\R^d$ such that $\dtv(\cD,\cD') \leq 1-\alpha$. Suppose $\cD,\cD'$ have projective covariances $\Sigma,\Sigma'$ and have hypercontractive degree 2 polynomials.  Then, $\Norm{\Sigma-\Sigma'}_F \leq O(1/\alpha)$. 
\end{fact}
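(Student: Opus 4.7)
The plan is to test the Frobenius distance by the natural quadratic polynomial $p(x) \defeq x^\top Q x$ with $Q \defeq \Sigma - \Sigma'$. The starting observation is the identity
\[
\|\Sigma - \Sigma'\|_F^2 \;=\; \tr(Q\Sigma) - \tr(Q\Sigma') \;=\; \E_\cD[p(x)] - \E_{\cD'}[p(x)],
\]
which uses crucially that both $\Sigma$ and $\Sigma'$ are projections (so $\tr(Q\Sigma) = \tr(\Sigma) - \tr(\Sigma'\Sigma)$ and symmetrically). Thus bounding the Frobenius distance reduces to controlling the mean-shift of a single degree-$2$ polynomial across the two distributions.

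Next, I would exploit the total-variation hypothesis by writing $\cD = \alpha \cC + (1-\alpha) \cR$ and $\cD' = \alpha \cC + (1-\alpha) \cR'$ for a common probability measure $\cC$ (which exists precisely when $\dtv(\cD,\cD') \le 1-\alpha$), so that $\cC$ is dominated by both $\cD/\alpha$ and $\cD'/\alpha$. Letting $m \defeq \E_\cD[p]$, $m' \defeq \E_{\cD'}[p]$, and $m_\cC \defeq \E_\cC[p]$, Cauchy--Schwarz combined with domination gives
\[
(m_\cC - m)^2 \;\le\; \E_\cC\bigl[(p-m)^2\bigr] \;\le\; \tfrac{1}{\alpha}\Var_\cD[p] \;\le\; \tfrac{C}{\alpha} \|\Sigma Q \Sigma\|_F^2 \;\le\; \tfrac{C}{\alpha}\|Q\|_F^2,
\]
where the third step uses the bounded-variance-of-degree-$2$ hypothesis and the last step uses that the projection $\Sigma$ contracts the Frobenius norm (since its operator norm is $1$). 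The same bound holds for $(m_\cC - m')^2$. A triangle inequality then yields
\[
\|\Sigma - \Sigma'\|_F^2 \;=\; |m - m'| \;\le\; 2\sqrt{C/\alpha}\,\|\Sigma - \Sigma'\|_F,
\]
which rearranges to $\|\Sigma - \Sigma'\|_F \le 2\sqrt{C/\alpha}$; the slightly weaker $O(1/\alpha)$ stated in the fact reflects unavoidable slack when this argument is internalized in SoS.

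The main obstacle is executing the argument inside the sum-of-squares proof system, which is what Lemma~\ref{lem:close-projectors} actually establishes. There are no distributions in SoS; instead there are weight polynomials $w_i$ simulating the selected set, and the role of the common sub-measure $\cC$ is played by the product $w_i \cdot a_i$, where $a_i = \1[i \in \cI]$ is the true-inlier indicator. The measure-theoretic domination $\cC \le \cD/\alpha$ becomes the SoS-identity $\sum_i w_i a_i \le \alpha n$, and the Cauchy--Schwarz step has to be executed on matrix-valued polynomial indeterminates $Q$ -- this is precisely where \emph{certifiable} hypercontractivity (rather than just hypercontractivity in expectation) is indispensable, since we need a low-degree SoS proof of the variance bound $\E[(x^\top Q x - \tr(Q\Pi))^2] \le C\|\Pi Q \Pi\|_F^2$ as an inequality in $Q$. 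The square-root loss from $O(1/\sqrt\alpha)$ to $O(1/\alpha)$ arises because pseudo-expectations only give SoS inequalities on $\|\Sigma - \Sigma'\|_F^2$ (an even power), and extracting a bound on $\|\Sigma - \Sigma'\|_F$ from such a bound loses one power of $\sqrt\alpha$.
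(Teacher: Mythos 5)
Your real-life argument is correct and parallels the paper's SoS proof of Lemma~\ref{lem:close-projectors} closely: the common sub-measure $\cC$ plays exactly the role of the $w$-weighted inlier empirical second moment $\frac{1}{\alpha n}\sum_{i\in\cI}w_ix_ix_i^\top$ in Lemmas~\ref{lem:frobenius-closeness-subsample-w} and~\ref{lem:frobenius-closeness-subsample-inliers}, your Cauchy--Schwarz-plus-domination step corresponds to the SoS H\"older decoupling, and both arguments close with a triangle inequality. Two small remarks on your final discussion. The $\sqrt\alpha$ slack in passing from your $O(\sqrt{C/\alpha})$ to the stated $O(1/\alpha)$ is not really caused by working with even powers; it arises because the SoS conclusion is of the form $w(\cI)\|\Pi-\Pi_*\|_F^2\le O(1)$ with the weight $w(\cI)$ trapped \emph{inside} the pseudo-expectation, and the rounding-by-votes step in the proof of Theorem~\ref{thm:main} has to normalize by $\pE_{\tmu}[w_i]$, which costs a full factor of $\alpha$ rather than $\sqrt\alpha$. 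Also, you describe certifiable hypercontractivity as indispensable for SoSizing the argument, but for the $t=1$ instance of Lemma~\ref{lem:close-projectors} underlying Fact~\ref{fact:main-tv-to-parameter} the hypercontractivity constraint degenerates to a triviality, and only the degree-$2$ bounded-variance constraint is actually doing work; hypercontractivity is essential for the higher-power version ($t>1$) that drives the exponential error reduction in Theorem~\ref{thm:main-2}. Your proof thus correctly isolates bounded variance of degree-$2$ polynomials as the operative analytic property behind this fact, which is worth noting since the fact's hypothesis only says ``hypercontractive'' while the constraint system $\cA_{w,\Pi}$ imposes the variance bound separately.
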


This fact says that two certifiably hypercontractive distributions that have TV distance of at most $1-\alpha$, i.e., an overlap of at least $\alpha$, must have covariances that are close in Frobenius norm up to a dimension-independent bound of $O(1/\alpha)$. The proof of this lemma relies on the use of SoS Hölder's inequality (see Fact \ref{fact:sos-holder}). At a high-level, a similar fact is employed in Lemma 4.20~\cite{bakshi2020outlierrobust} in the context of outlier-robust clustering and our work is directly inspired by it. 

To see how Fact \ref{fact:main-tv-to-parameter} is useful, we observe that the uniform distribution on a large enough sample from a certifiably hypercontractive distribution is itself certifiably hypercontractive (see Lemma \ref{lem:cert_hyper_sampling}). Thus, the uniform distribution over the set of inliers, $\cI$, satisfies certifiable hypercontractivity. Via appropriately chosen polynomial constraints, we can force $w$ to indicate a certifiably hypercontractive distribution. Now, observe that if the subsample indicated by $w$ intersects $\cI$ in $\alpha |\cI|$ points, then the TV distance between uniform distribution on $\cI$ and the subsample indicated by $w$ is at most $1-\alpha$ so we can apply the Fact \ref{fact:main-tv-to-parameter} to conclude that $\Norm{\Pi-\Pi_*}_F \leq O(1/\alpha)$. 

The above plan works as long as the set indicated by $w$ has a substantial intersection with $\cI$. But, we cannot quite force $w$ to have a non-zero intersection with $\cI$ at all. Indeed, the outliers may themselves contain another $\alpha n$ points that are i.i.d. sample from $\cD$ with a different projector as a covariance and $w$ could be an indicator of that set instead. However, this issue can be rectified following the prior works on list-decodable learning via SoS, that use high-entropy constraints on the "pseudo-distributions" to simulate (see the overview of ~\cite{DBLP:journals/corr/abs-1905-05679} for an informal discussion) the property of the subsample indicated by $w$ having a large intersection with $\cI$.

The above discussion gives us our final outline of the algorithm: compute a  max-entropy  ``pseudo-distribution'' of large enough constant degree on $w$ satisfying $\cA_{w,\Pi}$ and run the ``rounding-by-votes'' procedure (as in ~\cite{DBLP:journals/corr/abs-1905-05679}) to output a list. 
\subsection{High-Accuracy Subspace Recovery via Certifiable Anti-concentration}
Our next result (Theorem \ref{thm:main-2}) allows obtaining a constant-size list that contains a projection matrix $\Pi$ that is $\eta$-close to $\Pi_*$ in Frobenius distance, for any $\eta>0$,  in time $d^{O(\log^2(1/\eta) +  1/\alpha^8)}$. For this result, we additionally require that $\cD$ is \textit{certifiably anti-concentrated}. 

Certifiable anti-concentration was recently formulated and used for designing algorithms for list-decodable regression~\cite{DBLP:journals/corr/abs-1905-05679,raghavendra2020listreg} and for obtaining the first efficient algorithm for outlier-robust clustering of non-spherical mixtures~\cite{bakshi2020outlierrobust,DBLP:journals/corr/abs-2005-06417}. Recall that anti-concentration captures the idea that any linear projection of the distribution $\cD$ cannot lie in an interval of length at most $\eta$ with probability more than $\eta/\sigma$ where $\sigma$ is the variance of $\cD$ in the projected direction. In the SoS proof system, this property is formulated via a low-degree polynomial $p$ that approximates the indicator function on a $\delta$-length interval around $0$ (see Definition \ref{def:certifiable-anti-concentration-homogenous} for a formal statement). 

\paragraph{Information-Theoretic Exact Recovery.}
We begin by showing that given two distributions that have non-negligible overlap in TV distance and are anti-concentrated, the subspaces spanned by their respective covariances must be \textit{identical}.  

\begin{fact}[Informal, Follows from Proposition~\ref{prop:high-intersection-same-subspace}] \label{fact:anti-conc-subspace}
Let $\cD,\cD'$ be two distributions are anti-concentrated with projective covariances $\Pi,\Pi'$ respectively.
Suppose $\dtv(\cD,\cD') \leq 1-\alpha$ for some $\alpha > 0$.
Then, $\Pi = \Pi'$.
\end{fact}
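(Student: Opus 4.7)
The plan is a proof by contradiction: assuming $\Pi \neq \Pi'$, I will exhibit a unit vector $u \in \R^d$ along which $\cD'$ is a point mass at $0$ while $\cD$ has no atom at $0$, and then use this to witness $\dtv(\cD, \cD') = 1$, contradicting the hypothesis $\dtv(\cD, \cD') \leq 1 - \alpha$.

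Let $V = \mathrm{range}(\Pi)$ and $V' = \mathrm{range}(\Pi')$. Since $\Pi \neq \Pi'$ implies $V \neq V'$, after swapping $\cD$ and $\cD'$ if necessary there is a unit vector $v \in V \setminus V'$. Set $u := (I - \Pi')v / \|(I - \Pi')v\|$; this is well-defined because $v \notin V'$. By construction $u$ is a unit vector with $\Pi' u = 0$, so $u^{\top} \Pi' u = 0$. I would further check that $\Pi u \neq 0$: up to scaling, $\Pi u \propto v - \Pi \Pi' v$, and $v = \Pi\Pi' v$ would force $1 = \|v\| \leq \|\Pi' v\| \leq \|v\| = 1$, hence $\Pi' v = v$, i.e.\ $v \in V'$, contradicting the choice of $v$. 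Therefore $\sigma^2 := u^{\top} \Pi u > 0$.

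Under $\cD'$, the random variable $\langle x, u\rangle$ has mean zero and variance $u^{\top} \Pi' u = 0$, so $\langle x, u\rangle = 0$ almost surely. Under $\cD$, I would apply the anti-concentration hypothesis: setting $v \leftarrow u$ and $\Sigma \leftarrow \Pi$ in the first inequality of \cref{def:certifiable-anti-concentration-homogenous}, any $x$ with $|\langle x, u\rangle|^2 \leq \delta^2 \|\Pi^{1/2} u\|^{2t}/4$ must satisfy $p^2(\langle x, u\rangle) \geq \|\Pi^{1/2} u\|^{2t}/4$. Taking expectations under $\cD$ and combining with the second inequality of \cref{def:certifiable-anti-concentration-homogenous} yields
\[
\Pr_{x \sim \cD}\bigl[|\langle x, u\rangle| \leq \epsilon\bigr] \leq 4C\delta
\]
for every sufficiently small $\epsilon$. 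Sending $\epsilon \to 0$ gives $\Pr_{x \sim \cD}[\langle x, u\rangle = 0] \leq 4C\delta$, which is strictly less than $\alpha$ whenever the anti-concentration scale $\delta$ is small compared to $\alpha$, as in the hypothesis $(C,\alpha/2C)$ of \cref{thm:main-2}.

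Taking $A := \{x \in \R^d : \langle x, u\rangle \neq 0\}$ then gives $\cD'(A) = 0$ and $\cD(A) \geq 1 - 4C\delta$, so $\dtv(\cD, \cD') \geq \cD(A) - \cD'(A) > 1 - \alpha$, the desired contradiction. The main obstacle in lifting this argument to the SoS setting required by \cref{prop:high-intersection-same-subspace} is that the sharp event $\{\langle x, u\rangle = 0\}$ and the limit $\epsilon \to 0$ have no direct analogue inside the SoS proof system; one must instead work with the polynomial witness $p$ itself and carry out the variance-versus-anti-concentration trade-off as a single quantitative degree-$2t$ SoS inequality coupled to the indicator variables $w$ encoding the candidate inlier subsample.
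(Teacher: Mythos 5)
Your proof is correct and rests on the same core insight as the paper's — anti-concentration forces any direction in which one distribution puts an atom at zero with probability at least $\alpha$ to be a zero-variance direction of the other — but it packages the bookkeeping quite differently. The paper proves the empirical statement (Proposition~\ref{prop:high-intersection-same-subspace}): it writes $I - \Pi = \sum_i v_i v_i^\top$, applies anti-concentration to \emph{every} kernel direction $v_i$ to obtain $v_i^\top \Pi_* v_i = 0$, sums to get $\tr(\Pi_*(I - \Pi)) = 0$, and then uses Cauchy--Schwarz together with the assumption $\tr(\Pi) = \tr(\Pi_*)$ to upgrade this to $\Pi = \Pi_*$; the distributional Fact~\ref{fact:anti-conc-subspace} is then read off by converting TV overlap into sample intersection. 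Your argument is a direct contradiction in the distributional setting: you produce a \emph{single} witness direction $u$ with $\Pi' u = 0$ and $u^\top \Pi u > 0$, observe that $\langle x, u\rangle$ is a.s.\ zero under $\cD'$ but (via the certifying polynomial $p$ and Markov) has atom at most $4C\delta < \alpha$ at zero under $\cD$, and conclude $\dtv(\cD,\cD') > 1-\alpha$ on the event $\{\langle x,u\rangle \neq 0\}$. The checks that make your argument go through — $\Pi' u = 0$ since $\Pi'(I-\Pi')v = 0$, and $\Pi u \neq 0$ via the chain $\|v\| \leq \|\Pi'v\| \leq \|v\|$ — are both sound. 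What your route buys: it is more elementary, stays purely distributional, and the ``swap if necessary'' step sidesteps the rank bookkeeping; the paper's trace argument by itself only shows $\mathrm{range}(\Pi_*)\subseteq\mathrm{range}(\Pi)$ and needs $\tr(\Pi)=\tr(\Pi_*)$ to close to equality, whereas your single-witness contradiction applies regardless of which subspace is larger. What the paper's route buys: it directly yields the empirical proposition, which is the form actually deployed (and, as you correctly flag, ``SoSized'') in the algorithm — the $\epsilon\to 0$ limit and the sharp event $\{\langle x,u\rangle = 0\}$ in your version have no SoS analogue, so the constructive all-directions trace argument is the one that survives the lift.
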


To use Fact \ref{fact:anti-conc-subspace}, we show that sampling preserves anti-concentration (see Lemma \ref{lem:certifiable_anti_conc_sampling}). Thus, in particular, if the uniform distribution over the inliers, $\cI$, and the uniform distribution over the samples indicated by $w$ intersect, we can conclude $\Pi = \Pi_*$. In order to obtain an efficient algorithm, we need to "SoSize" this proof. 
 
\paragraph{Real life vs SoS: Bootstrapping to remove dimension-dependence.} The SoS version of Fact \ref{fact:anti-conc-subspace} faces a difficulty that has been encountered in every usage~\cite{DBLP:journals/corr/abs-1905-05679,raghavendra2020listreg,bakshi2020outlierrobust,DBLP:journals/corr/abs-2005-06417,raghavendra2020list} of \emph{certifiable} anti-concentration\footnote{See paragraph title "Combining Upper and lower bounds: real life vs sos" on page 7 and paragraph titled "Digression" on page 27 in ~\cite{bakshi2020outlierrobust} for a discussion of this technical issue.}. 

Informally, in order to SoSize the proof of Fact~\ref{fact:anti-conc-subspace} above, we need a "rough" \emph{a priori} upper-bound on $\Norm{\Pi-\Pi_{*}}_F$. 
In a previous version of this paper, we used the trivial upper bound of $\sqrt{2d}$ on the above Frobenius norm that follows simply by using that both $\Pi$ and $\Pi_*$ are projectors. This was the bottleneck that lead to both the super-constant list-size and super-polynomial running time of that algorithm. 
Our key observation in this work is that we can use the guarantees from Theorem \ref{thm:main} above in order to obtain a dimension-independent constant as the  rough upper bound. 

\paragraph{Exponential Error Reduction.}
We observe that certifiable anti-concentration along with Theorem \ref{thm:main} readily yield an algorithm with list-size $1/\alpha^{\poly(1/\eta)}$ and running time $d^{\poly(1/\eta)\poly(1/\alpha)}$. However, we can improve the exponent in both list size and running time exponentially using a new \textit{powering trick} that we introduce in this work. This idea allows us to replace the polynomial factors in $1/\eta$ appearing in the exponent with $\log(1/\eta)$ instead. As an immediate corollary, we obtain algorithms with exponentially better dependence in the accuracy parameter for list-decodable regression as well. We believe that our powering technique would be widely applicable in robust statistics and more generally, adds to a growing toolbox \textit{sum-of-squares} proofs.

At a high level, we obtain a degree-$t$ \textit{sum-of-squares} proof of the following implication: given $a, b\geq 0$, such that $(a-b)Z \leq 0$, $(a^t - b^t)Z \leq 0$ in the indeterminates $a,b$. We apply the powering trick to the inequality $(w(\cI) - C\delta)\Norm{\Pi - \Pi_*}^k_F \leq 0$, where $w(\cI)$ is the average weight $w$ places on the inliers and $C\delta$ is the slack term from certifiable anti-concentration. 


We stress that \emph{information-theoretically}, anti-concentration, all by itself, (via Fact~\ref{fact:anti-conc-subspace}) implies perfect, zero-error subspace recovery with list-size $O(1/\alpha)$ (see a proof sketch in Appendix \ref{sec:exact_recovery}). However, it is
the "SoS-ization" step (to obtain an efficient algorithm) that leads to the weaker, dimension-dependent guarantees in the previous works. Indeed, it is somewhat intriguing from the perspective f sum-of-squares based algorithm design that we gain in both running time and list-size bounds by introducing a new analytic condition, \textit{certifiable hypercontractivity} of degree-$2$ polynomials, that is not a priori needed in our certifiability proofs.

\section{Preliminaries}
\label{sec:preliminaries}


Throughout this paper, for a vector $v$, we use $\norm{v}_2$ to denote the Euclidean norm of $v$. For a $n \times m$ matrix $M$, we use $\norm{M}_2 = \max_{\norm{x}_2=1} \norm{Mx}_2$ to denote the spectral norm of $M$ and $\norm{M}_F = \sqrt{\sum_{i,j} M_{i,j}^2}$ to denote the Frobenius norm of $M$. For symmetric matrices we use $\succeq$ to denote the PSD/Löwner ordering over eigenvalues of $M$. For a $n \times n$, rank-$r$ symmetric matrix $M$, we use $U\Lambda U^{\top}$ to denote the Eigenvalue Decomposition, where $U$ is a $n \times r$ matrix with orthonormal columns and $\Lambda$ is a $r \times r$ diagonal matrix denoting the eigenvalues. We use $M^{\dagger} = U \Lambda^{\dagger} U^{\top} $ to denote the Moore-Penrose Pseudoinverse, where $\Lambda^{\dagger}$ inverts the non-zero eigenvalues of $M$. If $M \succeq 0$, we use $M^{\dagger/2} = U \Lambda^{\dagger/2} U^{\top}$ to denote taking the square-root of the inverted non-zero eigenvalues. We use $\Pi = U U^{\top}$ to denote the Projection matrix corresponding to the column/row span of $M$. Since $\Pi = \Pi^2$, the pseudo-inverse of $\Pi$ is itself, i.e. $\Pi^{\dagger} = \Pi$.

In the following,  we define pseudo-distributions and sum-of-squares proofs. Detailed exposition of the sum-of-squares method and its usage in average-case algorithm design can be found in ~\cite{TCS-086} and the lecture notes~\cite{BarakS16}.

\subsection{Pseudo-distributions}
\label{subsec:psuedo_dist_prelims}

Let $x = (x_1, x_2, \ldots, x_n)$ be a tuple of $n$ indeterminates and let $\R[x]$ be the set of polynomials with real coefficients and indeterminates $x_1,\ldots,x_n$.
We say that a polynomial $p\in \R[x]$ is a \emph{sum-of-squares (sos)} if there are polynomials $q_1,\ldots,q_r$ such that $p=q_1^2 + \cdots + q_r^2$.

Pseudo-distributions are generalizations of probability distributions.
We can represent a discrete (i.e., finitely supported) probability distribution over $\R^n$ by its probability mass function $D\from \R^n \to \R$ such that $D \geq 0$ and $\sum_{x \in \mathrm{supp}(D)} D(x) = 1$.
Similarly, we can describe a pseudo-distribution by its mass function by relaxing the constraint $D\ge 0$ to passing certain low-degree non-negativity tests.

Concretely, a \emph{level-$\ell$ pseudo-distribution} is a finitely-supported function $D:\R^n \rightarrow \R$ such that $\sum_{x} D(x) = 1$ and $\sum_{x} D(x) f(x)^2 \geq 0$ for every polynomial $f$ of degree at most $\ell/2$.
(Here, the summations are over the support of $D$.)
A straightforward polynomial-interpolation argument shows that every level-$\infty$-pseudo distribution satisfies $D\ge 0$ and is thus an actual probability distribution.
We define the \emph{pseudo-expectation} of a function $f$ on $\R^d$ with respect to a pseudo-distribution $D$, denoted $\pE_{D(x)} f(x)$, as
\begin{equation}
  \pE_{D(x)} f(x) = \sum_{x} D(x) f(x) \,\mper
\end{equation}
The degree-$\ell$ moment tensor of a pseudo-distribution $D$ is the tensor $\E_{D(x)} (1,x_1, x_2,\ldots, x_n)^{\otimes \ell}$.
In particular, the moment tensor has an entry corresponding to the pseudo-expectation of all monomials of degree at most $\ell$ in $x$.
The set of all degree-$\ell$ moment tensors of probability distribution is a convex set.
Similarly, the set of all degree-$\ell$ moment tensors of degree $d$ pseudo-distributions is also convex.
Unlike moments of distributions, there's an efficient separation oracle for moment tensors of pseudo-distributions.

\begin{fact}[\cite{MR939596-Shor87,parrilo2000structured,MR1748764-Nesterov00,MR1846160-Lasserre01}]
  \label[fact]{fact:sos-separation-efficient}
  For any $n,\ell \in \N$, the following set has a $n^{O(\ell)}$-time weak separation oracle (in the sense of \cite{MR625550-Grotschel81}):
  \begin{equation}
    \Set{ \pE_{D(x)} (1,x_1, x_2, \ldots, x_n)^{\otimes \ell} \mid \text{ degree-$\ell$ pseudo-distribution $D$ over $\R^n$}}\,\mper
  \end{equation}
\end{fact}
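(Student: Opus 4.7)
The plan is to reduce the feasibility question to positive semidefiniteness of an associated moment matrix, for which a weak separation oracle is standard. Given a candidate tensor $T$ of degree-$\ell$ moments, indexed by multi-indices $\alpha \in \N^n$ with $|\alpha| \leq \ell$, I form the symmetric ``moment matrix'' $M(T)$ of size $N \times N$, where $N = \binom{n + \lfloor \ell/2 \rfloor}{\lfloor \ell/2 \rfloor} = n^{O(\ell)}$: rows and columns are indexed by monomials $x^\beta$ of degree at most $\ell/2$, and the entries are $M(T)[\beta,\gamma] := T_{\beta+\gamma}$. Each entry of $M(T)$ is an explicit linear function of the entries of $T$, so $T \mapsto M(T)$ is a linear map whose image we can reason about.

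The key claim is that $T$ is the degree-$\ell$ moment tensor of some level-$\ell$ pseudo-distribution if and only if (i) $T_{\bm 0} = 1$ and (ii) $M(T) \succeq 0$. The forward direction is immediate: for any $f = \sum_\beta c_\beta x^\beta$ with $\deg f \leq \ell/2$, the identity $c^\top M(T) c = \pE f^2 \geq 0$ holds coefficientwise by the definition of a pseudo-distribution. The converse is a classical consequence of the truncated moment problem: from a PSD $M(T)$ with $T_{\bm 0}=1$ one can produce a finitely supported $D \from \R^n \to \R$ realizing $T$ (e.g., via an atomic decomposition of a flat extension of $M(T)$), or one can directly take the pseudo-expectation functional induced by $T$ and convert it into a finitely supported sign-indefinite measure by interpolation.

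With this characterization in hand, the weak separation oracle on input $T$ proceeds as follows. First, check whether $T_{\bm 0} = 1$; if not, output the affine hyperplane $\{T' : T'_{\bm 0} = 1\}$ as a separator. Otherwise, approximately compute the smallest eigenvalue of $M(T)$ together with a unit eigenvector $v$, in time $\poly(N) = n^{O(\ell)}$ using standard numerical linear algebra. Either the smallest eigenvalue is at least $-\epsilon$, in which case $T$ is declared weakly feasible, or the linear functional $T' \mapsto v^\top M(T') v = \sum_{\beta,\gamma} v_\beta v_\gamma T'_{\beta+\gamma}$ is non-negative on every genuine moment tensor of a pseudo-distribution (by PSDness of $M(T')$) while being strictly negative on $T$, which is exactly the required separating hyperplane. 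The main subtlety to work through is the ``weak'' qualifier in the Gr\"otschel--Lov\'asz--Schrijver sense -- namely, verifying that approximate eigenvalue computation in bit complexity yields a valid weak separator with the standard additive-tolerance guarantees -- but this is a routine verification independent of the specifics of pseudo-distributions, and the dominant cost of $n^{O(\ell)}$ comes from handling the $N \times N$ matrix.
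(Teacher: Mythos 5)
The paper does not prove this Fact; it is stated as a classical result with citations to Shor, Parrilo, Nesterov, and Lasserre, so there is no ``paper proof'' to compare against. Your outline is the standard argument and is essentially correct. A few remarks on precision.

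First, your aside about producing the pseudo-distribution ``via an atomic decomposition of a flat extension of $M(T)$'' is misleading: flat extensions need not exist (indeed, their existence is equivalent to the truncated moment problem being solvable by a genuine nonnegative measure, which is a much stronger conclusion than what is needed). The correct and elementary route is the second one you offer: fix any set of $\binom{n+\ell}{\ell}$ points in general position so that the Vandermonde-type system $\sum_i D(x^{(i)})(x^{(i)})^\alpha = T_\alpha$ for $\lvert\alpha\rvert\leq\ell$ is invertible, solve for the (signed) weights $D(x^{(i)})$, and observe that $\sum_i D(x^{(i)})f(x^{(i)})^2 = c^\top M(T)\, c \geq 0$ for $\deg f\leq \ell/2$ by PSDness of $M(T)$. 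Nothing more is needed precisely because a pseudo-distribution is allowed to take negative values; the hard part of the classical truncated moment problem (forcing $D\geq 0$) is exactly what this definition drops.

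Second, a small gap you should close: the ambient object $\pE_D(1,x_1,\ldots,x_n)^{\otimes\ell}$ lives in $\R^{(n+1)^\ell}$, not in the space indexed by multi-indices $\alpha$. A candidate tensor $T$ fed to the oracle need not satisfy the symmetry constraints identifying entries that correspond to the same monomial (e.g.\ the $(1,2)$ and $(2,1)$ slots), and when $\ell$ is odd the top-degree slots do not even appear in $M(T)$. The oracle should therefore first test the linear symmetry equalities and return the obvious separating hyperplane if one fails, and should recognize that the odd-degree-$\ell$ coordinates are otherwise unconstrained. Finally, as you note, the GLS ``weak'' qualifier requires a finite numerical-precision argument and an a priori radius bound on the feasible region; for a self-contained write-up one typically restricts to the ball $\{T : \lVert T\rVert \leq R\}$ for some $R$ fixed by the application and tracks bit-complexity through the eigenvalue computation. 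With those adjustments your separation oracle is the standard one and runs in $n^{O(\ell)}$ time as claimed.
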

This fact, together with the equivalence of weak separation and optimization \cite{MR625550-Grotschel81} allows us to efficiently optimize over pseudo-distributions (approximately)---this algorithm is referred to as the sum-of-squares algorithm. The \emph{level-$\ell$ sum-of-squares algorithm} optimizes over the space of all level-$\ell$ pseudo-distributions that satisfy a given set of polynomial constraints (defined below).

\begin{definition}[Constrained pseudo-distributions]
  Let $D$ be a level-$\ell$ pseudo-distribution over $\R^n$.
  Let $\cA = \{f_1\ge 0, f_2\ge 0, \ldots, f_m\ge 0\}$ be a system of $m$ polynomial inequality constraints.
  We say that \emph{$D$ satisfies the system of constraints $\cA$ at degree $r$}, denoted $D \sdtstile{r}{} \cA$, if for every $S\subseteq[m]$ and for every sum-of-squares polynomial $h$ with $\deg h + \sum_{i\in S} \max\set{\deg f_i,r}$, it holds that $\pE_{D} h \cdot \prod _{i\in S}f_i  \ge 0$.

  We write $D \sdtstile{}{} \cA$ (without specifying the degree) if $D \sdtstile{0}{} \cA$ holds.
  Furthermore, we say that $D\sdtstile{r}{}\cA$ holds \emph{approximately} if the above inequalities are satisfied up to an error of $2^{-n^\ell}\cdot \norm{h}\cdot\prod_{i\in S}\norm{f_i}$, where $\norm{\cdot}$ denotes the Euclidean norm\footnote{The choice of norm is not important here because the factor $2^{-n^\ell}$ swamps the effects of choosing another norm.} of the coefficients of a polynomial in the monomial basis.
\end{definition}

We remark that if $D$ is an actual (discrete) probability distribution, then we have  $D\sdtstile{}{}\cA$ if and only if $D$ is supported on solutions to the constraints $\cA$. We say that a system $\cA$ of polynomial constraints is \emph{explicitly bounded} if it contains a constraint of the form $\{ \|x\|^2 \leq M\}$.
The following fact is a consequence of \cref{fact:sos-separation-efficient} and \cite{MR625550-Grotschel81},

\begin{fact}[Efficient Optimization over Pseudo-distributions]
\label{fact:efficient_sos_opt}
There exists an $(n+ m)^{O(\ell)} $-time algorithm that, given any explicitly bounded and satisfiable system\footnote{Here, we assume that the bit complexity of the constraints in $\cA$ is $(n+m)^{O(1)}$.} $\cA$ of $m$ polynomial constraints in $n$ variables, outputs a level-$\ell$ pseudo-distribution that satisfies $\cA$ approximately. \label{fact:eff-pseudo-distribution}
\end{fact}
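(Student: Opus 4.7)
The plan is to obtain the claim by combining the weak separation oracle of \cref{fact:sos-separation-efficient} with the equivalence of weak separation and weak optimization established in the ellipsoid-method framework of Grötschel, Lovász, and Schrijver. Concretely, let
\[
  K_{\ell}(\cA) \;=\; \Set{ \pE_{D}(1,x_1,\dots,x_n)^{\otimes \ell} \;\Big|\; D \text{ is a level-$\ell$ pseudo-distribution over } \R^n \text{ with } D\sdtstile{}{}\cA }\,\mper
\]
This is a convex subset of a vector space of dimension $n^{O(\ell)}$, since both the pseudo-distribution constraints and the constraints in $\cA$ are linear inequalities in the entries of the moment tensor of $D$. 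The explicit boundedness constraint $\|x\|^2\le M\in \cA$ implies that $K_{\ell}(\cA)$ is contained in a ball of radius $M^{O(\ell)}$ in the moment-tensor space, which supplies the initial ellipsoid needed by the ellipsoid method.

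Next, I would construct a weak separation oracle for $K_{\ell}(\cA)$ running in time $(n+m)^{O(\ell)}$. Given a candidate moment tensor $T$, first invoke the oracle of \cref{fact:sos-separation-efficient} to decide (up to the required tolerance) whether $T$ is the moment tensor of some level-$\ell$ pseudo-distribution; if not, this oracle already returns a separating hyperplane. Otherwise, iterate through each of the $m$ constraints $f_i\ge 0$ in $\cA$ and over all sum-of-squares multipliers $h$ of the appropriate bounded degree: for each pair, the inequality $\pE_{D}\,h\cdot f_i \ge 0$ is a linear inequality in $T$, and the set of such multipliers $h$ can itself be handled by a small semidefinite feasibility test (equivalently, by optimizing a PSD combination of the associated Gram matrix over $T$). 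Any violated inequality produces a separating hyperplane with polynomial bit complexity, and the whole check runs in $(n+m)^{O(\ell)}$ time given the bit-complexity assumption on $\cA$.

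With a weak separation oracle and a containing ball in hand, I would invoke the weak optimization theorem of \cite{MR625550-Grotschel81} to find, in time polynomial in the ambient dimension and the encoding length of the oracle, a point $T^\star$ that lies in the $\epsilon$-neighborhood of $K_{\ell}(\cA)$ with $\epsilon = 2^{-n^{\ell}}$. Such a $T^\star$ is precisely the moment tensor of a pseudo-distribution that satisfies $\cA$ approximately in the sense defined above, and reading off its entries (which are only $n^{O(\ell)}$ numbers) gives the output of the algorithm.

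The main obstacle here is bookkeeping rather than a substantive mathematical difficulty: one must verify that (i) the set of sum-of-squares multipliers $h$ relevant to each constraint is effectively representable so that a single call to the separation oracle covers the entire quantifier over $h$ in the definition of $D\sdtstile{}{}\cA$, and (ii) the additive error $2^{-n^{\ell}}$ from the ellipsoid method translates cleanly into the tolerance $2^{-n^{\ell}}\cdot \|h\|\cdot \prod_{i\in S}\|f_i\|$ allowed in the definition of approximate satisfaction. Both reductions are standard, so the bulk of the work is absorbed into the cited separation-to-optimization framework, and the final time bound follows by inspection.
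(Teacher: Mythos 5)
Your proposal is correct and follows precisely the route the paper itself indicates: the paper gives no detailed proof but states that this fact is ``a consequence of \cref{fact:sos-separation-efficient} and \cite{MR625550-Grotschel81},'' which is exactly the separation-to-optimization reduction you spell out. Your elaboration of the weak separation oracle (combining the moment-tensor membership oracle with the per-constraint SoS-multiplier checks, noting that the degree budget keeps the number of relevant subsets $S$ polynomial) and the use of the explicit boundedness constraint to seed the ellipsoid method are the standard bookkeeping steps implicitly delegated to the cited references.
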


\subsection{Sum-of-squares proofs}
\label{subsec:sos_proofs}

Let $f_1, f_2, \ldots, f_r$ and $g$ be multivariate polynomials in $x$.
A \emph{sum-of-squares proof} that the constraints $\{f_1 \geq 0, \ldots, f_m \geq 0\}$ imply the constraint $\{g \geq 0\}$ consists of  polynomials $(p_S)_{S \subseteq [m]}$ such that
\begin{equation}
g = \sum_{S \subseteq [m]} p^2_S \cdot \Pi_{i \in S} f_i
\mper
\end{equation}
We say that this proof has \emph{degree $\ell$} if for every set $S \subseteq [m]$, the polynomial $p^2_S \Pi_{i \in S} f_i$ has degree at most $\ell$.
If there is a degree $\ell$ SoS proof that $\{f_i \geq 0 \mid i \leq r\}$ implies $\{g \geq 0\}$, we write:
\begin{equation}
  \{f_i \geq 0 \mid i \leq r\} \sststile{\ell}{}\{g \geq 0\}
  \mper
\end{equation}
For all polynomials $f,g\colon\R^n \to \R$ and for all functions $F\colon \R^n \to \R^m$, $G\colon \R^n \to \R^k$, $H\colon \R^{p} \to \R^n$ such that each of the coordinates of the outputs are polynomials of the inputs, we have the following inference rules.

The first one derives new inequalities by addition/multiplication:
\begin{equation} \label{eq:sos-addition-multiplication-rule}
\frac{\cA \sststile{\ell}{} \{f \geq 0, g \geq 0 \} } {\cA \sststile{\ell}{} \{f + g \geq 0\}}, \frac{\cA \sststile{\ell}{} \{f \geq 0\}, \cA \sststile{\ell'}{} \{g \geq 0\}} {\cA \sststile{\ell+\ell'}{} \{f \cdot g \geq 0\}}\tag{Addition/Multiplication Rules}\mper
\end{equation}
The next one derives new inequalities by transitivity: 
\begin{equation} \label{eq:sos-transitivity}
\frac{\cA \sststile{\ell}{} \cB, \cB \sststile{\ell'}{} C}{\cA \sststile{\ell \cdot \ell'}{} C}\tag{Transitivity Rule}\mcom 
\end{equation}
Finally, the last rule derives new inequalities via substitution:
\begin{equation} \label{eq:sos-substitution}
\frac{\{F \geq 0\} \sststile{\ell}{} \{G \geq 0\}}{\{F(H) \geq 0\} \sststile{\ell \cdot \deg(H)} {} \{G(H) \geq 0\}}\tag{Substitution Rule}\mper
\end{equation}

Low-degree sum-of-squares proofs are sound and complete if we take low-level pseudo-distributions as models.
Concretely, sum-of-squares proofs allow us to deduce properties of pseudo-distributions that satisfy some constraints.
\begin{fact}[Soundness]
  \label{fact:sos-soundness}
  If $D \sdtstile{r}{} \cA$ for a level-$\ell$ pseudo-distribution $D$ and there exists a sum-of-squares proof $\cA \sststile{r'}{} \cB$, then $D \sdtstile{r\cdot r'+r'}{} \cB$.
\end{fact}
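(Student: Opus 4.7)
The plan is to unfold the two relevant definitions, push the hypothesis $D \sdtstile{r}{} \cA$ through the sum-of-squares decomposition promised by $\cA \sststile{r'}{} \cB$, and finish with careful degree accounting. Concretely, to show $D \sdtstile{r \cdot r' + r'}{}\cB$ amounts to proving, for every subset $T$ of constraints in $\cB$ and every SoS multiplier $h$ of appropriate degree, that $\pE_D\bigl[h \prod_{g \in T} g\bigr] \ge 0$. I would first handle the case $|T|=1$ in detail (a single constraint $g \in \cB$) and then extend to larger $T$ by iteratively expanding the SoS proofs of each $g_j \in T$.

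For the single-constraint case, the definition of an SoS proof gives a decomposition $g = \sum_{S \subseteq [m]} p_S^2 \prod_{i \in S} f_i$ with each summand of degree at most $r'$. Linearity of $\pE_D$ yields
\[
\pE_D[h \cdot g] \;=\; \sum_{S \subseteq [m]} \pE_D\Bigl[(h p_S^2) \cdot {\textstyle \prod_{i \in S}} f_i\Bigr],
\]
and $h p_S^2$ is SoS as a product of two SoS polynomials. Each summand is nonnegative by the hypothesis $D \sdtstile{r}{}\cA$, provided $\deg(h p_S^2) + \sum_{i \in S} \max\{\deg f_i, r\} \le \ell$ (the level of $D$). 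The key bookkeeping step uses $\deg p_S^2 + \sum_{i \in S}\deg f_i \le r'$ to get
\[
\deg(h p_S^2) + \sum_{i \in S}\max\{\deg f_i, r\} \;\le\; \deg h + \Bigl(r' - {\textstyle \sum} \deg f_i\Bigr) + {\textstyle \sum}(\deg f_i + r) \;=\; \deg h + r' + r|S|.
\]
Since nontrivial constraints in $\cA$ satisfy $\deg f_i \ge 1$, one has $|S| \le r'$, so the right side is at most $\deg h + r'(r+1) = \deg h + r \cdot r' + r'$, which matches exactly the budget afforded by the target conclusion $D \sdtstile{r \cdot r'+r'}{} \{g \ge 0\}$.

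For subsets $T$ of size $>1$, I would multiply out the decompositions of each $g \in T$ to get
\[
\prod_{g \in T} g \;=\; \sum_{(S_g)_g} \Bigl(\prod_{g \in T} p_{S_g, g}^2\Bigr) \prod_{i} f_i^{m_i},
\]
where $m_i$ counts occurrences of $i$ across $\bigcup_g S_g$. Any factor $f_i^{m_i}$ splits as $f_i^{m_i} = (f_i^{\lfloor m_i/2 \rfloor})^2 \cdot f_i^{m_i \bmod 2}$; absorbing the square into an enlarged SoS multiplier, the residual product uses each $f_i$ at most once, making the application of $D \sdtstile{r}{}\cA$ legitimate. The same degree calculation as above then scales additively per $g \in T$, and the budget $\sum_{g \in T} \max\{\deg g, r\cdot r'+r'\} = |T|(r \cdot r' + r')$ afforded by the target conclusion matches the accumulated cost $(r \cdot r' + r')|T|$ exactly.

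The main obstacle --- really the only one --- is the careful degree bookkeeping enforced by the $\max\{\deg f_i, r\}$ term in the definition of $\sdtstile{r}{}$. A naive count using $\deg f_i$ alone would give a conclusion at degree $r'$, not $r \cdot r' + r'$; the additive slack $r \cdot r'$ is precisely the cost of passing from ``satisfaction at degree $r$'' to ``satisfaction at degree $r \cdot r' + r'$.'' Apart from this accounting, the argument is a routine unwinding of definitions and requires no new ideas.
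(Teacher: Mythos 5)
Your proof is correct, and it follows the standard textbook argument for soundness of low-degree SoS derivations. The paper itself states this fact without proof, as it is a well-known foundational result in the sum-of-squares literature, so there is no internal argument to compare against --- but your unwinding (expand the SoS certificate for each $g \in \cB$, push linearity through, absorb even powers of repeated constraints into the SoS multiplier, and bound $\max\{\deg f_i, r\}$ by $\deg f_i + r$ together with $|S| \le \sum_{i \in S}\deg f_i \le r'$) is precisely the usual route. Two small remarks: (i) you implicitly assume every constraint in $\cA$ has degree at least one when invoking $|S| \le r'$, which is the standard harmless convention; (ii) the budget $\sum_{g\in T}\max\{\deg g, rr'+r'\}$ is at least $|T|(rr'+r')$ rather than equal to it, but that only gives you more room, so the inequality still closes. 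No gaps.
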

If the pseudo-distribution $D$ satisfies $\cA$ only approximately, soundness continues to hold if we require an upper bound on the bit-complexity of the sum-of-squares $\cA \sststile{r'}{} B$  (number of bits required to write down the proof). In our applications, the bit complexity of all sum of squares proofs will be $n^{O(\ell)}$ (assuming that all numbers in the input have bit complexity $n^{O(1)}$). This bound suffices in order to argue about pseudo-distributions that satisfy polynomial constraints approximately.

The following fact shows that every property of low-level pseudo-distributions can be derived by low-degree sum-of-squares proofs.
\begin{fact}[Completeness]
  \label{fact:sos-completeness}
  Suppose $d \geq r' \geq r$ and $\cA$ is a collection of polynomial constraints with degree at most $r$, and $\cA \vdash \{ \sum_{i = 1}^n x_i^2 \leq B\}$ for some finite $B$.

  Let $\{g \geq 0 \}$ be a polynomial constraint.
  If every degree-$d$ pseudo-distribution that satisfies $D \sdtstile{r}{} \cA$ also satisfies $D \sdtstile{r'}{} \{g \geq 0 \}$, then for every $\epsilon > 0$, there is a sum-of-squares proof $\cA \sststile{d}{} \{g \geq - \epsilon \}$.
\end{fact}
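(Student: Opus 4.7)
The plan is to derive the sum-of-squares certificate as a dual object to the set of pseudo-distributions, via semidefinite programming (SDP) duality. By \cref{fact:sos-separation-efficient}, the set of degree-$d$ pseudo-moment tensors satisfying $\cA$ is a spectrahedron: the feasible region of an SDP whose constraints encode $\pE_D\bigl(h \cdot \prod_{i \in S} f_i\bigr) \geq 0$ for every sum-of-squares polynomial $h$ of appropriate degree and every $S \subseteq [m]$. I would consider the primal SDP that minimizes $\pE_D g$ over all such pseudo-distributions; the hypothesis of the fact is precisely the statement that its optimal value is $\geq 0$. The explicit bound $\sum_i x_i^2 \leq B$ that is derivable from $\cA$ forces all pseudo-moments of degree $\leq d$ to be bounded in absolute value (via repeated application of pseudo-expectation Cauchy–Schwarz), so the primal feasible set is compact.

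Next, I would appeal to SDP strong duality. For any fixed $\epsilon > 0$, the shifted objective $\pE_D(g + \epsilon)$ is strictly positive on the (compact) primal feasible set, which yields a Slater-type condition for the perturbed problem and closes any potential duality gap. This means there is a dual feasible solution whose value is at least $-\epsilon$. The dual SDP, by construction and by the standard correspondence between PSD matrices indexed by monomials and sums of squares in those monomials, is precisely the problem of exhibiting an identity of the form
\[
g + \epsilon \;=\; \sigma_\emptyset + \sum_{\emptyset \neq S \subseteq [m]} \sigma_S \cdot \prod_{i \in S} f_i,
\]
where each $\sigma_S$ is a sum of squares and the total degree of each summand is at most $d$. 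Such an identity is exactly a degree-$d$ sum-of-squares proof $\cA \sststile{d}{} \{g \geq -\epsilon\}$, which is the desired conclusion.

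The main obstacle will be step two: rigorously establishing strong duality and converting the analytic dual optimizer into a syntactic polynomial identity at exactly degree $d$. In general, SDPs can exhibit nontrivial duality gaps, so one must carefully leverage compactness (furnished by the explicit bound) together with the slack $\epsilon$ — the $\epsilon$ is exactly what enables circumventing the failure of Slater's condition in the unperturbed problem. A clean execution would mirror the standard Positivstellensatz-style argument of Lasserre and Parrilo, in which the Archimedean property implied by $\cA \vdash \{\sum_i x_i^2 \leq B\}$ ensures that the cone of valid pseudo-moments and the truncated cone of SoS certificates from $\cA$ are dual in the strong sense, up to the $\epsilon$ slack which in general cannot be removed.
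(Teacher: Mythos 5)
The paper states this fact without proof; it is a standard result in the SoS literature, and your SDP-duality argument is the standard way to establish it. You correctly identify the key ingredients: the set of degree-$d$ pseudo-moment tensors satisfying $\cA$ is a spectrahedron (via \cref{fact:sos-separation-efficient}), the hypothesis forces the minimum of $\pE_D g$ over that spectrahedron to be $\ge 0$, compactness of the feasible set follows from the certified bound $\cA \vdash \{\sum_i x_i^2 \le B\}$ together with pseudo-Cauchy--Schwarz, and a feasible dual point converts (via the usual PSD-matrix-to-sum-of-squares Gram decomposition) into a degree-$d$ certificate of the form $g + \epsilon = \sum_{S\subseteq[m]} \sigma_S \prod_{i\in S} f_i$.

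One step is stated imprecisely. You write that ``the shifted objective $\pE_D(g+\epsilon)$ is strictly positive on the (compact) primal feasible set, which yields a Slater-type condition for the perturbed problem.'' Slater's condition is about strict \emph{feasibility} --- the existence of a point in the relative interior of the cone constraint --- and shifting the objective by a constant does not change the feasible region at all, so strict positivity of the objective cannot by itself supply a Slater condition. The correct mechanism is the one you gesture at in your closing paragraph: compactness of the primal feasible set already guarantees a zero duality gap for this conic program, but it does \emph{not} guarantee that the dual supremum is attained. The $\epsilon$ slack exists precisely to absorb this potential non-attainment of the dual optimum, not to manufacture a Slater point. With that repair, your argument is correct and matches the standard Lasserre/Parrilo-style duality proof that the paper implicitly references.
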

We will use the following Cauchy-Schwarz inequality for pseudo-distributions:
\begin{fact}[Cauchy-Schwarz for Pseudo-distributions]
Let $f,g$ be polynomials of degree at most $d$ in indeterminate $x \in \R^d$. Then, for any degree d pseudo-distribution $\tmu$,
$\pE_{\tmu}[fg] \leq \sqrt{\pE_{\tmu}[f^2]} \sqrt{\pE_{\tmu}[g^2]}$.
 \label{fact:pseudo-expectation-cauchy-schwarz}
\end{fact}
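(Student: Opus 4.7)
The plan is to mimic the classical proof of the Cauchy–Schwarz inequality, replacing "$\E$ of a nonnegative polynomial is nonnegative" with the defining property of a pseudo-distribution, namely that $\pE_{\tmu}[h^2]\geq 0$ for every polynomial $h$ of degree at most $\ell/2$ when $\tmu$ has level $\ell$. (Implicitly the pseudo-distribution should have level at least $2d$ so that $\pE_{\tmu}[f^2]$, $\pE_{\tmu}[g^2]$, and $\pE_{\tmu}[fg]$ are all well defined; I would state this explicitly in the proof.)

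The main step is to consider, for a real parameter $\lambda$, the polynomial $(f-\lambda g)^2$, which is manifestly a sum of squares of degree at most $2d$. Pseudo-expectation positivity gives
\begin{equation*}
0 \;\leq\; \pE_{\tmu}\bigl[(f-\lambda g)^2\bigr] \;=\; \pE_{\tmu}[f^2] \;-\; 2\lambda\, \pE_{\tmu}[fg] \;+\; \lambda^2\, \pE_{\tmu}[g^2].
\end{equation*}
In the generic case where $\pE_{\tmu}[g^2] > 0$, I plug in the minimizing value $\lambda = \pE_{\tmu}[fg]/\pE_{\tmu}[g^2]$, which yields $\pE_{\tmu}[f^2]\,\pE_{\tmu}[g^2] \geq \pE_{\tmu}[fg]^2$; taking square roots gives the stated inequality.

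The one subtlety, which I would flag as the only real obstacle, is the degenerate case $\pE_{\tmu}[g^2] = 0$. Here the quadratic in $\lambda$ above reduces to the linear inequality $\pE_{\tmu}[f^2] - 2\lambda\,\pE_{\tmu}[fg] \geq 0$ valid for every real $\lambda$. Letting $\lambda \to \pm\infty$ forces $\pE_{\tmu}[fg] = 0$, so the desired inequality $\pE_{\tmu}[fg]\leq \sqrt{\pE_{\tmu}[f^2]}\sqrt{\pE_{\tmu}[g^2]} = 0$ holds trivially. (If the pseudo-distribution only approximately satisfies positivity, as in Fact~\ref{fact:eff-pseudo-distribution}, one instead picks $\lambda$ of bounded magnitude and absorbs the resulting additive error; this does not affect any application in the paper.) Finally, replacing $f$ with $-f$ yields the two-sided bound, completing the argument.
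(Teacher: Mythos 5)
The paper states this fact without proof, treating it as a standard and well-known consequence of the pseudo-moment PSD property, so there is no in-paper argument to compare against; your argument is exactly the canonical proof one would write down, and it is correct. Your side remark about degree bookkeeping is also well-taken: under the paper's own convention that a level-$\ell$ pseudo-distribution enforces $\pE_{\tmu}[h^2]\geq 0$ only for $\deg h \leq \ell/2$, the hypothesis as literally stated (``degree $d$ pseudo-distribution'' with $f,g$ of degree at most $d$) is off by a factor of two, and the argument needs $\tmu$ to have level at least $2d$ so that $(f-\lambda g)^2$ falls within the PSD-tested degree range; your handling of the degenerate case $\pE_{\tmu}[g^2]=0$ and of the approximate-feasibility caveat is likewise correct.
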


\begin{fact}[Hölder's Inequality for Pseudo-Distributions] \label{fact:pseudo-expectation-holder}
Let $f,g$ be polynomials of degree at most $d$ in indeterminate $x \in \R^d$. 
Fix $t \in \N$. Then, for any degree $dt$ pseudo-distribution $\tmu$,
$\pE_{\tmu}[f^{t-1}g] \leq \paren{\pE_{\tmu}[f^t]}^{\frac{t-1}{t}} \paren{\pE_{\tmu}[g^t]}^{1/t}$.
\end{fact}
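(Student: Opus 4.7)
The statement is the pseudo-distribution analogue of the classical Hölder inequality, and the plan is to derive it from iterated applications of Cauchy--Schwarz for pseudo-distributions (\cref{fact:pseudo-expectation-cauchy-schwarz}) by induction on $t$. The cleanest route is to treat $t = 2^k$ first, which is the regime typically invoked in sum-of-squares arguments; general even $t$ then follows by padding to the next power of two combined with Jensen-type monotonicity. The statement is understood for $t$ even so that $f^t, g^t$ are sums of squares and their pseudo-expectations are non-negative (for odd $t$, simple examples like a degenerate pseudo-distribution with $\pE[x]=0$ and $\pE[x^3]<0$ already violate the real-valued statement).

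Concretely, for $t = 2^k$, I would prove by induction on $k$ the equivalent $t$-th power form
\[
\pE_{\tmu}\brac{f^{t-1}g}^t \leq \pE_{\tmu}\brac{f^t}^{t-1}\,\pE_{\tmu}\brac{g^t},
\]
from which the claim of \cref{fact:pseudo-expectation-holder} follows by extracting $t$-th roots. The base case $k = 1$ is exactly \cref{fact:pseudo-expectation-cauchy-schwarz} applied to the polynomials $f$ and $g$. For the inductive step, assume the bound at level $t = 2^k$ and let $t' = 2t = 2^{k+1}$. Applying \cref{fact:pseudo-expectation-cauchy-schwarz} to the two polynomials $f^t$ and $f^{t-1}g$ yields
\[
\pE_{\tmu}\brac{f^{t'-1}g}^2 = \pE_{\tmu}\brac{f^t\cdot f^{t-1}g}^2 \leq \pE_{\tmu}\brac{f^{t'}}\cdot \pE_{\tmu}\brac{f^{t'-2}g^2}.
\]
I would then raise both sides to the $t$-th power and apply the inductive hypothesis to the degree-$2d$ polynomials $F := f^2$ and $G := g^2$, which bounds
\[
\pE_{\tmu}\brac{f^{t'-2}g^2}^t = \pE_{\tmu}\brac{F^{t-1}G}^t \leq \pE_{\tmu}\brac{f^{t'}}^{t-1}\,\pE_{\tmu}\brac{g^{t'}}.
\]
Combining the two estimates and using $t + (t-1) = t'-1$ gives $\pE_{\tmu}\brac{f^{t'-1}g}^{t'} \leq \pE_{\tmu}\brac{f^{t'}}^{t'-1}\pE_{\tmu}\brac{g^{t'}}$, closing the induction.

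The main subtlety I expect to track is the bookkeeping of the pseudo-distribution degree across the recursion: the inductive call replaces $f,g$ of degree $d$ by $F,G$ of degree $2d$, and this doubling is exactly compensated by the doubling of the exponent ($t \to t' = 2t$), keeping the required pseudo-distribution degree pinned at $2td = t'd$ and matching the hypothesis in the statement. A minor edge case is when $\pE_{\tmu}[f^t]=0$ or $\pE_{\tmu}[g^t]=0$; in that situation the $t$-th power form forces $\pE_{\tmu}[f^{t-1}g]=0$ outright, so the inequality reduces to $0 \leq 0$ and no limiting argument is actually needed.
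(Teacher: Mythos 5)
The paper records \cref{fact:pseudo-expectation-holder} as a background fact in the preliminaries without supplying a proof, so there is no internal argument to compare against; your proposal therefore has to be judged on its own merits. Your core argument for $t=2^k$ is correct and is exactly the standard ``iterated Cauchy--Schwarz'' proof used throughout the SoS literature: the base case is \cref{fact:pseudo-expectation-cauchy-schwarz}, and the inductive step you wrote cleanly doubles the exponent while passing from $f,g$ of degree $d$ to $F=f^2,G=g^2$ of degree $2d$, so the required pseudo-distribution degree stays pinned at $2td = t'd$ as you say. Your observation that the statement as literally written needs $t$ even to even be well-formed is also right and worth having on record.

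The genuine gap is the sentence ``general even $t$ then follows by padding to the next power of two combined with Jensen-type monotonicity.'' This reduction does not go through. If $T = 2^k > t$ is the next power of two, the power-of-two Hölder you proved bounds $\pE_{\tmu}[f^{T-1}g]$ in terms of $\pE_{\tmu}[f^T]$ and $\pE_{\tmu}[g^T]$, which are different polynomials from $f^{t-1}g$, $f^t$, $g^t$; there is no change of variables that turns one into the other for fractional $t/T$. Using the $T$-term form $\pE_{\tmu}[h_1\cdots h_T] \le \prod_i \pE_{\tmu}[h_i^T]^{1/T}$ with $h_1=\cdots=h_{t-1}=f$, $h_t=g$, $h_{t+1}=\cdots=h_T=1$ only gives $\pE_{\tmu}[f^{t-1}g] \le \pE_{\tmu}[f^T]^{(t-1)/T}\pE_{\tmu}[g^T]^{1/T}$, and the power-mean (``Jensen'') inequality $\pE_{\tmu}[g^T]^{1/T} \ge \pE_{\tmu}[g^t]^{1/t}$ runs in the \emph{wrong} direction to replace $T$ by $t$. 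A route that actually closes the general even-$t$ case is elementary but different: set $B_j := \pE_{\tmu}[f^{t-j}g^j]$, note $B_j \ge 0$ for even $j$ since $f^{t-j}g^j$ is then a square, observe that Cauchy--Schwarz gives $B_j^2 \le B_{j-2}B_{j+2}$ for even $2\le j\le t-2$ (so $j\mapsto \log B_j$ is convex on the even indices, hence $B_2 \le B_0^{(t-2)/t}B_t^{2/t}$), and finally apply Cauchy--Schwarz once more as $B_1^2 \le B_0 B_2$ to conclude $B_1 \le B_0^{(t-1)/t}B_t^{1/t}$. You should either restrict the claim to $t$ a power of two (which is what the paper's applications effectively need) or replace the padding sketch with an argument of this shape.
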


\begin{fact}[Almost Triangle Inequality] \label{fact:sos-almost-triangle}
Let $a,b$ be indeterminates. Then, for any $t \in \N$, 
\[
\sststile{2t}{a,b} \Set{(a+b)^{2t} \leq 2^{2t} \Paren{a^{2t} + b^{2t}}}\mper
\]
\end{fact}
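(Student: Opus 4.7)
My plan is to proceed by a two-stage argument, first handling the degree-$2$ base case and then amplifying via multiplication and a second induction.

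\textbf{Stage 1 (base case).} I would first establish the degree-$2$ SoS proof
\[
\sststile{2}{a,b}\Set{(a+b)^2 \leq 2(a^2+b^2)}
\]
via the identity $2(a^2+b^2) - (a+b)^2 = (a-b)^2$, which is manifestly a sum of squares. Applying the multiplication rule $t$ times yields
\[
\sststile{2t}{a,b}\Set{(a+b)^{2t} \leq 2^t (a^2+b^2)^t}.
\]

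\textbf{Stage 2 (reducing $(a^2+b^2)^t$ to $a^{2t}+b^{2t}$).} It remains to give a degree-$2t$ SoS proof that $(a^2+b^2)^t \leq 2^t(a^{2t}+b^{2t})$. I would prove this by induction on $t$. For the inductive step, multiply the induction hypothesis by $(a^2+b^2)$ to get
\[
(a^2+b^2)^{t+1} \leq 2^t(a^{2t}+b^{2t})(a^2+b^2) = 2^t\bigparen{a^{2t+2}+b^{2t+2}+a^{2t}b^2+a^2b^{2t}},
\]
so it suffices to exhibit an SoS proof of $a^{2t}b^2 + a^2b^{2t} \leq a^{2t+2}+b^{2t+2}$. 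This follows from the factorization
\[
a^{2t+2}+b^{2t+2}-a^{2t}b^2-a^2b^{2t} = (a^2-b^2)\bigparen{a^{2t}-b^{2t}} = (a^2-b^2)^2 \sum_{i=0}^{t-1} a^{2i}b^{2(t-1-i)},
\]
and each term $a^{2i}b^{2(t-1-i)} = (a^i b^{t-1-i})^2$ is a square, so the right-hand side is a sum of squares of degree $2(t+1)$.

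\textbf{Stage 3 (combine).} Chaining the two SoS inequalities via the transitivity rule gives
\[
\sststile{2t}{a,b}\Set{(a+b)^{2t} \leq 2^t(a^2+b^2)^t \leq 2^{2t}(a^{2t}+b^{2t})},
\]
as desired. The only mild obstacle is ensuring that the second induction does not blow up the degree; but since each inductive step only multiplies by $(a^2+b^2)$ and tacks on a sum of squares of degree $2t$, the final proof remains at degree exactly $2t$, matching the claimed bound.
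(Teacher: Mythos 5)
Your proof is correct. The paper states this as a standard fact without proof, so there is no paper argument to compare against; I will just assess the argument on its own terms. Stage~1 works because $g=2(a^2+b^2)$ and $f=(a+b)^2$ are themselves SoS, so $g^t-f^t=(g-f)\sum_{i=0}^{t-1}g^if^{t-1-i}$ is a product of an SoS polynomial $(a-b)^2$ with a sum of products of SoS polynomials, giving a degree-$2t$ SoS certificate. Stage~2's induction and the factorization $a^{2t+2}+b^{2t+2}-a^{2t}b^2-a^2b^{2t}=(a^2-b^2)^2\sum_{i=0}^{t-1}(a^ib^{t-1-i})^2$ are correct, and the degree bookkeeping closes. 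One terminological nit: in Stage~3 you invoke the ``transitivity rule,'' which for SoS proofs would multiply degrees; what you actually use is the addition rule (the two SoS certificates simply add, with one scaled by $2^t$), which keeps the degree at $2t$.

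A slicker one-step argument that even gives the sharp constant $2^{2t-1}$: substitute $a=(u+v)/2$, $b=(u-v)/2$ and expand. Then
\[
2^{2t-1}\bigparen{a^{2t}+b^{2t}}-(a+b)^{2t}=\sum_{\substack{2\le j\le 2t \\ j \text{ even}}}\binom{2t}{j}(a+b)^{2t-j}(a-b)^{j},
\]
and each summand is a nonnegative constant times $\bigparen{(a+b)^{(2t-j)/2}(a-b)^{j/2}}^2$, i.e., a square, so the right-hand side is a sum of squares of degree $2t$. Your two-stage proof is perfectly valid but trades this single algebraic identity for two nested inductions.
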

\begin{fact}[Operator norm Bound]
\label{fact:operator_norm}
Let $A$ be a symmetric $d\times d$ matrix and $v$ be a vector in $\mathbb{R}^d$. Then,
\[
\sststile{2}{v} \Set{ v^{\top} A v \leq \|A\|_2\|v\|^2_2 }
\]
\end{fact}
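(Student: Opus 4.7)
The plan is to construct the sum-of-squares certificate directly from the spectral decomposition of $A$. Since $A$ is symmetric, there is an orthonormal basis $u_1,\ldots,u_d$ of eigenvectors with real eigenvalues $\lambda_1,\ldots,\lambda_d$, and by definition of the spectral norm for symmetric matrices we have $\|A\|_2 = \max_i |\lambda_i|$, so in particular $\|A\|_2 - \lambda_i \geq 0$ for every $i$. The goal is to rewrite $\|A\|_2\|v\|_2^2 - v^\top A v$ as an explicit sum of squares of degree-$1$ polynomials in the indeterminate $v$.

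The key steps are as follows. First, write $A = U \Lambda U^\top$ with $U$ orthogonal and $\Lambda = \diag(\lambda_1,\ldots,\lambda_d)$, and observe the identity
\[
\|A\|_2 \|v\|_2^2 - v^\top A v \;=\; v^\top (\|A\|_2 I - A) v \;=\; v^\top U (\|A\|_2 I - \Lambda) U^\top v.
\]
Second, let $\tilde v_i = u_i^\top v$; each $\tilde v_i$ is a degree-$1$ polynomial in $v$. Substituting gives
\[
\|A\|_2 \|v\|_2^2 - v^\top A v \;=\; \sum_{i=1}^d (\|A\|_2 - \lambda_i)\,\tilde v_i^2 \;=\; \sum_{i=1}^d \Bigl(\sqrt{\|A\|_2 - \lambda_i}\;u_i^\top v\Bigr)^{\!2},
\]
where we used that $\|A\|_2 - \lambda_i \geq 0$ to extract a real square root. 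The right-hand side is manifestly a sum of squares of degree-$1$ polynomials in $v$, and hence the inequality admits a degree-$2$ SoS proof in $v$, as required.

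There is no real obstacle: the argument is the standard observation that a PSD quadratic form $v^\top M v$ (here $M = \|A\|_2 I - A \succeq 0$) always admits a degree-$2$ SoS certificate via its Cholesky or eigendecomposition. The only thing to double-check is that the certificate is genuinely degree $2$ in $v$ (not in $A$, which is regarded as a fixed matrix of constants) and that the coefficients $\sqrt{\|A\|_2 - \lambda_i}$ are legitimate real constants, which follows from $\|A\|_2 \geq \lambda_i$.
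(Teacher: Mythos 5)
Your proof is correct, and it is the canonical argument. The paper states \cref{fact:operor_norm} without proof (it is listed among standard SoS toolbox facts in the preliminaries), so there is nothing to compare against; your certificate via the eigendecomposition of $\|A\|_2 I - A \succeq 0$ is exactly the standard one, and your side remarks (that $A$ is a matrix of constants, so the proof is degree $2$ in $v$, and that $\|A\|_2 - \lambda_i \geq 0$ gives real square roots) correctly address the only points that need checking. Note that this is really a special case of the paper's \cref{fact:spectral-sos} — that $v^\top M v$ is SoS whenever $M \succeq 0$ — applied with $M = \|A\|_2 I - A$; your proof is, in effect, simultaneously a proof of that fact.
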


\begin{fact}[SoS AM-GM Inequality, see Appendix A of~\cite{MR3388192-Barak15}] \label{fact:sos-am-gm}
Let $f_1, f_2,\ldots, f_m$ be indeterminates. Then, 
\[
\sststile{m}{f_1, f_2,\ldots, f_m} \Set{ \Paren{\frac{1}{m} \sum_{i =1}^n f_i }^m \geq \Pi_{i \leq m} f_i} \mper
\]
\end{fact}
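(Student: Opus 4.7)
The plan is to prove the SoS AM-GM inequality via Cauchy's doubling trick. Since the stated inequality fails over the reals without sign constraints on the $f_i$ (consider $m=3$, $f_1=f_2=-1$, $f_3=1$), the claim should be read as an SoS derivation from the implicit hypotheses $\{f_i \ge 0\}$: a polynomial identity
\[
\Paren{\tfrac{1}{m}\sum_{i=1}^m f_i}^m - \prod_{i=1}^m f_i = \sigma_0 + \sum_{S\subseteq[m]} \sigma_S \prod_{i\in S} f_i
\]
in which each $\sigma_S$ is a sum of squares. My approach is to first settle the case $m=2^k$ by induction on $k$, and then reduce arbitrary $m$ to a power of two via a padding argument.

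For the base case $m=2$, the identity $(f_1+f_2)^2 - 4 f_1 f_2 = (f_1-f_2)^2$ is a manifest degree-$2$ SoS certificate requiring no hypothesis. For the doubling step $2^k \to 2^{k+1}$, I would split the $2^{k+1}$ variables into two blocks with arithmetic means $A$ and $B$, apply the inductive hypothesis to each block to obtain SoS derivations of $A^{2^k} \ge \prod_{i\le 2^k} f_i$ and $B^{2^k} \ge \prod_{i > 2^k} f_i$, and multiply these inequalities (legal in SoS because both sides are non-negative combinations of the hypotheses $f_i \ge 0$) to obtain $(AB)^{2^k} \ge \prod_i f_i$. Raising the $m=2$ base $\bigl(\tfrac{A+B}{2}\bigr)^2 \ge AB$ to the $2^k$-th power and chaining then yields $\bigl(\tfrac{A+B}{2}\bigr)^{2^{k+1}} \ge \prod_i f_i$; since $\tfrac{A+B}{2}$ is the overall arithmetic mean, this closes the doubling induction.

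For arbitrary $m$, let $k = \lceil \log_2 m\rceil$ and set $\bar f = \tfrac{1}{m}\sum_i f_i$. I would apply the $2^k$-variable SoS AM-GM to the padded tuple $(f_1,\ldots,f_m,\bar f,\ldots,\bar f)$ containing $2^k - m$ trailing copies of $\bar f$: its arithmetic mean is $\bar f$ and its product is $\bar f^{\,2^k-m}\prod_i f_i$, so the previous step yields an SoS derivation of $\bar f^{\,2^k-m}\bigl(\bar f^{\,m} - \prod_i f_i\bigr) \ge 0$.

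The main obstacle is the final step of extracting from this factored form a certificate of $\bar f^{\,m} - \prod_i f_i \ge 0$, which in a genuine ring would require dividing by the non-negative factor $\bar f^{\,2^k-m}$---not a legal SoS operation. The resolution is to look inside the explicit polynomial decomposition produced by the doubling induction and verify that the factor $\bar f^{\,2^k-m}$ appears uniformly across every term, so that it may be cancelled formally at the level of the polynomial identity, leaving a bona fide SoS certificate of the stated inequality. Carrying out this bookkeeping (and checking that the total degree remains $O(m)$) is the technically delicate part of the argument, which is the content of Appendix A of \cite{MR3388192-Barak15}.
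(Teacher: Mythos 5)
The paper does not actually prove this fact---it is stated as a citation to Appendix~A of \cite{MR3388192-Barak15}---so there is no ``paper's proof'' to match. Your observation that the claim implicitly needs the hypotheses $\{f_i \ge 0\}$ is correct and worth flagging as a mis-statement. However, your proposed derivation has a genuine gap at exactly the step you call ``the technically delicate part'': the factor $\bar f^{\,2^k-m}$ does \emph{not} appear uniformly across the terms of the padded SoS decomposition, so it cannot be cancelled formally. Concretely, take $m=3$, $k=2$, so you pad $(f_1,f_2,f_3)$ with one copy of $\bar f=\tfrac{1}{3}(f_1+f_2+f_3)$ and invoke the degree-$4$ certificate built by doubling. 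That certificate contains the constant (no-axiom) SoS summand
\[
\sigma_\emptyset \;=\; \tfrac{1}{16}\,(g_1-g_2)^2(g_3-g_4)^2 \;\Big|_{g_4=\bar f} \;=\; \tfrac{1}{16}\,(f_1-f_2)^2\bigl(f_3-\bar f\bigr)^2,
\]
and at $(f_1,f_2,f_3)=(2,0,-2)$ one has $\bar f=0$ but $\sigma_\emptyset=1\ne 0$, so $\sigma_\emptyset$ is not divisible by $\bar f$. The identity $\bar f\cdot(\bar f^{\,3}-f_1f_2f_3)=\sum_S\sigma_S\prod_{i\in S}f_i$ is of course true as a polynomial identity, but only the \emph{total} right-hand side is divisible by $\bar f$, not term-by-term, and that is not enough to produce a certificate for $\bar f^{\,3}-\prod f_i\ge 0$.

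There is also a more structural issue worth noting: the form $(\tfrac{1}{m}\sum_i f_i)^m\ge\prod_i f_i$ that you (and the statement) aim at is not the one the paper actually uses. In the proof of the ``Cancellation within SoS'' lemma the fact is instantiated with $f_1=a$, $f_2=\cdots=f_t=1$ to deduce $a\le a^t/t+1-1/t$, which only follows from the \emph{power-mean} form $\tfrac{1}{m}\sum_i f_i^m\ge\prod_i f_i$; so the statement as printed is off and the relevant inequality is this latter one. That inequality does have a low-degree SoS proof from $\{f_i\ge 0\}$, but the known route is Hurwitz's symmetrization identity
\[
m!\Bigl(\tfrac{1}{m}\textstyle\sum_i f_i^m-\prod_i f_i\Bigr)
=\tfrac{1}{2}\sum_{\sigma\in S_m}\bigl(f_{\sigma(1)}-f_{\sigma(2)}\bigr)^2
\Bigl(\textstyle\sum_{j=0}^{m-2}f_{\sigma(1)}^{\,j}f_{\sigma(2)}^{\,m-2-j}\Bigr)
f_{\sigma(3)}\cdots f_{\sigma(m)},
\]
which exhibits the difference directly as a non-negative combination of squares times products of the axioms, with no division step at all. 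Replacing Cauchy doubling by this identity (or by a similar symmetrization over $S_m$) closes the gap and matches the $O(m)$ degree bound.
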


\begin{fact}[SoS Hölder's Inequality]\label{fact:sos-holder}
Let $w_1, \ldots w_n$ be indeterminates and let $f_1,\ldots f_n$ be polynomials of degree $m$ in vector valued variable $x$. 
Let $k$ be a power of 2.  
Then, 
\[
\Set{w_i^2 = w_i, \forall i\in[n] } \sststile{2km}{x,w} \Set{  \Paren{\frac{1}{n} \sum_{i = 1}^n w_i f_i}^{k} \leq \Paren{\frac{1}{n} \sum_{i = 1}^n w_i}^{k-1} \Paren{\frac{1}{n} \sum_{i = 1}^n f_i^k}} 
\]
\end{fact}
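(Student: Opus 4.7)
The plan is to induct on $j$ where $k = 2^j$, reducing the entire argument to a single SoS Cauchy--Schwarz step that exploits the Boolean constraint $w_i^2 = w_i$.

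First I would establish the base case $k=2$. The polynomial identity
\[
\Paren{\sum_i a_i b_i}^2 + \tfrac{1}{2}\sum_{i,j}(a_i b_j - a_j b_i)^2 = \Paren{\sum_i a_i^2}\Paren{\sum_i b_i^2}
\]
is an explicit SoS certificate of Cauchy--Schwarz for arbitrary polynomial sequences $(a_i),(b_i)$. Substituting $a_i \mapsto w_i$ and $b_i \mapsto w_i f_i$, and then applying $w_i^2 = w_i$ to simplify $a_i^2 = w_i^2 \mapsto w_i$, $b_i^2 = w_i^2 f_i^2 \mapsto w_i f_i^2$, and $a_i b_i = w_i^2 f_i \mapsto w_i f_i$, yields
\[
\Set{w_i^2 = w_i} \sststile{2m+2}{} \Set{\Paren{\tfrac{1}{n}\sum_i w_i f_i}^2 \leq \Paren{\tfrac{1}{n}\sum_i w_i}\Paren{\tfrac{1}{n}\sum_i w_i f_i^2}}.
\]

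For the inductive step, assume the statement holds at exponent $k/2$ for polynomials of arbitrary degree; I will prove it at exponent $k$ for polynomials of degree $m$. Both sides of the base-case inequality are SoS-nonneg modulo $\{w_i^2 = w_i\}$: the LHS is literally a square, and under the constraint $\tfrac{1}{n}\sum w_i = \tfrac{1}{n}\sum w_i^2$ and $\tfrac{1}{n}\sum w_i f_i^2 = \tfrac{1}{n}\sum (w_i f_i)^2$ are sums of squares. Raising both sides to the $(k/2)$-th power gives
\[
\Paren{\tfrac{1}{n}\sum_i w_i f_i}^{k} \leq \Paren{\tfrac{1}{n}\sum_i w_i}^{k/2}\Paren{\tfrac{1}{n}\sum_i w_i f_i^2}^{k/2}.
\]
Applying the inductive hypothesis to $g_i := f_i^2$ (of degree $2m$) at exponent $k/2$ bounds the second factor on the right by $(\tfrac{1}{n}\sum w_i)^{k/2-1}(\tfrac{1}{n}\sum f_i^{k})$, and chaining the two inequalities (via multiplication by the SoS-nonneg factor $(\tfrac{1}{n}\sum w_i)^{k/2}$) yields the claimed bound.

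The main obstacle I expect is the step of raising an SoS inequality to an integer power inside the proof system. For this I would rely on the telescoping identity
\[
Q^{r} - P^{r} = (Q-P)\sum_{i=0}^{r-1} Q^{r-1-i} P^{i},
\]
together with the fact that products of polynomials that are SoS-nonneg modulo $\{w_i^2 = w_i\}$ remain SoS-nonneg modulo the same constraint. Since $P$ here is a square and $Q - P$ is SoS-nonneg by the base case, each summand $Q^{r-1-i}P^i$ is SoS-nonneg, hence so is $Q^r - P^r$. A direct degree count --- the raised base case has degree $k(m+1)$, and the inductive hypothesis applied at $g_i = f_i^2$ and exponent $k/2$ fits within $2\cdot(k/2)\cdot(2m) = 2km$ --- confirms that the final certificate has degree at most $2km$ for $m \geq 1$, as required.
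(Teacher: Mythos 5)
The paper states this as a known \cref{fact:sos-holder} without proof, so there is no internal argument to compare against; I'll assess your argument directly. Your approach---iterated SoS Cauchy--Schwarz with a telescoping argument for powering---is the standard route for SoS H\"older, and the main ingredients (the Lagrange-identity certificate, the observation that products of SoS-modulo-$\set{w_i^2=w_i}$ polynomials remain SoS modulo that ideal, the final chaining) are all sound.

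There is one small but genuine gap in the base case. What your Lagrange-identity step actually establishes is the Cauchy--Schwarz form
\[
\Paren{\tfrac{1}{n}\sum_i w_i f_i}^2 \leq \Paren{\tfrac{1}{n}\sum_i w_i}\Paren{\tfrac{1}{n}\sum_i w_i f_i^2}\mcom
\]
with a stray $w_i$ in the last factor, whereas the inductive hypothesis you invoke at exponent $k/2$ is the H\"older form with no $w_i$ there. For the induction to close, the $k=2$ base case must itself be the H\"older form $\Paren{\tfrac{1}{n}\sum_i w_i f_i}^2 \leq \Paren{\tfrac{1}{n}\sum_i w_i}\Paren{\tfrac{1}{n}\sum_i f_i^2}$, and that requires an extra step you did not write down: modulo $\set{w_i^2=w_i}$ we have $(1-w_i)f_i^2 = \bigl((1-w_i)f_i\bigr)^2$, so $\tfrac{1}{n}\sum_i w_i f_i^2 \leq \tfrac{1}{n}\sum_i f_i^2$ is SoS-certified, and one then multiplies through by the SoS-nonneg factor $\tfrac{1}{n}\sum_i w_i$. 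This step cannot be dispensed with---the ``$k=1$'' instance of the statement is false (take $n=1$, $w_1=0$, $f_1=-1$: the left side is $0$, the right side is $-1$), so the induction genuinely must bottom out at the $k=2$ H\"older form, not just the Cauchy--Schwarz form. Finally, your degree bookkeeping is slightly optimistic (with $a_i=w_i$, $b_i=w_i f_i$ the Lagrange certificate has degree $2m+4$, not $2m+2$, and the telescoping contributes an additional additive $O(k)$), but this only affects lower-order terms in the degree bound and does not undermine the argument.
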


\begin{lemma}[Cancellation within SoS, similar to Lemma~9.2 in~\cite{bakshi2020outlierrobust}]\label{fact:cancellation-within-SoS}
Let $a$ be an indeterminate. Then,
\[
\Set{a^t \leq 1} \cup \Set{a \geq 0} \sststile{t}{a} \Set{a \leq 1}
\]
\end{lemma}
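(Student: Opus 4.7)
The plan is to produce an explicit degree-$t$ SoS identity that writes $(1-a)$, up to a positive rational scalar, as a conic combination of squares and products of the axioms $1 - a^t \geq 0$ and $a \geq 0$. The motivating real-variable factorization is $1 - a^t = (1-a)(1 + a + \cdots + a^{t-1})$, which tells us that $(1-a)$ and $(1-a^t)$ share a root at $a = 1$ and differ by a factor that is manifestly non-negative when $a \geq 0$. Dividing by that factor is of course not a legal SoS move, so the argument must realize the intuition through a polynomial identity rather than through cancellation.

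Concretely, I would aim to establish the identity
\[
t(1 - a) \;=\; (1 - a^t) \;+\; (1-a)^2 \cdot S(a), \qquad S(a) \;\defeq\; \sum_{j=0}^{t-2} (t-1-j)\, a^j\mper
\]
The derivation I have in mind: the polynomial $a^t - ta + (t-1)$ vanishes at $a = 1$ together with its first derivative, so $(1-a)^2$ divides it; computing the quotient by writing $a^t - 1 = (a-1)\sum_{k=0}^{t-1} a^k$ and then $a^k - 1 = (a-1)\sum_{j=0}^{k-1} a^j$ and swapping the order of summation collapses to exactly $S(a) = \sum_{j=0}^{t-2}(t-1-j)\, a^j$. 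As a sanity check, at $t = 2$ the identity reads $2(1-a) = (1-a^2) + (1-a)^2$, and at $t = 3$ it reads $3(1-a) = (1-a^3) + (1-a)^2(a+2)$; both are trivial to verify by expansion.

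Given this identity, the SoS derivation is immediate. The polynomial $(1-a)^2$ is a square of degree $2$; each monomial $a^j$ satisfies $\{a \geq 0\} \sststile{j}{a} \{a^j \geq 0\}$ by $j$ applications of the multiplication rule on the axiom $a \geq 0$; and summing these with the positive integer weights $t-1-j$ gives $\{a \geq 0\} \sststile{t-2}{a} \{S(a) \geq 0\}$. Multiplying by the square $(1-a)^2$ yields $\{a \geq 0\} \sststile{t}{a} \{(1-a)^2 S(a) \geq 0\}$, and then adding the axiom $(1 - a^t) \geq 0$ and dividing by the positive constant $t$ produces a degree-$t$ SoS proof of $1 - a \geq 0$, which is exactly the conclusion of the lemma.

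The statement is short and purely algebraic, so I do not anticipate a genuine obstacle once the polynomial identity above is written down; guessing the identity is really the only creative step. The one thing to watch is degree accounting: the dominant term on the right is $(1-a)^2 \cdot a^{t-2}$, of total degree $t$, which exactly matches the $\sststile{t}{a}$ bound claimed in the lemma, so no slack is wasted.
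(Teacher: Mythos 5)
Your proof is correct, and in substance it is the paper's argument with the key step made explicit rather than cited. The paper invokes an SoS AM--GM fact with $f_1 = a$, $f_2 = \cdots = f_t = 1$ to conclude $a \le a^t/t + 1 - 1/t$ under $a \ge 0$, then substitutes $a^t \le 1$; rearranging that AM--GM step gives $0 \le a^t - t a + (t-1)$, which is exactly the polynomial you factor as $(1-a)^2 S(a)$ with $S(a) = \sum_{j=0}^{t-2}(t-1-j)a^j$. So your identity
\[
t(1-a) = (1 - a^t) + (1-a)^2 S(a)
\]
is precisely an explicit degree-$t$ SoS certificate for the special case of AM--GM the paper uses. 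What your version buys is self-containment and a cleaner degree audit: the paper's stated form of AM--GM (a power of an average dominating a product) does not, read literally, yield the Maclaurin-type bound $a \le a^t/t + 1 - 1/t$ that the proof then uses, so the citation must be read loosely, whereas your route produces the certificate directly. Your degree accounting is also correct: $S(a)$ has degree $t-2$ and is a nonnegative combination of $a^j$ terms (each SoS given $a \ge 0$, writing $a^j$ as a square times at most one copy of the axiom $a$), multiplying by the square $(1-a)^2$ brings the degree to $t$, and adding the degree-$t$ axiom $1 - a^t \ge 0$ matches $\sststile{t}{a}$ with no slack.
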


\begin{proof}
Applying the SoS AM-GM inequality (Fact~\ref{fact:sos-am-gm}) with $f_1 = a$, $f_2 = \ldots = f_t = 1$, we get:
\[
\sststile{t}{a} \Set{a \leq a^{t}/t + 1-1/t} \mper
\]
Thus, 
\[
\Set{a^{t} \leq 1} \sststile{t}{a} \Set{a \leq 1/t + 1-1/t = 1}\mper
\]
\end{proof}

\begin{fact}[Cancellation within SoS, Lemma~9.3 in~\cite{bakshi2020outlierrobust}] \label{fact:cancellation-within-SoS-indeterminate}
Let $a,C$ be indeterminates. Then, 
\[
\Set{a \geq 0} \cup \Set{ a^t \leq Ca^{t-1}} \sststile{2t}{a,C} \Set{a^{2t} \leq C^{2t}}\mper
\]
\end{fact}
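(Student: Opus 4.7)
The plan is to decompose $C^{2t} - a^{2t}$ as the sum of two SoS-derivable pieces via the identity
\[
C^{2t} - a^{2t}
= \bigl[C^{2t} + (t-1)\, a^{2t} - t\, C^{2} a^{2t-2}\bigr]
+ t \bigl[C^{2} a^{2t-2} - a^{2t}\bigr],
\]
and show each bracket admits an SoS proof of degree at most $2t$, whereupon addition closes the argument.

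For the second bracket, factor as a difference of squares: $C^2 a^{2t-2} - a^{2t} = (Ca^{t-1})^2 - (a^t)^2 = (Ca^{t-1} - a^t)(Ca^{t-1} + a^t)$. The first factor is exactly the axiom. For the second, write $Ca^{t-1} + a^t = (Ca^{t-1} - a^t) + 2 a^t$; note that $a^t$ is SoS-nonnegative since it equals $(a^{t/2})^2$ when $t$ is even and $a\cdot (a^{(t-1)/2})^2$ when $t$ is odd (invoking the axiom $a \geq 0$). The expansion $(Ca^{t-1} - a^t)(Ca^{t-1} + a^t) = (Ca^{t-1} - a^t)^2 + 2a^t(Ca^{t-1} - a^t)$ then exhibits the second bracket as a sum of a square (degree $2t$) and an axiom-multiple with an SoS coefficient (total degree $2t$).

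For the first bracket, substitute $x = C^2$ and $y = a^2$ so that it becomes the Young-type inequality $x^t + (t-1)y^t - t\,xy^{t-1} \geq 0$. I will verify the polynomial identity
\[
x^t + (t-1)\, y^t - t\, x y^{t-1}
= (x - y)^2 \sum_{k=0}^{t-2}(k+1)\, x^{t-2-k} y^k
\]
by direct expansion (the coefficients of $y^j$ for $1 \leq j \leq t-2$ telescope to $(j{+}1) - 2j + (j{-}1) = 0$, leaving exactly the three expected monomials). Substituting back, $(C^2 - a^2)^2$ is a square, and each term in the inner sum is of the form $(k+1)(C^{t-2-k} a^k)^2$, a nonnegative scalar times a square. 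Thus the first bracket is SoS of degree $2t$ without needing any axiom.

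The main obstacle is identifying the right splitting. The most natural attempts fail: neither $C^2 \geq a^2$ nor any direct factorization of $C^{2t} - a^{2t}$ through $(C - a)$ is SoS-derivable in low degree, because the axiom only provides the weaker $a^{t-1}(C - a) \geq 0$, which cannot be ``divided'' by $a^{t-1}$ inside SoS. The identity above sidesteps this by pairing a Young-type inequality (whose factorization involves only squares in the already-squared variables $C^2$ and $a^2$) with a squared-axiom piece that contributes the compensating $t\,(C^2 a^{2t-2} - a^{2t})$; once this decomposition is in hand, each ingredient is standard.
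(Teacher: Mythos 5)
Your proof is correct. Note first that the paper under review does not actually supply a proof of this statement: it is stated as a Fact and attributed to Lemma~9.3 of the cited work \cite{bakshi2020outlierrobust}, so there is no in-paper argument to compare against. Evaluated on its own terms, your construction is a complete degree-$2t$ SoS certificate. The two identities you rely on check out: the splitting
\[
C^{2t} - a^{2t} = \bigl[C^{2t} + (t-1)a^{2t} - tC^2 a^{2t-2}\bigr] + t\bigl[C^2 a^{2t-2} - a^{2t}\bigr]
\]
is a straightforward cancellation; the Young-type identity $x^t + (t-1)y^t - txy^{t-1} = (x-y)^2\sum_{k=0}^{t-2}(k+1)x^{t-2-k}y^k$ is verified by the coefficient telescope you describe, and after substituting $x=C^2$, $y=a^2$ each summand $(C^2-a^2)^2(k+1)(C^{t-2-k}a^k)^2$ is an explicit scaled square of degree exactly $2t$; and the factorization $C^2a^{2t-2} - a^{2t} = (Ca^{t-1}-a^t)^2 + 2a^t(Ca^{t-1}-a^t)$ turns the second bracket into a square plus an axiom multiplied by $a^t$, where $a^t$ is itself SoS-nonnegative directly (for $t$ even) or via the axiom $a\geq 0$ (for $t$ odd). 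Every term fits the form $p_S^2\prod_{i\in S}f_i$ with degree at most $2t$, so the proof is sound. The one observation worth making is that your argument is entirely unconditional on the sign of $C$ (all appearances of $C$ enter through $C^2$ or through the axiom), which is exactly the level of generality the stated fact requires since $C$ is declared only as an indeterminate.
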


The following fact is a simple corollary of the fundamental theorem of algebra:
\begin{fact}
For any univariate degree $d$ polynomial $p(x) \geq 0$ for all $x \in \R$,
$\sststile{d}{x} \Set{p(x) \geq 0}$.
 \label{fact:univariate}
\end{fact}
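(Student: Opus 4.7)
The plan is to use the fundamental theorem of algebra to factor $p(x)$ over $\R$ into linear and irreducible quadratic factors, and then explicitly exhibit each factor (and the product) as a sum of squares of polynomials whose degree is at most $d/2$.

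Concretely, write $p(x) = c \prod_{i} (x - r_i)^{m_i} \prod_j \bigl((x-a_j)^2 + b_j^2\bigr)^{n_j}$, where $c \in \R$, the $r_i$ are the distinct real roots, and $(a_j \pm i b_j)$ with $b_j > 0$ are the distinct pairs of complex-conjugate roots. I would first argue that $c \geq 0$ and that every $m_i$ is even: if some $m_i$ were odd then $p$ would change sign at $r_i$, contradicting $p(x) \geq 0$ on $\R$; and the sign of $p$ at large $|x|$ is the sign of $c$ times the sign of the leading monomial, so $c \geq 0$ as well.

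Next I would write each factor as a sum of squares in a degree-preserving way. The factor $(x-r_i)^{m_i}$ is just $\bigl((x-r_i)^{m_i/2}\bigr)^2$ since $m_i$ is even. The factor $(x-a_j)^2 + b_j^2$ is already a sum of two squares, namely $(x-a_j)^2 + (b_j)^2$. The constant $c$ equals $(\sqrt{c})^2$. Finally, I would invoke the elementary identity that a product of sums of squares is itself a sum of squares (via the identity $(u^2+v^2)(s^2+t^2) = (us+vt)^2 + (ut-vs)^2$, iterated), and crucially observe that multiplying these SoS representations yields a sum of squares $p(x) = \sum_k q_k(x)^2$ with each $q_k$ of degree exactly $(\deg p)/2 = d/2$, so the total degree of each term $q_k^2$ is at most $d$. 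This produces a degree-$d$ SoS certificate of $p(x) \geq 0$, which by definition is $\sststile{d}{x}\{p(x)\geq 0\}$.

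There is really no serious obstacle here; the only minor care needed is the parity argument that the real roots have even multiplicity and the observation that the product-of-SoS identity does not blow up the degree beyond $d$. (An alternative, slightly slicker route is to invoke the classical theorem that every nonnegative univariate polynomial is a sum of \emph{two} squares of polynomials of degree $\leq d/2$, but the factorization proof above is self-contained and suffices for the stated degree bound.)
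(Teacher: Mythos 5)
Your proposal is correct and is exactly the argument the paper has in mind: the paper states this fact without proof, attributing it to ``a simple corollary of the fundamental theorem of algebra,'' and your factorization into even-multiplicity linear factors and irreducible quadratics, followed by the Brahmagupta--Fibonacci two-squares identity, is precisely that corollary fleshed out, with the degree accounting done correctly.
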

This can be extended to univariate polynomial inequalities over intervals of $\R$.
2
\begin{fact}[Fekete and Markov-Lukacs, see \cite{MR2500468-Laurent09}]
\label{fact:fekete_univariate}
For any univariate degree $d$ polynomial $p(x) \geq 0$ for $x \in [a, b]$,  $\Set{x\geq a, x \leq b} \sststile{d}{x} \Set{p(x) \geq 0}$.  \label{fact:univariate-interval}
\end{fact}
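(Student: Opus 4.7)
The plan is to prove the classical Markov--Lukacs representation theorem for univariate polynomials non-negative on a closed interval and observe that it directly yields the desired sum-of-squares certificate. The statement to prove is that if $p(x) \geq 0$ on $[a,b]$, then there exist polynomials providing a degree-$d$ SoS derivation of $p(x) \geq 0$ from the constraints $\{x-a \geq 0, b-x \geq 0\}$.

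First, I would reduce to the case $[a,b] = [-1,1]$ by the affine substitution $y = (2x-a-b)/(b-a)$. Under this substitution, the constraints $\{x-a \geq 0, b-x \geq 0\}$ become $\{1+y \geq 0, 1-y \geq 0\}$ (up to positive scaling), and by the Substitution Rule (Equation~\ref{eq:sos-substitution}), a degree-$d$ SoS derivation in $y$ translates to a degree-$d$ derivation in $x$ since the substitution is linear.

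Next, I would use the fundamental theorem of algebra to factor $p(y)$ completely over $\C$. The complex roots appear in conjugate pairs, and each pair yields a factor $(y-\beta)(y-\bar\beta) = (y - \Re\beta)^2 + (\Im\beta)^2$, which is manifestly a sum of squares of polynomials. For the real roots, I would classify them into three groups: (i) roots in the open interval $(-1,1)$, which must occur with even multiplicity because $p \geq 0$ on $[-1,1]$, hence contribute perfect squares $(y-r)^{2k}$; (ii) roots at $\pm 1$, contributing factors $(1-y)^j$ or $(1+y)^j$; and (iii) real roots outside $[-1,1]$, contributing linear factors $(y-r)$ with $|r|>1$ whose net sign on $[-1,1]$ is determined by the parity of their count.

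The technical heart of the argument is then to establish by induction on $d$ the Markov--Lukacs representation: if $d = 2m$, then $p(y) = A(y)^2 + (1-y)(1+y) B(y)^2$ with $\deg A \leq m$ and $\deg B \leq m-1$; if $d = 2m+1$, then $p(y) = (1+y) C(y)^2 + (1-y) D(y)^2$ with $\deg C, \deg D \leq m$. The inductive step pairs up the linear factors from groups (ii) and (iii) using identities such as $(r-y)(s-y) = \tfrac{1}{2}[(r+s-2y)^2/2 + (rs-1)\cdot 2] \pm (\cdots)(1-y^2)$, combining them two at a time to maintain the desired form while keeping track of degree bounds and signs. Once this representation is in hand, the SoS derivation from $\{1+y \geq 0, 1-y \geq 0\}$ is immediate: each term in the Markov--Lukacs decomposition is a product of a square (manifestly SoS) with a nonnegative product of the constraint polynomials, giving degree at most $d$ by construction.

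The main obstacle is the induction/grouping step producing the Markov--Lukacs form with the tight degree bounds. The parity argument (why roots strictly inside $(-1,1)$ have even multiplicity is trivial, but why the remaining outer factors can be grouped into the two-term form with exactly the stated degree budget) requires careful bookkeeping, especially when the total degree is odd versus even and when boundary roots at $\pm 1$ interact with exterior roots. Handling the base cases ($d = 0, 1, 2$) cleanly and verifying the inductive algebraic identity without blowing up the degree is where the bulk of the effort lies; beyond that, the passage from representation to SoS certificate is automatic via the Addition and Multiplication Rules (Equation~\ref{eq:sos-addition-multiplication-rule}).
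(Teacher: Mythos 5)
The paper itself does not prove this; it is stated as a classical theorem (Fekete, Markov--Luk\'acs) with a citation to Laurent's survey, so there is no internal argument to compare against. Your outline is one of the standard elementary routes to that theorem: reduce to $[-1,1]$ by an affine change of variable (which preserves SoS degree by the Substitution Rule), factor over $\C$ to split off a perfect square from conjugate pairs and interior double roots, and then show the remaining product of exterior and boundary linear factors admits the Markov--Luk\'acs form $A^2 + (1-y^2)B^2$ (even degree $2m$, with $\deg A \le m$, $\deg B \le m-1$) or $(1+y)C^2 + (1-y)D^2$ (odd degree $2m+1$, with $\deg C,\deg D \le m$). That representation plugs directly into the paper's SoS-proof format with products of the constraints $1+y\ge 0$, $1-y\ge 0$ at degree exactly $d$, so the architecture is right.

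The genuine gap is the pairing identity you write down, which does not hold. Expanding $\tfrac12\bigl[(r+s-2y)^2/2 + 2(rs-1)\bigr]$ gives $(r-y)(s-y) + \bigl(\tfrac{(r+s)^2}{4}-1\bigr)$, a constant offset with no $(1-y^2)$ factor arising, so the identity cannot balance for any choice of the omitted terms. The actual base case already requires solving a quadratic: writing $(r-y)(s-y) = (ay+b)^2 + c^2(1-y^2)$ for $r,s \ge 1$ forces $a^2-c^2 = 1$, $2ab = -(r+s)$, $b^2+c^2=rs$, which reduces to $4a^4 - 4(rs+1)a^2 + (r+s)^2 = 0$ with discriminant $16(r^2-1)(s^2-1)\ge 0$ and larger root $a^2 = \tfrac12\bigl((rs+1) + \sqrt{(r^2-1)(s^2-1)}\bigr) \ge 1$, so $c^2 = a^2-1 \ge 0$; the hypothesis $r,s\notin(-1,1)$ is exactly what makes this work. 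The induction then propagates the two-term form through multiplication via the Gauss composition identity $(A_1^2+gB_1^2)(A_2^2+gB_2^2) = (A_1A_2+gB_1B_2)^2 + g(A_1B_2 - A_2B_1)^2$ with $g = 1-y^2$, which preserves the degree budget; boundary roots at $\pm 1$ and the odd-degree case need a parallel but less symmetric treatment. Alternatively, the Goursat substitution $y \mapsto (1-y)/(1+y)$ reduces the interval case to the half-line case and avoids explicit root pairing. As written, your sketch has the right shape but the load-bearing algebraic step must be replaced by one of these.
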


\begin{fact}\label{fact:spectral-sos}
Let $A \succeq 0$ be a $d \times d$ matrix. Then,
\[
\sststile{2}{v} \Set{ v^{\top} A v \geq 0}\mper
\]
\end{fact}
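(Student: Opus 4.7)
}
The plan is to exhibit a direct sum-of-squares decomposition of $v^\top A v$ of degree $2$ in the indeterminate $v$, using only the assumption that $A \succeq 0$. Since the definition of a degree-$\ell$ SoS proof of $\{g \geq 0\}$ (with no side hypotheses) amounts to writing $g$ as a sum of squares of polynomials of degree at most $\ell/2$, it suffices to exhibit such a decomposition with each square being the square of a linear polynomial in $v$.

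First, I would invoke the spectral theorem for the real symmetric matrix $A$: since $A \succeq 0$, there exist orthonormal eigenvectors $u_1, \ldots, u_d \in \mathbb{R}^d$ and non-negative eigenvalues $\lambda_1, \ldots, \lambda_d \geq 0$ with $A = \sum_{i=1}^d \lambda_i u_i u_i^\top$. Equivalently, one can use a Cholesky-like factorization and write $A = B^\top B$ for a real matrix $B$ (for instance $B = \Lambda^{1/2} U^\top$, where $\Lambda^{1/2}$ is well-defined because each $\lambda_i \geq 0$). Then
\[
v^\top A v \;=\; v^\top B^\top B v \;=\; \sum_{i=1}^d (Bv)_i^{\,2} \;=\; \sum_{i=1}^d \bigl(\textstyle\sum_{j=1}^d B_{ij} v_j\bigr)^{2}.
\]
Each summand is the square of a polynomial in $v$ of degree exactly $1$, so the right-hand side is a sum of squares of degree-$1$ polynomials, i.e.\ an SoS polynomial of degree $2$. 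This is exactly the certificate required by the definition of $\sststile{2}{v}\{v^\top A v \geq 0\}$, so the proof is complete.

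I do not anticipate a real obstacle: the only substantive input is the existence of a real square-root factorization $A = B^\top B$, which is guaranteed by $A \succeq 0$ via the spectral theorem. The key conceptual point is that PSD quadratic forms in real indeterminates are literally sums of squares of linear forms, so the Positivstellensatz-style certificate is trivial at degree $2$ and requires no axioms from any constraint system.
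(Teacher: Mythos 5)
Your proof is correct and is exactly the canonical argument; the paper states this fact without proof, and the square-root factorization $A = B^\top B$ yielding $v^\top A v = \sum_i (Bv)_i^2$ as a sum of squares of linear forms is the standard degree-$2$ SoS certificate one would write down.
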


\paragraph{Re-weightings Pseudo-distributions.}
The following fact is easy to verify and has been used in several works (see~\cite{DBLP:conf/stoc/BarakKS17} for example).  
\begin{fact}[Re-weighting] \label{fact:reweightings}
Let $\tmu$ be a pseudo-distribution of degree $k$ satisfying a set of polynomial constraints $\cA$ in variable $x$. 
Let $p$ be a sum-of-squares polynomial of degree $t$ such that $\pE[p(x)] \neq 0$.
Let $\tmu'$ be the pseudo-distribution defined so that for any polynomial $f$, $\pE_{\tmu'}[f(x)] = \pE_{\tmu}[ f(x)p(x)]/\pE_{\tmu}[p(x)]$. Then, $\tmu'$ is a pseudo-distribution of degree $k-t$ satisfying $\cA$. 
\end{fact}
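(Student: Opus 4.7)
The plan is to verify the three defining requirements of a constrained pseudo-distribution directly using the identity $\pE_{\tmu'}[f(x)] = \pE_{\tmu}[f(x)p(x)]/\pE_{\tmu}[p(x)]$ and the hypothesis that $p$ is a sum-of-squares polynomial. First, observe that since $p$ is SoS and $\tmu$ is a degree-$k$ pseudo-distribution with $k\geq t$, we have $\pE_{\tmu}[p(x)] \geq 0$; combined with the assumption $\pE_{\tmu}[p(x)] \neq 0$, we conclude $\pE_{\tmu}[p(x)]>0$, so the normalization in the definition of $\tmu'$ is well-defined and $\pE_{\tmu'}[1] = 1$.

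Second, I would show that $\tmu'$ passes the non-negativity test at level $k-t$. Take an arbitrary polynomial $q$ of degree at most $(k-t)/2$; then $q^2 \cdot p$ is itself a sum-of-squares polynomial (the product of two SoS polynomials is SoS) of degree at most $2\deg(q) + t \leq k$. Since $\tmu$ is a degree-$k$ pseudo-distribution, $\pE_{\tmu}[q^2 p] \geq 0$, and dividing by the positive normalizer $\pE_{\tmu}[p]$ preserves the sign, giving $\pE_{\tmu'}[q^2] \geq 0$.

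Finally, I would check that $\tmu' \sdtstile{r}{} \cA$ holds, where $\cA = \{f_1 \geq 0, \ldots, f_m \geq 0\}$. Fix $S\subseteq[m]$ and an SoS polynomial $h$ with $\deg h + \sum_{i\in S}\max\{\deg f_i, r\} \leq k-t$. Then $h\cdot p$ is again SoS (as the product of two SoS polynomials), and its degree satisfies $\deg(hp) + \sum_{i\in S}\max\{\deg f_i, r\} \leq k$. Hence the constraint $\tmu \sdtstile{r}{} \cA$ yields $\pE_{\tmu}[hp \cdot \prod_{i\in S} f_i] \geq 0$, and dividing through by the positive quantity $\pE_{\tmu}[p]$ gives $\pE_{\tmu'}[h \cdot \prod_{i\in S} f_i] \geq 0$, which is exactly what is needed for $\tmu' \sdtstile{r}{} \cA$ at degree $k-t$.

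There is no serious obstacle; the only subtlety is the degree bookkeeping, i.e., checking that multiplying the SoS test polynomial $h$ (or $q^2$) by $p$ keeps the total degree within the budget of $\tmu$, which is exactly why the output pseudo-distribution loses $t$ degrees. The positivity of $\pE_{\tmu}[p]$, guaranteed by the combination of $p$ being SoS and the nondegeneracy assumption, is the essential ingredient that makes the rescaling legitimate.
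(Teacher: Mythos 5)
The paper does not include a proof of \cref{fact:reweightings}---it states that the fact ``is easy to verify and has been used in several works'' and cites \cite{DBLP:conf/stoc/BarakKS17}. Your proposal supplies exactly the direct verification the paper is alluding to: check normalization, check the level-$(k-t)$ non-negativity tests, and check that $\tmu'$ passes the constraint tests for $\cA$, each time reducing to the corresponding property of $\tmu$ by absorbing the SoS factor $p$ (the closure of SoS polynomials under products is the one nontrivial ingredient, and the degree bookkeeping accounts cleanly for the drop from $k$ to $k-t$). Your argument is correct and is the standard one-line proof of this fact.
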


\section{List-decodable Subspace Recovery in Fixed Polynomial Time}
In this section, we prove Theorem~\ref{thm:main} restated below.
\begin{theorem}[List-decodable subspace recovery] \label{thm:main-section}
Let $\Pi_*$ be a projector a subspace of dimension $r$. 
Let $\cD$ be a distribution with certifiably $C$-hypercontractive degree 2 polynomials. 
Then, Algorithm \ref{alg:subspace-recovery} takes as input $n = n_0 \geq (d/\alpha)^{16}$ samples from $\sub_{\cD}(\alpha, \Pi_*)$ and outputs a list $\cL$ of $O(1/\alpha)$  projection matrices such that with probability at least $0.99$ over the draw of the sample and the randomness of the algorithm, there is a $\hat{\Pi} \in \cL$ satisfying $\|\hat{\Pi} - \Pi_*\|_F \leq O(C^2/\alpha)$. Further, Algorithm \ref{alg:subspace-recovery} has time complexity at most $O(n^{18})$. 
\end{theorem}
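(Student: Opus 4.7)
The plan is to apply the canonical sum-of-squares identifiability-and-rounding framework for list-decodable estimation from~\cite{DBLP:journals/corr/abs-1905-05679,raghavendra2020list}, driven by the polynomial constraint system $\cA_{w,\Pi}$ of~\eqref{eqn:constraint_intro}. The algorithmic step is standard: solve a constant-degree SoS relaxation of $\cA_{w,\Pi}$ (degree $16$ is what the choice $2h=8$ in the statement asks for) using Fact~\ref{fact:efficient_sos_opt} to obtain a pseudo-distribution $\tmu$. Feasibility follows because the pair $(w_\cI,\Pi_*)$ indexed by the true inlier set $\cI$ is a bona fide integer solution; the hypercontractivity axiom on the empirical sample holds with high probability once $n\geq (d/\alpha)^{16}$ by a sampling-preserves-certifiable-hypercontractivity lemma of the form Lemma~\ref{lem:cert_hyper_sampling}.

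The analytic core is a SoS version of Fact~\ref{fact:main-tv-to-parameter} (the paper's Lemma~\ref{lem:close-projectors}). Concretely, I would derive
\begin{equation*}
\cA_{w,\Pi} \;\sststile{O(1)}{w,\Pi}\; \Set{\Paren{\tfrac{1}{\alpha n}\iprod{w,w_\cI}}^{O(1)} \Norm{\Pi-\Pi_*}_F^{2} \leq O\Paren{C^{4}/\alpha^{2}}}\mper
\end{equation*}
Following the route of Lemma~4.20 in~\cite{bakshi2020outlierrobust}, set $Q=\Pi-\Pi_*$ and expand $\Norm{Q}_F^{2} = \tr(\Pi)-2\tr(\Pi\Pi_*)+\tr(\Pi_*)$ using $\Pi^{2}=\Pi$ and $\Pi_*^{2}=\Pi_*$. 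Replace $\tr(Q\Pi_*)$ by the empirical quantity $\tfrac{1}{\alpha n}\sum_{i\in\cI} x_i^{\top}Q x_i$ (up to sampling error absorbed into the $O(C^4)$ constant), and on the indices in $w\cap w_\cI$ use $w_i\Pi x_i = w_i x_i$ to rewrite $x_i^{\top}Q x_i$ as a centered degree-$2$ polynomial. SoS H\"older (Fact~\ref{fact:sos-holder}) applied to the intersection weights $w_i w_{\cI,i}$ reduces the bound to the $2h$-th central moments of this polynomial under $w$ and under $w_\cI$; certifiable hypercontractivity bounds the former by $(Ch)^h\Norm{\Pi Q\Pi}_F^{2h}$ and the latter by $(Ch)^h\Norm{\Pi_* Q\Pi_*}_F^{2h}$. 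The key \emph{projection-matrix} inequality $\Norm{\Pi_* Q\Pi_*}_F\leq \Norm{Q}_F$ (and its symmetric counterpart for $\Pi$), which is itself a degree-$O(1)$ SoS consequence of $\Pi^{2}=\Pi$, $\Pi_*^{2}=\Pi_*$, and Fact~\ref{fact:operator_norm}, collapses the right-hand side to $(Ch)^h\Norm{Q}_F^{2h}$. Cancelling powers of $\Norm{Q}_F$ using Fact~\ref{fact:cancellation-within-SoS} and Fact~\ref{fact:cancellation-within-SoS-indeterminate} produces the claimed identifiability statement at constant degree.

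The rounding step is the standard reweight-and-vote procedure. For each $i\in[n]$ reweight $\tmu$ by $w_i$ using Fact~\ref{fact:reweightings} to obtain $\tmu_i$, form $\wh{\Pi}_i = \pE_{\tmu_i}[\Pi]$, and project to the nearest rank-$r$ projection. Averaging $\pE[w_i]$ over $i\in\cI$ and using the constraint $\sum_i w_i = \alpha n$ together with the entropy-lower-bound variant of $\cA_{w,\Pi}$, one finds an inlier $i^*\in\cI$ for which $\pE_{\tmu_{i^*}}[\iprod{w_\cI,w}/(\alpha n)] = \Omega(1)$. Taking pseudo-expectation in the identifiability inequality and applying pseudo-expectation Cauchy--Schwarz (Fact~\ref{fact:pseudo-expectation-cauchy-schwarz}) then yields $\Norm{\wh{\Pi}_{i^*}-\Pi_*}_F\leq O(C^{2}/\alpha)$. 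Greedy clustering of the $n$ candidates at radius $O(C^{2}/\alpha)$ and retaining one representative per cluster produces a list of size $O(1/\alpha)$ containing a good candidate.

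The main obstacle is the identifiability SoS proof. Without the projection-matrix inequality $\Norm{\Pi_* Q\Pi_*}_F\leq \Norm{Q}_F$, one would only control $\Norm{\Pi_* Q\Pi_*}_F$ by $\Norm{Q}_{2}\sqrt{\rank(\Pi_*)}$, reproducing the dimension-dependent $\sqrt{r}/\alpha$ bound of~\cite{raghavendra2020list}. What makes the approach work is that this projection-matrix inequality lives inside SoS of constant degree, so it can be combined with the hypercontractivity axiom while preserving SoS structure. The secondary difficulty is orchestrating the powers of $\iprod{w,w_\cI}/\alpha n$ created by H\"older and the powers of $\Norm{Q}_F$ created by hypercontractivity so that the cancellation lemmas apply at constant degree; this pins down the choice $2h=8$ and the degree bound driving the final $O(n^{18})$ runtime.
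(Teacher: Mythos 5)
Your proposal captures the central analytic idea of the paper's proof — combining SoS H\"older with certifiable hypercontractivity and bounded variance and exploiting the fact that projectors contract Frobenius norm (so that $\Norm{\Pi_* Q \Pi_*}_F$ collapses to $\Norm{Q}_F$, not $\sqrt{r}\Norm{Q}_2$) — and you correctly pinpoint why this, rather than anti-concentration, is what buys dimension-independence. That said, there are two places where your route deviates from (and is weaker than) the paper's, and one of them is a genuine gap.

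First, on the identifiability step: the paper never needs to ``replace $\tr(Q\Pi_*)$ by an empirical quantity up to a sampling error absorbed into the constant.'' Instead, Lemma~\ref{lem:frobenius-closeness-subsample-w} and Lemma~\ref{lem:frobenius-closeness-subsample-inliers} each bound $\Norm{E - w(\cI)\Pi}_F$ and $\Norm{E - w(\cI)\Pi_*}_F$, where $E=\tfrac1{\alpha n}\sum_{i\in\cI} w_i x_i x_i^\top$ is the empirical $w$-weighted inlier covariance, and Lemma~\ref{lem:close-projectors} just triangle-inequalities them. The substituted $Q$ in those lemmas is $E - w(\cI)\Pi$ itself, for which $\Pi Q \Pi = Q$ holds exactly by the constraints $w_i\Pi x_i = w_i x_i$ and $\Pi^2=\Pi$; one does not need the contraction inequality $\Norm{\Pi Q\Pi}_F\le\Norm{Q}_F$ for a general $Q$. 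This ``triangle through the empirical covariance'' route is both cleaner and avoids introducing a concentration argument you would otherwise have to make rigorous. Your $O(C^4/\alpha^2)$ on the right side of the identifiability display is also off: Lemma~\ref{lem:close-projectors} at $t=1$ gives $w(\cI)\Norm{\Pi-\Pi_*}_F^2\le (4C^2)^2$, with no $\alpha$-dependence on the right; the $1/\alpha$ enters later only when dividing by $\pE[w(\cI)]\ge\alpha$.

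Second, and more seriously, your rounding does not obviously produce a list of size $O(1/\alpha)$. You form $n$ candidates $\wh\Pi_i$, show a \emph{good} one exists (the existence of $i^*\in\cI$ with large conditional weight), and then assert that ``greedy clustering of the $n$ candidates at radius $O(C^2/\alpha)$'' yields $O(1/\alpha)$ representatives. There is no reason for this: the outliers can generate $\wh\Pi_i$'s that are pairwise far apart, so greedy clustering could return $\Omega(n)$ clusters with no bound on the list size. The paper's rounding-by-votes (Algorithm~\ref{alg:subspace-recovery}: draw $O(1/\alpha)$ indices $i$ with probability $\pE[w_i]/(\alpha n)$ and output the corresponding $\wh\Pi_i$'s) is not cosmetic — it is precisely what converts the on-average guarantee in~\eqref{eq:good-on-average} together with $\pE[w(\cI)]\ge\alpha$ (Fact~\ref{fact:high-entropy-pseudo-distributions}) into an $O(1/\alpha)$-size list with a good member via Markov. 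If you want a deterministic rounding you need to argue about the weighted empirical distribution over candidates, not cluster $n$ unweighted points. As a smaller point, your ``project to the nearest rank-$r$ projection'' step and your appeal to an identified inlier $i^*$ are fine in the analysis but should not appear as algorithmic steps — the algorithm does not know $\cI$ or $r$.
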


Let $\cA_{w,\Pi}$ be the following system of polynomial inequality constraints in indeterminates $w,\Pi$: 
\begin{equation}
  \cA_{w,\Pi}\colon
  \left \{
    \begin{aligned}
      &&
      \textstyle\sum_{i\in[n]} w_i
      &= \alpha n\\
      &\forall i\in [n].
      & w_i \Pi x_i
      & = w_i x_i \\
      &\forall i\in [n].
      & w_i^2
      & = w_i \\
      &
      & \Pi^2
      & = \Pi\\
      &\forall Q 
      &\frac{1}{\alpha n} \sum_{i \leq n} w_i \Paren{x_i^{\top} Q x_i-\tr(Q\Pi)}^{2t} 
      &\leq (Ct)^t \Paren{\frac{1}{\alpha n} \sum_{i \leq n} w_i \Paren{x_i^{\top} Q x_i}^2}^t\\
      &\forall Q 
      &\frac{1}{\alpha n} \sum_{i \leq n} w_i \Paren{x_i^{\top} Q x_i-\tr(Q\Pi)}^{2} 
      &\leq C \Norm{\Pi Q\Pi}_F^2\mper
    \end{aligned}
  \right \}
\end{equation}

Here, $w$ indicates a subsample of the input $x$ and $\Pi$ is the target projector matrix.
In order to force natural properties of being a projector matrix, we add the constraint that $\Pi^2 = \Pi$ for the matrix valued indeterminate $\Pi$. The constraints on $w$ encode that the subsample indicated by $w$ satisfies 1) $\Pi x_i = x_i$ for every $i$ corresponding to $w_i = 1$, that 2) the uniform distribution on the subsample has certifiably hypercontractive degree 2 polynomials, and that 3) the uniform distribution on the subsample has degree $2$ polynomials of bounded variance.

Note that the certifiable hypercontractivity constraints can be encoded as a system of $d^{O(t)}$ inequality constraints by using the standard technique of constraint compression (first used in~\cite{DBLP:journals/corr/abs-1711-11581,HopkinsLi17}) in SoS proofs (see~\cite{TCS-086}, Chapter 4.3 for an exposition).


Our algorithm simply finds a high-entropy pseudo-distribution consistent with $\cA_{w,\Pi}$ and uses the ``rounding by votes'' scheme introduced by Karmalkar, Klivans and Kothari ~\cite{DBLP:journals/corr/abs-1905-05679} to complete the proof.

\begin{mdframed}
  \begin{algorithm}[List-Decodable Subspace Recovery]
    \label[algorithm]{alg:subspace-recovery}\mbox{}
    \begin{description}
    \item[Given:]
    Sample $Y = \{x_1, x_2, \ldots x_n \}$  drawn according to $\sub_{\cD}(\alpha,\Pi_*)$ with inliers $\cI$.
    \item[Output:]
      A list $L$ of $O(1/\alpha)$ projection matrices such that there exists $\hat{\Pi} \in L$ satisfying $\|\hat{\Pi} -\Pi_*\|_F < O(1/\alpha)$.
    \item[Operation:]\mbox{}
    \begin{enumerate}
    \item Find a degree-$18$ pseudo-distribution $\tilde{\mu}$ satisfying $\cA_{w,\Pi}$ that minimizes $\|\pE[w]\|_2$.
    \item For each $i \in [n]$ such that $\pE_{\tmu}[w_i] > 0$, let $\hat{\Pi}_i = \frac{\pE_{\tmu}[w_i \Pi]}{\pE_{\tmu}[w_i]}$. Otherwise, set $\hat{\Pi}_i =0$.
    \item 
    Take $J$ be a random multi-set formed by union of $O(1/\alpha)$ independent draws of $i \in [n]$ with probability $\frac{\pE[w_i]}{\alpha n}$.
    \item Output $L = \{\hat{\Pi}_i \mid i \in J\}$ where $J \subseteq [n]$.
  \end{enumerate}
    \end{description}    
  \end{algorithm}
\end{mdframed}

\subsection{Analysis of Algorithm \ref{alg:subspace-recovery}.}

The main new component of our analysis is a polynomial inequality that directly relates the Frobenius distance between $\Pi$, the program variable, and the true projection matrix $\Pi_*$. Further, the inequality admits a sum-of-squares proof. 
\begin{lemma}[Frobenius Closeness of $\Pi$ and $\Pi_*$] 
\label{lem:close-projectors}
For every $t \in \N$,
\begin{align*}
\cA_{w,\Pi} \sststile{16t^2+2t}{w,\Pi} \Biggl\{ w(\cI) \Norm{\Pi-\Pi_*}_F^{2t} \leq (4C^2t)^{2t}\Biggr\}
\end{align*}
where $w(\cI) = \frac{1 }{|\cI|} \sum_{i \in \cI} w_i$. 
\end{lemma}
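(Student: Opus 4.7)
The proof will rest on two elementary observations. \emph{First}, a direct trace computation using $\Pi^2 = \Pi$, $\Pi_*^2 = \Pi_*$ and cyclicity establishes the identity $\|\Pi - \Pi_*\|_F^2 = \|\Pi Q \Pi\|_F^2 + \|\Pi_* Q \Pi_*\|_F^2$, where $Q := \Pi - \Pi_*$: one checks $\|\Pi Q \Pi\|_F^2 = \tr(Q\Pi) = \tr(\Pi) - \tr(\Pi \Pi_*)$ and $\|\Pi_* Q \Pi_*\|_F^2 = -\tr(Q\Pi_*) = \tr(\Pi_*) - \tr(\Pi\Pi_*)$, which sum to $\|\Pi-\Pi_*\|_F^2$. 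Thus it suffices to produce SoS bounds on $w(\cI)\|\Pi Q \Pi\|_F^{2t}$ and $w(\cI)\|\Pi_* Q \Pi_*\|_F^{2t}$ separately. \emph{Second}, introducing the polynomial $f(x) := x^{\top} Q x$, I will show that $w_i f(x_i) = 0$ as an SoS consequence of $\cA_{w,\Pi}$ whenever $i \in \cI$. Indeed, the constraint $w_i \Pi x_i = w_i x_i$ gives $w_i x_i^{\top} \Pi x_i = w_i \|x_i\|^2$, while for inliers we have the data-side identity $x_i^{\top} \Pi_* x_i = \|x_i\|^2$ (since $\cD$ is supported on the image of $\Pi_*$). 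Multiplying by $f(x_i)^{k-1}$ extends this to $w_i f(x_i)^k = 0$ for every $k \geq 1$.

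\textbf{Main computation.} Combining these, for any polynomial $\tau$ in $w,\Pi$ independent of the $f(x_i)$'s, the binomial expansion gives $\sum_{i \in \cI} w_i (f(x_i) - \tau)^{2t} = |\cI| \cdot w(\cI) \cdot \tau^{2t}$. Taking $\tau = \tr(Q\Pi) = \|\Pi Q \Pi\|_F^2 =: u$, I will extend the sum to all $i \in [n]$ using $w_i \geq 0$ and apply the subsample's $2t$-moment hypercontractivity axiom together with the bounded-variance axiom in $\cA_{w,\Pi}$ (the latter to relate the uncentered second moment on the RHS of the hypercontractivity axiom to $\|\Pi Q \Pi\|_F^2$, at a constant-factor loss by the SoS almost triangle inequality applied to $(x_i^{\top} Q x_i)^2 \leq 2(x_i^{\top} Q x_i - \tau)^2 + 2\tau^2$). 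This produces $w(\cI)\cdot u^{2t} \leq (C^2 t)^t \cdot u^t$. To extract $w(\cI) u^t \leq (C^2 t)^t$, I multiply through by $w(\cI)$ so the inequality becomes $a^2 \leq (C^2 t)^t \cdot a$ for $a := w(\cI) u^t \geq 0$, and invoke cancellation-within-SoS (Fact~\ref{fact:cancellation-within-SoS-indeterminate}). This yields $w(\cI) \|\Pi Q \Pi\|_F^{2t} \leq (C^2 t)^t$.

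\textbf{Symmetric bound and combination.} A parallel argument with $\tau = \tr(Q \Pi_*) = -\|\Pi_* Q \Pi_*\|_F^2$ produces the companion estimate $w(\cI)\|\Pi_* Q \Pi_*\|_F^{2t} \leq (C^2 t)^t$. Here, instead of the subsample hypercontractivity and variance axioms, I use the empirical hypercontractivity and bounded-variance inequalities on the inliers $\cI$, which hold deterministically as SoS inequalities in $\Pi$ for $n$ large enough by Lemma~\ref{lem:cert_hyper_sampling} applied to the certifiable hypercontractivity of $\cD$. Combining the two bounds via the Frobenius decomposition and the SoS almost triangle inequality (Fact~\ref{fact:sos-almost-triangle}) applied to $(\|\Pi Q \Pi\|_F^2 + \|\Pi_* Q \Pi_*\|_F^2)^t$ with non-negative summands gives $w(\cI) \|\Pi - \Pi_*\|_F^{2t} \leq (4C^2 t)^{2t}$, comfortably within the claimed bound.

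\textbf{Main obstacle.} The subtle step is the SoS cancellation that converts an inequality of the form $w(\cI) u^{2t} \leq K u^t$, with $u := \|\Pi Q \Pi\|_F^2$ non-negative as the trace of a PSD matrix, into the target $w(\cI) u^t \leq K$. Both the non-negativity of $u$ and of $w(\cI)$ (a convex combination of $w_i^2 = w_i$) are essential, and the trick is to first multiply by $w(\cI)$ so that the compound quantity $a = w(\cI) u^t$ satisfies $a^2 \leq K a$ and Fact~\ref{fact:cancellation-within-SoS-indeterminate} applies. The secondary nuisance is handling the uncentered second moment on the RHS of the subsample hypercontractivity axiom, which is recentered at constant-factor cost as described above. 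Together with the degree-$2t$ hypercontractivity applied to the rank-one quadratic form in $\Pi$, these steps account for the degree budget of $16t^2 + 2t$.
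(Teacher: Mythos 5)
Your proposed route is genuinely different from the paper's (the paper decomposes $w(\cI)(\Pi-\Pi_*)$ through the empirical matrix $\frac{1}{\alpha n}\sum_{i\in\cI}w_ix_ix_i^\top$ and uses SoS Hölder to decouple, whereas you decompose $\|\Pi-\Pi_*\|_F^2$ via trace algebra and use the exact cancellation $w_if(x_i)=0$ on inliers together with a binomial expansion). The observation that $w_i\bigl(x_i^\top(\Pi-\Pi_*)x_i\bigr)=0$ for $i\in\cI$ is correct and is a nice structural shortcut. However, there is a concrete error at the heart of the trace computation that breaks the proof.

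You assert the identity $\|\Pi Q\Pi\|_F^2 = \tr(Q\Pi)$ for $Q:=\Pi-\Pi_*$ (and symmetrically $\|\Pi_*Q\Pi_*\|_F^2=-\tr(Q\Pi_*)$). This is false. Writing $M=I-\Pi_*$, one has $\Pi Q\Pi=\Pi M\Pi$, so $\tr(Q\Pi)=\tr(\Pi M\Pi)$ while $\|\Pi Q\Pi\|_F^2=\tr\bigl((\Pi M\Pi)^2\bigr)$; these differ unless $\Pi M\Pi$ is itself a projector. Concretely, with $\Pi=\bigl(\begin{smallmatrix}1&0\\0&0\end{smallmatrix}\bigr)$ and $\Pi_*=\frac12\bigl(\begin{smallmatrix}1&1\\1&1\end{smallmatrix}\bigr)$ one gets $\tr(Q\Pi)=\tfrac12$ but $\|\Pi Q\Pi\|_F^2=\tfrac14$. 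Only the inequality $\|\Pi Q\Pi\|_F^2\leq\tr(Q\Pi)$ holds (since $0\preceq\Pi M\Pi\preceq\Pi$ forces its eigenvalues into $[0,1]$), and you would also need a low-degree SoS certificate of it modulo $\{\Pi^2=\Pi\}$, which is not supplied. This matters: after applying the hypercontractivity and bounded-variance constraints you obtain an upper bound in terms of $\|\Pi Q\Pi\|_F^{2t}$, whereas your binomial identity produces $w(\cI)\,\tr(Q\Pi)^{2t}$ on the left. Without the claimed equality, the two sides involve different quantities and the cancellation step ($a^2\leq Ka\Rightarrow a\leq K$) does not even get set up, let alone fire. (As a side note, your correct decomposition is really $\|\Pi-\Pi_*\|_F^2=\tr(Q\Pi)-\tr(Q\Pi_*)$, which is indeed an identity; the error is in identifying each summand with a Frobenius norm.) So the proof as written has a genuine gap. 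If you separately supplied a low-degree SoS proof of $\|\Pi Q\Pi\|_F^2\leq\tr(Q\Pi)$ and $\|\Pi_*Q\Pi_*\|_F^2\leq-\tr(Q\Pi_*)$ from $\{\Pi^2=\Pi\}$, the overall strategy looks recoverable and would be a legitimate alternative to the paper's SoS Hölder approach.
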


As in the prior works on list-decodable learning, we will combine the lemma above with the following consequence of optimizing the high-entropy objective on $\tmu$:

\begin{fact}[Large weight on inliers from high-entropy constraints, same proof as Lemma 4.3 in~\cite{DBLP:journals/corr/abs-1905-05679}] \label{fact:high-entropy-pseudo-distributions}
Let $\tmu$ pseudo-distribution of degree $\geq 2$ that satisfies $\cA_{w,\Pi}$ and minimizes $\Norm{\pE_{\tmu} \sum_{i \in [n]} w_i}_2$. Then, $\pE_{\tmu} [w(\cI)] \geq \alpha$.
\end{fact}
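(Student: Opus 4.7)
The plan is to exploit the optimality of $\tmu$ against a natural competitor: the Dirac pseudo-distribution concentrated on the true inlier indicator. Let $w^{\star} \in \{0,1\}^n$ denote the indicator vector of $\cI$, so $\sum_i w^{\star}_i = \alpha n$ and $\|w^{\star}\|_2^2 = \alpha n$. First I would argue feasibility of $\delta_{(w^{\star}, \Pi_*)}$: the constraint $\sum_i w_i = \alpha n$ is immediate; for each $i \in \cI$, $x_i$ lies in the range of $\Pi_*$ almost surely because $\cD$ has covariance equal to the projector $\Pi_*$ (so any direction in $\ker \Pi_*$ yields a zero-variance and hence a.s.-zero projection of $x_i$), giving $w^{\star}_i \Pi_* x_i = w^{\star}_i x_i$; the idempotence constraints $w_i^2 = w_i$ and $\Pi^2 = \Pi$ hold since $w^{\star}_i \in \{0,1\}$ and $\Pi_*$ is a projector; and the certifiable-hypercontractivity and bounded-variance constraints on the empirical distribution of $\cI$ hold with high probability under the assumed sample complexity via the sampling preservation lemma for certifiable hypercontractivity invoked elsewhere in the paper.

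Next I would form the convex combination $\tmu_\lambda = (1-\lambda)\tmu + \lambda\, \delta_{(w^{\star}, \Pi_*)}$ for $\lambda \in [0,1]$. Because degree-$\geq 2$ pseudo-distributions satisfying $\cA_{w,\Pi}$ form a convex set, $\tmu_\lambda$ is admissible, so by minimality of $\tmu$ the function $\lambda \mapsto \|\pE_{\tmu_\lambda}[w]\|_2^2$ is minimized at $\lambda = 0$. Expanding,
\begin{equation*}
\|\pE_{\tmu_\lambda}[w]\|_2^2 = (1-\lambda)^2 \|\pE_{\tmu}[w]\|_2^2 + 2\lambda(1-\lambda)\,\Iprod{\pE_{\tmu}[w], w^{\star}} + \lambda^2 \|w^{\star}\|_2^2,
\end{equation*}
and the first-order optimality condition at $\lambda = 0$ yields $-2\|\pE_{\tmu}[w]\|_2^2 + 2\Iprod{\pE_{\tmu}[w], w^{\star}} \geq 0$, i.e.
\begin{equation*}
\|\pE_{\tmu}[w]\|_2^2 \;\leq\; \Iprod{\pE_{\tmu}[w], w^{\star}} \;=\; \sum_{i\in\cI}\pE_{\tmu}[w_i].
\end{equation*}

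Finally, I would apply Cauchy–Schwarz between $\pE_{\tmu}[w]$ and the all-ones vector in $\R^n$, using the hard constraint $\sum_i \pE_{\tmu}[w_i] = \alpha n$:
\begin{equation*}
\|\pE_{\tmu}[w]\|_2^2 \;\geq\; \frac{\bigl(\sum_i \pE_{\tmu}[w_i]\bigr)^2}{n} \;=\; \frac{(\alpha n)^2}{n} \;=\; \alpha^2 n.
\end{equation*}
Chaining these two inequalities gives $\sum_{i\in\cI}\pE_{\tmu}[w_i] \geq \alpha^2 n$, and dividing by $|\cI| = \alpha n$ yields the desired bound $\pE_{\tmu}[w(\cI)] \geq \alpha$.

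The only non-trivial step is verifying feasibility of $\delta_{(w^{\star}, \Pi_*)}$; in particular, the hypercontractivity and variance constraints on the inlier subsample are probabilistic statements about the sample $x_1,\dots,x_n$ and rely on the sampling-preservation lemma. Everything else is a short convexity/Cauchy–Schwarz manipulation that does not use the specific form of the constraints beyond convexity of the feasible set.
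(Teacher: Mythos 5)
Your proof is correct and is exactly the argument the paper invokes by citing Lemma~4.3 of~\cite{DBLP:journals/corr/abs-1905-05679}: feasibility of the Dirac at $(w^\star,\Pi_*)$, first-order optimality of the norm-squared objective along the segment to that Dirac giving $\Iprod{\pE_{\tmu}[w],w^\star}\geq\|\pE_{\tmu}[w]\|_2^2$, and the Cauchy--Schwarz lower bound $\|\pE_{\tmu}[w]\|_2^2\geq\alpha^2 n$ from $\sum_i\pE_{\tmu}[w_i]=\alpha n$. The only point worth being explicit about is that the convex combination $\tmu_\lambda$ remains a valid degree-$\geq 2$ pseudo-distribution satisfying $\cA_{w,\Pi}$ (the constraints ``$D\sdtstile{r}{}\cA$'' are linear in the moment tensor, so the feasible set is convex), which you correctly rely on.
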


It's easy to finish the proof of Theorem~\ref{thm:main} using the above two claims:

\begin{proof}[Proof of Main Theorem~\ref{thm:main}]
First, since $\cD$ is certifiably $C$-hypercontractive, Lemma~\ref{lem:cert_hyper_sampling} implies that $\geq n = \Omega(d\log(d)/\alpha)^{16}$ samples suffice for the uniform distribution on the inliers, $\cI$, to have $2$-certifiably $C$-hypercontractive degree 2 polynomials with probability at least $1-1/d$. 
Let $\zeta_1$ be the event that this succeeds, and condition on it. 

Let $\tmu$ be a pseudo-distribution of degree-$24$ satisfying $\cA_{w,\Pi}$ and minimizing $\Norm{\pE[w]}_2$ as described in Algorithm \ref{alg:subspace-recovery}. Observe, such a pseudo-distribution is guaranteed to exist: take the pseudo-distribution supported on a single point, $(w,\Pi)$ such that $w_i = 1$ iff $i \in \cI$ and $\Pi = \Pi_*$. It is straight forward to check that $w_i(\Id - \Pi_*)x_i = 0$, $\Pi_*$ is indeed a rank $k$ projection matrix and $\sum_{i \in [n]} w_i = \alpha n$. Conditioned on $\zeta_1$, the hypercontractivity constraint is also satisfied by the inliers.  

Since Lemma~\ref{lem:close-projectors} admits a sum-of-squares proof, it follows from Fact \ref{fact:sos-soundness}, that the polynomial inequality is preserved under pseudo-expectations. Instantiating with $t=1$, we have,

\begin{equation*}
\frac{1}{|\cI|} \sum_{i \in \cI} \pE_{\tmu} \left[w_i^2 \cdot \Norm{\Pi -\Pi_*}_F^{2} \right] \leq  (4C^2)^{2}\mper
\end{equation*}
Alternatively, we can rewrite the above as follows: 
\begin{equation*}
\frac{1}{|\cI|} \sum_{i \in \cI} \pE_{\tmu}\left[ \Norm{w_i\Pi -w_i\Pi_*}_F^{2}\right]  \leq  (4C^2)^{2}\mper
\end{equation*} 
Applying Jensen's inequality yields, 

\begin{equation*}
\Paren{\frac{1}{|\cI|} \sum_{i \in \cI} \Norm{\pE_{\tmu}[w_i\Pi] -\pE_{\tmu}[w_i]\Pi_*}_F}^{2}  \leq  (4C^2)^{2}\mper
\end{equation*} 
Taking the square-root, 

\begin{equation*}
\Paren{\frac{1}{|\cI|} \sum_{i \in \cI} \Norm{\pE_{\tmu}[w_i\Pi] -\pE_{\tmu}[w_i]\Pi_*}_F}  \leq  4C^2\mper
\end{equation*}
Recall, the rounding in Algorithm \ref{alg:subspace-recovery} uses $\hat{\Pi}_i = \pE[w_i \Pi]/\pE[w_i]$ to denote the projector corresponding to the $i$-th sample. Then, rewriting the above equation yields:
\begin{equation*}
\Paren{\frac{1}{|\cI|} \sum_{i \in \cI} \pE[w_i] \cdot \|\hat{\Pi}_i -\Pi_*\|_F}  \leq 4C^2\mper
\end{equation*}         

Let $Z = \frac{1}{\alpha n} \sum_{i \in \cI} \pE[w_i]$. Then, from Fact~\ref{fact:high-entropy-pseudo-distributions}, $Z \geq \alpha$. Dividing by $Z$ on both sides thus yields:

\begin{equation}
\frac{1}{Z} \Paren{\frac{1}{|\cI|} \sum_{i \in \cI} \pE[w_i] \cdot \|\hat{\Pi}_i -\Pi_*\|_F}  \leq  4C^2/\alpha\mper \label{eq:good-on-average}
\end{equation} 

Since each index $i \in [n]$ is chosen with probability $\frac{\pE[w_i]}{\sum_{i \in [n]} \pE[w_i]} = \frac{1}{\alpha n} \pE[w_i]$, it follows that $i \in \cI$ with probability at least $\frac{1}{\alpha n}\sum_{i \in \cI} \pE[w_i] = Z \geq \alpha$.  
By Markov's inequality applied to \eqref{eq:good-on-average},  with probability $\frac{1}{2}$ over the choice of $i$ conditioned on $i \in \cI$, $\|\hat{\Pi}_i - \Pi_*\|_2 \leq 8C^2/\alpha$. Thus, in total, with probability at least $\alpha/2$, $\|\hat{\Pi}_i - \Pi_*\|_F \leq 8C^2/\alpha$. Thus, the with probability at least $0.99$ over the draw of the random set $J$, the list constructed by the algorithm contains  $\hat{\Pi}$ such that $\|\hat{\Pi} - \Pi_*\|_2 \leq 8C^2/\alpha$.

Let us now account for the running time and sample complexity of the algorithm.
The sample size for the algorithm is dictated by Lemma~\ref{lem:cert_hyper_sampling} and thus requires $n_0=\Omega\left((d\log(d)/\alpha)^{16}\right)$.
A degree-$18$ pseudo-distribution satisfying $\cA_{w,\ell}$ and minimizing $\|\pE[w]\|_2$ can be found in time $O(n^{18})$ and dominates the running time. 
The rounding procedure runs in time at most $O(nd)= O(n^2)$. 
\end{proof}

\subsection{Proof of Lemma~\ref{lem:close-projectors}}




\begin{lemma}[Frobenius Closeness of Subsample to Covariance, $w$-Samples] \label{lem:frobenius-closeness-subsample-w}
\begin{equation}
\cA_{w,\Pi} \sststile{w,\Pi}{4t} \Biggl\{ \Norm{\frac{1}{\alpha n} \sum_{i \in \cI} w_i x_i x_i ^{\top}- w(\cI) \Pi}_F^{4t} \leq w(\cI)^{4t-2} (C^2 t)^{4t} \Biggr\}
\end{equation}
\end{lemma}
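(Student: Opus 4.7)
Let $M := \tfrac{1}{\alpha n}\sum_{i\in\cI} w_i x_i x_i^\top - w(\cI)\Pi$ denote the matrix inside the Frobenius norm and set $y_i := x_i^\top M x_i - \tr(M\Pi)$. The starting point is the polynomial identity
\begin{equation*}
\|M\|_F^2 \;=\; \tr(M^2) \;=\; \tfrac{1}{|\cI|}\sum_{i\in\cI} w_i\, y_i
\end{equation*}
holding modulo $\cA_{w,\Pi}$ and using symmetry of $M$ and $\Pi$: expanding $\tfrac{1}{|\cI|}\sum_{i\in\cI} w_i\, x_i^\top M x_i = \tr\bigl(M\bigl(M+w(\cI)\Pi\bigr)\bigr) = \|M\|_F^2 + w(\cI)\tr(M\Pi)$ and subtracting $w(\cI)\tr(M\Pi)$ yields the claim.

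Raising to the $2t$-th power and applying iterated SoS Cauchy--Schwarz against the weights (repeatedly using $w_i^2 = w_i$ via $(\sum w_i\cdot w_i y_i^{j})^2\leq (\sum w_i)(\sum w_i y_i^{2j})$) gives
\begin{equation*}
\|M\|_F^{4t} \;\leq\; w(\cI)^{2t-1}\cdot \tfrac{1}{|\cI|}\sum_{i\in\cI} w_i y_i^{2t}.
\end{equation*}
Each summand $w_i y_i^{2t}$ is SoS, so extending the sum over $\cI$ to one over all of $[n]$ only inflates the RHS. The hypercontractivity constraint in $\cA_{w,\Pi}$, instantiated via SoS substitution at $Q := M$, and then the bounded-variance constraint yield
\begin{equation*}
\tfrac{1}{\alpha n}\sum_{i\leq n} w_i y_i^{2t} \;\leq\; (Ct)^{t}\,\bigl(C\,\|\Pi M\Pi\|_F^2\bigr)^{t} \;=\; (C^2 t)^{t}\,\|\Pi M\Pi\|_F^{2t}.
\end{equation*}
Finally, the Pythagorean identity $\|M\|_F^2 = \|\Pi M\Pi\|_F^2 + \|M-\Pi M\Pi\|_F^2$ is a polynomial identity modulo $\Pi^2 = \Pi$ and $\Pi = \Pi^\top$ (verified by expanding and applying the cyclic trace twice), whose second summand is manifestly SoS; raising to the $t$-th power gives $\|\Pi M\Pi\|_F^{2t}\leq \|M\|_F^{2t}$ in SoS.

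Chaining the three inequalities produces $\|M\|_F^{4t}\leq B\cdot\|M\|_F^{2t}$ with $B := w(\cI)^{2t-1}(C^2 t)^t$. Setting $a := \|M\|_F^{2t}\geq 0$, this reads $a^2 \leq Ba$; summing the SoS square $(a-B)^2 \geq 0$ with twice the hypothesis $Ba - a^2 \geq 0$ produces an SoS certificate for $B^2 - a^2 \geq 0$, so
\begin{equation*}
\|M\|_F^{4t}\;\leq\; w(\cI)^{4t-2}(C^2 t)^{2t}\;\leq\; w(\cI)^{4t-2}(C^2 t)^{4t},
\end{equation*}
completing the proof (the last step uses $C^2 t \geq 1$). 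The main obstacle I anticipate is twofold: (i) tracking the SoS degree through the iterated Cauchy--Schwarz, the SoS substitution for $Q$, the powering of $\|\Pi M\Pi\|_F^{2} \leq \|M\|_F^{2}$, and the final cancellation, so as to land within the stated degree budget; and (ii) reconciling the hypercontractivity constraint's literal RHS (which features the uncentered $(x_i^\top M x_i)^2$) with the centered second moment used above---this is handled by using $w_i\Pi x_i = w_i x_i$ and $w_i^2 = w_i$ to rewrite $w_i x_i^\top M x_i = w_i x_i^\top\Pi M\Pi\, x_i$, so that the ``uncentered'' version is again controlled by $\|\Pi M\Pi\|_F^{2}$ through the variance constraint at the cost of absolute constants.
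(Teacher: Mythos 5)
Your proposal is correct and follows essentially the same route as the paper: test the Frobenius norm against $Q=M$, decouple via SoS H\"older (your iterated Cauchy--Schwarz), apply certifiable hypercontractivity then the bounded-variance constraint, and close with an SoS cancellation $a^2\leq Ba \Rightarrow a^2\leq B^2$. The only cosmetic differences are that the paper shortcuts your Pythagorean bound $\Norm{\Pi M\Pi}_F\leq\Norm{M}_F$ by noting that $\cA_{w,\Pi}\vdash \Pi M\Pi = M$ outright (via $w_i\Pi x_i=w_i x_i$, $w_i^2=w_i$, $\Pi^3=\Pi$), and it invokes its pre-packaged cancellation fact rather than writing out the SoS certificate as you do.
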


\begin{proof}

For a $d \times d$ matrix-valued indeterminate $Q$, using the SoS Hölder's Inequality, we have
\begin{equation}
\label{eqn:sos_holder_decouple}
\begin{split}
\cA_{w,\Pi} \sststile{4t}{w, \Pi, Q} \Biggl\{ \Iprod{ \frac{1}{\alpha n} \sum_{i \in \cI} w_i x_i x_i ^{\top}- w(\cI) \Pi, Q}^{2t} &= \Iprod{ \frac{1}{\alpha n} \sum_{i \in \cI} w_i^{2t} \Paren{x_i x_i ^{\top}- \Pi}, Q}^{2t} \\
&\leq \Paren{\frac{1}{\alpha n} \sum_{i \in \cI} w_i^{2t}}^{2t-1}  \frac{1}{\alpha n} \sum_{i\in\cI} w_i \Iprod{ x_i x_i ^{\top}-  \Pi, Q}^{2t}\\
&\leq \Paren{\frac{1}{\alpha n} \sum_{i \in \cI} w_i^{2t}}^{2t-1}  \frac{1}{\alpha n} \sum_{i\leq n} w_i \Iprod{ x_i x_i ^{\top}-  \Pi, Q}^{2t}  \Biggr\} \\
\end{split}
\end{equation}

Using certifiable hypercontractivity of $w$-samples, combined with the bounded variance constraints, we have:

\begin{equation}
\label{eqn:hypercontractivity_variance}
\begin{split}
\cA_{w,\Pi} \sststile{4t}{w, \Pi, Q} \Biggl\{ \frac{1}{\alpha n} \sum_{i\leq n} w_i \Iprod{x_i x_i ^{\top}- \Pi, Q}^{2t} & \leq (Ct)^t \Paren{\frac{1}{\alpha n} \sum_{i\leq n} w_i \Iprod{x_i x_i ^{\top}- \Pi, Q}^{2}}^t\\
&\leq  (C^2t)^{2t} \Norm{\Pi Q \Pi}_F^{2t} \hspace{0.2in} \Biggr\} \\
\end{split}
\end{equation}
Combining Equations \eqref{eqn:sos_holder_decouple} and \eqref{eqn:hypercontractivity_variance}, and substituting $Q = \frac{1}{\alpha n} \sum_{i \in \cI} w_i x_i x_i ^{\top}- w(\cI) \Pi$, we have 



\begin{equation}
\cA_{w,\Pi} \sststile{4t}{w,\Pi} \Biggl\{ \Norm{\frac{1}{\alpha n} \sum_{i \in \cI} w_i x_i x_i ^{\top}- w(\cI) \Pi}_F^{4t} \leq w(\cI)^{2t-1} (C^2 t)^{2t}  \Norm{\frac{1}{\alpha n} \sum_{i \in \cI} w_i x_i x_i ^{\top}- w(\cI) \Pi}_F^{2t} \Biggr\}
\end{equation}
where we used that $\cA_{w,\Pi} \sststile{}{}\Set{w_i \Pi x_i = w_i x_i}$ and $\cA_{w,\Pi} \sststile{}{}\Set{\Pi^2 = \Pi}$.

We now apply cancellation within SoS (Fact~\ref{fact:cancellation-within-SoS-indeterminate}), with $a = \Norm{\frac{1}{\alpha n} \sum_{i \in \cI} w_i x_i x_i ^{\top}- w(\cI) \Pi}_F^{2t}$ and $C = w(\cI)^{2t-1} (C^2 t)^{2t}$ gives:

\begin{equation}
\cA_{w,\Pi} \sststile{4t}{w,\Pi} \Biggl\{ \Norm{\frac{1}{\alpha n} \sum_{i \in \cI} w_i x_i x_i ^{\top}- w(\cI) \Pi}_F^{4t} \leq w(\cI)^{4t-2} (C^2 t)^{4t} \Biggr\}
\end{equation}
This completes the proof. 

\end{proof}

By a similar argument, we obtain:
\begin{lemma}[Frobenius Closeness of Subsample to Covariance, $\cI$-Samples] \label{lem:frobenius-closeness-subsample-inliers}
\begin{equation}
\cA_{w,\Pi} \sststile{4t}{w,\Pi} \Biggl\{ \Norm{\frac{1}{\alpha n} \sum_{i \in \cI} w_i x_i x_i ^{\top}- w(\cI) \Pi_*}_F^{4t} \leq w(\cI)^{4t-2} (C^2 t)^{4t} \Biggr\}
\end{equation}
\end{lemma}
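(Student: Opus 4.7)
The plan is to follow the proof template of Lemma \ref{lem:frobenius-closeness-subsample-w} almost verbatim, substituting $\Pi_*$ for $\Pi$ throughout and using hypercontractivity and bounded-variance \emph{facts} about the inlier sample in place of the corresponding \emph{constraints} about $w$-samples. Two observations make this substitution possible. First, since $\cD$ has projection covariance $\Pi_*$, every $x \sim \cD$ satisfies $\Pi_* x = x$ almost surely, so $\Pi_* x_i = x_i$ for every $i \in \cI$. Second, as already invoked at the start of the proof of Theorem~\ref{thm:main-section}, Lemma~\ref{lem:cert_hyper_sampling} guarantees (on a high-probability event that we condition on) that the uniform distribution on $\cI$ has $2$-certifiably $C$-hypercontractive degree-$2$ polynomials and satisfies the associated bounded-variance inequality. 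These are SoS inequalities in the matrix-valued indeterminate $Q$ alone (no $w,\Pi$), so they may be combined freely with any $\cA_{w,\Pi}$-derivation.

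First, I would introduce a matrix indeterminate $Q$ and apply SoS H\"older's inequality (Fact~\ref{fact:sos-holder}) to $\tfrac{1}{\alpha n}\sum_{i\in\cI} w_i (x_i^\top Q x_i - \tr(Q\Pi_*))$, using $w_i^2 = w_i$ (hence $w_i^{2t}=w_i$) and $|\cI|=\alpha n$, to obtain
\begin{equation*}
\cA_{w,\Pi} \sststile{4t}{w,\Pi,Q} \Biggl\{ \Iprod{\tfrac{1}{\alpha n}\sum_{i\in\cI} w_i x_i x_i^\top - w(\cI)\Pi_*,\, Q}^{2t} \leq w(\cI)^{2t-1} \cdot \tfrac{1}{\alpha n}\sum_{i\in\cI} (x_i^\top Q x_i - \tr(Q\Pi_*))^{2t} \Biggr\}.
\end{equation*}
Next, applying certifiable hypercontractivity and then bounded variance of the inlier distribution (both SoS inequalities in $Q$) bounds the right-hand side by $w(\cI)^{2t-1} (C^2 t)^{2t} \|\Pi_* Q \Pi_*\|_F^{2t}$, exactly as in the $w$-sample analysis.

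Finally, I would substitute $Q := M$ where $M := \tfrac{1}{\alpha n}\sum_{i\in\cI} w_i x_i x_i^\top - w(\cI)\Pi_*$. Since $\Pi_* x_i = x_i$ for $i\in\cI$ and $\Pi_*^2 = \Pi_*$, one has $\Pi_* M \Pi_* = M$, so $\Iprod{M,M}^{2t} = \|M\|_F^{4t}$ and $\|\Pi_* M \Pi_*\|_F^{2t} = \|M\|_F^{2t}$. This gives
\begin{equation*}
\cA_{w,\Pi} \sststile{4t}{w,\Pi} \Biggl\{ \|M\|_F^{4t} \leq w(\cI)^{2t-1} (C^2 t)^{2t} \|M\|_F^{2t} \Biggr\},
\end{equation*}
and applying Fact~\ref{fact:cancellation-within-SoS-indeterminate} with $a = \|M\|_F^2$ yields the claimed bound $\|M\|_F^{4t} \leq w(\cI)^{4t-2} (C^2 t)^{4t}$.

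The only subtle step---which I expect to be minor---is cleanly justifying that the sample-based hypercontractivity and variance facts for $\cI$ (which have nothing to do with the $w,\Pi$ indeterminates) can be combined with $\cA_{w,\Pi}$-derivations; this is immediate because those facts are SoS inequalities in $Q$ alone whose certificates hold unconditionally on the probability-$1-1/d$ event already conditioned on in the Theorem~\ref{thm:main-section} proof. Otherwise the argument is a near-mechanical rerun of the preceding lemma's proof.
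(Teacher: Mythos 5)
Your proposal is correct and is precisely the ``similar argument'' the paper alludes to without spelling out: rerun the proof of Lemma~\ref{lem:frobenius-closeness-subsample-w} with $\Pi_*$ in place of $\Pi$, replacing the $w$-sample hypercontractivity/bounded-variance \emph{constraints} by the corresponding high-probability \emph{facts} about the uniform distribution on $\cI$ (from Lemma~\ref{lem:cert_hyper_sampling}), and using $\Pi_*x_i=x_i$ for $i\in\cI$ together with $\Pi_*^2=\Pi_*$ to justify both dropping the $w_i$'s via $0\le w_i\le 1$ and the identity $\Pi_* M \Pi_* = M$ needed after the substitution $Q:=M$.
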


We can now simply combine Lemmas~\ref{lem:frobenius-closeness-subsample-w} and ~\ref{lem:frobenius-closeness-subsample-inliers} to prove Lemma~\ref{lem:close-projectors}.

\begin{proof}[Proof of Lemma~\ref{lem:close-projectors}]
Using the SoS Almost triangle inequality (Fact~\ref{fact:sos-almost-triangle}), and Lemmas~\ref{lem:frobenius-closeness-subsample-w} and ~\ref{lem:frobenius-closeness-subsample-inliers}, we have:
\begin{align*}
\cA_{w,\Pi} \sststile{4t}{w,\Pi} \Biggl\{ w(\cI)^{4t} \Norm{\Pi-\Pi_*}_F^{4t} &\leq 2^{4t} \Norm{\frac{1}{\alpha n} \sum_{i \in \cI} w_i x_i x_i ^{\top}- w(\cI) \Pi_*}_F^{4t} + 2^{4t} \Norm{\frac{1}{\alpha n} \sum_{i \in \cI} w_i x_i x_i ^{\top}- w(\cI) \Pi}_F^{4t}\\
&\leq 2^{4t+1} w(\cI)^{4t-2} (C^2 t)^{4t} \\
&\leq w(\cI)^{4t-2} (4C^2t)^{4t} \hspace{0.1in}\Biggr\}
\end{align*}
Multiplying both sides of the inequality above by $\Norm{\Pi-\Pi_*}_F^{8t^2-4t}$, and using the \ref{eq:sos-substitution} we obtain:
\begin{align*}
\cA_{w,\Pi} \sststile{16t^2}{w, \Pi} \Biggl\{ w(\cI)^{4t} \Norm{\Pi-\Pi_*}_F^{8t^2} \leq w(\cI)^{4t-2} \Norm{\Pi-\Pi_*}_F^{8t^2-4t} (4C^2t)^{4t}\Biggr\}
\end{align*}
Applying SoS Cancellation (Fact~\ref{fact:cancellation-within-SoS-indeterminate}) with $a \rightarrow w(\cI)^2 \Norm{\Pi-\Pi_*}_F^{4t}$ and $C \rightarrow (4C^2t)^{4t}$, we obtain:

\begin{align*}
\cA_{w,\Pi} \sststile{16t^2}{w, \Pi} \Biggl\{ w(\cI)^{8t} \Norm{\Pi-\Pi_*}_F^{16t^2} \leq (4C^2t)^{16t^2}\Biggr\}
\end{align*}
Applying Fact~\ref{fact:cancellation-within-SoS} with $a = w(\cI)^{} \Norm{\Pi-\Pi_*}_F^{2t} (4C^2t)^{-2t}$, $t= 8t$, and rearranging, we finally have:
\begin{align*}
\cA_{w,\Pi} \sststile{16t^2}{w, \Pi} \Biggl\{ w(\cI) \Norm{\Pi-\Pi_*}_F^{2t} \leq (4C^2t)^{2t}\Biggr\}
\end{align*}

\end{proof}


\section{High-Precision List-Decodable Subspace Recovery }
\label{sec:high_precision_LDS}

In this section, we describe an efficient algorithm for list-decodable subspace recovery with arbitrarily small error at the cost of returning a polynomial (in the dimension $d$) size list.

\begin{theorem}[Large-List Subspace Recovery, Theorem \ref{thm:main-2} restated]\label{thm:large-list-rounding}
Let $\cD$ be $k$-certifiably $(c,\alpha/2C)$-anti-concentrated distribution with certifiably $C$-hypercontractive degree 2 polynomials.
Then, for any $\eta >0$, there exists an algorithm that takes input $n \geq n_0 = (kd\log(d)/\alpha)^{O(k)}$ samples from $\sub_{\cD}(\alpha, \Pi_*)$ and outputs a list $\cL$ of size $O(1/\alpha^{\Paren{\log k + \log(1/\eta)}})$ of projection matrices such that with probability at least $0.9$ over the draw of the sample and the randomness of the algorithm, there is a $\hat{\Pi} \in \cL$ satisfying $\|\hat{\Pi} - \Pi_*\|_F \leq \eta$. The algorithm has time complexity at most $n^{O(k + \log(1/\eta))}$. 
\end{theorem}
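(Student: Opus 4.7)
The approach is to augment the polynomial system $\cA_{w,\Pi}$ of the previous section with the SoS constraints encoding $k$-certifiable $(c, \alpha/2C)$-anti-concentration of $\cD$ (Definition \ref{def:certifiable-anti-concentration-homogenous}), find a max-entropy pseudo-distribution of degree $\Theta(k + \log(1/\eta))$ over the augmented system, and round by votes as in Algorithm \ref{alg:subspace-recovery}, but drawing a larger list.

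\textbf{Key steps.} First, I would invoke Lemma \ref{lem:close-projectors} to obtain the SoS-certified, dimension-independent a priori bound $w(\cI)\,\|\Pi - \Pi_*\|_F^{2t} \leq (4C^2 t)^{2t}$ valid for every $t$. This is the ingredient that bypasses the $\sqrt{d}$ bottleneck of the previous version of the paper, in which the trivial bound $\|\Pi - \Pi_*\|_F \leq \sqrt{2d}$ had to be used to seed subsequent SoS manipulations. Second, I combine $w_i \Pi x_i = w_i x_i$ with $\Pi_* x_i = x_i$ for inliers (the latter a concentration statement about $\cD$ on $\cI$) to conclude $w_i (\Pi - \Pi_*) x_i = 0$ for $i \in \cI$, and apply certifiable anti-concentration in the vector-valued direction $v = (\Pi - \Pi_*) u$ for an auxiliary indeterminate $u$. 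Integrating the resulting inequality against the uniform distribution on $\cI$, and using certifiable boundedness of the $p^2$ moment (second clause of Definition \ref{def:certifiable-anti-concentration-homogenous}) to pass from empirical to population averages, one extracts an SoS inequality of the shape
\[
(w(\cI) - C\delta) \cdot \|\Pi - \Pi_*\|_F^{m} \leq 0
\]
for some $m = O(k)$, where $\delta = \alpha/(2C)$, so that $C\delta = \alpha/2$.

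\textbf{Exponential error reduction.} To push the $\Theta(1)$ error that the previous inequality implies down to $\eta$, apply the powering trick. For non-negative indeterminates $a, b$ and a sum-of-squares $Z$, the factorization $a^T - b^T = (a - b) \sum_{i=0}^{T-1} a^i b^{T-1-i}$ together with non-negativity of each $a^i b^{T-1-i}$ yields a degree-$O(T)$ SoS proof that $(a - b) Z \leq 0$ implies $(a^T - b^T) Z \leq 0$. Applying this with $a = w(\cI)$, $b = C\delta$, $Z = \|\Pi - \Pi_*\|_F^{m}$, and $T = \Theta(\log(1/\eta))$ yields
\[
w(\cI)^T \cdot \|\Pi - \Pi_*\|_F^{m} \leq (C\delta)^{T} \cdot \|\Pi - \Pi_*\|_F^{m}\mper
\]
Multiplying by a suitable power of the Stage-1 bound and applying cancellation (Fact \ref{fact:cancellation-within-SoS-indeterminate}) absorbs the right-hand $\|\Pi - \Pi_*\|_F^{m}$ into a constant, giving $w(\cI)^T \|\Pi - \Pi_*\|_F \leq \eta^{T} \cdot \mathrm{poly}(1/\alpha)$. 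Taking pseudo-expectations, using a higher-moment analogue of Fact \ref{fact:high-entropy-pseudo-distributions} to lower bound $\pE[w(\cI)^T]$, and then rounding by drawing $O(1/\alpha^{T + \log k})$ indices with probability $\pE[w_i]/(\alpha n)$, produces a list of the claimed size containing a projector within $\eta$ of $\Pi_*$.

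\textbf{Main obstacle.} The most delicate step is the SoS derivation of the anti-concentration inequality: the certified inequality of Definition \ref{def:certifiable-anti-concentration-homogenous} contains a "large" term $\|v\|^{2t-2} \langle x, v\rangle^2$ which on inlier $w$-samples vanishes because of $w_i (\Pi - \Pi_*) x_i = 0$, but which elsewhere must be controlled using the a priori Frobenius bound from Stage 1. This is precisely where the bootstrap from Theorem \ref{thm:main} replaces $\sqrt{d}$ by an $O(1/\alpha)$ factor and removes the dimension dependence, and where certifiable hypercontractivity of degree-$2$ polynomials plays an algorithmic role even though it is not needed information-theoretically (cf.\ Fact \ref{fact:anti-conc-subspace}).
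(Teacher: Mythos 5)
Your high-level plan tracks the paper's proof almost exactly: you correctly identify the two new ingredients (bootstrapping the dimension-independent $O(1/\alpha)$ a priori Frobenius bound from Lemma~\ref{lem:close-projectors}, which is what replaces the $\sqrt{2d}$ bound used in the earlier version, and the powering trick of Lemma~\ref{lem:powering}), and the derivation of the anti-concentration inequality via the direction substitution is sound (your choice $v = (\Pi - \Pi_*)u$ is algebraically equivalent to the paper's $v \mapsto \Pi_*(\Id - \Pi)v$ --- both kill the empirical term via $w_i(\Pi - \Pi_*)x_i = 0$ and yield $\E_g\|\Pi_*(\cdot)g\|_2^2 = r - \Tr(\Pi\Pi_*) = \tfrac12\|\Pi - \Pi_*\|_F^2$ after Gaussian integration).

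However, there is a genuine gap in the rounding step. You propose to round ``by drawing $O(1/\alpha^{T+\log k})$ indices with probability $\pE[w_i]/(\alpha n)$,'' i.e., single-index votes as in Algorithm~\ref{alg:subspace-recovery}. This does not work here. The certified inequality you obtain is of the form $\pE_{\tmu}\big[w(\cI)^{T}\|\Pi - \Pi_*\|_F^2\big] \leq \eta/100$, which is a bound against the degree-$T$ moment $w(\cI)^T = \big(\tfrac{1}{|\cI|}\sum_{i\in\cI} w_i\big)^T$, not against the linear term $w(\cI)$. A single-index reweighting $\tmu_i$ (with $\pE_{\tmu_i}[f] = \pE[w_i f]/\pE[w_i]$) has no control from this inequality: one cannot divide a degree-$T$ moment by $\pE[w(\cI)^T] \geq \alpha^T$ and hope to land on a bound for a single $\hat{\Pi}_i$. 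The paper's Algorithm~\ref{alg:subspace-recovery-tiny-error} handles this by expanding $(\sum_{i\in\cI} w_i)^T$ into a sum over multi-sets $S$ of size $T$, sampling $S$ with probability proportional to $\binom{\cI}{S}\pE[w_S]$, and returning $\hat{\Pi}_S = \pE[w_S \Pi]/\pE[w_S]$; Lemmas~\ref{lem:step-1-large-rounding} and~\ref{lem:key_rounding_lemma} show this makes the expectation over $S \sim \zeta$ of $\pE_{\tmu_S}\|\Pi - \Pi_*\|_F^2$ small, after which Markov and Cauchy--Schwarz close the argument. Without switching to multi-set rounding (and the accompanying reweighting-by-$w_S^2$ via Fact~\ref{fact:reweightings}), the final extraction step collapses, so this part of your proposal needs to be replaced.
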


We again use the system of polynomial constraints, $\cA_{w,\Pi}$, in indeterminates $w,\Pi$, and include a certifiable anti-concentration constraint on the samples indicated by $w$.
\begin{equation}
  \cA_{w,\Pi}\colon
  \left \{
    \begin{aligned}
      &&
      \textstyle\sum_{i\in[n]} w_i
      &= \alpha n\\
      &\forall i\in [n].
      & w_i \Pi x_i
      & = w_i x_i \\
      &\forall i\in [n].
      & w_i^2
      & = w_i \\
      &
      & \Pi^2
      & = \Pi\\
      &
      &\Tr(\Pi)
      &= r\\
      &\forall Q 
      &\frac{1}{\alpha n} \sum_{i \leq n} w_i \Paren{x_i^{\top} Q x_i-\tr(Q\Pi)}^{2t} 
      &\leq (Ct)^t \Paren{\frac{1}{\alpha n} \sum_{i \leq n} w_i \Paren{x_i^{\top} Q x_i}^2}^t\\
      &\forall Q 
      &\frac{1}{\alpha n} \sum_{i \leq n} w_i \Paren{x_i^{\top} Q x_i-\tr(Q\Pi)}^{2} 
      &\leq C \Norm{\Pi Q\Pi}_F^2\mper%
    \end{aligned}
  \right \}
\end{equation}

\begin{mdframed}[nobreak=true]
  \begin{algorithm} List-Decodable Subspace Recovery with Arbitrarily Tiny Error
    \label[algorithm]{alg:subspace-recovery-tiny-error}\mbox{}
    \begin{description}
    \item[Given:]
    Sample $X = \{x_1, x_2, \ldots x_n \} = \cI \cup \cO$ of size $n$ drawn according to $\sub_{\cD}(\alpha,\Pi_*)$ such that $\cD$ is $k$-certifiably $(c,\delta)$-anti-concentrated and $k$-certifiably $c$-hypercontractive. 
      \item[Operation:] \textrm{ }\\
      \vspace{-0.3in}
      \begin{enumerate}
        \item Let $t = \Theta(\log k + \log (1/\eta))$. Find a degree $O(2k + t)$ pseudo-distribution $\tilde{\mu}$ satisfying $\cA_{w,\Pi}$ that minimizes $\|\pE_{\tmu}[w]\|_2$.
        \item  For every multi-set $S \subseteq [n]$ of size $t$ such that $\pE_{\tmu}[w_S] > 0$, let $\hat{\Pi}_S = \frac{\pE_{\tmu}[w_S \Pi]}{\pE_{\tmu}[w_S]}$. Otherwise, set $\hat{\Pi}_S =0$
        \item Take $J$ be a random multi-set formed by union of $O(1/\alpha^t)$ independent draws of $S \subseteq [n]$ of size $t$ with probability proportional to $\binom{I}{ S} \pE[w_S]$.
        \item Output $\cL = \{\hat{\Pi}_S \mid S \in J\}$.
      \end{enumerate}
    \item[Output:] A list $\cL$ of size $O(1/\alpha^{O(\log (1/\alpha) + \log (1/\eta))})$ containing $\hat{\Pi}$ such that $\Norm{\hat{\Pi}-\Pi_*}_F \leq \eta$.   
     \end{description}
  \end{algorithm}
\end{mdframed}

\subsection{Analysis of Algorithm \ref{alg:subspace-recovery-tiny-error}.}
The key technical piece in our analysis is the following polynomial inequality that can be derived from the constraint system $\cA_{w,\Pi}$ in the low-degree SoS proof system.

\begin{lemma} \label{lem:main-sosized-fact}
Fix $\delta >0$ and $t \in \N$. Let $X$ be a sample from $\sub_{\cD}(\alpha, \Pi_*)$ such that $\cI$ is $k$-certifiably $(C,\delta)$-anti-concentrated. Then, 
\[
\cA_{w,\Pi} \sststile{2k + t}{\Pi,w} \Set{ w(\cI)^{t+1} \Norm{\Pi-\Pi_*}_F^k  \leq (C\delta)^t (Ck)^k (4C^2k)^k} \mper
\]
\end{lemma}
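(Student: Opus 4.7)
The plan is a three-stage argument: (i) derive a vector-indexed base inequality from certifiable anti-concentration of the inliers; (ii) amplify it with the powering trick \emph{before} converting to Frobenius norm; (iii) integrate against a standard Gaussian and combine with \cref{lem:close-projectors} to produce the target bound.

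\textbf{Stage (i): base inequality.} By sampling preservation of certifiable anti-concentration (the anti-concentration analogue of \cref{lem:cert_hyper_sampling}), the uniform distribution on the inlier sample $\cI$ inherits the $k$-certifiable $(C,\delta)$-anti-concentration of $\cD$. Apply the two defining inequalities with the substitution $v = (I-\Pi)u$ for an indeterminate vector $u$. The constraint $w_i(I-\Pi)x_i = 0$ from $\cA_{w,\Pi}$ (with $\Pi = \Pi^\top$) forces $w_i\iprod{x_i, v} = 0$, so the first summand of Property~1 vanishes after multiplying by $w_i$ and averaging over $i\in\cI$. Combining with Property~2 and $w_i \leq 1$ (derived from $w_i^2 = w_i$) yields, in SoS,
\begin{equation*}
(w(\cI) - 2C\delta)\,(u^{\top} M u)^{k/2} \;\leq\; 0 \quad\text{for all } u,
\end{equation*}
where $M := (I-\Pi)\Pi_*(I-\Pi) \succeq 0$ (SoS-derivable via $M = B^{\top}B$ with $B = \Pi_*(I-\Pi)$).

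\textbf{Stages (ii)+(iii): powering and Frobenius lift.} The powering lemma states that for $Z \geq 0$ SoS and $a,b \geq 0$, $(a-b)Z \leq 0$ implies $(a^t - b^t)Z \leq 0$, via the SoS factorization $a^t - b^t = (a-b)\sum_{j=0}^{t-1}a^j b^{t-1-j}$ whose second factor is SoS. Applying it to the base inequality with $a = w(\cI), b = 2C\delta, Z = (u^{\top}M u)^{k/2}$, and multiplying by $w(\cI) \geq 0$, gives
\begin{equation*}
w(\cI)^{t+1}(u^{\top} M u)^{k/2} \;\leq\; (2C\delta)^t\, w(\cI)\,(u^{\top} M u)^{k/2} \quad\text{in SoS, for all } u.
\end{equation*}
Integrate both sides against the standard Gaussian measure on $u$ -- this is SoS-legal since the Gaussian-moment Gram matrix on monomials is PSD, so the pushforward of an SoS polynomial in $u,w,\Pi$ remains SoS in $w,\Pi$. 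A short trace calculation using $\Pi^2=\Pi,\,\Pi_*^2=\Pi_*,\,\tr(\Pi)=\tr(\Pi_*)=r$ gives $\tr(M) = r - \tr(\Pi\Pi_*) = \tfrac{1}{2}\|\Pi-\Pi_*\|_F^2$, and the SoS sandwich
\begin{equation*}
\tr(M)^{k/2} \;\leq\; \E_{g \sim \cN(0,I)} \bigl[(g^{\top} M g)^{k/2}\bigr] \;\leq\; (Ck)^{k}\,\tr(M)^{k/2}
\end{equation*}
(lower bound: Jensen for $x \mapsto x^{k/2}$, provable via the tangent-line inequality; upper bound: the $(k-1)!! \leq (Ck)^k$ moment bound for PSD Gaussian quadratic forms) converts the vector inequality into $w(\cI)^{t+1}\|\Pi-\Pi_*\|_F^k \leq (2C\delta)^t(Ck)^k\, w(\cI)\,\|\Pi-\Pi_*\|_F^k$ in SoS. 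Finally, \cref{lem:close-projectors} applied with $2t=k$ bounds the trailing factor $w(\cI)\|\Pi-\Pi_*\|_F^k \leq (4C^2 k)^k$, and absorbing absolute constants yields the stated bound.

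The \textbf{main obstacle} is the Frobenius lift in Stage (iii): passing from the vector-level inequality to a dimension-independent bound on $\|\Pi-\Pi_*\|_F^k$. The naive route of specializing $u$ to standard basis vectors and summing introduces a factor of $d^{k/2-1}$ via power-mean, which would violate the dimension-independent right-hand side. Gaussian integration sidesteps this because Jensen's inequality provides a dimension-free lower bound and the matching upper bound depends only on $k$ via the Gaussian moment factor $(k-1)!!$. Equally important is the order of operations: the powering must be applied \emph{before} the Gaussian sandwich, or the $(Ck)^k$ factor would compound as $(Ck)^{kt}$ on the right and destroy the form of the final bound.
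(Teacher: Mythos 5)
Your proof is correct and follows essentially the same route as the paper's: derive the base SoS inequality $(w(\cI)-O(C\delta))\,\|\Pi_*(I-\Pi)v\|_2^k \leq 0$ from certifiable anti-concentration plus the constraint $w_i(I-\Pi)x_i=0$, apply the powering trick \emph{before} integrating, then integrate against the standard Gaussian and sandwich $\E_g[(g^\top(I-\Pi)\Pi_*(I-\Pi)g)^{k/2}]$ between $\tr(\cdot)^{k/2}$ and $(Ck)^k\tr(\cdot)^{k/2}$, finishing by multiplying by $w(\cI)$ and invoking \cref{lem:close-projectors} with $2t=k$. You also correctly identify the critical ordering issue (power first, then Gaussian) that the paper exploits.
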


Similar to the previous section, we rely on the following consequence of high-entropy constraints on the the pseudo-distribution $\tmu$.
\begin{fact}[Large weight on inliers from high-entropy constraints] \label{fact:high-entropy-pseudo-distributions-2}
Let $\tmu$ pseudo-distribution of degree $\geq t$ that satisfies $\cA_{w,\Pi}$ and minimizes $\Norm{\pE_{\tmu'} \sum_{i \in [n]} w_i}_2$. Then, $\frac{1}{|\cI|^t} \pE \left[\Paren{\sum_{i \in \cI}  w_i }^t\right] \geq \alpha^t$.
\end{fact}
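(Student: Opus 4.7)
The plan is to reduce the $t$-th moment bound to the base case $t=1$ (essentially Fact~\ref{fact:high-entropy-pseudo-distributions}) and then bootstrap to higher $t$ using H\"older's inequality for pseudo-distributions (Fact~\ref{fact:pseudo-expectation-holder}). I interpret the minimized objective as $\Norm{(\pE_\tmu[w_i])_{i \in [n]}}_2$, since the literal expression $\Norm{\pE_\tmu \sum_i w_i}_2$ is constant by the constraint $\sum_i w_i = \alpha n$ and so cannot be the intended quantity.

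For the base case, I would use the standard Karmalkar--Klivans--Kothari convex-combination trick. Let $\tmu^*$ be the atomic distribution concentrated on the single point $(w^*,\Pi^*) = (\1_\cI,\Pi_*)$. On the good sampling event (same event conditioned upon in the proof of Theorem~\ref{thm:main}), $\tmu^*$ satisfies all constraints of $\cA_{w,\Pi}$: the polynomial identities are immediate and the two $\forall Q$ inequalities transfer from $\cD$ to the uniform distribution on $\cI$ by the sampling preservation of certifiable hypercontractivity and bounded variance. Since the feasible set of pseudo-distributions is convex, $\tmu_\lambda := (1-\lambda)\tmu + \lambda \tmu^*$ is feasible for every $\lambda \in [0,1]$, so optimality of $\tmu$ forces
\[
0 \;\leq\; \left.\tfrac{d}{d\lambda}\right|_{\lambda=0}\Norm{\pE_{\tmu_\lambda}[w]}_2^2 \;=\; 2\sum_{i \in \cI} \pE_\tmu[w_i] \;-\; 2\Norm{\pE_\tmu[w]}_2^2 .
\]
Combined with Cauchy--Schwarz $\Norm{\pE_\tmu[w]}_2^2 \geq (\alpha n)^2/n = \alpha|\cI|$, this rearranges to $\pE_\tmu[w(\cI)] \geq \alpha$, where $w(\cI) := |\cI|^{-1}\sum_{i \in \cI} w_i$.

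For the inductive step, I would apply Fact~\ref{fact:pseudo-expectation-holder} with $f := w(\cI)$ (degree $1$) and $g := 1$, which yields $\pE_\tmu[w(\cI)^{t-1}] \leq \pE_\tmu[w(\cI)^{t}]^{(t-1)/t}$ for any pseudo-distribution of degree $\geq t$, equivalently $\pE_\tmu[w(\cI)^t] \geq \pE_\tmu[w(\cI)^{t-1}]^{t/(t-1)}$. Since $w_i^2 = w_i$ yields an SoS certificate that $w_i \geq 0$ and hence that $w(\cI)^{t-1} \geq 0$, both sides are non-negative, so I can iterate from the base case $\pE_\tmu[w(\cI)] \geq \alpha$ to obtain $\pE_\tmu[w(\cI)^t] \geq \alpha^t$. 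Unpacking $w(\cI)^t = |\cI|^{-t}(\sum_{i \in \cI} w_i)^t$ gives the claim.

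I do not anticipate a real obstacle. The only non-mechanical step is verifying that $\tmu^*$ satisfies the two $\forall Q$ inequalities in $\cA_{w,\Pi}$, but this is precisely the content of the sampling lemma applied to the inlier sample and requires no new ideas beyond those already used in Theorem~\ref{thm:main}.
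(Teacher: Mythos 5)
Your proof is correct and takes the route the paper clearly intends. The paper never writes out a proof of this fact; it only cites Lemma~4.3 of~\cite{DBLP:journals/corr/abs-1905-05679} for the $t=1$ case (Fact~\ref{fact:high-entropy-pseudo-distributions}) and then uses the $t$-th moment version silently. Your reconstruction — the first-order optimality condition against the convex perturbation $\tmu_\lambda=(1-\lambda)\tmu+\lambda\tmu^*$ for the base case, followed by the lift to $t$-th moments via Fact~\ref{fact:pseudo-expectation-holder} with $g=1$ — is exactly the intended argument, and your reading of the (clearly typo'd) minimization objective as $\Norm{\pE_\tmu[w]}_2$ matches Step~1 of Algorithm~\ref{alg:subspace-recovery}. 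You also correctly flag the two implicit side-conditions: feasibility of $\tmu^*$ requires conditioning on the good sampling event (exactly as the paper does in its proof of Theorem~\ref{thm:main}), and the H\"older step needs an SoS certificate of $w(\cI)\ge 0$, which $w_i^2=w_i$ supplies. This latter point is worth noting because Fact~\ref{fact:pseudo-expectation-holder} as stated in the paper omits the non-negativity hypothesis that makes the inequality true, and you are right that in this application it holds by virtue of the constraints in $\cA_{w,\Pi}$; the paper applies the same fact in the same way (to a non-SoS $w(\cI)$) in the proof of Lemma~\ref{lem:step-1-large-rounding}, so your usage is consistent with theirs.
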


The above two lemmas  combined with the analysis of our rounding allow us to complete the proof of correctness of Algorithm~\ref{alg:subspace-recovery-tiny-error}.

\subsection{Rounding Pseudo-distributions to a Large List: Proof of Theorem ~\ref{thm:large-list-rounding}}

Next, we analyze the new rounding scheme we present in this work that trades off accuracy and list size. The key lemma in our rounding is as follows:

\begin{lemma} \label{lem:step-1-large-rounding}
Given $t \in \N$, and an instance of $\sub_{\cD}(\alpha, \Sigma_*)$ such that $\cI$ is $k$-certifiably $(C, \delta)$-anti-concentrated, let $\tmu$ be a degree-$\Omega(k^2+t)$ pseudo-distribution satisfying $\cA_{w,\Pi}$ and minimizing $\norm{\pE_{\tmu}{[w]}}_2$. Then, for $t = \Omega(\log k + \log 1/\eta)$ and $\delta = \frac{\alpha}{2C}$, we have: 
\[
 \frac{ 1 }{\pE_{\tilde{\mu}} \left[\Paren{\sum_{i \in \cI} w_i}^t\right]} \sum_{S \subseteq \cI, |S|\leq t} \binom{\cI}{S} \pE_{\tilde{\mu}} \left[w_S \Norm{\Pi-\Pi_*}_F^k\right] \leq   \eta/100  \mper
\]
where $\binom{\cI}{S}$ is the coefficient of the monomial indexed by $S$.
\end{lemma}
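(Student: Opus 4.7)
The plan is to reduce the target bound to the two main ingredients already established: the SoS inequality of Lemma~\ref{lem:main-sosized-fact} (the ``powering'' bound on $\|\Pi-\Pi_*\|_F^k$) together with the high-entropy weight lower bound of Fact~\ref{fact:high-entropy-pseudo-distributions-2}.

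First, I would rewrite both the numerator and the denominator of the target ratio as pseudo-expectations of polynomials in $\sum_{i \in \cI} w_i$. The constraint $w_i^2 = w_i$ in $\cA_{w,\Pi}$ gives the polynomial identity
\[
\Bigl(\textstyle\sum_{i \in \cI} w_i\Bigr)^{t} \;=\; \sum_{S \subseteq \cI,\, |S| \leq t} \binom{\cI}{S}\, w_S,
\]
derivable in SoS at degree $O(t)$ by repeatedly reducing $w_i^{m}$ to $w_i$ for $m \geq 1$. Therefore the numerator equals $|\cI|^{t}\, \pE_{\tmu}[w(\cI)^t \|\Pi-\Pi_*\|_F^k]$ and the denominator equals $|\cI|^{t}\, \pE_{\tmu}[w(\cI)^t]$, with the $|\cI|^t$ factors cancelling.

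Next, for the numerator I would invoke Lemma~\ref{lem:main-sosized-fact} with its parameter set to $t-1$, yielding the SoS inequality $w(\cI)^{t} \|\Pi-\Pi_*\|_F^k \leq (C\delta)^{t-1}(Ck)^k (4C^2 k)^k$ at degree $2k+(t-1)$, well within the degree budget $\Omega(k^2+t)$ of $\tmu$. By soundness (Fact~\ref{fact:sos-soundness}), this inequality passes to pseudo-expectations. For the denominator, Fact~\ref{fact:high-entropy-pseudo-distributions-2} gives $\pE_{\tmu}[w(\cI)^t] \geq \alpha^{t}$. Dividing and substituting $\delta = \alpha/(2C)$, so that $C\delta = \alpha/2$, the ratio is bounded above by
\[
\frac{(\alpha/2)^{t-1}(Ck)^k(4C^2 k)^k}{\alpha^t} \;=\; \frac{(Ck)^k(4C^2 k)^k}{2^{t-1}\,\alpha}.
\]
Choosing $t = \Omega(\log k + \log(1/\eta))$ with a sufficiently large hidden constant (absorbing the combinatorial prefactor) forces this quantity to be at most $\eta/100$, completing the proof.

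The main technical obstacle is degree bookkeeping: I must check that the degree $\Omega(k^2+t)$ of $\tmu$ is enough to simultaneously support the multinomial collapse (degree $O(t)$) and the application of Lemma~\ref{lem:main-sosized-fact} with its $\|\Pi-\Pi_*\|_F^k$ factor (degree $2k+t-1$), along with any auxiliary multiplication required to align the two. A secondary subtlety is that Lemma~\ref{lem:main-sosized-fact} itself requires $\cI$ to be $k$-certifiably $(C,\delta)$-anti-concentrated; this is a high-probability sampling event that must be arranged via the sample-complexity bound $n \geq (kd\log d/\alpha)^{O(k)}$ using a sampling-preserves-anti-concentration statement.
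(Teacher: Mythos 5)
Your proof omits the key H\"older step that the paper uses to tame the $k^{\Theta(k)}$ prefactor, and as a result it does not establish the claimed $t = \Omega(\log k + \log(1/\eta))$ dependence. In the paper, after taking pseudo-expectation of Lemma~\ref{lem:main-sosized-fact}, one applies H\"older's inequality for pseudo-distributions (Fact~\ref{fact:pseudo-expectation-holder}) to pass from $\pE_{\tmu}\bigl[w(\cI)^{tk}\,\|\Pi-\Pi_*\|_F^{k}\bigr]$ to a bound on $\pE_{\tmu}\bigl[w(\cI)^{2t}\,\|\Pi-\Pi_*\|_F^{2}\bigr]$. This is equivalent to raising the whole right-hand side to the $2/k$ power, which replaces $(Ck)^k(4C^2k)^k$ with $(Ck)^2(4C^2k)^2 = \poly(k)$; only after this reduction can a $2^{\Theta(t)}$ factor with $t = O(\log k + \log(1/\eta))$ absorb the prefactor.

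By contrast, you keep the $k$-th power and claim in the final step that choosing $t = \Omega(\log k + \log(1/\eta))$ with a sufficiently large hidden constant absorbs the prefactor $(Ck)^k(4C^2k)^k$. This is false: that prefactor is $k^{\Theta(k)}$, while $2^{t-1}$ with $t = O(\log k + \log(1/\eta))$ is only $\poly(k)\cdot\poly(1/\eta)$. Your bound $\frac{(Ck)^k(4C^2k)^k}{2^{t-1}\alpha}$ actually forces $t = \Omega(k\log k + \log(1/\eta))$, which is worse than the lemma's $\Omega(\log k + \log(1/\eta))$ and would degrade both the $n^{O(k+\log(1/\eta))}$ running time and the $1/\alpha^{O(\log k + \log(1/\eta))}$ list size claimed in Theorem~\ref{thm:main-2}. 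A secondary point: the bound actually derived in the paper and used downstream in Lemma~\ref{lem:key_rounding_lemma} involves $\|\Pi-\Pi_*\|_F^2$ rather than $\|\Pi-\Pi_*\|_F^k$ (the displayed statement appears to contain a typo); it is precisely the H\"older step that produces this squared form. Your argument proves the $k$-th-power variant, which, besides needing a larger $t$, is not the version that is applied.
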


\begin{proof}
From Lemma~\ref{lem:main-sosized-fact}, we have for every $t,k  \in \N$,

\begin{equation*}
\cA_{w,\Pi} \sststile{O(k^2) + tk}{\Pi,w} \Set{ w(\cI)^{tk} \Norm{\Pi-\Pi_*}_F^k  \leq (C\delta)^{tk-1} (Ck)^k (4C^2k)^k} \mper
\end{equation*}
Since $\tmu$ satisfies $\cA_{w,\Pi}$ and has degree $\Omega(k^2+tk)$, taking pseudo-expectation yields:

\begin{equation}
\pE_{\tmu} \left[ w(\cI)^{tk}  \Norm{\Pi-\Pi_*}_F^k \right] \leq (C\delta)^{tk-1} (Ck)^k (4C^2k)^k \mper
\end{equation}
By Hölder's inequality for pseudo-distributions (Fact~\ref{fact:pseudo-expectation-holder}), we have:

\[
\pE_{\tmu} \left[ w(\cI)^{2t}  \Norm{\Pi-\Pi_*}_F^2 \right] \leq (C\delta)^{2t-2/k} (Ck)^2 (4C^2k)^2 \mper
\]
Since $\tmu$ satisfies $\cA_{w,\Pi}$ and minimizes $\Norm{\pE_{\tilde{\mu}} w}_2$, Fact~\ref{fact:high-entropy-pseudo-distributions} yields: $ \pE_{\tilde{\mu}} \left[ w(\cI)^t \right] \geq \alpha^t$.
Multiplying both sides by $\frac{1}{\pE_{\tilde{\mu}} \left[w(\cI)^t\right]}\leq \frac{1}{\alpha^t}$, we obtain:

\begin{equation}\label{eq:normalized-bound-prior}
 \frac{ 1 }{\pE_{\tilde{\mu}} \left[\Paren{\sum_{i \in \cI} w_i}^t\right]} \pE_{\tmu}\left[\Paren{\sum_{i \in \cI} w_i}^t  \Norm{\Pi-\Pi_*}_F^2\right] \leq   (C\delta)^{2t-2/k} (Ck)^2 (4C^2k)^2 \alpha^{-t} \mper
\end{equation}
Choosing $\delta = \frac{\alpha}{2C}$ and $t = 10 C \Paren{\log(k) + \log(100/\eta)}$ yields:

\begin{equation}\label{eq:normalized-bound}
 \frac{ 1 }{\pE_{\tilde{\mu}} \left[\Paren{\sum_{i \in \cI} w_i}^t\right]} \pE_{\tmu}\left[\Paren{\sum_{i \in \cI} w_i}^t  \Norm{\Pi-\Pi_*}_F^2\right] \leq   \eta/100 \mper
\end{equation}


For any monomial $w_S$, let $w_{S'}$ be its multi-linearization. Then, observe that:
\[
\Set{w_i^2=w_i \mid \forall i} \sststile{t}{w} \Set{ w_S = w_{S'}}\mper
\]
Therefore, we have
\begin{equation}
\label{eqn:expanding_to_monomial}
\cA_{w,\Pi} \sststile{t}{w} \Set{ \Paren{\sum_{i \in \cI} w_i}^t \Norm{\Pi-\Pi_*}_F^2 = \sum_{S \subseteq \cI, |S| \leq t} \binom{\cI}{S} w_S \Norm{\Pi-\Pi_*}_F^2} \mper
\end{equation}
Taking pseudo-expectations for Equation \eqref{eqn:expanding_to_monomial}  and plugging it back into Equation \eqref{eq:normalized-bound} completes the proof. 


\end{proof}

Next, we show that sampling a subset of size $t$ indicated by thee $w$'s proportional to the marginal pseudo-distribution on this set results in an empirical estimator that is close to $\Pi_*$ with constant probability.

\begin{lemma}
\label{lem:key_rounding_lemma}
Given $t \in \mathbb{N}$, let $\tmu$ be a pseudo-distribution of degree at least $t +2k$ satisfying $\cA_{w,\Pi}$ and minimizing $\norm{\pE_{\tmu}{[w]}}_2$. Let $S \subseteq \cI$, $|S| \leq t$ be chosen randomly with probability proportional to $\binom{\cI}{S} \pE_{\tmu}[w_S]$. Let $\tmu_S$ be the pseudo-distribution obtained by re-weighting $\tmu$ by the SoS polynomial $w_S^2$. Then, with probability at least $9/10$ over the draw of $S$, $\Norm{\pE_{\tmu_S}\left[ \Pi\right]-\Pi_*}_{F}^2 \leq \eta/10$.
\end{lemma}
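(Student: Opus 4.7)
The plan is to combine three ingredients: (i) the reweighting identity $\pE_{\tmu_S}[f] = \pE_\tmu[f\,w_S]/\pE_\tmu[w_S]$, which follows from the idempotence constraint $w_i^2=w_i$ of $\cA_{w,\Pi}$ and hence $w_S^2 = w_S$; (ii) a pseudo Cauchy--Schwarz estimate bounding $\|\pE_{\tmu_S}[\Pi]-\Pi_*\|_F^2$ by a $w_S$-weighted pseudo-expectation of $\|\Pi-\Pi_*\|_F^2$; and (iii) averaging over $S$ followed by Markov's inequality, where the denominator cancels against the sampling probability and the resulting sum is controlled by Lemma~\ref{lem:step-1-large-rounding}.

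First I would observe that since $\tmu$ satisfies $\{w_i^2 = w_i\}$, reweighting by the SoS polynomial $w_S^2$ collapses to reweighting by $w_S$, giving
$$\pE_{\tmu_S}[\Pi]-\Pi_* \;=\; \frac{\pE_\tmu\!\bigl[w_S(\Pi-\Pi_*)\bigr]}{\pE_\tmu[w_S]}.$$
The degree budget $t+2k$ on $\tmu$ ensures that all pseudo-expectations appearing below are well defined.

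Next I would apply Fact~\ref{fact:pseudo-expectation-cauchy-schwarz} entry-wise to the pair $\bigl(w_S,\,w_S(\Pi-\Pi_*)_{ij}\bigr)$, crucially first rewriting $w_S(\Pi-\Pi_*)_{ij} = w_S \cdot w_S(\Pi-\Pi_*)_{ij}$ via idempotence. Cauchy--Schwarz then yields
$$\pE_\tmu\!\bigl[w_S(\Pi-\Pi_*)_{ij}\bigr]^2 \;\leq\; \pE_\tmu[w_S]\cdot\pE_\tmu\!\bigl[w_S(\Pi-\Pi_*)_{ij}^2\bigr].$$
Summing over $(i,j)$ and dividing both sides by $\pE_\tmu[w_S]^2$ gives
$$\bigl\|\pE_{\tmu_S}[\Pi]-\Pi_*\bigr\|_F^2 \;\leq\; \frac{\pE_\tmu\!\bigl[w_S\,\|\Pi-\Pi_*\|_F^2\bigr]}{\pE_\tmu[w_S]}.$$

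Finally, since $S$ is drawn with probability $\binom{\cI}{S}\pE_\tmu[w_S]/Z$ for $Z = \pE_\tmu\!\bigl[(\sum_{i\in\cI} w_i)^t\bigr]$, the surviving $\pE_\tmu[w_S]$ denominator cancels against the sampling weight, so
$$\mathbb{E}_S\!\Bigl[\bigl\|\pE_{\tmu_S}[\Pi]-\Pi_*\bigr\|_F^2\Bigr] \;\leq\; \frac{1}{Z}\sum_{S}\binom{\cI}{S}\,\pE_\tmu\!\bigl[w_S\,\|\Pi-\Pi_*\|_F^2\bigr] \;\leq\; \eta/100,$$
where the last step is precisely Lemma~\ref{lem:step-1-large-rounding}. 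Markov's inequality then yields $\Pr_S\!\bigl[\|\pE_{\tmu_S}[\Pi]-\Pi_*\|_F^2 > \eta/10\bigr]\leq 1/10$, completing the proof. The only subtle step is the Cauchy--Schwarz application: the factor $w_S$ must be split using idempotence so that $\pE_\tmu[w_S]$ (not $\pE_\tmu[w_S]^2$) appears in the denominator, since a \emph{naively} applied Cauchy--Schwarz would leave an orphan $1/\pE_\tmu[w_S]$ after averaging and the cancellation against the sampling weight would fail.
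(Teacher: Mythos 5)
Your proof is correct and follows essentially the same route as the paper. The one difference is in the ordering: you apply Cauchy--Schwarz to each $S$-term \emph{before} averaging over $S$ and invoking Markov, whereas the paper first averages over $S$, then applies Markov to extract a good $\tmu_S$, and finally applies Cauchy--Schwarz on the already-reweighted pseudo-distribution $\tmu_S$ (using Fact~\ref{fact:reweightings} to observe that $\tmu_S$ is a bona fide pseudo-distribution and that $\pE_{\tmu_S}[\cdot] = \pE_{\tmu}[w_S\,\cdot\,]/\pE_{\tmu}[w_S]$). These are equivalent: your ``split $w_S = w_S^2$ by idempotence so that $\pE_\tmu[w_S]$, not $\pE_\tmu[w_S]^2$, appears'' is exactly the manual unrolling of applying Cauchy--Schwarz inside the reweighted pseudo-distribution $\tmu_S$ rather than in $\tmu$, so the ``subtle step'' you flag is taken care of automatically in the paper's formulation. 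Both versions are sound, and the final Markov step and appeal to Lemma~\ref{lem:step-1-large-rounding} match the paper's.
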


\begin{proof}
Rewriting the conclusion of Lemma~\ref{lem:step-1-large-rounding}, we have:
\begin{equation}\label{eq:expectation-disguised}
\frac{1}{\pE_{\tilde{\mu}}\left[\Paren{\sum_{i \in \cI} w_i}^{t}\right]} \sum_{S \subseteq \cI, |S| \leq t} {\cI \choose S} \pE_{\tilde{\mu}}[w_S] \frac{\pE_{\tilde{\mu}}\left[w_S \Norm{\Pi-\Pi_*}_F^2\right]}{\pE_{\tilde{\mu}}[w_S]} \leq \eta/100\mper
\end{equation}
Further, $\sum_{S \subseteq \cI, |S|\leq t} {\cI \choose  S} \pE_{\tilde{\mu}}[w_S] = \pE_{\tilde{\mu}}\Paren{\sum_{i \in \cI} w_i}^t$. Thus, $\frac{{\cI \choose  S} \pE_{\tilde{\mu}}[w_S] }{\pE_{\tilde{\mu}}\Paren{\sum_{i \in \cI} w_i}^t}$ is a probability distribution, $\zeta$, over $S \subseteq \cI, |S|\leq t$. Thus, we can rewrite \eqref{eq:expectation-disguised} as simply:
\[
\E_{S \sim \zeta} \left[\frac{\pE_{\tilde{\mu}}[w_S \Norm{\Pi-\Pi_*}_F^2]}{\pE_{\tilde{\mu}}[w_S]} \right]\leq \eta/100\mper
\]
By Markov's inequality, a $S \sim \zeta$ satisfies $\frac{\pE_{\tilde{\mu}}[w_S \Norm{\Pi-\Pi_*}_F^k]}{\pE[w_S]} \leq \eta/10$ with probability at least $9/10$.
Finally, observe that by Fact~\ref{fact:reweightings}, $\pE_{\tmu_S} \Norm{\Pi-\Pi_*}_F^2 = \frac{\pE_{\tilde{\mu}}[w_S \Norm{\Pi-\Pi_*}_F^2]}{\pE_{\tilde{\mu}}[w_S]}$. Thus, with probability at least $9/10$ over the choice of $S \sim \zeta$, $\pE_{\tmu_S} \left[\Norm{\Pi-\Pi_*}_F^2\right] \leq  \eta/10$.
By Cauchy-Schwarz inequality applied with $f=1$ and $g = \|(\Pi-\Pi_*)\|^2_F$, we have: $\Norm{\pE_{\tilde{\mu}}[(\Pi-\Pi_*)]}_F^2 \leq \pE_{\tilde{\mu}} \left[\Norm{\Pi-\Pi_*}_F^2\right]$. Thus, $\Norm{\pE_{\tmu_S}\left[\Pi\right]-\Pi_*}_F^2 \leq \eta/10$. This completes the proof.

\end{proof}


\begin{proof}[Proof of Theorem ~\ref{thm:large-list-rounding}]
We begin by observing the constraint system is feasible. Consider the assignment to $w$ that indicates the true inliers, i.e. $w_i = 1$ iff $i \in \cI$ and $\Pi= \Pi_*$. It is easy to check that $\sum_{i\in[n]}w_i = \alpha n$, $w_i \Pi x_i = w_i x_i$ and $\Pi$ is a rank-$k$ projection matrix. Given $\Paren{kd\log(d)/\alpha}^{\mathcal{O}(k)}$ samples, it follows from Lemma \ref{lem:cert_hyper_sampling} that the uniform distribution over the inliers is $k$-certifiably $c$-hypercontractive with probability $1-1/\poly(d)$ and thus $\cA_{w,\Pi}$ is feasible.

We note that since $\cD$ is $k$-certifiably $(c,\delta)$-anti-concentrated, sampling $n_0 = (kd\log(d)/\alpha)^{\mathcal{O}(k)}$ suffices for the uniform distribution over $\cI$ to be $k$-certifiably $(c,2\delta)$-anti-concentrated (this follows from Lemma \ref{lem:certifiable_anti_conc_sampling}). 
We then observe that by Fact \ref{fact:high-entropy-pseudo-distributions-2}, $\pE_{\tmu}\left[w(\cI)^t\right]  = \frac{\pE_{\tmu}\left[(\sum_{i\in \cI} w_i)^t\right]}{\pE_{\tmu}\left[(\sum_{i\in [n]} w_i)^t\right] } \geq \alpha^{t}$. Therefore, with probability at least $9\alpha^t/10$, $w_S \subset \cI$ and the conclusion of Lemma \ref{lem:key_rounding_lemma} holds. From Lemma \ref{lem:key_rounding_lemma}, we can now conclude that with probability at least $9/10$, $\norm{\tilde{\Pi} - \Pi_*}^2_F \leq \eta/10$.

Finally, we analyze the sample complexity and running time. The sample complexity is dominated by the uniform distribution over the inliers being certifiable anti-concentrated, which requires $(kd\log(d)/\alpha)^{\mathcal{O}(k)}$ samples. The running time is dominated by computing a degree $O(2k + t)$ pseudo-distribution over $\cA_{w,\Pi}$ that requires $n^{O(2k + t)} = n^{O(k + \log(1/\eta))}$ time. 

\end{proof}

\subsection{Analyzing $\cA_{w,\Pi}$: Proof of Lemma ~\ref{lem:main-sosized-fact}}

We now provide a sum-of-squares proof of the polynomial inequality stated in  Lemma ~\ref{lem:main-sosized-fact}. 

\begin{lemma}[Covariance of Subsets of Certifiably Anti-Concentrated Distributions]
\label{lem:subset_cov_anti_conc}

If the uniform distribution over the inliers is $k$-certifiably $(C,\delta)$ anti-concentrated, we have
\begin{equation}
\Set{w_i^2 = w_i \mid \forall i} \sststile{2k}{w,v} \Set{\frac{1}{n} \sum_{i = 1}^n \norm{\Pi_* v}_2^{k-2} w_i\iprod{x_i,v}^2 \geq  \delta^2 \Paren{\frac{1}{n} \sum_{i = 1}^n w_i-C\delta} \norm{\Pi_* v}_2^{k}}\mcom
\end{equation}

\end{lemma}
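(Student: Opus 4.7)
My strategy is to apply both defining inequalities of certifiable $(C,\delta)$-anti-concentration of the empirical distribution on the inliers $\cI$ after making a single substitution that introduces $\norm{\Pi_* v}_2$ in place of $\norm{v}_2$. The key observation is that for an inlier $x_i$ we have $\Pi_* x_i = x_i$, whence $\iprod{x_i, v} = \iprod{x_i, \Pi_* v}$. I would therefore substitute $v \mapsto \Pi_* v$ into the SoS-certified inequalities of Definition~\ref{def:certifiable-anti-concentration-homogenous}; the substitution rule preserves the degree since $\Pi_* v$ is linear in $v$, and one simplifies via $\Pi_*^{1/2}\Pi_* v = \Pi_* v$ (using $\Pi_*^2 = \Pi_*$). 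After this, the first AC inequality reads, for each $i \in \cI$,
\[
\norm{\Pi_* v}_2^{k-2}\iprod{x_i,v}^2 + \delta^2 p^2(\iprod{x_i,v}) \geq \tfrac{\delta^2}{2}\norm{\Pi_* v}_2^k,
\]
while the second gives $\tfrac{1}{|\cI|}\sum_{i\in\cI} p^2(\iprod{x_i,v}) \leq C\delta \norm{\Pi_* v}_2^k$, each as a degree-$k$ SoS proof in $v$.

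Next, I would multiply the first inequality (for each $i\in\cI$) by $w_i$---valid in SoS because $w_i = w_i^2$ is itself a square---sum over $i\in\cI$, and divide by $n$, giving
\[
\tfrac{\norm{\Pi_* v}_2^{k-2}}{n}\sum_{i\in\cI} w_i\iprod{x_i,v}^2 + \tfrac{\delta^2}{n}\sum_{i\in\cI} w_i p^2(\iprod{x_i,v}) \geq \tfrac{\delta^2}{2n}\Bigl(\sum_{i\in\cI} w_i\Bigr)\norm{\Pi_* v}_2^k.
\]
To dispatch the $p^2$ term, I would observe that $\{w_i^2 = w_i\}$ forces $1 - w_i = (1-w_i)^2$ to be a square, so $w_i \leq 1$ and hence $w_i p^2(\cdot) \leq p^2(\cdot)$ in SoS; combined with the second AC inequality and $|\cI|/n \leq 1$, this bounds $\tfrac{1}{n}\sum_{i\in\cI} w_i p^2(\iprod{x_i,v}) \leq C\delta\norm{\Pi_* v}_2^k$. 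Plugging in and rearranging replaces the $p^2$ contribution by a $-C\delta^3 \norm{\Pi_* v}_2^k$ slack on the right.

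Finally, since $w_i\iprod{x_i,v}^2$ is manifestly SoS, extending the LHS sum from $\cI$ to all of $[n]$ only strengthens the inequality and yields the stated bound, with the leading $\tfrac{1}{2}$ and any gap between $\sum_{i\in\cI} w_i$ and $\sum_{i=1}^n w_i$ absorbed into $C$ (this is benign in the intended usage inside $\cA_{w,\Pi}$, which also imposes $\sum_i w_i = \alpha n$). The main obstacle is careful SoS degree accounting: the substitution $v \mapsto \Pi_* v$ preserves the base degree $k$, multiplications by $w_i$ and summations add only constant factors, and the final combination fits comfortably within the claimed $2k$ budget.
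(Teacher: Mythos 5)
Your proof is correct and follows essentially the same approach as the paper: apply both anti-concentration inequalities for the empirical inlier distribution, multiply the pointwise one by $w_i$, sum, and control the $p^2$ term via $w_i\le 1$. In fact you are more careful than the paper on two points: you make the substitution $v\mapsto\Pi_* v$ explicit (the paper writes $\norm{\Pi_*v}_2^{k-2}$ without noting that this substitution has been performed), and you correctly restrict the identity $\iprod{x_i,\Pi_*v}=\iprod{x_i,v}$ to $i\in\cI$ where $\Pi_*x_i=x_i$ holds, whereas the paper asserts the pointwise inequality ``for each $1\le i\le n$.'' One caution: your parenthetical claim that the gap between $\frac{1}{n}\sum_{i\in\cI}w_i$ and $\frac{1}{n}\sum_{i=1}^n w_i$ can be ``absorbed into $C$'' is not literally valid (that gap is $\frac{1}{n}\sum_{i\notin\cI}w_i$, not $O(\delta)$). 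However, the inequality you actually derive---with $w(\cI)=\frac{1}{|\cI|}\sum_{i\in\cI}w_i$ on the right-hand side and $\cI$-normalization---is the form the paper itself invokes in the proof of Lemma~\ref{lem:main-sosized-fact}, so the intended statement of the lemma should be read with the sum on the right over $\cI$ normalized by $|\cI|$, and your derivation is the right one.
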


\begin{proof}
Let $p$ be the degree $k$ polynomial provided by Definition~\ref{def:certifiable-anti-concentration-homogenous} (certifiable anti-concentration) applied to $\cI$.
Thus, for each $1 \leq i \leq n$, we must have:
\[
\sststile{2k}{v} \Set{\norm{\Pi_* v}_2^{k-2} \iprod{x_i,v}^2 + \delta^2 p^2\paren{\iprod{x_i,v}} \geq \delta^2 \norm{v}_2^{k}}\mper
\]
Observe that
\[
\Set{w_i^2 = w_i} \sststile{2}{w_i} \Set{w_i \geq 0}\mper
\]


Using the above along with \eqref{eq:sos-addition-multiplication-rule} for manipulating SoS proofs, we must have:
\[
\Set{w_i^2 = w_i \mid \forall i} \sststile{2k}{w,v} \Set{\frac{1}{n} \sum_{i = 1}^n \norm{\Pi_* v}_2^{k-2} w_i\iprod{ x_i,v}^2 + \delta^2 \frac{1}{n} \sum_{i = 1}^n w_i p^2\paren{\iprod{x_i,v}} \geq \delta^2 \frac{1}{n} \sum_{i = 1}^n w_i \norm{\Pi_* v}_2^{k}}\mper
\]
Rearranging yields:
\begin{equation}
\Set{w_i^2 = w_i \mid \forall i} \sststile{2k}{w,v} \Set{\frac{1}{n} \sum_{i = 1}^n \norm{\Pi_* v}_2^{k-2} w_i\iprod{x_i,v}^2 \geq  \delta^2 \frac{1}{n} \sum_{i = 1}^n w_i \norm{\Pi_* v}_2^{k}-\delta^2 \frac{1}{n} \sum_{i = 1}^n w_i p^2\paren{\iprod{x_i,v}} }\mper \label{eq:eq-poly-1}
\end{equation}

Next, observe that $\Set{w_i^2 = w_i} \sststile{2}{w_i} \Set{(1-w_i)= (1-w_i)^2 \geq 0}$. Thus, $\Set{w_i^2 = w_i} \sststile{2}{w_i} \Set{w_i \leq 1}$. As a consequence, $\Set{w_i^2 = w_i} \sststile{k+2}{w_i,v} \Set{w_i p^2(\iprod{x_i , v })\leq p^2(\iprod{x_i , v })}$. Summing up over $1 \leq i \leq n$ yields:

\[
\Set{w_i^2 = w_i \mid \forall i} \sststile{2k}{w,v} \Set{\frac{1}{n} \sum_{i = 1}^n w_i p^2\paren{\iprod{x_i,v}}\leq \frac{1}{n}\sum^n_{ i =1}p^2\paren{\iprod{x_i,v}} \leq C\delta \norm{\Pi_* v}_2^{k}}\mcom
\]
where in the final inequality on the RHS above, we used the second condition from Definition~\ref{def:certifiable-anti-concentration-homogenous} satisfied by $\cS$.
Plugging this back in \eqref{eq:eq-poly-1}, we thus have:

\begin{equation}
\Set{w_i^2 = w_i \mid \forall i} \sststile{2k}{w,v} \Set{\frac{1}{n} \sum_{i = 1}^n \norm{\Pi_* v}_2^{k-2} w_i\iprod{ x_i,v}^2 \geq  \delta^2 \Paren{\frac{1}{n} \sum_{i = 1}^n w_i-C\delta} \norm{\Pi_* v}_2^{k}}\mcom
\end{equation}
as desired.

\end{proof}

\begin{lemma}[Technical SoS fact about Powering]
\label{lem:powering}
For indeterminates $a,b,Z$ and any $t \in \N$,
\begin{equation}
 \{a \geq 0, b \geq 0, (a-b)Z  \leq 0 \} \sststile{t}{a,b} \left\{ (a^{t} - b^{t})Z \leq 0 \right\}
\end{equation}

\end{lemma}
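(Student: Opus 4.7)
The plan is to exploit the standard factorization
\[
a^{t} - b^{t} = (a-b)\sum_{i=0}^{t-1} a^{i} b^{t-1-i}
\]
and then use the multiplication/addition inference rules for SoS proofs to combine it with the hypotheses $a \geq 0$, $b \geq 0$, and $(a-b)Z \leq 0$. The key point is that each summand $a^{i} b^{t-1-i}$ is a product of nonnegative quantities under our hypotheses, and therefore admits a short SoS derivation of its nonnegativity.

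First, from $\{a \geq 0\}$ and $\{b \geq 0\}$, repeated application of the \ref{eq:sos-addition-multiplication-rule} multiplication rule yields, for every $0 \le i \le t-1$,
\[
\{a \geq 0, b \geq 0\} \sststile{t-1}{a,b} \{a^{i} b^{t-1-i} \geq 0\}.
\]
Summing these $t$ inequalities via the addition rule produces
\[
\{a \geq 0, b \geq 0\} \sststile{t-1}{a,b} \Bigl\{\,\textstyle\sum_{i=0}^{t-1} a^{i} b^{t-1-i} \geq 0\Bigr\}.
\]

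Second, multiply the hypothesis $\{(b-a)Z \geq 0\}$ (i.e.\ $-(a-b)Z \geq 0$) by the nonnegative polynomial derived in the previous step. By the multiplication rule, this yields
\[
\{a \geq 0, b \geq 0, (a-b)Z \leq 0\} \sststile{t}{a,b} \Bigl\{\,(b-a)Z \cdot \textstyle\sum_{i=0}^{t-1} a^{i} b^{t-1-i} \geq 0\Bigr\}.
\]
The left-hand side of the derived inequality is exactly $(b^{t}-a^{t})Z$ by the factorization above, which is the desired conclusion $(a^{t}-b^{t})Z \leq 0$. The overall SoS degree in the variables $a,b$ is $(t-1) + 1 = t$, matching the bound in the statement (note that $Z$ is treated as a coefficient here, or more precisely, the degree count in $(a,b)$ is what is being tracked).

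The proof is essentially routine once the factorization is invoked, so there is no real obstacle; the only thing to be careful about is bookkeeping the degree. In particular, one must verify that multiplying constraints $a \geq 0$ and $b \geq 0$ a total of $t-1$ times and then multiplying by $(a-b)Z \leq 0$ keeps the proof at degree $t$ in the indeterminates $a,b$, which it does because each of the two basic constraints $a \ge 0$ and $b \ge 0$ has degree $1$ in $(a,b)$ and $(a-b)Z$ has degree $1$ in $(a,b)$.
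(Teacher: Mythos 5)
Your proposal is correct and follows essentially the same route as the paper: both use the factorization $a^t - b^t = (a-b)\sum_{i=0}^{t-1} a^i b^{t-1-i}$, establish that the sum is a nonnegative polynomial under $a,b\geq 0$, and multiply this against the hypothesis $(a-b)Z\leq 0$ via the multiplication rule. The degree accounting and conclusion match the paper's argument.
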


\begin{proof}
We have:
\[
\Set{a \geq 0, b \geq 0} \sststile{t}{a} \Set{\sum_{i = 0}^{t-1} a^{t-1-i}b^{i} \geq 0}\mper
\]
Using the  above identity with \eqref{eq:sos-addition-multiplication-rule} yields:
\[
\Set{a \geq 0, b \geq 0, (a-b)Z \leq 0 } \sststile{t}{a,b} \left\{ \Paren{a-b}\Paren{\sum_{i = 0}^{t-1} a^{t-1-i}b^{i}} Z \leq 0 \right\}\mper
\]
Using the identity: $\Paren{a-b}\Paren{\sum_{i = 0}^{t-1} a^{t-1-i}b^{i}} = a^{t}-b^t$, we finally obtain:
\[
\Set{a \geq 0, b \geq 0, (a-b)Z \leq 0 } \sststile{t}{a,b} \left\{ \Paren{a^{t}-b^t} Z\leq 0\right\}\mper
\]

\end{proof}

\begin{proof}[Proof of Lemma \ref{lem:main-sosized-fact}]
From Lemma \ref{lem:subset_cov_anti_conc}, we have: 

\begin{equation}
    \Set{w_i^2 = w_i \text{ }\mid \forall i}  \sststile{2k}{w,v} \Set{\frac{1}{|\cI|} \sum_{i\in \cI} w_i \iprod{x_i,v}^2 \Norm{\Pi_* v}_2^{k-2} \geq \delta^2\left( w(\cI) - C\delta \right) \Norm{\Pi_* v}_2^{k}}
\end{equation}

Let $M = \II-\Pi$. Since $x_i = \Pi_* x_i$, we have the following polynomial identity  (in indeterminates $\Pi,v$) for any $i$:

\[
\iprod{ \Pi_* x_i, \Pi_*M  v} = \iprod{M x_i, v}\mper
\]

By using the \eqref{eq:sos-substitution} for manipulating SoS proofs, substituting $v$ with the polynomial $\Pi_* M  v$, and recalling $\Pi_*^2 = \Pi_*$, we thus obtain:

\begin{equation}
    \Set{\forall i \in [n] \textrm{  } w_i^2 = w_i }  \sststile{2k}{w,v} \Set{\frac{1}{|\cI|} \sum_{i\in \cI} w_i \iprod{M x_i,v}^2 \Norm{\Pi_* M  v}_2^{k-2} \geq \delta^2\left( w(\cI) - C\delta \right) \Norm{\Pi_* M  v}_2^{k}} \label{eq:pre-0-expression}
\end{equation}
Next, observe that $\cA_{w,\Pi} \sststile{2}{w,\Pi} \Set{ w_i M x_i = 0 \text{ } \forall i}$ and thus,
\[
\cA_{w,\Pi} \sststile{4}{w,v,\Pi} \Set{ \iprod{w_i M x_i,v}^2 = w_i\iprod{M x_i,v}^2 =  0 \text{ } \forall i}\mper
\]
Combining this with \eqref{eq:pre-0-expression}, we thus have:
\begin{equation}
    \cA_{w,\Pi}  \sststile{2k}{w,v} \Set{0 \geq \delta^2\left(w(\cI)  - C\delta \right) \Norm{\Pi_* M v}_2^{k}} \label{eq:post-0-expression}
\end{equation}
Using \eqref{eq:sos-addition-multiplication-rule} to multiply throughout by the constant $1/\delta^2$ yields:

\begin{equation}
     \cA_{w,\Pi} \sststile{2k}{w,v} \Set{0 \geq \left(w(\cI)  - C\delta \right) \Norm{\Pi_* M  v}_2^{k}} \label{eq:post-0-expression_2}
\end{equation}
Applying Lemma~\ref{lem:powering} with $a = \frac{1}{|\cI|} \sum_{i \in\cI} w_i$, $b = C\delta$ and $Z = \Norm{\Pi_* M  v}_2^{k}$, we obtain:
\begin{equation}
     \cA_{w,\Pi} \sststile{2k + t}{w,v} \Set{0 \geq  \Paren{w(\cI)^t - \paren{C\delta}^t} \Norm{\Pi_* M  v}_2^{k}} \label{eq:post-0-expression_3}
\end{equation}
Let $g \sim \cN(0,I)$. Then, using the above with the substitution $v = g$, we have:
\begin{multline}
\label{eq:bound_useful}
\cA_{w,\Pi} \sststile{2k}{v,w} \Biggl\{ w(\cI)^t \Paren{\E \left[g^{\top} M\Pi_*M g\right]}^{k/2} =  w(\cI)^t \Tr(M\Pi_*M)^{k/2} = w(\cI)^t (\E \left[g^{\top} M\Pi_*M g\right]^{k/2}\\
 \leq w(\cI)^t\E \left[g^{\top} M\Pi_*M g\right]^{k/2} \leq w(\cI)^t (Ck)^{k} \Paren{\E\left[g^{\top} M\Pi_*M g\right]}^{k/2} \Biggr\}\mcom
\end{multline}
where the inequality follows from the SoS Hölder's inequality.

Next,
\[
\Set{\Pi^2 = \Pi} \sststile{2}{\Pi} \Set{\Norm{\Pi}_F^2 = \Tr(\Pi^2) = \Tr(\Pi)=r}\mper
\]
And also,
\[
\Set{\Pi^2 = \Pi} \sststile{2}{\Pi} \Set{M^2 = (I-\Pi)^2 = I -2\Pi + \Pi^2 = I-\Pi = M}\mper
\]
Thus,
\begin{equation*}
\begin{split}
\cA_{w,\Pi} \sststile{2}{\Pi} \Biggl\{ \Norm{\Pi-\Pi_*}_F^2 &= \Norm{\Pi}_F^2 + \Norm{\Pi_*}_F^2 - 2 \Tr(\Pi \Pi_*) = 2r - 2 \Tr(\Pi \Pi_*) \\
&= 2 \Tr((I-\Pi)\Pi_*) = 2\Tr(M\Pi_*) = 2 \Tr(M^2\Pi_*) = 2\Tr(M\Pi_*M)\Biggr\}\mper
\end{split}
\end{equation*}
where the last equality follows from the cyclic property of the trace.

Using \eqref{eq:sos-addition-multiplication-rule} and combining with \eqref{eq:bound_useful}, we thus obtain:
\[
\cA_{w,\Pi} \sststile{2k + t}{\Pi,w} \Set{ w(\cI)^t \Norm{\Pi-\Pi_*}_F^k = w(\cI)^t 2^{k/2} \Tr(M\Pi_*M)^{k/2} \leq (C\delta)^t (Ck)^k \Norm{\Pi-\Pi_*}_F^k} \mper
\]
Multiplying both sides by $w(\cI)$ and using Lemma~\ref{lem:close-projectors} yields

\[
\cA_{w,\Pi} \sststile{2k^2 + t}{\Pi,w} \Set{ w(\cI)^{t+1} \Norm{\Pi-\Pi_*}_F^k  \leq (C\delta)^t (Ck)^k (4C^2k)^k} \mper
\]
which concludes the proof. 

\end{proof}





\section{Relevant Analytic Properties of Distributions}
\label{sec:certifiable_anti_conc}
In this section, we prove basic facts about the three relevant properties of probability distributions relevant to this work.

\subsection{Certifiable Anti-Concentration}
We start by recalling the definition again. 

\begin{definition}[Certifiable Anti-Concentration]
\label{def:certifiable-anti-concentration-homogenous-restated}
A zero-mean distribution $D$ with covariance $\Sigma$ is $2k$-certifiably $(\delta,C\delta)$-anti-concentrated if there exists a univariate polynomial $p$ of degree $d$ such that:
\begin{enumerate}
\item $\sststile{2k}{v} \Set{ \norm{v}^{2k-2}_2\iprod{x,v}^2 + \delta^2 p^2\left(\iprod{x,v}\right) \geq \frac{\delta^2 \norm{\Sigma^{1/2}v}^{2k}_2}{4}}$. 
\item $\sststile{2k}{v} \Set{\E_{ x \sim D}\left[ p^2\paren{\iprod{x,v}} \right] \leq C\delta \norm{\Sigma^{1/2} v}_2^{2k}}$.
\end{enumerate}
A set $\cS$ is $2k$-certifiably $(C,\delta)$-anti-concentrated if the uniform distribution on $\cS$ is $2k$-certifiably $(C,\delta)$-anti-concentrated.

\end{definition}

As discussed earlier, this definition is obtained by a important but technical modification of the definition used in ~\cite{DBLP:journals/corr/abs-1905-05679,raghavendra2020list}. We verify basic properties of this notion here and establish that natural distributions such as Gaussians do satisfy it.  
We first prove that natural distributions like the Gaussians and uniform distribution on the unit sphere are certifiably anti-concentrated. 

\begin{theorem}(Certifiable Anti-Concentration of Gaussians.)
\label{thm:cert_anti_conc_gaussians}
Given $0<\delta \leq 1/2$, there exists $k = O\left(\frac{\log^{5}(1/\delta)}{\delta^{2}} \right)$ such that $\cN(0, \Sigma)$ is $k$-certifiably $(C, \delta)$-anti-concentrated. 
\end{theorem}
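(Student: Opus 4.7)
The plan is to construct an explicit univariate polynomial $p$ of degree $k = O(\log^5(1/\delta)/\delta^2)$ that behaves like a smoothed indicator of the interval $[-\delta,\delta]$ under the Gaussian measure, and then to verify the two SoS conditions of Definition~\ref{def:certifiable-anti-concentration-homogenous-restated} for $\cN(0,\Sigma)$.

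First I would reduce to the isotropic case $\Sigma = I$. Because the Gaussian is rotationally symmetric and both inequalities are covariant under the substitution $v \mapsto \Sigma^{-1/2} u$, an SoS proof in the isotropic setting yields an SoS proof of the same degree in the general case by the substitution rule \eqref{eq:sos-substitution}. In the isotropic case, the first condition reduces, after homogenization, to the univariate statement $t^2 + \delta^2 p^2(t) \geq \delta^2/4$ for all real $t$, and the second reduces to $\E_{t \sim \cN(0,1)} p(t)^2 \leq C\delta$.

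Next I would construct $p$ via a truncated Hermite expansion of a smoothed version of the indicator $\Ind[|t| \leq \delta]$. Concretely, take a smooth bump $\phi_\delta$ supported on $[-\delta,\delta]$ with $\phi_\delta(0) = 1$, expand it in the Hermite basis orthogonal with respect to the standard Gaussian, and truncate the expansion at degree $k$. Standard $L^2$-approximation estimates together with Jackson-type $L^\infty$ bounds show the resulting polynomial $p$ satisfies $p(t)^2 \geq 1/2$ for $|t| \leq \delta/2$ and $\E_{t \sim \cN(0,1)} p(t)^2 \leq C\delta$ once $k = O(\log^5(1/\delta)/\delta^2)$. The $\log^5$ factor reflects the careful balancing of $L^\infty$ control (for the pointwise lower bound) against $L^2$ control (for the expectation bound).

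Finally I would produce the SoS certificates. For condition (1), the univariate polynomial $t^2 + \delta^2 p^2(t) - \delta^2/4$ is nonnegative on all of $\R$ by the design of $p$ (splitting into the cases $|t| \geq \delta/2$, where the $t^2$ term dominates, and $|t| < \delta/2$, where $p^2(t) \geq 1/2$), hence admits a degree-$2k$ SoS decomposition in $t$ by Fact~\ref{fact:univariate}. Applying \eqref{eq:sos-substitution} with the linear polynomial $t = \iprod{x,v}$ and multiplying through by appropriate powers of $\|v\|^2$ to homogenize yields the required degree-$2k$ SoS proof in $v$ of the inequality in Definition~\ref{def:certifiable-anti-concentration-homogenous-restated}(1). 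Condition~(2) follows by taking expectations of the $L^2$ approximation bound on $p$ together with rotational invariance of the Gaussian (so $\iprod{x,v} \sim \cN(0, \|\Sigma^{1/2}v\|^2)$) and homogenizing by a power of $\|\Sigma^{1/2}v\|^{2k}$, using Fact~\ref{fact:spectral-sos} to certify the PSD factor.

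The main obstacle will be achieving the precise degree bound $O(\log^5(1/\delta)/\delta^2)$: polynomial approximation of a discontinuous indicator is classically hard, and a naive construction gives degree $\Omega(1/\delta^2)$ with weak pointwise control near the boundary of $[-\delta,\delta]$. The $\log^5$ factor emerges from carefully trading off pointwise versus $L^2$ error in the Hermite truncation, which is the main technical work and parallels the certifiable anti-concentration constructions for the Gaussian in~\cite{DBLP:journals/corr/abs-1905-05679, raghavendra2020listreg}.
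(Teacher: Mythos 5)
Your high-level plan -- build a low-degree even approximant to the indicator of a $\delta$-interval, verify the univariate inequality $t^2 + \delta^2 p^2(t) \geq \delta^2/4$ by case analysis, pass to an SoS certificate via the fundamental theorem of algebra (Fact~\ref{fact:univariate}), then substitute and homogenize -- is exactly the structure of the paper's proof, which invokes the ``core indicator'' polynomial from Lemma~A.1 of \cite{DBLP:journals/corr/abs-1905-05679} (restated as Fact~\ref{lem:core_indicator}) rather than re-deriving it from a Hermite expansion. Where your write-up has a genuine gap is in the homogenization step. You propose to ``apply \eqref{eq:sos-substitution} with the linear polynomial $t = \iprod{x,v}$ and multiply through by appropriate powers of $\|v\|^2$,'' but this does not produce the inequality demanded in Definition~\ref{def:certifiable-anti-concentration-homogenous}: the three terms $\norm{v}^{2t-2}\iprod{x,v}^2$, $\delta^2 p^2(\iprod{x,v})$, and $\delta^2\|\Sigma^{1/2}v\|^{2t}/2$ need \emph{different} compensating powers of $\|v\|^2$, and you cannot multiply an SoS identity term-by-term by different factors. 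The correct move (as in the paper) is to substitute the \emph{normalized} quantity $t = \iprod{\Sigma^{\dagger/2}x,v}/\|v\|_2$ into the univariate SoS witness $s(t)$ and then multiply the entire identity by $\|v\|_2^{2t}$. For the resulting right-hand side $\|v\|^{2t}_2\, s\!\left(\iprod{\Sigma^{\dagger/2}x,v}/\|v\|_2\right)$ to be a genuine polynomial SoS in $v$, you need $s$ (hence $q$) to be \emph{even}; the paper devotes a separate lemma (Fact~\ref{fact:struct_of_even_poly}) precisely to this point, and it is not a cosmetic detail -- without evenness the expression involves fractional powers of $\|v\|^2$ and is not even a polynomial. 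Your construction via a symmetric bump would in fact yield an even polynomial, but you never identify this as the load-bearing property, and as written the substitution step is broken.

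A secondary issue: for condition~(2) you cite Fact~\ref{fact:spectral-sos} ($v^{\top}Av \geq 0$ for PSD $A$), which is a degree-$2$ statement about quadratic forms and has no bearing on a degree-$2t$ polynomial inequality. The paper instead observes, via rotational invariance of the Gaussian, that $\|v\|^{2t}_2 \,\E\, q^2\!\left(\iprod{\Sigma^{\dagger/2}x,v}/\|v\|_2\right)$ is a univariate polynomial in $\|v\|^2_2$ and then applies Fact~\ref{fact:univariate} to that univariate polynomial. You should replace the appeal to Fact~\ref{fact:spectral-sos} with this univariate-in-$\|v\|^2$ argument. The remaining vagueness -- deferring the precise degree $O(\log^5(1/\delta)/\delta^2)$ to ``Jackson-type bounds'' -- is acceptable since the paper also treats the core-indicator construction as a black box, but be aware that the paper's reduction to a clean univariate claim and the even-polynomial homogenization lemma are the two technical steps you would actually have to carry out.
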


Our proof of Theorem~\ref{thm:cert_anti_conc_gaussians} will rely on the following construction of a low-degree polynomial with certain important properties:


\begin{fact}[Core Indicator for Strictly Sub-Exponential Tails, Lemma A.1 in ~\cite{DBLP:journals/corr/abs-1905-05679}, ]
\label{lem:core_indicator} Given a distribution $\cD$ on $\R^d$ with mean $0$ and variance $\sigma\leq 1$ satisfying:
\begin{enumerate}
  \item \textbf{Anti-Concentration:} for all $\eta >0$, $\Pr_{x \sim \cD}[|x| \leq \eta \sigma] \leq c_1 \eta$,
  \item \textbf{Strictly Sub-Exponential Tail:} for some $k<2$, $\Pr_{x\sim\cD}[|x| \geq t\sigma] \leq \exp(-t^{2/k}/c_2)$,
\end{enumerate}
for some fixed $c_1, c_2 > 1$. Then, for any $\delta > 0$, there exists a degree $O\left(\frac{\log^{(4+k)/(2-k)}(1/\delta)}{\delta^{2/(2-k)}} \right)$ \textit{even} polynomial $q$ with all coefficients upper-bounded by $t^{O(k)}$ satisfying:
\begin{enumerate}
\item $|x| \leq \delta$, $q(x)= 1\pm \delta$, and,
\item $\sigma^2 \expecf{x\sim \cD}{q^2(x)}\leq 10 c_1 c_2 \delta$.
\end{enumerate}
\end{fact}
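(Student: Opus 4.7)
The plan is to construct $q$ as a smooth polynomial approximation of a bump function on an appropriately chosen truncation interval, using Jackson/Chebyshev approximation theory, and then to carefully bound $\E[q^2]$ by splitting according to where $|x|$ lies. First, I would choose a truncation radius $R \;=\; \sigma \cdot (c_2 \log(1/\delta))^{k/2}$ so that by the strictly sub-exponential tail assumption, $\Pr_{x \sim \cD}[|x| > R] \leq \delta^{\Omega(1)}$. This is the radius beyond which the tail decay of $\cD$ is strong enough to suppress polynomial growth.

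Next, I would choose an even target function $\Phi \colon [-R, R] \to [0, 1]$ that equals $1$ on $[-\delta, \delta]$, decays to magnitude at most $\delta$ on $[2\delta, R]$, is smooth, and has bounded derivatives of every order up to $D$. Applying Jackson's theorem (or a standard Chebyshev-series truncation) yields a polynomial $\tilde{q}$ of degree $D$ with $\|\tilde q - \Phi\|_{L^\infty([-R,R])} \leq \delta$. I would symmetrize via $q(x) = \tfrac12(\tilde q(x) + \tilde q(-x))$ to enforce the evenness requirement; this preserves the approximation guarantee and keeps the degree at most $D$. Two constraints govern the choice of $D$: approximation on $[-R, R]$ to error $\delta$ requires $D \gtrsim (R/\delta)\log(1/\delta)$ by Jackson-type bounds; but also, since $\|q\|_{L^\infty([-R, R])} \leq 1 + \delta$, the Markov/Chebyshev inequality for polynomials gives $|q(x)| \leq (2|x|/R)^{D}$ for $|x| > R$, and requiring the tail integral against $\exp(-(|x|/\sigma)^{2/k}/c_2)$ to be $\leq \delta$ forces $D \lesssim R^{2/k}$ (up to log factors). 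Balancing $R/\delta \sim R^{2/k}$ yields $R \sim (1/\delta)^{k/(2-k)}$ (up to logs), and hence $D = O\bigl(\log^{(4+k)/(2-k)}(1/\delta)/\delta^{2/(2-k)}\bigr)$, matching the claimed bound.

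To verify the two asserted properties, I would argue as follows. For (1), on $|x| \leq \delta$ we have $\Phi(x) = 1$, so $|q(x) - 1| \leq \|q - \Phi\|_\infty \leq \delta$. For (2), I would decompose $\E[q^2]$ into three regions. On $|x| \leq \delta$, $q^2 \leq (1+\delta)^2$, and anti-concentration gives $\Pr[|x| \leq \delta] \leq c_1 \delta/\sigma$, contributing $O(c_1 \delta/\sigma)$. On $\delta < |x| \leq R$, $q^2 \leq (\delta + \Phi(x))^2 \leq 4\delta^2$, contributing at most $4\delta^2$. On $|x| > R$, I would use the polynomial-growth bound $q^2(x) \leq (2|x|/R)^{2D}$ against the strictly sub-exponential tail $\exp(-(|x|/\sigma)^{2/k}/c_2)$; a saddle-point estimate at $|x_*| \sim \sigma D^{k/2}$ shows the integrand has peak value $\exp(-\Omega(D))$ by the choice of $R$, giving a contribution of at most $\delta$. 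Multiplying by $\sigma^2$ and using $\sigma \leq 1$ yields $\sigma^2\E[q^2] \leq 10 c_1 c_2 \delta$. Finally, the coefficient bound would follow by writing $q$ in the Chebyshev basis on $[-R, R]$: Chebyshev coefficients of a smooth function are uniformly bounded by $O(1)$, and converting to the monomial basis contributes at most a factor $R^{O(D)}$, yielding the claimed coefficient bound after substituting the chosen $R$.

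The main obstacle is the tail-region bound, which is where the specific exponent $2/(2-k)$ in the degree emerges. Controlling $\int_{|x| > R} q^2 \, d\cD$ requires a delicate balance: a Markov-type bound for polynomial growth on $[R, \infty)$ competes directly against the strictly sub-exponential tail, and the saddle-point peak of the product integrand is what pins down the tradeoff $D \lesssim R^{2/k}$. The case $k = 2$ (boundary with Gaussian tails only) is excluded precisely because the tail growth and polynomial growth become incomparable; the entire construction hinges on the strict inequality $k < 2$ to make the saddle-point argument produce a finite, useful bound.
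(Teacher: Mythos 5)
The paper does not actually prove this statement: it is imported verbatim as Lemma~A.1 from~\cite{DBLP:journals/corr/abs-1905-05679}, so there is no in-paper proof to compare against. Your reconstruction — a smooth bump target, a Chebyshev/Jackson approximation of degree $D$ on $[-R,R]$, the extremal growth bound $|q(x)| \leq (1+\delta)(2|x|/R)^D$ outside $[-R,R]$, and a three-region split of $\E[q^2]$ using anti-concentration near $0$ and the sub-exponential tail far away — is the right shape of argument, and the balance $D \lesssim R^{2/k}$ against $D \gtrsim R/\delta$ (polylog) is exactly where the exponent $2/(2-k)$ comes from.

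That said, the write-up has an internal contradiction you should resolve. You first set $R = \sigma\,(c_2 \log(1/\delta))^{k/2}$, chosen only to make $\Pr[|x|>R]$ small, but a few lines later the balancing argument forces $R \sim (1/\delta)^{k/(2-k)}$ up to logs. The first choice cannot be used: with $R$ polylogarithmic in $1/\delta$, the tail term is $\approx 4^D \exp(-(R/\sigma)^{2/k}/c_2) = 4^D\,\delta^{\Omega(1)}$, which is not $\leq \delta$ for the Jackson-required $D$. You must commit to the larger, polynomial-in-$1/\delta$ value of $R$ from the outset. Two smaller points: the bound ``$q^2 \leq (\delta+\Phi(x))^2 \leq 4\delta^2$ on $\delta < |x| \leq R$'' is false on the transition band $[\delta,2\delta]$, where $\Phi$ can still be $\Omega(1)$; the fix is to fold $[\delta,2\delta]$ into the anti-concentration region, where $\Pr[|x|\leq 2\delta] \leq 2c_1\delta/\sigma$ absorbs the $O(1)$ value of $q^2$. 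And the coefficient bound deserves more care: the Chebyshev-to-monomial conversion generically gives coefficients of size $2^{O(D)}$ (exponential in the degree), and it is not clear from your argument alone that this matches the claimed $t^{O(k)}$ bound — you would need to either use the scaling $R \gg 1$ to kill the low-order monomial coefficients or choose a construction (as in the cited source) with explicit coefficient control. Your log-power in the degree ($\log^{2/(2-k)}(1/\delta)$ from the clean balance) is also smaller than the stated $\log^{(4+k)/(2-k)}(1/\delta)$, so either your Jackson estimate for the bump is optimistic or the stated exponent is not tight; either way this is worth flagging rather than asserting a match.
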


We will also use the following basic fact about even polynomials.

\begin{lemma}[Structure of Even Polynomials]\label{fact:struct_of_even_poly} For any even univariate polynomial $q()$ of degree $t$, $\norm{v}^{2t}_2q^2(\iprod{x,v}/\norm{v}_2)$ is a polynomial in vector-valued indeterminate $v$ and further, \[\sststile{2t}{v} \Set{ \norm{v}^{2t}_2q^2(\iprod{x,v}/\norm{v}_2) \geq 0}. \]
\end{lemma}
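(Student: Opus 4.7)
The plan is to exploit the fact that $q$ is even: writing $q(y) = \sum_{i=0}^{t/2} a_i y^{2i}$ (so $t$ is without loss of generality even, since an even polynomial only has even-degree monomials), we will show that $\norm{v}_2^{t}\, q(\iprod{x,v}/\norm{v}_2)$ is itself a bona fide polynomial in the vector-valued indeterminate $v$, and then express the claimed quantity as its square.

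Concretely, I will compute
\begin{equation*}
\norm{v}_2^{t}\, q\!\left(\iprod{x,v}/\norm{v}_2\right)
= \sum_{i=0}^{t/2} a_i\, \iprod{x,v}^{2i}\, \norm{v}_2^{\,t-2i}
= \sum_{i=0}^{t/2} a_i\, \iprod{x,v}^{2i}\, (v^{\top} v)^{(t-2i)/2}.
\end{equation*}
Because $q$ is even of degree $t$, every exponent $t-2i$ is a non-negative even integer, so each term $(v^{\top} v)^{(t-2i)/2}$ is a polynomial in $v$. Hence the right-hand side defines a polynomial $p_x(v) \in \R[v]$ of degree $t$ in $v$ (with coefficients depending on $x$).

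Multiplying by another factor of $\norm{v}_2^{t}\, q(\iprod{x,v}/\norm{v}_2)$ immediately gives
\begin{equation*}
\norm{v}_2^{2t}\, q^2\!\left(\iprod{x,v}/\norm{v}_2\right)
\;=\; \bigl(p_x(v)\bigr)^{2},
\end{equation*}
which is manifestly a polynomial in $v$ and is a single square. This directly yields a degree-$2t$ sum-of-squares certificate of non-negativity in $v$, establishing both conclusions of the lemma.

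I expect no genuine obstacle here; the only subtle point is keeping track of parities to confirm that $(t-2i)/2$ is always a non-negative integer, which is automatic from $q$ being even. Accordingly the proof is a short algebraic identity followed by the observation that a polynomial square trivially admits an SoS proof of non-negativity of degree twice the polynomial's own degree.
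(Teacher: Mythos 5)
Your proof is correct and follows essentially the same route as the paper: expand the even polynomial $q$ in even-degree monomials, observe that multiplying by $\norm{v}_2^{t}$ clears the denominators to give a genuine polynomial $p_x(v)$ in $v$ (using that $t-2i$ is always a non-negative even integer), and then note that $\norm{v}_2^{2t}\,q^2(\iprod{x,v}/\norm{v}_2)=p_x(v)^2$ is a single square, hence trivially SoS of degree $2t$.
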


\begin{proof}
The conclusion requires us to prove that $\norm{v}^{2t}_2 q^2(\iprod{x,v}/\norm{v}_2)$ is a sum-of-squares polynomial in vector-valued variable $v$.   
Let $q(z) = \sum_{i\in d} c_i z^{i}$. Since $q(z)$ is even,  
\[
q(z) = \frac{1}{2}(q(z)+q(-z)) = \frac{1}{2} \left( \sum_{i \in [d]} c_i z^{i} + c_i (-z)^{i} \right) =  \sum_{1\leq i \leq t/2} c_{2i} z^{2i} \mper
\]
Thus, in particular, $t$ is even and $q() =r(z^2)$ for some polynomial $r$ of degree $t/2$. Substituting $z = \iprod{x,v}/\norm{v}_2$, we have; $\norm{v}^{2t}_2q^2(\iprod{x,v}/\norm{v}_2) = \norm{v}^{2t}_2 \left(\sum_{i\leq d/2} c_{2i} \frac{\iprod{x,v}^{2i}}{\norm{v}^{2i}_2}\right)^2 = \left(\sum_{i\leq d/2} c_{2i} \norm{v}^{t-2i}_2  \iprod{x,v}^{2i}\right)^2$ which is a sum-of-squares polynomial in $v$. 
\end{proof}

We can now prove Theorem \ref{thm:cert_anti_conc_gaussians}.

\begin{proof}[Proof of Theorem \ref{thm:cert_anti_conc_gaussians}] 
Let $x \sim \cN(0, \Sigma)$. 
We begin with the following polynomial :
\[
p(v) = \norm{v}^{2t}_2 q(\iprod{\Sigma^{\dagger/2} x,v}/\norm{v}_2)
\]
where $q$ is the degree $2t = \Theta\left(\frac{\log^{5}(1/\delta)}{\delta^{2}} \right)$ polynomial from Lemma \ref{lem:core_indicator}. By Fact \ref{fact:struct_of_even_poly}, $p$ is indeed a univariate polynomial in $v$. We will prove that $\cN(0,\Sigma)$ is $2t$-certifiably $(C,\delta)$-subgaussian for some some absolute constant $C > 0$ using the polynomial $p$.

Consider the polynomial $g(x) = x^2 + \delta^2 q^2(x) - \delta^2/4$. If $|x| > \delta$ then, $g(x) \geq 3\delta^2/4 \geq 0$. On the other hand, if $|x| \leq \delta$, using that $q^2(x) = (1\pm\delta)^2\geq \frac{1}{4}$ for every $\delta \leq 1/2$, $g(x) \geq 0$. Thus, $g$ is a univariate, non-negative polynomial. Using Fact \ref{fact:univariate} we thus obtain:

\begin{equation*}
\sststile{2t}{x} \Set{ x^2 + \delta^2 q^2(x) \geq \delta^2/4}\mcom
\end{equation*}
or, equivalently, $x^2 + \delta^2 q^2(x) - \delta^2/4 = s(x)$ for a SoS polynomial $s$ of degree at most $2t$. Since $q$ is even, the LHS is invariant under the transformation $x \rightarrow -x$. Thus, $s$ is an even polynomial.

Substituting $x = \frac{\iprod{\Sigma^{\dagger/2}x,v}}{\norm{v}_2}$, we thus have:  
\begin{equation*}
\frac{\iprod{\Sigma^{\dagger/2}x,v}^2}{\norm{v}^2_2} + \delta^2 q^2\left(\frac{\iprod{\Sigma^{\dagger/2}x,v}}{\norm{v}_2}\right) - \frac{\delta^2}{2}  = s\left(\frac{\iprod{\Sigma^{\dagger/2}x,v}}{\norm{v}_2}\right)
\end{equation*}
Multiplying out by $\norm{v}^{2t}_2$ and using the definition of $p$ gives us the polynomial (in $v$) identity:
\begin{equation*}
\norm{v}^{2t-2}_2 \iprod{\Sigma^{\dagger/2}x,v}^2 + \delta^2 p^2\left(\iprod{\Sigma^{\dagger/2}x,v}\right) - \frac{\delta^2 \norm{v}^{2t}_2}{4}  = \norm{v}^{2t}_2 s\left(\frac{\iprod{\Sigma^{\dagger/2}x,v}}{\norm{v}_2}\right)
\end{equation*}
Since $s$ is an even polynomial, it follows from Fact \ref{fact:struct_of_even_poly}, $\norm{v}^{2t}_2 s\left(\frac{\iprod{\Sigma^{\dagger/2}x,v}}{\norm{v}_2}\right)$ is a sum-of-squares in $v$. Thus, we can conclude:
\begin{equation*}
\sststile{2t}{v} \Set{ \norm{v}^{2t-2}_2\iprod{\Sigma^{\dagger/2}x,v}^2 + \delta^2 p^2\left(\iprod{\Sigma^{\dagger/2}x,v}\right) \geq \frac{\delta^2 \norm{v}^{2t}_2}{4}}
\end{equation*}
which completes the proof of the first inequality in Definition \ref{def:certifiable-anti-concentration-homogenous}. By rotational invariance of Gaussians, $\expecf{x\sim \cN(0,1)}{\iprod{x,v}^{\ell}}$ is just a function of $\norm{v}^{2\ell}_2$. Thus $\norm{v}^{2t}_2\expecf{x\sim \cN(0, \Sigma)}{ q^2\left(\frac{\iprod{\Sigma^{\dagger/2}x,v}}{\norm{v}}\right)}$ is a polynomial in $\norm{v}^2_2$. Since $\Sigma^{\dagger/2} x$ has variance $1$, it follows from the definition of $p$ and $q$ that $\E_{ x \sim D} p^2\paren{\iprod{\Sigma^{\dagger/2} x,v}} \leq C\delta \norm{v}_2^{2t}$, for $C= 10 c_1 c_2$. Therefore, applying Fact \ref{fact:univariate}
\begin{equation*}
\sststile{2t}{\norm{v}^2_2} \Set{\E_{ x \sim D} p^2\paren{\iprod{\Sigma^{\dagger/2} x,v}} \leq C\delta \norm{v}_2^{2t}}
\end{equation*}
\end{proof}

The proof above naturally extends to the uniform distribution on the unit sphere.

\begin{theorem}(Certifiable Anti-Concentration of Gaussians.)
\label{thm:cert_anti_conc_uniform}
Given $0<\delta \leq 1/2$, there exists $k = O\left(\frac{\log^{5}(1/\delta)}{\delta^{2}} \right)$ such that the uniform distribution on the unit sphere is $k$-certifiably $(C, \delta)$-anti-concentrated. 
\end{theorem}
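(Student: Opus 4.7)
The plan is to mimic the Gaussian proof of Theorem~\ref{thm:cert_anti_conc_gaussians} essentially verbatim, since that proof uses only two features of $\cN(0,\Sigma)$: rotational invariance, and the fact that univariate marginals of the whitened distribution satisfy both the anti-concentration and the strictly sub-exponential tail hypotheses of Fact~\ref{lem:core_indicator}. Both hold for the uniform distribution on $\bbS^{d-1}$, with constants independent of $d$.

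Concretely, let $\cU$ denote the uniform distribution on $\bbS^{d-1}\subset\R^d$, so that $\Sigma = \frac{1}{d} I$ and $\Sigma^{\dagger/2} x = \sqrt{d}\, x$. By rotational invariance of $\cU$, for any unit vector $u$ the marginal $Y := \iprod{\Sigma^{\dagger/2}x, u}$ has mean $0$, variance $1$, and density proportional to $(1 - y^2/d)^{(d-3)/2}\,\mathbf{1}[|y|\leq \sqrt{d}]$. From this one reads off (i) the density at $0$ is bounded by an absolute constant $c_1$, hence $\Pr[|Y|\leq \eta] \leq c_1 \eta$, and (ii) $Y$ is subgaussian, $\Pr[|Y|\geq t] \leq \exp(-t^2/c_2)$, for an absolute constant $c_2$. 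Plugging these into Fact~\ref{lem:core_indicator} yields a degree $2t = O\!\left(\log^5(1/\delta)/\delta^2\right)$ even univariate polynomial $q$ with coefficients bounded by $t^{O(1)}$ satisfying $q(y) = 1 \pm \delta$ on $|y|\leq \delta$ and $\E\, q^2(Y) \leq 10\, c_1 c_2\,\delta$.

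With $q$ in hand, the remainder copies the Gaussian argument essentially unchanged. I would set $p(v) := \norm{v}_2^{2t}\, q\!\left(\iprod{\Sigma^{\dagger/2}x, v}/\norm{v}_2\right)$, which is a polynomial in $v$ by Lemma~\ref{fact:struct_of_even_poly}. The first inequality in Definition~\ref{def:certifiable-anti-concentration-homogenous-restated} comes from the distribution-independent SoS identity $x^2 + \delta^2 q^2(x) - \delta^2/4 = s(x)$ for an even SoS polynomial $s$, followed by the substitution $x \mapsto \iprod{\Sigma^{\dagger/2}x,v}/\norm{v}_2$ and multiplication by $\norm{v}_2^{2t}$, exactly as in the Gaussian proof. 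For the second inequality, rotational invariance of $\cU$ implies that $\E_{x\sim\cU}[q^2(\iprod{\Sigma^{\dagger/2}x, v}/\norm{v}_2)]$ depends only on $\norm{v}_2^2$, so $\E\, p^2(\iprod{\Sigma^{\dagger/2}x, v})$ is a univariate polynomial in $\norm{v}_2^2$; the pointwise bound $\E\, q^2(Y) \leq C\delta$ at any unit $v$ then upgrades to the SoS inequality $C\delta\,\norm{v}_2^{2t} - \E\, p^2 \geq 0$ via Fact~\ref{fact:univariate}.

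The only (mild) obstacle I anticipate is verifying the two hypotheses of Fact~\ref{lem:core_indicator} for the marginal $Y$ with constants independent of $d$. For large $d$ this follows from the classical fact that $Y$ is exponentially close in total variation to $\cN(0,1)$, so Gaussian-type anti-concentration and subgaussian tails transfer. For small $d$ the $\mathrm{Beta}$-shaped density is bounded directly at $0$, and subgaussianity is automatic because $Y$ is supported on $[-\sqrt{d}, \sqrt{d}]$. In both regimes the constants $c_1, c_2$ are absolute, so the degree bound supplied by Fact~\ref{lem:core_indicator} matches the Gaussian case, yielding the claimed $k = O(\log^5(1/\delta)/\delta^2)$.
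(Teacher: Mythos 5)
The paper offers no proof of this theorem beyond the one-line remark ``The proof above naturally extends to the uniform distribution on the unit sphere,'' and your proposal supplies exactly that extension, following the Gaussian argument of Theorem~\ref{thm:cert_anti_conc_gaussians} step by step after verifying the hypotheses of Fact~\ref{lem:core_indicator} for the whitened marginal. Your approach is correct and coincides with the paper's; the only point I would tighten is the claim that ``subgaussianity is automatic because $Y$ is supported on $[-\sqrt d,\sqrt d]$,'' since bounded support of growing width alone does not give a dimension-independent subgaussian constant — the cleaner uniform-in-$d$ argument is to bound the density $(1-y^2/d)^{(d-3)/2}\le \exp(-y^2(d-3)/(2d))$ for $d\ge 3$, which yields tails $\exp(-\Omega(t^2))$ with absolute constants, and to handle $d\le 2$ as degenerate cases.
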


Next, we observe that our definition of certifiable anti-concentration is invariant under linear transformations: 

\begin{lemma}(Linear Invariance.) Let $x \sim \cD$ such that $\cD$ is $k$-certifiably $(C, \delta)$-anti-concentrated distribution. Then, for any $A \in \R^{m \times d}$, the random variable $Ax$ has a $k$-certifiably $(C, \delta)$-anti-concentrated distribution.
\end{lemma}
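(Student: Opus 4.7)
The proof is by a direct application of the SoS substitution rule \eqref{eq:sos-substitution} with the linear map $v \mapsto A^\top v$. Two identities drive the argument. First, $\iprod{x, A^\top v} = \iprod{Ax, v} = \iprod{y, v}$, so every occurrence of $\iprod{x, v}$ in the SoS proofs for $x$ becomes $\iprod{y, v}$ after substitution. Second, writing $\Sigma_y = A\Sigma_x A^\top$ for the covariance of $y$, we have $\|\Sigma_x^{1/2} A^\top v\|_2^2 = v^\top A \Sigma_x A^\top v = \|\Sigma_y^{1/2} v\|_2^2$, so the right-hand sides of both inequalities in Definition~\ref{def:certifiable-anti-concentration-homogenous-restated} transform cleanly into their $\Sigma_y$-counterparts.

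Taking $p_y := p$, the substitution applied to condition~(2) for $x$ immediately yields the SoS identity $\sststile{2k}{v} \{\E p_y^2(\iprod{y,v}) \leq C\delta \|\Sigma_y^{1/2} v\|_2^{2k}\}$, which is condition~(2) for $y$. Applied to condition~(1) for $x$, the substitution yields
\[
\sststile{2k}{v} \Set{\|A^\top v\|_2^{2k-2}\iprod{y,v}^2 + \delta^2 p^2(\iprod{y,v}) \geq \tfrac{\delta^2 \|\Sigma_y^{1/2} v\|_2^{2k}}{4}},
\]
which differs from the desired condition~(1) for $y$ only in the appearance of $\|A^\top v\|_2^{2k-2}$ in place of $\|v\|_2^{2k-2}$.

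To close this gap, rescale so that $\|A\|_{\mathrm{op}} \leq 1$ (absorbing the resulting scalar into $p_y$ and into the homogenization of $\Sigma_y$ on both sides). Then $I - AA^\top \succeq 0$, so $\|v\|_2^2 - \|A^\top v\|_2^2 = v^\top(I - AA^\top)v$ is a degree-2 SoS in $v$. The telescoping identity
\[
(\|v\|_2^2)^{k-1} - (\|A^\top v\|_2^2)^{k-1} = (\|v\|_2^2 - \|A^\top v\|_2^2)\sum_{i=0}^{k-2}(\|v\|_2^2)^i (\|A^\top v\|_2^2)^{k-2-i}
\]
then exhibits $\|v\|_2^{2k-2} - \|A^\top v\|_2^{2k-2}$ as a degree-$(2k-2)$ SoS, since each summand on the right is a square. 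Multiplying this SoS by the square $\iprod{y,v}^2$ and adding to the substituted inequality promotes $\|A^\top v\|_2^{2k-2}$ to $\|v\|_2^{2k-2}$ and yields condition~(1) for $y$.

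The main technical obstacle is the scalar bookkeeping in the rescaling step: one must verify that pulling $\|A\|_{\mathrm{op}}^{k-1}$ into $p_y$ and the commensurate power of $\|A\|_{\mathrm{op}}$ into the homogenization of $\|\Sigma_y^{1/2} v\|_2^{2k}$ simultaneously preserves both the lower bound in condition~(1) and the upper bound in condition~(2); this is a routine degree-preserving homogenization, but requires care so that the same rescaled polynomial $p_y$ witnesses both inequalities at once.
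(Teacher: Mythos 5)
Your substitution strategy $v\mapsto A^\top v$ is the right starting point, and you correctly diagnose the mismatch: after substitution the first term of Condition~(1) reads $\Norm{A^\top v}_2^{2k-2}\iprod{y,v}^2$ rather than $\Norm{v}_2^{2k-2}\iprod{y,v}^2$. However, the rescaling step you propose to close this gap does not actually go through, and the paper gives no proof to check against, so the burden is on the argument itself.

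Here is why the rescaling fails. Suppose $\lambda:=\Norm{A}_{\mathrm{op}}>1$ and set $A'=A/\lambda$, $z=A'x$, $y=Ax=\lambda z$. Your telescoping argument legitimately proves that $z$ is $k$-certifiably $(C,\delta)$-anti-concentrated with some witness $p_z$. To conclude for $y$ you must upgrade from $z$ to $\lambda z$. But Condition~(1) is \emph{not} scale-covariant in the literal form stated in Definition~\ref{def:certifiable-anti-concentration-homogenous}: the right-hand side $\frac{\delta^2}{4}\Norm{\Sigma_y^{1/2}v}_2^{2k}=\lambda^{2k}\cdot\frac{\delta^2}{4}\Norm{\Sigma_z^{1/2}v}_2^{2k}$ picks up a factor $\lambda^{2k}$, while the first term on the left, $\Norm{v}_2^{2k-2}\iprod{y,v}^2=\lambda^2\Norm{v}_2^{2k-2}\iprod{z,v}^2$, only picks up $\lambda^2\leq\lambda^{2k}$. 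The only remaining freedom is to inflate $p_y$, but any choice $p_y(u)=\mu\, p_z(u/\lambda)$ with $\mu$ large enough to compensate Condition~(1) forces $\mu^2\geq\lambda^{2k}$, which is exactly the threshold at which Condition~(2) breaks, since $\E p_y^2(\iprod{y,v})=\mu^2\E p_z^2(\iprod{z,v})\leq \mu^2\lambda^{-2k}\cdot C\delta\Norm{\Sigma_y^{1/2}v}_2^{2k}$. There is no slack left to absorb the deficit $\lambda^{2-2k}<1$ in the first term. So the ``routine degree-preserving homogenization'' you flag as a technical obstacle is in fact the crux, and it is not routine: with the stated normalization it cannot be made to work.

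The underlying issue is almost certainly a typo in Definition~\ref{def:certifiable-anti-concentration-homogenous}. The homogenizer in the first term should be $\Norm{\Sigma^{1/2}v}_2^{2k-2}$ rather than $\Norm{v}_2^{2k-2}$; this is the form actually invoked in Lemma~\ref{lem:subset_cov_anti_conc}, where the instantiation for the inliers (covariance $\Pi_*$) is written with $\Norm{\Pi_*v}_2^{k-2}$ in the first term, not $\Norm{v}_2^{k-2}$. With that corrected normalization your substitution argument closes in one step and no rescaling or telescoping is needed: substituting $v\mapsto A^\top v$, the identities $\iprod{x,A^\top v}=\iprod{Ax,v}$ and $\Norm{\Sigma^{1/2}A^\top v}_2^2=v^\top A\Sigma A^\top v=\Norm{\Sigma_y^{1/2}v}_2^2$ transform \emph{every} occurrence, including the homogenizer, into its $\Sigma_y$-counterpart, and both conditions for $y=Ax$ follow immediately with $p_y=p$. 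I would recommend stating the definitional correction explicitly and then giving the two-line substitution proof; trying to make the argument work with the literal $\Norm{v}_2^{2k-2}$ factor leads to the dead end above.
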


As a simple corollary, we observe that certifiable-anti-concentration is preserved under taking linear projections of a distribution.

\begin{corollary}(Closure under taking projections) 
Let $x \sim \cD$ such that $\cD$ is $k$-certifiably $(C, \delta)$-anti-concentrated distribution on $\R^d$. Let $V$ be any subspace of $\R^d$ and let $\Pi_V$ be the associated projection matrix. Then, the random variable $\Pi_V x$ has a $k$-certifiably $(C, \delta)$-anti-concentrated distribution.
\end{corollary}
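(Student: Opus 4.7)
The plan is to derive the corollary as an immediate application of the preceding Linear Invariance Lemma. Since $\Pi_V$ is a matrix in $\R^{d \times d}$, it qualifies as a valid linear transformation $A$ (with $m = d$) in that lemma. Instantiating the lemma with $A = \Pi_V$ gives directly that the random variable $\Pi_V x$ has a $k$-certifiably $(C,\delta)$-anti-concentrated distribution, which is precisely the conclusion.

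In a bit more detail, I expect the Linear Invariance Lemma to be proven by the following recipe: if $p$ is the univariate polynomial certifying $(C,\delta)$-anti-concentration for $\cD$ with covariance $\Sigma$, then $p$ (with the same degree) also serves as the certifying polynomial for the linearly transformed distribution of $Ax$, whose covariance is $A\Sigma A^\top$. The SoS identities in Definition~\ref{def:certifiable-anti-concentration-homogenous-restated} are preserved under the substitution $v \mapsto A^\top v$, since both sides of the two defining inequalities transform in a compatible way: $\iprod{Ax,v} = \iprod{x, A^\top v}$, and $\norm{(A\Sigma A^\top)^{1/2} v}_2^{2k} = \norm{\Sigma^{1/2} A^\top v}_2^{2k}$, so composing SoS proofs via the substitution rule (Rule~\ref{eq:sos-substitution}) yields SoS proofs of the corresponding statements for $Ax$. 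Specializing to $A = \Pi_V$ is then automatic.

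The corollary itself requires essentially no additional work beyond this specialization; the only mild subtlety is verifying that the squared-norm term $\norm{v}^{2k-2}_2$ appearing in the first condition of Definition~\ref{def:certifiable-anti-concentration-homogenous-restated} (which is not covariance-dependent) also behaves correctly under the substitution — but this too follows from the fact that substitution preserves SoS proofs. I do not anticipate any genuine obstacle here; the statement is essentially a packaging of a special case of the Linear Invariance Lemma for the reader's convenience, emphasizing that projections — which will be used when reasoning about the projector-constrained indeterminate $\Pi$ acting on samples — preserve certifiable anti-concentration.
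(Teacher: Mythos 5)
Your proposal is correct and matches the paper's approach exactly: the corollary is stated without proof as an immediate specialization of the preceding Linear Invariance Lemma, and instantiating $A = \Pi_V$ is all that is required.

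One remark on the ``mild subtlety'' you flag and then dismiss: it is real, but it belongs to the proof of the Linear Invariance Lemma, not to the corollary. After the substitution $v \mapsto A^\top v$, the covariance-independent factor $\norm{v}^{2k-2}_2$ in the first defining inequality of Definition~\ref{def:certifiable-anti-concentration-homogenous-restated} becomes $\norm{A^\top v}^{2k-2}_2$, whereas certifiable anti-concentration for $Ax$ asks for $\norm{v}^{2k-2}_2$ on the nose; substitution alone does not convert one into the other. For the corollary this is easy to repair, since $\Pi_V^\top = \Pi_V$ and $\sststile{2}{v}\Set{\norm{\Pi_V v}_2^2 \leq \norm{v}_2^2}$, so one can SoS-upgrade the factor $\norm{\Pi_V v}_2^{2k-2}$ to $\norm{v}_2^{2k-2}$ multiplying the nonnegative quantity $\iprod{\Pi_V x,v}^2$ without affecting the right-hand side. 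So while the corollary ``inherits'' the lemma's conclusion as you say, it is also the cleaner case: contractivity of $\Pi_V$ makes the step you glossed over go through, whereas for general $A$ some rescaling or a slightly modified normalization of the definition is needed.
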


Next, we show that anti-concentration is preserved under sampling, i.e. if $\cD$ is anti-concentrated, then the uniform distribution over $n$ samples from $\cD$ is also anti-concentrated. 

\begin{lemma}(Certifiable Anti-Concentration under Sampling.)
\label{lem:certifiable_anti_conc_sampling}
Let $\cD$ be $k$-certifiably $(c, \delta)$-anti-concentrated sub-exponential distribution such that the certifying polynomial $p$ has coefficients bounded by $d^{O(k)}$. Let $\cS$ be a set of $n = \Omega( (kd\log(d))^{O(k)}/C\delta)$ i.i.d. samples from $\cD$. Then, with probability at least $1-1/d$, the uniform distribution on $\cS$ is $k$-certifiably $(2c,\delta)$-anti-concentrated.
\end{lemma}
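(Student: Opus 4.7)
The plan is to verify both conditions of Definition~\ref{def:certifiable-anti-concentration-homogenous-restated} for the uniform distribution on $\cS$, using the same witness polynomial $p$ that certifies the anti-concentration of $\cD$. The first condition is an SoS identity in $v$ that must hold \emph{pointwise} in $x$ for every point in the support of the distribution. Since each $x_i \in \cS$ lies in the support of $\cD$, the same SoS decomposition transfers verbatim to each sample. The only subtlety is that the right-hand side involves the covariance, and the sample covariance $\hat{\Sigma}$ differs from $\Sigma$. Since $\cD$ is sub-exponential, a standard application of matrix Bernstein to the rank-one random matrices $x_i x_i^{\top}$ shows that $n = \Omega((d\log d)^{O(1)})$ samples suffice to guarantee $\tfrac{1}{2}\Sigma \preceq \hat{\Sigma} \preceq 2\Sigma$ spectrally with probability $1-1/\poly(d)$, and thus $\|\hat{\Sigma}^{1/2}v\|^{2k}$ and $\|\Sigma^{1/2}v\|^{2k}$ agree up to a constant factor as SoS inequalities in $v$, which is absorbed into the constant $2c$.

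The main technical effort is the second (in-expectation) condition. I need to establish
\[
\sststile{2k}{v} \Set{\tfrac{1}{n}\textstyle\sum_{i=1}^n p^2(\iprod{x_i,v}) \leq 2c\delta \|\hat{\Sigma}^{1/2}v\|_2^{2k}},
\]
given the analogous bound for $\E_{\cD}[p^2(\iprod{x,v})]$. The natural route is matrix concentration in a lifted space. After decomposing $p$ into its homogeneous components, I write $p^2(\iprod{x,v}) = \iprod{M(x),\, v^{\otimes k}(v^{\otimes k})^{\top}}$ where $M(x)$ is a symmetric $d^k \times d^k$ matrix whose entries are degree-at-most-$2k$ polynomials in the entries of $x$. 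The hypothesis that the coefficients of $p$ are bounded by $d^{O(k)}$ combined with sub-exponential tails of $\cD$ implies that each entry of $M(x)$ is a sub-exponential random variable with variance at most $d^{O(k)}$. Matrix Bernstein for sub-exponential random matrices of ambient dimension $d^{O(k)}$ then yields $\|\tfrac{1}{n}\sum_i M(x_i) - \E_{\cD} M(x)\|_{\mathrm{op}} \leq c\delta$ with probability at least $1 - 1/d$, provided $n = \Omega((kd\log d)^{O(k)}/(c\delta))$, matching the hypothesis. Applying Fact~\ref{fact:operator_norm} with the lifted test vector $v^{\otimes k}$ translates this operator-norm bound into the SoS inequality
\[
\sststile{2k}{v}\Set{\Bigabs{\tfrac{1}{n}\textstyle\sum_i p^2(\iprod{x_i,v}) - \E_{\cD}[p^2(\iprod{x,v})]} \leq c\delta\, \|v\|^{2k}_2}.
\]
Combining this slack with the population bound and the spectral closeness of $\hat{\Sigma}$ and $\Sigma$ from the previous paragraph yields the desired $2c\delta\,\|\hat{\Sigma}^{1/2}v\|_2^{2k}$ upper bound as an SoS inequality in $v$.

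The main obstacle is setting up the matrix concentration so that the operator-norm variance parameter is $d^{O(k)}$ rather than a much larger quantity: this requires a careful symmetric tensor representation of $p^2(\iprod{x,v})$ and handling of the non-homogeneous parts of $p$ (by grouping like-degree contributions into a single $d^{O(k)} \times d^{O(k)}$ matrix, or equivalently homogenizing via a dummy coordinate). A secondary technical point is the translation from operator-norm concentration to an SoS inequality, which is standard via Fact~\ref{fact:operator_norm}, but the lifted test vector $v^{\otimes k}$ must be accommodated so that the resulting polynomial inequality is genuinely in the degree-$2k$ SoS system in $v$, as required by Definition~\ref{def:certifiable-anti-concentration-homogenous-restated}.
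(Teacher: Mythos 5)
Your core approach matches the paper's: condition~1 of Definition~\ref{def:certifiable-anti-concentration-homogenous-restated} is pointwise in $x$ and so transfers verbatim to each sample, while condition~2 is handled by writing $p^2(\iprod{x,v})$ as a matrix inner product against a lifted test vector and applying concentration of the empirical matrix around its population mean, then translating the operator-norm deviation into a degree-$2k$ SoS inequality via Fact~\ref{fact:operator_norm}. The paper does the concentration step by bounding each of the $d^{O(k)}$ entries of $\E_{\cD'} c_x c_x^{\top} - \E_{\cD} c_x c_x^{\top}$ individually using sub-exponential tail bounds, union bounding, and summing to get a Frobenius-norm bound; you invoke matrix Bernstein directly. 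These are interchangeable and lead to the same $n = (kd\log d)^{O(k)}/(c\delta)$ sample complexity, and you also correctly flag the need to homogenize/group by degree so that the lifted test vector is well-defined, which is exactly the $(1,v)_{\otimes k}$ device the paper uses.

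There is one place where you go beyond the paper and in doing so introduce an error. You worry about the mismatch between the population covariance $\Sigma$ and the empirical covariance $\hat{\Sigma}$ of $\cS$ and claim that a spectral bound $\tfrac12\Sigma \preceq \hat{\Sigma} \preceq 2\Sigma$ makes $\|\hat{\Sigma}^{1/2}v\|_2^{2k}$ and $\|\Sigma^{1/2}v\|_2^{2k}$ ``agree up to a constant factor\ldots absorbed into the constant $2c$.'' This is not correct: a two-sided factor-of-$2$ spectral bound gives $(v^\top\hat{\Sigma}v)^k / (v^\top\Sigma v)^k \in [2^{-k}, 2^k]$, which is exponential in $k$, not a constant. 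In condition~1 this $2^k$ loss would force a degradation of $\delta$ to $\delta\cdot 2^{-k/2}$ rather than landing in the claimed $(2c,\delta)$ regime, and absorbing it into $C$ alone does not help since it multiplies the wrong side. To salvage this route you would need $(1\pm O(1/k))$ multiplicative spectral closeness of $\hat{\Sigma}$ to $\Sigma$ so that the $k$-th power stays $O(1)$; that is achievable within the stated sample budget but must be said. The paper sidesteps the issue entirely by keeping the \emph{population} covariance $\Sigma = \Pi_*$ as the reference in all the certifying inequalities (which is also what is actually invoked downstream, cf.\ Lemma~\ref{lem:subset_cov_anti_conc} where $\Pi_*$ appears on both sides), so the sample-covariance comparison you attempt is unnecessary under the paper's conventions.
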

\begin{proof}
Let $p$ be a degree-$k$ that witnesses anti-concentration of $\cD$. We show that $p$ also witnesses anti-concentration of the uniform distribution on $\cS$, denoted by $\cD'$. First, we observe that property 1 in definition \ref{def:certifiable-anti-concentration-homogenous} is point-wise and continues to hold for $x$ sampled from $\cD'$. 

Further, we know that 
\begin{equation}
\label{eqn:cert_anti_conc_statement}
\sststile{2k}{v} \Set{\E_{ x \sim \cD}\left[ p^2\paren{\iprod{\Sigma^{\dagger/2} x,v}} \right] \leq C\delta \norm{v}_2^{2k}}
\end{equation}
Since $p^2$ is a square polynomial, we can represent it in the monomial basis as $p^2\paren{\iprod{\Sigma^{\dagger/2} x,v}}= \iprod{c(\Sigma^{\dagger/2}x)c(\Sigma^{\dagger/2}x)^{\top}, (1,v)_{\otimes k}  (1,v)^{\top}_{\otimes k}}$, where $c(\Sigma^{\dagger/2}x)$ are the coefficients of $\E_{x\sim \cD} p\paren{\iprod{\Sigma^{\dagger/2} x,v}}$ and $(1,v)_{\otimes k}$ are all monomials of degree at most $d$. For notational convenience let $c_x = c(\Sigma^{\dagger/2}x)$.

Since $\E_{x\sim \cD} p^2\paren{\iprod{\Sigma^{\dagger/2} x,v}} =  \iprod{\E_{x\sim \cD} c_x c_x^{\top}, (1,v)_{\otimes k}  (1,v)^{\top}_{\otimes k}}$, and $\E_{x\sim \cD} c_x c_x^{\top} = \E_{x\sim \cD'} c_x c_x^{\top} + \left(\E_{x\sim \cD} c_x c_x - \E_{x\sim \cD'} c_x c_x^{\top}\right)$, using linearity of expectation and the \eqref{eq:sos-substitution} in \eqref{eqn:cert_anti_conc_statement},

\begin{equation}
\label{eqn:cert_anti_conc_under_sampling}
\begin{split}
& \sststile{2k}{v} \Set{ \iprod{\E_{x\sim \cD'} c_x c_x^{\top}, (1,v)_{\otimes k}  (1,v)^{\top}_{\otimes k}} + \iprod{|\E_{x\sim \cD} c_x c_x^{\top} - \E_{x\sim \cD'} c_x c_x^{\top}|, (1,v)_{\otimes k}  (1,v)^{\top}_{\otimes k}}  \leq C\delta \norm{v}_2^{2k}}\\
& \sststile{2k}{v} \Set{ \iprod{\E_{x\sim \cD'} c_x c_x^{\top}, (1,v)_{\otimes k}  (1,v)^{\top}_{\otimes k}}  \leq \Norm{\E_{x\sim \cD} c_x c_x^{\top} - \E_{x\sim \cD'} c_x c_x^{\top}}_2 \norm{v}^{2k}_2 + C\delta \norm{v}_2^{2k}}
\end{split}
\end{equation}
where the second equation follows from Fact \ref{fact:operator_norm}.
Observe, it suffices to bound each entry, i.e. for all $i, j \in [d^{2k}]$, $\left(\expecf{x\sim\cD'}{(c_x)_i (c_x)_j} - \expecf{x\sim\cD}{(c_x)_i (c_x)_j}\right)^2 \leq (C^2\delta^2/d^{2k})$, with probability at least $1-1/d$. 
Then, using concentration of polynomials of Sub-exponential random variables, for all $i,j \in [d^k]$, 
\begin{equation*}
\begin{split}
 \Pr_{x\sim \cD}\Bigg[ \Big(\expecf{x\sim\cD'}{(c_x)_i (c_x)_j} & -  \expecf{x\sim\cD}{(c_x)_i (c_x)_j}\Big)^2  > \epsilon^2 \Bigg] \\
& \leq \exp\left(-\left(\frac{ \epsilon n }{\expec{x\sim\cD}{(c_x)_i (c_x)_j}^2}\right)^{\frac{1}{2k}} \right)
\end{split}
\end{equation*}
Setting $\epsilon = C \delta/d^{2k}$ and union bounding over all $i$ and $j$, 
\begin{equation*}
\begin{split}
 \Pr\Bigg[ \sum_{i,j\in[d^k]} \Big(\expec{\cD'}{(c_x)_i (c_x)_j} &- \expec{\cD}{(c_x)_i (c_x)_j}\big)^2 > C^2 \delta^2 \Bigg]\\
& \leq d^{2k} \exp\left(-\left(\frac{ n }{d^{O(k)}}\right)^{\frac{1}{2k}} \right)
\end{split}
\end{equation*}
where the bound on $\expec{x\sim\cD}{(c_x)_i (c_x)_j}^2$ follows from our assumption on the coefficients of $p$. 
Setting $n =\Omega( (kd\log(d))^{O(k)}/C\delta )$ suffices to bound the above probability by $1/d$. 
\end{proof}

\subsection{Certifiable Hypercontractivity of Degree $2$ Polynomials.}
We next recall basic facts about certifiable hypercontractivity of degree $2$ polynomials.
\begin{definition}[Certifiable Hypercontractivity]
Let $\cD$ be a distribution on $\R^d$. For an even integer $h$, $\cD$ is said to have $h$-certifiably $C$-hypercontractive degree $2$ polynomials if for the $d \times d$ matrix-valued indeterminate $P$, 
\[
\sststile{h}{P} \Set{
\E_{x\sim \cD} \Paren{x^{\top} P x -\E_{x\sim \cD}x^{\top} P x}^{h} \leq  (Ch)^{h} \Paren{\E_{x\sim \cD} \Paren{x^{\top} P x -\E_{x\sim \cD}x^{\top} P x}^{2}}^{h/2}}\mper\]
\end{definition}

Observe that certifiable hypercontractivity is invariant under linear transformations. 

\begin{fact}[Certifiable Hypercontractivity of Product Distributions, Theorem 1.4~\cite{MR3376479-Kauers14}]
\label{fact:certifiable_hypercontractivity_gaussians}
Let $\cD$ be a $0$ mean, $2h$-wise independent distribution with $C$-subgaussian coordinate marginals. Then, each $\cD$ has $h$-certifiably $O(C)$-hypercontractive degree $2$ polynomials. 
\end{fact}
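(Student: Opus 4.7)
The statement is essentially a direct appeal to Theorem 1.4 of Kauers–O'Donnell–Tan–Zhou, so the plan is to invoke their result after a short reduction. The plan is to write the centered degree-$2$ form as $q(x) = x^\top P x - \E_{x\sim\cD}[x^\top P x]$ and decompose $P = D + A$ into its diagonal part $D$ and its off-diagonal (zero-diagonal) part $A$. Correspondingly, $q(x) = q_A(x) + q_D(x)$ where $q_A(x) = \sum_{i \neq j} A_{ij} x_i x_j$ is multilinear of degree $2$ and $q_D(x) = \sum_i D_{ii}(x_i^2 - \E[x_i^2])$ is a sum of centered independent random variables. An SoS Minkowski/triangle argument (\cref{fact:sos-almost-triangle}) reduces the problem to proving the hypercontractive inequality separately for $q_A$ and $q_D$ inside the SoS proof system in the indeterminate $P$.

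For the multilinear part $q_A$, I would invoke the SoS Bonami/Nelson hypercontractivity result of KOTZ: for every multilinear polynomial of degree $d$ on a product space with $2h$-wise independent coordinates having $C$-subgaussian marginals, there is a degree-$h$ SoS proof (in the matrix-valued coefficient $A$) that $\E[q_A^{2h}] \leq (O(C)h)^{dh}(\E[q_A^2])^h$. The key mechanism here is expanding $\E[q_A^{2h}]$ as a polynomial in $A$, using $2h$-wise independence to kill every monomial in the $x_i$'s whose exponent vector has an odd coordinate, and then packaging the surviving ``perfect matching'' terms into sums of squares of quadratic forms in $A$ via repeated SoS Cauchy–Schwarz (Fact~\ref{fact:pseudo-expectation-cauchy-schwarz}).

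For the diagonal part $q_D$, set $y_i = x_i^2 - \E[x_i^2]$. Since each $x_i$ is $C$-subgaussian, each $y_i$ has $O(C)$-subexponential tails, and in particular satisfies the scalar moment bound $\E y_i^{2h} \leq (O(C)h)^{2h}(\E y_i^2)^h$. Because the $y_i$ are independent (in fact $h$-wise independent suffices, which is implied by $2h$-wise independence of the $x_i$), this lifts to an SoS proof in the coefficients $D_{ii}$ that $\E[q_D^{2h}] \leq (O(C)h)^{h}(\E[q_D^2])^h$ by the same moment-expansion trick used in certifiable subgaussianity of linear forms over $h$-wise independent distributions (cf.\ certifiable subgaussianity for linear polynomials used in \cite{DBLP:journals/corr/abs-1711-11581,KothariSteinhardt17}).

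The main technical obstacle is the combinatorial bookkeeping for $q_A$: after the independence-based vanishing, what remains is a sum indexed by matchings on multi-indices of the $2h$ copies of $A$, and packaging this into an SoS certificate of the form $\sum_S r_S^2 \cdot \prod_{i\in S} f_i$ (in the $A_{ij}$ indeterminates) is precisely the content of the KOTZ theorem; everything else in the argument is routine SoS manipulation using \ref{eq:sos-addition-multiplication-rule} and Fact~\ref{fact:sos-almost-triangle}.
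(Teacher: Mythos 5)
The paper does not prove this statement; it simply cites it to KOTZ's Theorem~1.4, so your proposal is the only ``proof'' on the table. Your overall strategy --- split $P$ into diagonal and off-diagonal parts, handle the bilinear (multilinear) piece with an SoS hypercontractivity argument in the coefficients, handle the diagonal piece as a certifiably-subexponential linear form, and recombine with the SoS almost-triangle inequality plus orthogonality $\E[q_A q_D]=0$ --- is a sensible reduction, and is in the spirit of the literature.

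However, the key step for $q_A$ is flawed as stated, in two ways. First, the claim that ``$2h$-wise independence kill[s] every monomial in the $x_i$'s whose exponent vector has an odd coordinate'' is false for general zero-mean subgaussian marginals: independence plus $\E x_i = 0$ kills monomials in which some variable appears exactly once, but a factor $\E[x_i^3]$ need not vanish, so monomials with odd multiplicity $\geq 3$ survive. The clean ``perfect matching'' structure you invoke is specific to the Boolean cube (and similar symmetric two-point distributions) where $x_i^2 \equiv 1$ collapses all even powers and annihilates all odd moments. For a general subgaussian product distribution the surviving terms are indexed by all even-or-$\geq 2$ multiplicity patterns, weighted by moments $\prod_i \E[x_i^{m_i}]$, and the SoS packaging must absorb those moments via the subgaussian bound $\E[x_i^{m_i}] \leq (O(C) m_i)^{m_i/2}$ rather than just counting matchings. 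Second, and relatedly, KOTZ Theorem~1.4 is stated (and proved) for the uniform distribution on $\{-1,1\}^n$; it does not, as you write, already cover ``product space[s] with $2h$-wise independent coordinates having $C$-subgaussian marginals.'' To substantiate the Fact at the stated generality you must re-run the KOTZ-style SoS argument in the more general moment setting --- the technique transfers, but this is a real step, not a substitution of hypotheses. A minor but confusing index issue: the Fact calls for a degree-$h$ SoS proof bounding $\E q^h$ by $(\E q^2)^{h/2}$, so $q^h$ has $2h$ occurrences of the $x_i$'s and $2h$-wise independence is exactly what is needed to factor; your sketch works with $\E q^{2h}$, which would instead require $4h$-wise independence.
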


As an immediate consequence, Gaussian distribution and linear transformations of uniform distributions on $\{-1,1\}^d$ and $\{-q,-q+1,\ldots,0,\ldots,q\}^{d}$ are $O(1)$-certifiably hypercontractive. 

Next, we recall that uniform distribution on a large enough sample from a certifiably hypercontractive distribution is certifiably hypercontractive. 

\begin{lemma}[Certifiable Hypercontractivity Under Sampling, Lemma 8.3 \cite{bakshi2020outlierrobust}]
\label{lem:cert_hyper_sampling}
Let $\cD$ be a $1$-subgaussian, $2h$-certifiably $c$-hypercontractive distribution over $\mathbb{R}^d$. Let $\cS$ be a set of $n = \Omega( (hd)^{8h})$ i.i.d. samples from $\cD$. Then, with probability at least $1-1/\poly(n)$, the uniform distribution on $\cS$ is $h$-certifiably $(2c)$-hypercontractive.
\end{lemma}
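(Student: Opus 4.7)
The plan is to transfer the SoS certificate of $c$-hypercontractivity from $\cD$ to the uniform distribution on $\cS$ by viewing the certificate as a PSD Gram matrix and using the factor-of-$2$ slack in the constant to absorb the perturbation caused by finite-sample fluctuations in the moments.

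First, I would write the hypothesis as a Gram-matrix identity. The polynomial
\[
T_\cD(P) \;:=\; (ch)^{h}\bigl(\E_\cD y_\cD^2\bigr)^{h/2} \;-\; \E_\cD y_\cD^h,\qquad y_\cD = x^\top P x - \E_\cD\,x^\top P x,
\]
is of degree $h$ in the entries of the matrix indeterminate $P$, and its coefficients are fixed polynomials in the moments of $\cD$ of order at most $2h$. The $2h$-certifiable $c$-hypercontractivity of $\cD$ is equivalent to a PSD matrix $G_\cD$ representing $T_\cD$ in the basis of degree-$(h/2)$ monomials of $P$. The analogous polynomial for the sample,
\[
T_\cS(P) \;:=\; (2ch)^h\bigl(\E_\cS \tilde y_\cS^2\bigr)^{h/2} \;-\; \E_\cS \tilde y_\cS^h,\qquad \tilde y_\cS = x^\top P x - \E_\cS x^\top P x,
\]
has a Gram matrix $G_\cS$ obtained by evaluating the same coefficient functions at the empirical moments of $\cS$ and inflating the overall constant from $c$ to $2c$.

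Second, I would establish moment concentration. Since $\cD$ is $1$-subgaussian, each entry of the degree-$\leq 2h$ moment tensor is the expectation of a polynomial of degree at most $2h$ in subgaussian variables, hence subexponential with parameter depending on $h$. A standard Bernstein-type bound together with a union bound over the $d^{2h}$ tensor entries shows that, with probability at least $1-1/\poly(n)$ for $n = \Omega((hd)^{8h})$, every empirical moment is within $(hd)^{-3h}$ of its population counterpart. Translated to the Gram representation, this gives $\|G_\cS - G_\cD'\|_F \leq (hd)^{-h}$, where $G_\cD'$ is the matrix obtained from $G_\cD$ by replacing $(ch)^h$ with $(2ch)^h$ while still using population moments.

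Third, I would exploit the positivity of the slack term. The difference $G_\cD' - G_\cD$ represents $\bigl((2c)^h - c^h\bigr)h^h (\E_\cD y_\cD^2)^{h/2}$, and $\E_\cD y_\cD^2 = \tfrac{1}{2}\E_{x,x'\sim\cD}(x^\top P x - x'^\top P x')^2$ is manifestly a PSD quadratic form in the vectorization of $P$; any integer power of a PSD quadratic form in $P$ admits a Gram representation in degree-$(h/2)$ monomials via symmetric tensor powers, so $G_\cD' - G_\cD$ is PSD with minimum nonzero eigenvalue at least $c^h(2^h-1)$ in the relevant subspace. Writing $G_\cS = G_\cD + (G_\cD' - G_\cD) + (G_\cS - G_\cD')$, the first summand is PSD by hypothesis, the second is PSD with eigenvalue $\gg (hd)^{-h}$ on its support, and the third is a perturbation of Frobenius norm at most $(hd)^{-h}$ from Step~2. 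The slack dominates the perturbation, hence $G_\cS \succeq 0$, which yields the degree-$h$ SoS certificate of $2c$-hypercontractivity for the empirical distribution.

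The main obstacle will be ensuring that the slack matrix $G_\cD' - G_\cD$ is definite on exactly the directions excited by the perturbation $G_\cS - G_\cD'$. Equivalently, one must check that the perturbation cannot escape into the kernel of the slack, which is the subspace where $\E_\cD y_\cD^2$ degenerates; since in that subspace $T_\cD$ vanishes identically and the sample version inherits the same vanishing up to moment-concentration error, the argument reduces to the non-degenerate subspace where the factor-$h^h(2^h-1)$ slack comfortably dominates the $(hd)^{-3h}$ concentration rate and pins down the $n = \Omega((hd)^{8h})$ sample complexity.
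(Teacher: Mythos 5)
The overall strategy---transfer a Gram-matrix (SoS) certificate from $\cD$ to the empirical distribution by using the factor-of-$2$ slack to absorb a small perturbation coming from moment concentration---is the right idea. But two of your load-bearing steps do not hold up as written.

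First, the eigenvalue claim on the slack is wrong. You assert that $G_{\cD'} - G_\cD$, the Gram matrix of $\bigl((2c)^h - c^h\bigr)h^h\,(\E_\cD y_\cD^2)^{h/2}$, has ``minimum nonzero eigenvalue at least $c^h(2^h-1)$.'' That number is only the scalar multiplier. The actual spectrum of that Gram matrix is governed by the spectrum of the variance quadratic form $Q(P) := \E_\cD y_\cD^2$ in the vectorization of $P$: if $Q$ has small (or zero) eigenvalues in some directions, so does $(Q)^{h/2}$, and the slack is correspondingly weak there. Nothing in the hypotheses of the lemma as stated gives you a uniform lower bound on $Q$ off its kernel --- you would need an explicit isotropy or lower-variance assumption on $\cD$ to make this quantitative, and the proposal never invokes one. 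This is precisely the regime you then label ``the main obstacle'' and resolve by asserting that ``the argument reduces to the non-degenerate subspace,'' but near-degenerate directions (where $Q$ is small but nonzero) are exactly where the $(hd)^{-h}$ perturbation from step~2 is not dominated by the slack. In the exact kernel the situation is actually fine --- if $x^\top P x$ is $\cD$-a.s.\ constant then it is literally constant on the sample, so both sides of the empirical inequality vanish identically --- but that observation does not handle the near-kernel directions, and a rigorous proof has to.

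Second, the degree drop is not addressed. The hypothesis assumes $2h$-certifiable $c$-hypercontractivity, the conclusion delivers only $h$-certifiable $2c$-hypercontractivity, and this is not a cosmetic weakening: one needs the $2h$-th moments of $y = x^\top P x$ (degree $4h$ in $x$) to control the variance of the empirical $h$-th moment tensor, which is what makes the concentration bound in step~2 go through at the claimed sample size. Your proof treats both degrees as $h$ and never explains where the halving comes from or why it suffices for the concentration estimate. Relatedly, the construction ``$G_\cS$ obtained by evaluating the same coefficient functions at the empirical moments'' is not well-defined for a Gram \emph{certificate}: the coefficients of the polynomial $T_\cD$ are moment functions, but a PSD Gram matrix for it has free off-diagonal parameters. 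One can repair this by writing $G_\cS = G_\cD + \Delta$ with $\Delta$ any small-norm representative of the polynomial difference $T_\cS - T_{\cD'}$ and then arguing $G_\cD + (G_{\cD'}-G_\cD) + \Delta \succeq 0$, but that brings you straight back to the unresolved eigenvalue/kernel issue above.
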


\paragraph{Bounded Variance of Degree-$2$ Polynomials.}

Next, we recall some basic facts from~\cite{bakshi2020outlier} about certifiable bounded variance of degree $2$ polynomials of various distributions. 

Recall that we say that a zero mean distribution $\cD$ with covariance $\Sigma$ has certifiably $C$-bounded variance degree $2$ polynomials if $\sststile{2}{Q} \Set{\E_{x \sim \cD} (x^{\top}Qx-\E_{x \sim \cD} x^{\top}Qx)^2 \leq C \Norm{\Sigma^{1/2}Q\Sigma^{1/2}}_F^2}.$ Unlike in the case of hypercontractivity and anti-concentration, for the property of having $C$-bounded variance degree $2$ polynomials, the demand of a degree $2$ SoS proof places no additional restriction. Indeed, since the property is an unconditional quadratic inequality in the indeterminate $Q$, by Fact~\ref{fact:spectral-sos}, there's a degree $2$ SoS proof of it whenever there is \emph{any} proof of it. 

We first observe that $4$-wise independent distributions have $3$-bounded variance degree $2$ polynomials.

\begin{lemma}[Bounded Variance of Degree 2 Polynomials of 4-wise independent distributions]
Let $\cD$ be an isotropic (i.e. $0$ mean, $I$-covariance), 4-wise independent distribution on $\R^d$. Then, $\cD$ has certifiably $3$-bounded variance degree $2$ polynomials. That is, 
\[
\sststile{2}{Q} \Set{\E_{\cD} \Paren{x^{\top}Qx - \E_{\cD} x^{\top}Qx}^2 \leq 3 \Norm{Q}_F^2}\mper
\]
\end{lemma}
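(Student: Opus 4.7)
The plan is to prove this by direct calculation, reducing the SoS proof to an explicit decomposition of the slack polynomial $3\|Q\|_F^2 - \E_{\cD}(x^\top Q x - \tr(Q))^2$ as a sum of squares of linear forms in the entries of $Q$. Since both sides are quadratic in the indeterminate $Q$, a degree-$2$ SoS proof amounts to exhibiting this decomposition explicitly.

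First I would observe that $x^\top Q x$ is a scalar, hence equal to its transpose $x^\top Q^\top x$, so it depends only on the symmetric part $S \defeq \tfrac{1}{2}(Q + Q^\top)$. Since $\|Q\|_F^2 = \|S\|_F^2 + \|\tfrac{1}{2}(Q-Q^\top)\|_F^2$ and the antisymmetric part appears only on the right-hand side, it suffices to produce a degree-$2$ SoS proof for symmetric $S$; the general case follows by adding the trivially-SoS contribution $3\|\tfrac{1}{2}(Q-Q^\top)\|_F^2$. Next, using isotropy ($\E[x_ix_j]=\delta_{ij}$), I would compute $\E_{\cD}[x^\top S x] = \tr(S)$. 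Expanding the variance,
\[
\E_{\cD}(x^\top S x - \tr(S))^2 \;=\; \sum_{i,j,k,l} S_{ij} S_{kl} \bigl(\E[x_ix_jx_kx_l] - \delta_{ij}\delta_{kl}\bigr),
\]
and invoking $4$-wise independence together with zero-mean, isotropic coordinates: the only nonvanishing fourth moments are $\E[x_i^4] =: \mu_4$ (all four indices equal) and $\E[x_i^2 x_j^2] = 1$ for $i \neq j$ (two matched pairs). Collecting the three pairings $\{(i=j,k=l), (i=k,j=l), (i=l,j=k)\}$ and subtracting the $\tr(S)^2$ term, a routine index count yields
\[
\E_{\cD}(x^\top S x - \tr(S))^2 \;=\; 2\|S\|_F^2 \;+\; (\mu_4 - 3)\sum_{i} S_{ii}^2.
\]

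Therefore the slack becomes
\[
3\|S\|_F^2 - \E_{\cD}(x^\top S x - \tr(S))^2 \;=\; \|S\|_F^2 - (\mu_4 - 3) \sum_i S_{ii}^2 \;=\; \sum_{i \neq j} S_{ij}^2 + (4 - \mu_4)\sum_i S_{ii}^2,
\]
which is manifestly a sum of squares of the entries $\{S_{ij}\}$ provided $\mu_4 \leq 4$, i.e.\ as soon as the coordinate marginals have fourth moment at most $4$ (in particular, any $1$-subgaussian marginal, as in Fact~\ref{fact:certifiable_hypercontractivity_gaussians}). Expressed back in $Q$, the entries $S_{ij} = \tfrac{1}{2}(Q_{ij}+Q_{ji})$ are linear in $Q$, so the full decomposition is a sum of squares of degree-$1$ polynomials in $Q$, yielding the degree-$2$ SoS certificate $\sststile{2}{Q}\{\cdot\}$ as claimed.

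The main obstacle, such as it is, is simply the bookkeeping in the index expansion and the clean identification of the coefficient $(\mu_4 - 3)$ of the diagonal term. The verification that the resulting polynomial is SoS is immediate once this closed form is written down, and the passage from symmetric $S$ to general matrix-valued $Q$ is handled by the orthogonal decomposition into symmetric and antisymmetric parts noted at the start.
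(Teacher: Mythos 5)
Your proof is correct and, modulo packaging, follows the same route as the paper's. The paper views $Q$ as a $d^2$-dimensional vector, writes $\E_{\cD}(x^\top Qx - \E_{\cD} x^\top Qx)^2 = Q^\top M Q$ for the PSD matrix $M := \E(xx^\top - I)(xx^\top - I)^\top$, and applies the operator-norm SoS bound (Fact~\ref{fact:operator_norm}) with the assertion $\|M\|_2 \le 3$. Your index-by-index expansion is the explicit spectral computation of $M$: on symmetric inputs, $M$ acts as $\mu_4^{(i)}-1$ on each $(i,i)$ slot and as a $2\times 2$ block $\bigl(\begin{smallmatrix}1&1\\1&1\end{smallmatrix}\bigr)$ of norm $2$ on each $\{(i,j),(j,i)\}$ pair, which is exactly what produces your $2\|S\|_F^2 + (\mu_4-3)\sum_iS_{ii}^2$. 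Your symmetric/antisymmetric split is a correct way to handle general $Q$ and is the same content that the paper absorbs via $\|Q\|_2 = \|Q\|_F$ on the full $d^2$ space.

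Your explicit computation also exposes a real imprecision in the paper's proof that you are right to flag. You observe that the slack $\sum_{i\ne j}S_{ij}^2 + (4-\mu_4)\sum_i S_{ii}^2$ is a sum of squares only when $\mu_4 \le 4$, and this is \emph{not} implied by $4$-wise independence plus isotropy. Equivalently $\|M\|_2 = \max\bigl(2,\max_i(\mu_4^{(i)}-1)\bigr)$, so the paper's blanket claim that $\E(xx^\top-I)(xx^\top-I)^\top \preceq 3\,I\otimes I$ for ``any $4$-wise independent, isotropic distribution'' fails without a coordinate fourth-moment hypothesis. This is harmless for the paper's intended applications (Gaussians, $\pm1$ hypercube, $q$-ary cubes all have $\mu_4 \le 3$, and the sphere is treated separately in the ensuing corollary), but strictly speaking the hypothesis $\max_i \E x_i^4 \le 4$ should be added to the lemma. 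One small caveat on your parenthetical: $1$-subgaussianity of the marginals does not by itself guarantee $\mu_4 \le 4$ (the generic subgaussian moment bound gives only $\E X^4 = O(1)$ with a constant exceeding $4$), so the clean hypothesis is the explicit fourth-moment bound you already wrote rather than subgaussianity.
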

\begin{proof}
By viewing $xx^{\top}$ and $I \in \R^{d \times d}$ as $d^2$ dimensional vectors, and using that $\E_{y \sim \cD} (yy^{\top}-I)(yy^{\top}-I)^{\top} \preceq 3I \otimes I$ for any $4$-wise independent, isotropic distribution, we have:
\begin{multline}
\sststile{2}{Q} \Biggl\{ \E_{\cD} \Paren{x^{\top}Qx - \E_{\cD} x^{\top}Qx}^2 = \E_{\cD} \Iprod{xx^{\top}-I,Q}^2 \leq \Norm{\E_{x \sim \cD}(xx^{\top}-I)(xx^{\top}-I)^{\top}}_2 \Norm{Q}_F^2 \\\leq 3\Norm{I \otimes I}_2 \Norm{Q}_F^2 = 3 \Norm{Q}_F^2 \Biggr\}\mper
\end{multline}




\end{proof}

The uniform distribution on $\sqrt{d}$-radius sphere in $d$ dimensions is not $4$-wise independent. However, the above proof only requires that $\E (y^{\otimes 2}-I)(y^{\otimes 2}-I)^{\top} \preceq C I \otimes I$. For the uniform distribution on the sphere, notice that $i,j,k,\ell$-th entry of this matrix is non-zero iff the indices are in have two repeated indices and in that case, by negative correlation of the $x_i^2$ and $x_j^2$ on the sphere, it holds that $\E x_i^2 x_j^2 \leq 1$. Thus, $\E (y^{\otimes 2}-I)(y^{\otimes 2}-I)^{\top} \preceq 3 I \otimes I$ for $y$ uniformly distribution on the $\sqrt{d}$-radius unit sphere. The above proof thus also yields:

\begin{corollary}
Let $y$ be uniform on $\sqrt{d}$-radius sphere in $d$ dimensions. Then, $y$ has certifiably $3$-bounded variance degree $2$ polynomials. 
\end{corollary}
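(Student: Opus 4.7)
The plan is to mirror the proof for the $4$-wise independent case line for line, with the only change being how one controls the operator norm of the fourth-moment matrix $M \defeq \E_{x \sim \cD}(xx^\top - I)(xx^\top - I)^\top$, viewed as a $d^2 \times d^2$ matrix. First I would rewrite the variance as an inner product exactly as in the preceding lemma: $(x^\top Q x - \E x^\top Q x)^2 = \iprod{xx^\top - I, Q}^2$, so that $\E_{x \sim \cD}(x^\top Q x - \E x^\top Q x)^2 = \iprod{Q, M Q}$, where here $Q$ is treated as a $d^2$-dimensional vector. Then by Fact \ref{fact:operator_norm} (which supplies a degree-$2$ SoS proof), this inner product is at most $\|M\|_2 \cdot \|Q\|_F^2$, and it suffices to show the spectral bound $M \preceq 3 \, I \otimes I$.

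The core step, and where the sphere case diverges from the $4$-wise independent case, is controlling the entries of $M$. The $(ij, k\ell)$ entry equals $\E(x_i x_j - \delta_{ij})(x_k x_\ell - \delta_{k\ell})$, and by coordinate sign symmetry of the sphere it vanishes unless the multiset $\{i,j,k,\ell\}$ has every element appearing an even number of times, so the relevant index patterns are exactly the same as in the $4$-wise independent case. To bound these moments, I would exploit the hard constraint $\sum_i x_i^2 = d$: squaring and taking expectations yields $\sum_{i,j} \E x_i^2 x_j^2 = d^2$, so by symmetry $\E x_i^4 + (d-1) \E x_i^2 x_j^2 = d$ for $j \neq i$, giving the negative correlation bound $\E x_i^2 x_j^2 \leq d/(d-1)$, while a standard computation on $S^{d-1}$ yields $\E x_i^4 = 3d/(d+2) \leq 3$. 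These bounds are precisely the analogues of the $4$-wise independent inequalities, up to the same multiplicative constant of $3$.

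Plugging these entrywise bounds into the structure of $M$, I would then conclude by the same operator-norm argument as in the $4$-wise independent proof: $M$ decomposes as a sum of a scaled identity $I \otimes I$, a scaled swap operator, and a diagonal correction supported on index patterns with $i=j$, and each piece is bounded in operator norm by a constant. Combining these pieces yields $M \preceq 3 \, I \otimes I$, and then Fact \ref{fact:spectral-sos} converts the spectral inequality into the required degree-$2$ SoS proof of $\E(x^\top Q x - \E x^\top Q x)^2 \leq 3 \|Q\|_F^2$. The main obstacle, and the reason the generic $4$-wise independent proof does not apply directly, is the lack of independence across coordinates; this is exactly what the negative correlation identity $\sum_i x_i^2 = d$ resolves, yielding the same constant as in the independent case.
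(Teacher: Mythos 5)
Your proposal follows essentially the same route as the paper's sketch: reduce the variance bound to a spectral bound $\|M\|_2 \leq 3$ on the fourth-moment matrix $M = \E(xx^\top - I)(xx^\top - I)^\top$, use coordinate sign symmetry of the sphere to identify which entries survive, use the exact constraint $\sum_i x_i^2 = d$ to obtain the negative-correlation bound on $\E x_i^2 x_j^2$ (the paper states the sharper $\E x_i^2 x_j^2 \leq 1$, which follows from your identity together with $\E x_i^4 \geq 1$, rather than your looser $d/(d-1)$), and then convert the spectral bound into a degree-$2$ SoS proof via Fact~\ref{fact:operator_norm}. The only difference is that you spell out the Kronecker/swap decomposition of $M$ somewhat more explicitly than the paper's one-line claim, but the argument is the same.
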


\begin{lemma}[Linear Invariance] \label{lem:linear-invariance-bounded-variance}
Let $x$ be a random variable with an isotropic distribution $\cD$ on $\R^d$ with certifiably C-bounded variance degree $2$ polynomials. Let $A \in \R^{d \times d}$ be an arbitrary PSD matrix. Then, the random variable $x'=Ax$ also has certifiably C-bounded variance degree $2$ polynomials. 
\end{lemma}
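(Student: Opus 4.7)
The plan is to reduce the claim for $x' = Ax$ to the assumed bounded-variance property of $x$ via a polynomial change of variables in the matrix indeterminate $Q$. Since $A$ is PSD (hence symmetric), for every $d \times d$ matrix $Q$ we have the polynomial identity $(Ax)^{\top} Q (Ax) = x^{\top} (AQA) x$. Thus, if we set $Q' = AQA$, the degree-$2$ polynomial in $x'$ indexed by $Q$ is exactly the degree-$2$ polynomial in $x$ indexed by $Q'$, and this identification is linear in the indeterminate $Q$.

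Concretely, I would first record that the covariance of $x'$ is $\Sigma = A \cdot I \cdot A^{\top} = A^{2}$, so $\Sigma^{1/2} = A$ and $\Norm{\Sigma^{1/2} Q \Sigma^{1/2}}_F = \Norm{A Q A}_F$. Next, from the hypothesis that $\cD$ has certifiably $C$-bounded variance degree-$2$ polynomials (with $\Sigma = I$ since $\cD$ is isotropic), we have, in the matrix indeterminate $Q'$,
\[
\sststile{2}{Q'} \Set{\E_{x \sim \cD} \bigl(x^{\top} Q' x - \E_{x \sim \cD} x^{\top} Q' x\bigr)^{2} \leq C \Norm{Q'}_F^{2}}.
\]

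Now apply the substitution rule (Rule~\eqref{eq:sos-substitution}) with $Q' \mapsto AQA$. Since $AQA$ is linear in the entries of $Q$, the substitution preserves the proof degree at $2$, yielding
\[
\sststile{2}{Q} \Set{\E_{x \sim \cD} \bigl(x^{\top} (AQA) x - \E_{x \sim \cD} x^{\top} (AQA) x\bigr)^{2} \leq C \Norm{AQA}_F^{2}}.
\]
Using the polynomial identity $x^{\top}(AQA)x = (Ax)^{\top} Q (Ax)$ on the left-hand side and $\Norm{AQA}_F = \Norm{\Sigma^{1/2} Q \Sigma^{1/2}}_F$ on the right-hand side gives exactly the definition of certifiably $C$-bounded variance degree-$2$ polynomials for $x' = Ax$.

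I do not anticipate a real obstacle here; the only care needed is to verify that $A$ being PSD (and hence symmetric) is what lets us collapse $A^{\top} Q A$ to $AQA$ and to identify $\Sigma^{1/2}$ with $A$, and that the substitution is a linear polynomial map in $Q$ so that the SoS degree does not inflate beyond $2$.
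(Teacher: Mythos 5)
Your proof is correct and takes the same approach as the paper, which gives only a one-line sketch noting the identity $(Ax)^{\top}Q(Ax)=x^{\top}(AQA)x$ and that the covariance of $x'$ is $A^2$. You have simply spelled out the details — identifying $\Sigma^{1/2}=A$, invoking the SoS substitution rule with the linear map $Q\mapsto AQA$, and observing that linearity keeps the proof degree at $2$ — all of which is implicit in the paper's argument.
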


\begin{proof}
The proof follows by noting that ${x'}^{\top}Qx' = (Ax)^{\top}Q(Ax) = x^{\top} (AQA) x^{\top}$ and covariance of $x'$ is $A^2$.
\end{proof}

\begin{lemma}[Bounded Variance Under Sampling]
Let $\cD$ be have degree $2$ polynomials with certifiably C-bounded variance and be $8$-certifiably $C$-subgaussian. Let $X$ be an i.i.d. sample from $\cD$ of size $n \geq n_0 = $. Then, the uniform distribution on $X$ has degree $2$ polynomials with certifiable $2C$-bounded variance. 
\end{lemma}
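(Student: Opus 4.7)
The strategy is to reduce the claim to a PSD inequality between two $d^2 \times d^2$ matrices and then invoke matrix concentration. Both sides of the target inequality are quadratic forms in the entries of $Q$: writing
\[
\frac{1}{n}\sum_i (x_i^\top Q x_i - \tr(Q\Sigma))^2 = \mathrm{vec}(Q)^\top M_X\, \mathrm{vec}(Q),\quad M_X := \frac{1}{n}\sum_{i=1}^n \mathrm{vec}(x_ix_i^\top-\Sigma)\mathrm{vec}(x_ix_i^\top-\Sigma)^\top,
\]
and $\Norm{\Sigma^{1/2} Q \Sigma^{1/2}}_F^2 = \mathrm{vec}(Q)^\top (\Sigma \otimes \Sigma)\mathrm{vec}(Q)$, Fact~\ref{fact:spectral-sos} shows that the claimed degree-$2$ SoS certificate exists if and only if the PSD inequality $M_X \preceq 2C \cdot (\Sigma \otimes \Sigma)$ holds. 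The hypothesis supplies the population analogue $M_\cD \preceq C \cdot (\Sigma \otimes \Sigma)$, where $M_\cD := \E_{\cD} M_X$, so it is enough to upper-bound $M_X - M_\cD$ relative to $\Sigma \otimes \Sigma$.

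First I would invoke Lemma~\ref{lem:linear-invariance-bounded-variance} to reduce to the case where $\Sigma$ is a projection $\Pi$, and restrict attention to $\mathrm{range}(\Pi \otimes \Pi)$; since each $x_i$ lies in $\mathrm{range}(\Pi)$ almost surely, the random vectors $\mathrm{vec}(x_ix_i^\top - \Pi)$ live in this subspace, and both $M_X$ and $M_\cD$ are supported there. On this subspace, $\Sigma \otimes \Sigma$ acts as the identity, so the target reduces to the operator-norm deviation bound $\Norm{M_X - M_\cD}_{\mathrm{op}} \leq C$. I would then apply a standard matrix Chebyshev / Bernstein inequality to the centered sum $\frac{1}{n}\sum_i Y_iY_i^\top - M_\cD$ with $Y_i := \mathrm{vec}(x_ix_i^\top - \Pi)$, obtaining the required deviation bound as soon as $n \geq n_0 = (d/C)^{O(1)}$. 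The $8$-certifiable $C$-subgaussianity supplies the necessary fourth-moment control: decomposing a symmetric $Q$ in its eigenbasis and applying Cauchy--Schwarz together with the $8$-subgaussian bound on linear forms $\iprod{x,u}$ yields an estimate of the shape $\E \iprod{Y_i, \mathrm{vec}(Q)}^4 \leq O(C^2)\Norm{Q}_F^4$, which in turn controls the matrix-valued variance and higher moments of $Y_iY_i^\top$ needed to run matrix Bernstein.

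The main technical obstacle is extracting a clean hypercontractive fourth-moment bound on degree-$2$ polynomials from the given $8$-subgaussianity of \emph{linear} polynomials---a scalar analogue of the classical hypercontractivity equivalence---and then lifting that scalar moment bound to a dimension-free matrix concentration statement for $M_X - M_\cD$ relative to $\Sigma \otimes \Sigma$. (Strengthening the hypothesis to $2$-certifiable $C$-hypercontractivity of degree-$2$ polynomials would let one bypass this step entirely by reusing the argument of Lemma~\ref{lem:cert_hyper_sampling} almost verbatim.) Once these ingredients are in place, the factor-$2$ slack in the conclusion comfortably absorbs the concentration error, and the lemma follows.
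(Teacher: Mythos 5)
Your reduction---vectorizing the quadratic form into $M_X \preceq 2C\,(\Sigma\otimes\Sigma)$, using Fact~\ref{fact:spectral-sos} to equate the degree-$2$ SoS certificate with a PSD inequality, applying Lemma~\ref{lem:linear-invariance-bounded-variance} to normalize, and restricting to the range of $\Pi\otimes\Pi$ where $\Sigma\otimes\Sigma$ is the identity---is essentially the same reduction the paper makes: it too reduces, after isotropizing, to bounding $\bigl\lVert\frac{1}{n}\sum_i(x_i^{\otimes 2}-I)(x_i^{\otimes 2}-I)^{\top}-\E(x^{\otimes 2}-I)(x^{\otimes 2}-I)^{\top}\bigr\rVert_2$ by $C$. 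Where you diverge is in the concentration step, and that is also where your proposal has a real gap. You invoke matrix Bernstein, which requires almost-sure bounds on $\lVert Y_iY_i^{\top}\rVert$ that subgaussian tails do not supply without a truncation argument, and you openly flag the missing ingredient of a hypercontractive fourth-moment bound $\E\iprod{Y_i,\mathrm{vec}(Q)}^4 \lesssim C^2\lVert Q\rVert_F^4$ for degree-$2$ forms. The paper never needs such a bound. It stays entry-wise: it applies the $C$-bounded-variance hypothesis to rank-one test matrices $Q=vv^{\top}$ to get $\E\iprod{x,v}^4\le (1+C)\lVert v\rVert_2^4$, decouples $\bigl(\iprod{x,e_i}^2\iprod{x,e_j}^2\iprod{x,e_k}^2\iprod{x,e_\ell}^2\bigr)$ into single-index eighth powers via AM-GM, controls those with the $8$-subgaussianity of linear forms, and then runs scalar Chebyshev plus a union bound over the $d^8$ entries, finishing with the crude bound $\lVert\cdot\rVert_2\le\lVert\cdot\rVert_F$ and $n_0 = d^8/O(C^2)$. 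This is more elementary and avoids the step you correctly identify as the obstacle, at the cost of an extra $\mathrm{poly}(d)$ in the sample complexity compared to what a dimension-free matrix-concentration argument could give. Your closing remark that assuming $2$-certifiable hypercontractivity of degree-$2$ polynomials would allow reusing Lemma~\ref{lem:cert_hyper_sampling} verbatim is correct, and in the constraint system $\cA_{w,\Pi}$ where this lemma is actually applied, that stronger assumption is indeed available.
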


\begin{proof}
Using Lemma~\ref{lem:linear-invariance-bounded-variance}, we can assume that $\cD$ is isotropic. Arguing as in the proof of Lemma~\ref{lem:cert_hyper_sampling}, it is enough to upper-bound the spectral norm $\Norm{\frac{1}{n} (x_i^{\otimes 2}-I)(x_i^{\otimes 2}-I)^{\top}-\E_{x \sim \cD} (x^{\otimes 2}-I)(x^{\otimes 2}-I)^{\top}}_2$ by $C$. We do this below:

By applying certifiable $C$-bounded variance property to $Q = vv^{\top}$ where $e_i$ are standard basis vectors in $\R^d$, we have that $\E (\iprod{x_i,v}^2-\E \iprod{x_i,v}^2)^2 \leq C \Norm{v}_2^4$ and thus, $\E \iprod{x_i,v}^4 \leq (1+C) \Norm{v}_2^4$. By an application of the AM-GM inequality, we know that for every $i,j,k,\ell$, $(\iprod{x,e_i}^2\iprod{x,e_j}^2 \iprod{x,e_k}^2 \iprod{x,e_\ell})^2 \leq \iprod{x,e_i}^8 + \iprod{x,e_j}^8 + \iprod{x,e_k}^{8} + \iprod{x,e_\ell}^8$. Thus, the variance of every entry of the matrix $\E x^{\otimes 4}$ is bounded above by $4(C8)^4 = O(C^4)$. Thus, by Chebyshev's inequality, any given entry of $\frac{1}{n} x_i^{\otimes 4}-\E_{x \sim \cD} x^{\otimes 4}$ is upper-bounded by $O(C^2)d^4/\sqrt{n}$ with probability at least $1-1/(100d^4)$. By a union bound, all entries of this tensor are upper-bounded by $O(C^2)d^4/\sqrt{n}$ with probability at least $0.99$. Thus, the Frobenius norm of this tensor is at most $d^8 O(C^2)/\sqrt{n}$. Since $n \geq n_0 = d^8/O(C^2)$, this bound is at most $C/2$. Thus, we obtain that with probability at least $0.99$, $\Norm{\frac{1}{n} (x_i^{\otimes 2}-I)(x_i^{\otimes 2}-I)^{\top}-\E_{x \sim \cD} (x^{\otimes 2}-I)(x^{\otimes 2}-I)^{\top}}_2 \leq 2\Norm{\frac{1}{n} x_i^{\otimes 4}-\E_{x \sim \cD} x^{\otimes 4}}_F \leq C$. 

\end{proof}

The above three lemmas immediately yield that Gaussian distributions, linear transforms of uniform distribution on unit sphere, discrete product sets such as the Boolean hypercube and any 4-wise independent zero-mean distribution has certifiably $C$-bounded variance degree $2$ polynomials.

  \phantomsection
  \addcontentsline{toc}{section}{References}
  \bibliographystyle{amsalpha}
  \bibliography{bib/scholar,bib/mathreview,bib/dblp,bib/custom}

\newcommand{\etalchar}[1]{$^{#1}$}
\providecommand{\bysame}{\leavevmode\hbox to3em{\hrulefill}\thinspace}
\providecommand{\MR}{\relax\ifhmode\unskip\space\fi MR }
\providecommand{\MRhref}[2]{%
  \href{http://www.ams.org/mathscinet-getitem?mr=#1}{#2}
}
\providecommand{\href}[2]{#2}
\begin{thebibliography}{DKK{\etalchar{+}}16b}

\bibitem[AGGR98]{agrawal1998automatic}
Rakesh Agrawal, Johannes Gehrke, Dimitrios Gunopulos, and Prabhakar Raghavan,
  \emph{Automatic subspace clustering of high dimensional data for data mining
  applications}, Proceedings of the 1998 ACM SIGMOD international conference on
  Management of data, 1998, pp.~94--105.

\bibitem[AY00]{aggarwal2000finding}
Charu~C Aggarwal and Philip~S Yu, \emph{Finding generalized projected clusters
  in high dimensional spaces}, Proceedings of the 2000 ACM SIGMOD international
  conference on Management of data, 2000, pp.~70--81.

\bibitem[BBV08]{DBLP:conf/stoc/BalcanBV08}
Maria{-}Florina Balcan, Avrim Blum, and Santosh Vempala, \emph{A discriminative
  framework for clustering via similarity functions}, {STOC}, {ACM}, 2008,
  pp.~671--680.

\bibitem[BCPV18]{DBLP:journals/corr/abs-1811-12361}
Aditya Bhaskara, Aidao Chen, Aidan Perreault, and Aravindan Vijayaraghavan,
  \emph{Smoothed analysis in unsupervised learning via decoupling}, CoRR
  \textbf{abs/1811.12361} (2018).

\bibitem[BK20a]{bakshi2020listdecodable}
Ainesh Bakshi and Pravesh Kothari, \emph{List-decodable subspace recovery via
  sum-of-squares}, 2020.

\bibitem[BK20b]{bakshi2020outlier}
Ainesh Bakshi and Pravesh Kothari, \emph{Outlier-robust clustering of
  non-spherical mixtures}, arXiv preprint arXiv:2005.02970 (2020).

\bibitem[BK20c]{bakshi2020outlierrobust}
Ainesh Bakshi and Pravesh Kothari, \emph{Outlier-robust clustering of
  non-spherical mixtures}, 2020.

\bibitem[BKS15]{MR3388192-Barak15}
Boaz Barak, Jonathan~A. Kelner, and David Steurer, \emph{Dictionary learning
  and tensor decomposition via the sum-of-squares method [extended abstract]},
  S{TOC}'15---{P}roceedings of the 2015 {ACM} {S}ymposium on {T}heory of
  {C}omputing, ACM, New York, 2015, pp.~143--151. \MR{3388192}

\bibitem[BKS17]{DBLP:conf/stoc/BarakKS17}
Boaz Barak, Pravesh~K. Kothari, and David Steurer, \emph{Quantum entanglement,
  sum of squares, and the log rank conjecture}, {STOC}, {ACM}, 2017,
  pp.~975--988.

\bibitem[BM15]{DBLP:journals/corr/BarakM15}
Boaz Barak and Ankur Moitra, \emph{Tensor prediction, rademacher complexity and
  random 3-xor}, CoRR \textbf{abs/1501.06521} (2015).

\bibitem[BP20]{bakshi2020robust}
Ainesh Bakshi and Adarsh Prasad, \emph{Robust linear regression: Optimal rates
  in polynomial time}, arXiv preprint arXiv:2007.01394 (2020).

\bibitem[BS16]{BarakS16}
Boaz Barak and David Steurer, \emph{Proofs, beliefs, and algorithms through the
  lens of sum-of-squares}, 2016, Lecture notes in preparation, available on
  \url{http://sumofsquares.org}.

\bibitem[CAT{\etalchar{+}}20]{cherapanamjeri2020optimal}
Yeshwanth Cherapanamjeri, Efe Aras, Nilesh Tripuraneni, Michael~I Jordan,
  Nicolas Flammarion, and Peter~L Bartlett, \emph{Optimal robust linear
  regression in nearly linear time}, arXiv preprint arXiv:2007.08137 (2020).

\bibitem[CDG19]{DBLP:conf/soda/0002D019}
Yu~Cheng, Ilias Diakonikolas, and Rong Ge, \emph{High-dimensional robust mean
  estimation in nearly-linear time}, Proceedings of the Thirtieth Annual
  {ACM-SIAM} Symposium on Discrete Algorithms, {SODA} 2019, San Diego,
  California, USA, January 6-9, 2019 (Timothy~M. Chan, ed.), {SIAM}, 2019,
  pp.~2755--2771.

\bibitem[CDGW19]{DBLP:conf/colt/0002D0W19}
Yu~Cheng, Ilias Diakonikolas, Rong Ge, and David~P. Woodruff, \emph{Faster
  algorithms for high-dimensional robust covariance estimation}, Conference on
  Learning Theory, {COLT} 2019, 25-28 June 2019, Phoenix, AZ, {USA} (Alina
  Beygelzimer and Daniel Hsu, eds.), Proceedings of Machine Learning Research,
  vol.~99, {PMLR}, 2019, pp.~727--757.

\bibitem[CFZ99]{cheng1999entropy}
Chun-Hung Cheng, Ada~Waichee Fu, and Yi~Zhang, \emph{Entropy-based subspace
  clustering for mining numerical data}, Proceedings of the fifth ACM SIGKDD
  international conference on Knowledge discovery and data mining, 1999,
  pp.~84--93.

\bibitem[CSV13]{candes2013phaselift}
Emmanuel~J Candes, Thomas Strohmer, and Vladislav Voroninski, \emph{Phaselift:
  Exact and stable signal recovery from magnitude measurements via convex
  programming}, Communications on Pure and Applied Mathematics \textbf{66}
  (2013), no.~8, 1241--1274.

\bibitem[CSV17]{DBLP:conf/stoc/CharikarSV17}
Moses Charikar, Jacob Steinhardt, and Gregory Valiant, \emph{Learning from
  untrusted data}, {STOC}, {ACM}, 2017, pp.~47--60.

\bibitem[DHKK20a]{diakonikolas2020robustly}
Ilias Diakonikolas, Samuel~B Hopkins, Daniel Kane, and Sushrut Karmalkar,
  \emph{Robustly learning any clusterable mixture of gaussians}, arXiv preprint
  arXiv:2005.06417 (2020).

\bibitem[DHKK20b]{DBLP:journals/corr/abs-2005-06417}
Ilias Diakonikolas, Samuel~B. Hopkins, Daniel Kane, and Sushrut Karmalkar,
  \emph{Robustly learning any clusterable mixture of gaussians}, CoRR
  \textbf{abs/2005.06417} (2020).

\bibitem[DKK{\etalchar{+}}16a]{DBLP:conf/focs/DiakonikolasKK016}
Ilias Diakonikolas, Gautam Kamath, Daniel~M. Kane, Jerry Li, Ankur Moitra, and
  Alistair Stewart, \emph{Robust estimators in high dimensions without the
  computational intractability}, {FOCS}, {IEEE} Computer Society, 2016,
  pp.~655--664.

\bibitem[DKK{\etalchar{+}}16b]{DBLP:journals/corr/DiakonikolasKKL16}
Ilias Diakonikolas, Gautam Kamath, Daniel~M. Kane, Jerry~Zheng Li, Ankur
  Moitra, and Alistair Stewart, \emph{Robust estimators in high dimensions
  without the computational intractability}, CoRR \textbf{abs/1604.06443}
  (2016).

\bibitem[DKK{\etalchar{+}}17]{DBLP:conf/icml/DiakonikolasKK017}
Ilias Diakonikolas, Gautam Kamath, Daniel~M. Kane, Jerry Li, Ankur Moitra, and
  Alistair Stewart, \emph{Being robust (in high dimensions) can be practical},
  {ICML}, Proceedings of Machine Learning Research, vol.~70, {PMLR}, 2017,
  pp.~999--1008.

\bibitem[DKK{\etalchar{+}}18]{DBLP:conf/soda/DiakonikolasKK018}
\bysame, \emph{Robustly learning a gaussian: Getting optimal error,
  efficiently}, Proceedings of the Twenty-Ninth Annual {ACM-SIAM} Symposium on
  Discrete Algorithms, {SODA} 2018, New Orleans, LA, USA, January 7-10, 2018
  (Artur Czumaj, ed.), {SIAM}, 2018, pp.~2683--2702.

\bibitem[DKK{\etalchar{+}}19]{DBLP:conf/icml/DiakonikolasKK019}
Ilias Diakonikolas, Gautam Kamath, Daniel Kane, Jerry Li, Jacob Steinhardt, and
  Alistair Stewart, \emph{Sever: {A} robust meta-algorithm for stochastic
  optimization}, Proceedings of the 36th International Conference on Machine
  Learning, {ICML} 2019, 9-15 June 2019, Long Beach, California, {USA}
  (Kamalika Chaudhuri and Ruslan Salakhutdinov, eds.), Proceedings of Machine
  Learning Research, vol.~97, {PMLR}, 2019, pp.~1596--1606.

\bibitem[DKS17]{DBLP:conf/focs/DiakonikolasKS17}
Ilias Diakonikolas, Daniel~M. Kane, and Alistair Stewart, \emph{Statistical
  query lower bounds for robust estimation of high-dimensional gaussians and
  gaussian mixtures}, 58th {IEEE} Annual Symposium on Foundations of Computer
  Science, {FOCS} 2017, Berkeley, CA, USA, October 15-17, 2017, 2017,
  pp.~73--84.

\bibitem[DKS18]{diakonikolas2018list}
Ilias Diakonikolas, Daniel~M Kane, and Alistair Stewart, \emph{List-decodable
  robust mean estimation and learning mixtures of spherical gaussians},
  Proceedings of the 50th Annual ACM SIGACT Symposium on Theory of Computing,
  2018, pp.~1047--1060.

\bibitem[EV13]{elhamifar2013sparse}
Ehsan Elhamifar and Rene Vidal, \emph{Sparse subspace clustering: Algorithm,
  theory, and applications}, IEEE transactions on pattern analysis and machine
  intelligence \textbf{35} (2013), no.~11, 2765--2781.

\bibitem[FB81]{fischler1981random}
Martin~A Fischler and Robert~C Bolles, \emph{Random sample consensus: a
  paradigm for model fitting with applications to image analysis and automated
  cartography}, Communications of the ACM \textbf{24} (1981), no.~6, 381--395.

\bibitem[FKP19]{TCS-086}
Noah Fleming, Pravesh Kothari, and Toniann Pitassi, \emph{Semialgebraic proofs
  and efficient algorithm design}, Foundations and Trends® in Theoretical
  Computer Science \textbf{14} (2019), no.~1-2, 1--221.

\bibitem[GLS81]{MR625550-Grotschel81}
M.~Gr\"otschel, L.~Lov\'asz, and A.~Schrijver, \emph{The ellipsoid method and
  its consequences in combinatorial optimization}, Combinatorica \textbf{1}
  (1981), no.~2, 169--197. \MR{625550}

\bibitem[GNC99]{goil1999mafia}
Sanjay Goil, Harsha Nagesh, and Alok Choudhary, \emph{Mafia: Efficient and
  scalable subspace clustering for very large data sets}, Proceedings of the
  5th ACM SIGKDD International Conference on Knowledge Discovery and Data
  Mining, vol. 443, ACM, 1999, p.~452.

\bibitem[HL17]{HopkinsLi17}
Sam~B. Hopkins and Jerry Li, \emph{Mixture models, robustness, and sum of
  squares proofs}, 2017.

\bibitem[HM13]{DBLP:conf/colt/HardtM13}
Moritz Hardt and Ankur Moitra, \emph{Algorithms and hardness for robust
  subspace recovery}, {COLT}, {JMLR} Workshop and Conference Proceedings,
  vol.~30, JMLR.org, 2013, pp.~354--375.

\bibitem[HSS15]{DBLP:conf/colt/HopkinsSS15}
Samuel~B. Hopkins, Jonathan Shi, and David Steurer, \emph{Tensor principal
  component analysis via sum-of-square proofs}, {COLT}, {JMLR} Workshop and
  Conference Proceedings, vol.~40, JMLR.org, 2015, pp.~956--1006.

\bibitem[KKK19]{DBLP:journals/corr/abs-1905-05679}
Sushrut Karmalkar, Adam~R. Klivans, and Pravesh~K. Kothari,
  \emph{List-decodable linear regression}, CoRR \textbf{abs/1905.05679} (2019).

\bibitem[KKM18]{DBLP:conf/colt/KlivansKM18}
Adam~R. Klivans, Pravesh~K. Kothari, and Raghu Meka, \emph{Efficient algorithms
  for outlier-robust regression}, Conference On Learning Theory, {COLT} 2018,
  Stockholm, Sweden, 6-9 July 2018, 2018, pp.~1420--1430.

\bibitem[KOTZ14a]{DBLP:conf/soda/KauersOTZ14}
Manuel Kauers, Ryan O'Donnell, Li{-}Yang Tan, and Yuan Zhou,
  \emph{Hypercontractive inequalities via sos, and the frankl-r{\"{o}}dl
  graph}, {SODA}, {SIAM}, 2014, pp.~1644--1658.

\bibitem[KOTZ14b]{MR3376479-Kauers14}
Manuel Kauers, Ryan O'Donnell, Li-Yang Tan, and Yuan Zhou,
  \emph{Hypercontractive inequalities via {SOS}, and the {F}rankl-{R}\"odl
  graph}, Proceedings of the {T}wenty-{F}ifth {A}nnual {ACM}-{SIAM} {S}ymposium
  on {D}iscrete {A}lgorithms, ACM, New York, 2014, pp.~1644--1658. \MR{3376479}

\bibitem[KS17a]{KothariSteinhardt17}
Pravesh~K. Kothari and Jacob Steinhardt, \emph{Better agnostic clustering via
  relaxed tensor norms}, 2017.

\bibitem[KS17b]{DBLP:journals/corr/abs-1711-11581}
Pravesh~K. Kothari and David Steurer, \emph{Outlier-robust moment-estimation
  via sum-of-squares}, CoRR \textbf{abs/1711.11581} (2017).

\bibitem[Las01]{MR1846160-Lasserre01}
Jean~B. Lasserre, \emph{New positive semidefinite relaxations for nonconvex
  quadratic programs}, Advances in convex analysis and global optimization
  ({P}ythagorion, 2000), Nonconvex Optim. Appl., vol.~54, Kluwer Acad. Publ.,
  Dordrecht, 2001, pp.~319--331. \MR{1846160}

\bibitem[Lau09]{MR2500468-Laurent09}
Monique Laurent, \emph{Sums of squares, moment matrices and optimization over
  polynomials}, Emerging applications of algebraic geometry, IMA Vol. Math.
  Appl., vol. 149, Springer, New York, 2009, pp.~157--270. \MR{2500468}

\bibitem[LLY{\etalchar{+}}12]{liu2012robust}
Guangcan Liu, Zhouchen Lin, Shuicheng Yan, Ju~Sun, Yong Yu, and Yi~Ma,
  \emph{Robust recovery of subspace structures by low-rank representation},
  IEEE transactions on pattern analysis and machine intelligence \textbf{35}
  (2012), no.~1, 171--184.

\bibitem[LM18]{lerman2018overview}
Gilad Lerman and Tyler Maunu, \emph{An overview of robust subspace recovery},
  Proceedings of the IEEE \textbf{106} (2018), no.~8, 1380--1410.

\bibitem[LMZ{\etalchar{+}}12]{lu2012robust}
Can-Yi Lu, Hai Min, Zhong-Qiu Zhao, Lin Zhu, De-Shuang Huang, and Shuicheng
  Yan, \emph{Robust and efficient subspace segmentation via least squares
  regression}, European conference on computer vision, Springer, 2012,
  pp.~347--360.

\bibitem[LRV16]{DBLP:conf/focs/LaiRV16}
Kevin~A. Lai, Anup~B. Rao, and Santosh Vempala, \emph{Agnostic estimation of
  mean and covariance}, {FOCS}, {IEEE} Computer Society, 2016, pp.~665--674.

\bibitem[MSS16]{DBLP:journals/corr/MaSS16}
Tengyu Ma, Jonathan Shi, and David Steurer, \emph{Polynomial-time tensor
  decompositions with sum-of-squares}, CoRR \textbf{abs/1610.01980} (2016).

\bibitem[Nes00]{MR1748764-Nesterov00}
Yurii Nesterov, \emph{Squared functional systems and optimization problems},
  High performance optimization, Appl. Optim., vol.~33, Kluwer Acad. Publ.,
  Dordrecht, 2000, pp.~405--440. \MR{1748764}

\bibitem[Par00]{parrilo2000structured}
Pablo~A Parrilo, \emph{Structured semidefinite programs and semialgebraic
  geometry methods in robustness and optimization}, Ph.D. thesis, California
  Institute of Technology, 2000.

\bibitem[PHL04]{parsons2004subspace}
Lance Parsons, Ehtesham Haque, and Huan Liu, \emph{Subspace clustering for high
  dimensional data: a review}, Acm Sigkdd Explorations Newsletter \textbf{6}
  (2004), no.~1, 90--105.

\bibitem[PJAM02]{procopiuc2002monte}
Cecilia~M Procopiuc, Michael Jones, Pankaj~K Agarwal, and TM~Murali, \emph{A
  monte carlo algorithm for fast projective clustering}, Proceedings of the
  2002 ACM SIGMOD international conference on Management of data, 2002,
  pp.~418--427.

\bibitem[PS17]{DBLP:conf/colt/PotechinS17}
Aaron Potechin and David Steurer, \emph{Exact tensor completion with
  sum-of-squares}, {COLT}, Proceedings of Machine Learning Research, vol.~65,
  {PMLR}, 2017, pp.~1619--1673.

\bibitem[PSBR18]{DBLP:journals/corr/abs-1802-06485}
Adarsh Prasad, Arun~Sai Suggala, Sivaraman Balakrishnan, and Pradeep Ravikumar,
  \emph{Robust estimation via robust gradient estimation}, CoRR
  \textbf{abs/1802.06485} (2018).

\bibitem[RY20a]{raghavendra2020listreg}
Prasad Raghavendra and Morris Yau, \emph{List decodable learning via sum of
  squares}, Proceedings of the Fourteenth Annual ACM-SIAM Symposium on Discrete
  Algorithms, SIAM, 2020, pp.~161--180.

\bibitem[RY20b]{raghavendra2020list}
Prasad Raghavendra and Morris Yau, \emph{List decodable subspace recovery},
  2020.

\bibitem[SCV17]{DBLP:journals/corr/SteinhardtCV17}
Jacob Steinhardt, Moses Charikar, and Gregory Valiant, \emph{Resilience: {A}
  criterion for learning in the presence of arbitrary outliers}, CoRR
  \textbf{abs/1703.04940} (2017).

\bibitem[SEC{\etalchar{+}}14]{soltanolkotabi2014robust}
Mahdi Soltanolkotabi, Ehsan Elhamifar, Emmanuel~J Candes, et~al., \emph{Robust
  subspace clustering}, The Annals of Statistics \textbf{42} (2014), no.~2,
  669--699.

\bibitem[Sho87]{MR939596-Shor87}
N.~Z. Shor, \emph{Quadratic optimization problems}, Izv. Akad. Nauk SSSR Tekhn.
  Kibernet. (1987), no.~1, 128--139, 222. \MR{939596}

\bibitem[TV17]{tsakiris2017hyperplane}
Manolis~C Tsakiris and Ren{\'e} Vidal, \emph{Hyperplane clustering via dual
  principal component pursuit}, arXiv preprint arXiv:1706.01604 (2017).

\bibitem[TV18]{tsakiris2018dual}
\bysame, \emph{Dual principal component pursuit}, The Journal of Machine
  Learning Research \textbf{19} (2018), no.~1, 684--732.

\bibitem[VMS05]{vidal2005generalized}
Rene Vidal, Yi~Ma, and Shankar Sastry, \emph{Generalized principal component
  analysis (gpca)}, IEEE transactions on pattern analysis and machine
  intelligence \textbf{27} (2005), no.~12, 1945--1959.

\bibitem[VN18]{vaswani2018static}
Namrata Vaswani and Praneeth Narayanamurthy, \emph{Static and dynamic robust
  pca and matrix completion: A review}, Proceedings of the IEEE \textbf{106}
  (2018), no.~8, 1359--1379.

\bibitem[ZJS20]{zhu2020robust}
Banghua Zhu, Jiantao Jiao, and Jacob Steinhardt, \emph{Robust estimation via
  generalized quasi-gradients}, arXiv preprint arXiv:2005.14073 (2020).

\end{thebibliography}

\appendix
\clearpage

\section{Exact Recovery via Anti-Concentration}
\label{sec:exact_recovery}

\begin{proposition} \label{prop:intro-high-entropy}
Let $\mu$ be a distribution on $(w,\Pi)$ satisfying $\cA_{w,\Pi}$ such that $\Norm{\sum_{i =1}^n\E_{\mu} w_i}_2^2$ is minimized. Then, $\E_{\mu}\left[ \sum_{i \in \cI} w_i \right]\geq \alpha |\cI|$.
\end{proposition}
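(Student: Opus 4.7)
The plan is to use a standard first-order optimality argument by perturbing $\mu$ towards the ``planted'' distribution supported on the true inliers. Let $w^\star \in \{0,1\}^n$ be the indicator vector of $\cI$ and let $\mu_\star$ be the point-mass distribution on $(w^\star, \Pi_\star)$. First I would verify that $\mu_\star$ satisfies $\cA_{w,\Pi}$: the constraint $\sum_i w_i = \alpha n$ holds since $|\cI| = \alpha n$; the identity $w^\star_i \Pi_\star x_i = w^\star_i x_i$ follows because for $i \in \cI$ the sample $x_i$ lies in the range of $\Pi_\star$; $(w^\star_i)^2 = w^\star_i$ and $\Pi_\star^2 = \Pi_\star$ are immediate; and the certifiable hypercontractivity and bounded variance constraints for the uniform distribution on $\cI$ hold with high probability over the sample draw by \cref{lem:cert_hyper_sampling} (and the analogous sampling lemma for bounded variance), which we may assume.

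Next I would consider the convex combination $\mu_\lambda = (1-\lambda)\mu + \lambda \mu_\star$ for $\lambda \in [0,1]$. Since $\cA_{w,\Pi}$ is a system of polynomial \emph{equality} and \emph{inequality} constraints satisfied pointwise on the support of both $\mu$ and $\mu_\star$, every $\mu_\lambda$ also satisfies $\cA_{w,\Pi}$. Let $u = \E_\mu w \in \R^n$. Then
\[
\bignorm{\E_{\mu_\lambda} w}_2^2 = \bignorm{(1-\lambda) u + \lambda w^\star}_2^2 = (1-\lambda)^2 \|u\|_2^2 + 2\lambda(1-\lambda)\langle u, w^\star\rangle + \lambda^2 \|w^\star\|_2^2.
\]
Since $\mu$ minimizes $\|\E_\mu w\|_2^2$ over distributions satisfying $\cA_{w,\Pi}$, the derivative of the above at $\lambda = 0$ must be nonnegative, which yields
\[
\langle u, w^\star\rangle \ \geq\ \|u\|_2^2.
\]

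Finally, I would unpack both sides. The left side is exactly $\sum_{i \in \cI} \E_\mu[w_i] = \E_\mu[\sum_{i \in \cI} w_i]$. For the right side, by Cauchy--Schwarz applied to $u \in \R^n_{\geq 0}$ with $\sum_i u_i = \alpha n$ (from the constraint $\sum_i w_i = \alpha n$),
\[
\|u\|_2^2 \ \geq\ \frac{(\sum_i u_i)^2}{n} \ =\ \frac{(\alpha n)^2}{n} \ =\ \alpha^2 n \ =\ \alpha |\cI|.
\]
Combining the two displays gives $\E_\mu[\sum_{i \in \cI} w_i] \geq \alpha |\cI|$, as desired. There is no real obstacle here; the only subtlety is confirming that $\mu_\star$ is a feasible solution (in particular that the uniform distribution on $\cI$ inherits the analytic properties of $\cD$ with high probability), which is exactly what the sampling lemmas referenced earlier provide.
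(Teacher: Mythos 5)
Your proof is correct and follows the same standard first-order optimality argument that the paper defers to (via the reference to Lemma~4.3 of~\cite{DBLP:journals/corr/abs-1905-05679} for the pseudo-distribution analogue, Fact~\ref{fact:high-entropy-pseudo-distributions}); the paper states Proposition~\ref{prop:intro-high-entropy} without its own proof. The key steps all check out: feasibility of the point mass on $(w^\star,\Pi_\star)$, convexity of the feasible set of distributions and of the objective $\mu \mapsto \|\E_\mu w\|_2^2$, the first-order inequality $\langle u, w^\star\rangle \geq \|u\|_2^2$, and the Cauchy--Schwarz lower bound $\|u\|_2^2 \geq (\sum_i u_i)^2/n = \alpha^2 n = \alpha|\cI|$ using $\sum_i u_i = \alpha n$ from the constraint. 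One small note: the objective as literally written in the proposition, $\Norm{\sum_{i=1}^n \E_\mu w_i}_2^2$, is the norm of a scalar and would be a constant by the constraint $\sum_i w_i = \alpha n$; you correctly read it as $\|\E_\mu w\|_2^2$ (consistent with the objective $\|\pE[w]\|_2$ used in the algorithms), so that is a typo in the statement rather than an issue with your argument.
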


\begin{proposition}[High Intersection Implies Same Subspace (TV Distance to Parameter Distance)] \label{prop:high-intersection-same-subspace}
Let $\cS$ be a sample of size $n$ from $\sub_{\cD}(\alpha, \Pi_*)$ where $\Pi_*$ is a projection matrix of rank $k$ such that the inliers $\cI$ are $\alpha$-anti-concentrated. Let $T \subseteq \cS$ be a subset of size $\alpha n$ such that $\Pi x = x$ for every $x \in T$ for some projection matrix $\Pi$ of rank $r$. Suppose $|T \subseteq \cI| \geq \alpha |\cI|$. Then, $\Pi = \Pi^*$.
\end{proposition}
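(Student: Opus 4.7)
The plan is to use certifiable anti-concentration of $\cI$ on directions lying in $V_* \cap V^{\perp}$, where $V = \mathrm{Im}(\Pi)$ and $V_* = \mathrm{Im}(\Pi_*)$, in order to force $V_* \subseteq V$. Matching ranks $r = k$ will then imply $V = V_*$ and hence $\Pi = \Pi_*$.

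First I would record the relevant geometry. Since $\Pi x = x$ for every $x \in T$, the set $T$ is contained in $V$, and hence each of the $\alpha|\cI|$ points of $T \cap \cI$ lies in $V \cap V_*$. I would then argue $V_* \subseteq V$ by contradiction: suppose some nonzero $v \in V_* \cap V^{\perp}$ exists. Because $V^{\perp} = \ker(\Pi)$ for a projector, $\Pi v = 0$, so for every $x \in V$,
\[
  \iprod{x, v} = \iprod{\Pi x, v} = \iprod{x, \Pi v} = 0.
\]
In particular, $\iprod{x, v} = 0$ for all $x \in T \cap \cI$. On the other hand, $v \in V_*$ gives $\Pi_* v = v$, hence $\Norm{\Pi_* v}_2 = \Norm{v}_2 > 0$.

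I would now invoke the two inequalities of Definition~\ref{def:certifiable-anti-concentration-homogenous}, interpreting ``$\alpha$-anti-concentrated'' as $(C, \delta)$-anti-concentrated with $\delta$ chosen below. Because a degree-$2k$ SoS proof in $v$ of a nonnegativity is in particular a pointwise statement, condition (1) evaluated at each $x \in T \cap \cI$ (where $\iprod{x, v} = 0$) collapses, after cancelling $\delta^2$, to the uniform lower bound $p(0)^2 \geq \tfrac{1}{4}\Norm{v}_2^{2k}$. Averaging this bound on $T \cap \cI \subseteq \cI$ and comparing against condition (2) yields
\[
  \frac{|T \cap \cI|}{|\cI|} \cdot \frac{1}{4} \Norm{v}_2^{2k} \;\leq\; \frac{1}{|\cI|}\sum_{x \in \cI} p^2(\iprod{x,v}) \;\leq\; C\delta \Norm{v}_2^{2k}.
\]
Hence $|T \cap \cI|/|\cI| \leq 4C\delta$. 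Choosing $\delta = \alpha/(8C)$ gives the bound $\alpha/2$, contradicting the hypothesis $|T \cap \cI| \geq \alpha|\cI|$. Therefore $V_* \subseteq V$, and combined with the rank equality $r = k$ (implicit in the subspace-recovery setup and enforced in the SoS-ized version via the constraint $\Tr(\Pi) = k$), $V = V_*$ and thus $\Pi = \Pi_*$.

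The principal obstacle is the anti-concentration step: turning a polynomial/SoS inequality into a concrete fraction-of-inliers bound. The key observation is that exact vanishing of $\iprod{x,v}$ on $T \cap \cI$ eliminates the quadratic term from condition (1), leaving a uniform positive lower bound on $p(0)^2$ which directly conflicts with the averaged upper bound from condition (2) once $\delta$ is small. The surrounding geometric reductions (projector kernel equals image-orthogonal complement, inclusion plus equal dimensions implies equality) are routine.
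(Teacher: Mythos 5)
There is a genuine gap in your geometric reduction. You argue $V_* \subseteq V$ by supposing a nonzero $v \in V_* \cap V^\perp$ exists, but this is not the negation of $V_* \subseteq V$. It is entirely possible to have $V_* \not\subseteq V$ while $V_* \cap V^\perp = \{0\}$: take $d=2$, $V=\mathrm{span}(e_1)$, $V_* = \mathrm{span}(e_1+e_2)$. Then $V^\perp = \mathrm{span}(e_2)$ and $V_* \cap V^\perp = \{0\}$, yet $V_* \neq V$. (Equivalently, $V_* \cap V^\perp = \{0\}$ together with $\dim V_* = \dim V$ only says $V_* \oplus V^\perp = \R^d$, which is weaker than $V_* = V$.) The correct negation of $V_* \subseteq V$ is $V^\perp \not\subseteq V_*^\perp$, i.e.\ there exists $v \in V^\perp$ with $\Pi_* v \neq 0$. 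Fortunately the rest of your argument only needs $\Norm{\Pi_* v}_2 > 0$, not the stronger $\Pi_* v = v$: the anti-concentration inequalities are written in terms of $\Norm{\Sigma^{1/2} v}_2 = \Norm{\Pi_* v}_2$, so replacing your chosen $v$ with any $v \in V^\perp \setminus V_*^\perp$ makes the averaging step go through unchanged, forcing $\Pi_* v = 0$ for all $v \in V^\perp$ and hence $V_* \subseteq V$.

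Beyond this repair, note that your route diverges from the paper's. The paper's proof works directly: it takes an orthonormal basis $\{v_i\}$ of $V^\perp$, uses plain (non-certifiable) anti-concentration of $\cI$ to conclude $v_i^{\top}\Pi_* v_i = 0$ for each $i$, sums to get $\tr(\Pi_*(I-\Pi)) = 0$, and finishes via a Cauchy--Schwarz equality argument together with $\tr(\Pi) = \tr(\Pi_*)$. Your proof instead invokes \emph{certifiable} anti-concentration (Definition~\ref{def:certifiable-anti-concentration-homogenous}), specializing the pointwise form of condition~(1) to the event $\iprod{x,v}=0$ and averaging against condition~(2), which is a strictly stronger hypothesis than the proposition's ``$\alpha$-anti-concentrated'' phrasing requires. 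The paper's approach is both lighter (no SoS machinery, no contradiction hypothesis) and avoids the set-theoretic pitfall above, since it never needs to produce a single ``bad'' vector in $V_*$.
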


\begin{proof}
Let $I-\Pi = \sum_{i =1}^{d-r} v_i v_i^{\top}$ for an orthonormal set of vectors $v_i$s. Since $\Pi x = x$ for every $x \in T$, $\iprod{x,v_i} = 0$ for every $x \in T$. Thus, $\Pr_{x \sim \cI} [ \iprod{x,v_i}=0] \geq |T\cap \cI|/|\cI| \geq \alpha$. Since $\cI$ is $\alpha$-anti-concentrated, this must mean that $v_i^{\top} \Pi_* v_i = 0$.

Thus, $\sum_{i} v_i^{\top}\Pi_*v_i = \tr(\Pi_* \sum_{i =1}^{d-r} v_i v_i^{\top}) = \tr(\Pi_* (I-\Pi))= 0$. Or $\tr(\Pi_*) = \tr(\Pi \cdot \Pi_*)$. On the other hand, by Cauchy-Schwarz inequality, $\tr(\Pi \cdot \Pi_*) \leq \sqrt{\tr(\Pi^2) \tr((\Pi_*)^2)} = \tr(\Pi)$ with equality iff $\Pi = \Pi_*$. Here, we used the facts that $\Pi = \Pi^2, (\Pi_*)^2 = \Pi_*$ and that $\tr(\Pi) = \tr(\Pi_*) = r$. 
Thus, $\Pi = \Pi^*$.
\end{proof}

We can use the lemma above to give an \emph{inefficient} algorithm  for list-decodable subspace recovery.
\begin{lemma}[Identifiability for Anti-Concentrated inliers] \label{lem:identifiability-via-anti-concentration}
Let $\cS$ be a sample drawn according to $\sub_{\cD}(\alpha,\Pi_*)$ such that the inliers $\cI$ are $\delta$-anti-concentrated for $\delta < \alpha$. Then, there is an (inefficient) randomized algorithm that finds a list $L$ of projectors of rank $k$ of size $20/(\alpha-\delta)$ such that $\Pi^* \in L$ with probability at least $0.99$.
\end{lemma}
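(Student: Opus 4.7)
The plan is to combine the high-entropy witness distribution from Proposition~A.1 with the uniqueness statement of Proposition~A.2 in a voting-style argument that is standard for list-decoding. First I would invoke Proposition~A.1 to produce (information-theoretically, hence inefficiently) a probability distribution $\mu$ supported on solutions of $\cA_{w,\Pi}$ minimising $\|\sum_i \E_\mu w_i\|_2^2$; such a $\mu$ exists because the Dirac mass at the truthful point $(w,\Pi)=(\mathbf{1}_\cI,\Pi^*)$ is feasible, and the conclusion of Proposition~A.1 then reads $\E_\mu[w(\cI)]\geq \alpha$ after dividing by $|\cI|=\alpha n$, where $w(\cI):=\frac{1}{|\cI|}\sum_{i\in \cI}w_i$.

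Next I would establish a pointwise dichotomy on the support of $\mu$. For any $(w,\Pi)$ in that support, the constraints $w_i^2=w_i$ and $\sum_i w_i=\alpha n$ force $w=\mathbf{1}_{T_w}$ for some $T_w\subseteq\cS$ of size exactly $\alpha n$, and $w_i(\Pi x_i-x_i)=0$ combined with $\Pi^2=\Pi$ puts every $x_i\in T_w$ in the rank-$k$ range of $\Pi$. Applying Proposition~A.2 to the pair $(T_w,\Pi)$ with anti-concentration parameter $\delta$ (the hypothesis ``$|T_w\cap\cI|\geq \delta|\cI|$'' is precisely $w(\cI)\geq\delta$), we conclude that $\Pi=\Pi^*$ whenever $w(\cI)\geq\delta$; contrapositively, on the event $\{\Pi\neq\Pi^*\}$ we must have $w(\cI)<\delta$. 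Taking expectations over $\mu$ therefore yields
\[
\alpha \;\leq\; \E_\mu[w(\cI)] \;\leq\; \Pr_\mu[\Pi=\Pi^*]\cdot 1 \;+\; \Pr_\mu[\Pi\neq\Pi^*]\cdot \delta \;\leq\; \Pr_\mu[\Pi=\Pi^*]+\delta,
\]
which rearranges to $\Pr_\mu[\Pi=\Pi^*]\geq \alpha-\delta$.

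The algorithm is then simply to draw $N=20/(\alpha-\delta)$ independent samples $(w^{(j)},\Pi^{(j)})\sim \mu$ and output $L=\{\Pi^{(j)}\}_{j\leq N}$, and a union bound gives $\Pr[\Pi^*\notin L]\leq (1-(\alpha-\delta))^N\leq e^{-20}<0.01$. The only real content is the pointwise reduction, and the main (mild) obstacle is the reparametrisation of Proposition~A.2 from ``anti-concentration parameter equals intersection fraction equals $\alpha$'' as literally stated to ``anti-concentration parameter and intersection fraction both equal to $\delta$''. An inspection of its proof shows that the two roles of $\alpha$ there are decoupled and that any common value $\beta\leq\alpha$ for which $\cI$ is $\beta$-anti-concentrated suffices; no new ideas are needed beyond the two propositions already in place.
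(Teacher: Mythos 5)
Your proposal is correct and follows essentially the same route as the paper: combine Proposition~A.1 (the high-entropy condition gives $\E_\mu[w(\cI)]\ge\alpha$) with Proposition~A.2 (sufficient inlier intersection forces $\Pi=\Pi_*$), then sample $O(1/(\alpha-\delta))$ pairs from $\mu$. Your averaging step is marginally cleaner — you threshold at $\delta$ and get $\Pr_\mu[\Pi=\Pi_*]\ge\alpha-\delta$ directly, whereas the paper thresholds at $\tfrac{\alpha+\delta}{2}$ and gets probability $\ge\tfrac{\alpha-\delta}{2}$ — but the argument and the dependence of the list size on $\alpha-\delta$ are the same.
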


\begin{proof}
Let $\mu$ be any maximally uniform distribution over soluble subset-projection pairs $(w,\Pi)$ where $w$ indicates a set $S$ of size at least $\alpha n$. For $t = 20/(\alpha-\delta)$, let $(S_1,\Pi_1), (S_2,\Pi_2), \ldots, (S_t,\Pi_t)$ be i.i.d. samples from $\mu$. Output $\{\Pi_1, \Pi_2,\ldots, \Pi_t\}$. To finish the proof, we will show that there is an $i$ such that $|S_i \cap \cI| \geq \frac{\alpha+\delta}{2}|\cI| > \delta|\cI|$. Then, we can then apply Proposition \ref{prop:high-intersection-same-subspace} to conclude that $\Pi_i = \Pi_*$.

By Proposition~\ref{prop:intro-high-entropy}, $\E_{S\sim \mu} |S \cap \cI| \geq \alpha |\cI|$. Thus, by averaging, $\Pr_{S \sim \mu} [|S \cap \cI| \geq \frac{\alpha+\delta}{2} |\cI|] \geq \frac{\alpha-\delta}{2}|\cI|$. Thus, the probability that at least one of $S_1, S_2, \ldots S_t$ satisfy $|S_i \cap \cI| \geq \frac{\alpha+\delta}{2} |\cI|$ is at least $1- (1-\frac{\alpha-\delta}{2})^{t} \geq 0.99$.
\end{proof}


\end{document}